\documentclass[acmsmall,screen,authorversion,nonacm]{acmart}
\pdfoutput=1 
\setcopyright{none}
\usepackage{graphicx} %
\usepackage{url}
\usepackage{amsfonts,amsmath}
\usepackage{xspace}
\usepackage{color}
\usepackage[table]{xcolor}
\usepackage{mathtools}
\usepackage{soul}
\usepackage{centernot}

\usepackage{paralist}

\usepackage{tikz}
\usetikzlibrary{decorations.pathreplacing, positioning, arrows.meta}

\newcommand{\forarxiv}[1]{#1}

\newcommand{\newextmathcommand}[2]{%
    \newcommand{#1}{\ensuremath{#2}\xspace}
}

\newextmathcommand{\branch}{\mathrm{B}}
\newextmathcommand{\realvtree}{\tau}
\newextmathcommand{\branchlabelling}{\Lambda}

\usepackage{amsthm}
\usepackage{thmtools}

\theoremstyle{acmplain}

\declaretheoremstyle[
    spaceabove=.5\baselineskip\@plus.2\baselineskip\@minus.2\baselineskip,
    spacebelow=.5\baselineskip\@plus.2\baselineskip\@minus.2\baselineskip,
    bodyfont=\normalfont,
    headindent=\parindent,
    headfont=\scshape, 
    notefont=\scshape, 
    qed=$\triangleleft$
]{examplestyle}

\newcounter{myexmpcount}
\declaretheorem[
    style=examplestyle,
    name=Example,
    numberlike=myexmpcount
]{example}

\newextmathcommand{\true}{\textbf{true}}
\newextmathcommand{\false}{\textbf{false}}
\newextmathcommand{\adom}{\textbf{ADom}}
\newextmathcommand{\adomset}{\textit{ADom}}
\newextmathcommand{\rfo}{\textbf{RQ}}
\newextmathcommand{\rqfo}{\rfo}

\newextmathcommand{\rso}{\textbf{RSO}}
\newextmathcommand{\rqc}{\textbf{RQC}}
\newextmathcommand{\nip}{\textbf{NIP}}
\newextmathcommand{\udtfs}{\textbf{UDTFS}}
\newextmathcommand{\ip}{\textbf{IP}}
\newextmathcommand{\pwp}{\textbf{PWP}}
\newextmathcommand{\fptfe}{\textbf{FPTFE}}
\newextmathcommand{\ffte}{\textbf{FFTE}}
\newextmathcommand{\hsoc}{\textbf{HSOC}}
\newextmathcommand{\ehsoc}{\textbf{EHSOC}}
\newextmathcommand{\hso}{\textbf{HSO}}
\newextmathcommand{\ehso}{\textbf{EHSO}}

\newextmathcommand{\varsep}{\, \textbf{;} \,}

\newextmathcommand{\relvoc}{V}

\newextmathcommand{\mylower}{\textbf{low}}
\newextmathcommand{\myupper}{\textbf{up}}

\newcommand{\abs}[1]{\ensuremath{|#1|}\xspace}
\newcommand{\card}[1]{\ensuremath{|#1|}\xspace}

\newextmathcommand{\prespower}{\textbf{PrA}(2^{\N})}
\newextmathcommand{\trace}{\textbf{Tr}}
\newextmathcommand{\compatparamrv}{\mathrm{CompatParamRV}}
\newextmathcommand{\compatrangerv}{\mathrm{CompatRangeRV}}
\newextmathcommand{\hypoclass}{{\mathcal H}}
\newextmathcommand{\hypo}{h} %
\newextmathcommand{\concept}{{\mathcal C}}
\newextmathcommand{\classfamily}{{\mathcal F}}
\newextmathcommand{\expclass}{{\mathbb E}}
\newextmathcommand{\probspace}{{\mathcal P}}
\newextmathcommand{\randvar}{{\mathcal{RV}}}

\newcommand{\measureclass}[1]{\ensuremath{{\mathcal {D}}_{#1}}\xspace}

\newcommand{\expectationclass}[1]{\ensuremath{{\mathcal{E}}_{#1}}\xspace}

\newextmathcommand{\expectedloss}{\mathrm{ExpLoss}}
\newextmathcommand{\bestexpectedloss}{\mathrm{BestExpLoss}}
\newextmathcommand{\pshatter}{\mathrm{FatSHDim}}
\newextmathcommand{\fatshatter}{\pshatter}
\newextmathcommand{\seqfatshatter}{\pshatter^{Seq}}
\newextmathcommand{\littlestone}{\mathrm{LDim}}
\newextmathcommand{\theory}{{\rm{Th}}}
\newextmathcommand{\amodel}{\mathfrak M}
\newextmathcommand{\embmodel}{\mathcal E}
\newextmathcommand{\acmodel}{\mathfrak C}
\newextmathcommand{\glivcant}{GC}
\newextmathcommand{\conceptclass}{{\mathcal C}}
\newextmathcommand{\ptime}{\textsc{PTIME}\xspace}
\newextmathcommand{\pspace}{\textsc{PSPACE}\xspace}
\newextmathcommand{\np}{\textsc{NP}\xspace}

\newtheorem{theorem}{Theorem}
\newtheorem{proposition}{Proposition}
\newtheorem{claim}{Claim}
\newtheorem{corollary}{Corollary}
\newtheorem{lemma}{Lemma}
\newtheorem{fact}{Fact}
\newtheorem{definition}{Definition}
\newtheorem{assumption}{Assumption}
\newtheorem{remark}{Remark}
\newtheorem*{takeaway}{Takeaway}
\usepackage{thm-restate}

\newcommand{\expectation}{\mathbb{E}}
\newextmathcommand{\paramrandomized}{\mathrm{ParamRandom}}
\newextmathcommand{\setwidth}{\width}

\newcommand{\initial}{\textsc{Initial}}
\newcommand{\tupletriples}{\textsc{TupleTriples}}
\newcommand{\alignedtuples}{\textsc{AlignedTuples}}
\newcommand{\initialtuples}{\textsc{InitialTuples}}
\newcommand{\tupleof}{\textsc{TupleOf}}
\newextmathcommand{\randomelementsort}{{\mathcal K}}
\newextmathcommand{\eventsort}{{\mathcal B}}
\newextmathcommand{\randvarsort}{{\mathcal A}}
\newextmathcommand{\rademacher}{{\mathcal R}}
\newextmathcommand{\seqrademacher}{{\rademacher^{Seq}}}
\newextmathcommand{\gaussianmean}{{\mathcal G}}
\newcommand{\myeat}[1]{}
\newextmathcommand{\Conv}{\overline{\mathrm{Conv}}}
\newextmathcommand{\width}{w}
\newextmathcommand{\rangespace}{X}
\newextmathcommand{\paramspace}{Y}

\newextmathcommand{\EAvg}{\mathrm{Avg}}

\newextmathcommand{\R}{\mathbb{R}}
\newextmathcommand{\N}{\mathbb{N}}
\newextmathcommand{\X}{\mathbb{X}}

\newcommand{\nats}{\N}
\newcommand{\ints}{\mathbb{Z}}
\newcommand{\myparagraph}[1]{\par\medskip\noindent\textbf{#1.}}
\newextmathcommand{\dom}{{\text{dom}}}

\newextmathcommand{\idloss}{\ell_{\textrm{id}}}

\newcommand{\bitsize}[1]{\ensuremath{\langle{#1}\rangle}\xspace}

\newextmathcommand{\Img}{\text{Im}}

\usepackage{algorithm}
\usepackage[noend]{algpseudocode}

\algnewcommand\algorithmicforeach{\textbf{foreach}}
\algdef{S}[FOR]{ForEach}[1]{\algorithmicforeach\ #1\ \algorithmicdo}%

\title{How (and when) can you fit examples to logic-based hypothesis classes over infinite structures?}

\author{Michael Benedikt}
\orcid{0000-0003-2964-0880}
\affiliation{
  \institution{University of Oxford}
  \country{UK}
}
\email{michael.benedikt@cs.ox.ac.uk}

\author{Alessio Mansutti}
\orcid{0000-0002-1104-7299}
\affiliation{
  \institution{IMDEA Software Institute}
  \country{Spain}
}
\email{alessio.mansutti@imdea.org}

\date{\today}
\begin{document}
\begin{abstract}
We study fitting problems, sometimes called ``training problems'',  where we have a finite sample consisting of inputs and outputs,
and we want to know whether there is a function in a certain class that could produce these outputs, exactly or approximately, on the given inputs.
We focus on the computational and descriptive complexity of fitting for logically-defined classes in common decidable structures, like the real ordered field and Presburger arithmetic, and also for broader classes defined via combinatorial or model-theoretic properties. We isolate the complexity of these  fitting problems, with particular attention to cases where we can use queries  in  a natural query language over the sample to determine whether a sample is fittable.
\end{abstract}

\maketitle

\section{Introduction}

A basic computational problem in machine learning is to fit sample inputs and outputs with a function in a given class.
There are many notions of fitting possible, depending on the loss function used and whether the fit has to be exact or within some tolerance. For example, \emph{linear regression} involves fitting samples with a linear function minimizing the aggregate error.
Here we look at decision problems, asking if there is an exact fit for a set of input/output samples, or a fit with a certain bound on the error.
The case of exact fitting is often called the \emph{consistency problem}. 

Our focus will be on \emph{hypothesis classes arising from logical formulas over infinite structures}. We will first consider families of Boolean-valued functions
(a.k.a.~``concept classes'') defined by first-order formulas in a decidable infinite structure. The formula's free variables are partitioned into two parts, the ``object variables'' and the ``parameter variables''. By considering all instantiations of the parameter variables, one obtains families of Boolean-valued functions via logic.

\begin{example}[Rectangles]\label{ex:rectangleintro}
Our logic-based concept class could be the set of rectangles in the plane with sides parallel to the axes.
This class is definable by a formula~$\phi$ over the structure~$(\R,<)$. There
are four ``parameter variables'',~$y_1,\dots,y_4$, 
and two ``object variables'', $x_1$ and~$x_2$. 
The formula~$\phi$ constrains the point $(x_1,x_2)$ to lie inside the rectangle defined by the parameter variables: 
\[
\phi(x_1,x_2 \varsep y_1,y_2,y_3,y_4) \coloneqq y_1 < x_1 < y_2 \land y_3 < x_2 < y_4.
\]

In the exact fitting problem for this class, we have sequences, $p_1, \ldots, p_n$ and $q_1, \ldots, q_m$, of pairs of reals: the \emph{positive examples} and \emph{negative examples}.
We want to know if there is a rectangle
(i.e., an instantiation of the parameter variables)
that contains each $p_i$ for $i \leq n$ and excludes each $q_i$ for $i \leq m$. In the approximate fitting problem, we allow $k$ errors, where $k$ is an additional input. 
\end{example}

We also consider families of \emph{real-valued functions} definable from logical formulas. A first way of generating such families is via variable-partitioned formulas over a numerical structure, like the reals or the integers, where for each parameter the instantiated formula defines a function.

\begin{example}\label{ex:unionquadraticintro}
 Suppose our real-valued hypothesis class is the set of functions whose graph is a 
 piecewise combination of two quadratics.
 This class can be constructed from a parametric function
 \[
    f(x \varsep t, a_1, b_1, c_1, a_2, b_2 ,c_2) = 
        \begin{cases}
            a_1 \cdot x^2 + b_1 \cdot x + c_1 & \text{if } x \geq t\\ 
            a_2 \cdot x^2 + b_2 \cdot x + c_2 & \text{otherwise},
        \end{cases}
 \]
 where the parameter variables are the threshold $t$ and the coefficients of the two quadratic polynomials.
 This hypothesis class is definable by a formula over the real ordered field $(\R,+,\cdot,<)$. 
 The exact fitting problem consists of a sequence of pairs of reals $(x_1, y_1), \ldots, (x_n, y_n)$: we wish to know if there is an instantiation of the parameters such that $f(x_i)=y_i$ for each $i \leq n$.
 In the approximate fitting problem, deviations $f(x_i) \neq y_i$ are allowed, provided the total error across all $i \leq n$ stays within a given tolerance~$\epsilon$.
 \end{example}
 
A second way to obtain a real-valued hypothesis class, investigated in model theory~\cite{keislerrandomizing, itaykeisler} and in data management~\cite{sigmod22}, is to 
construct a \emph{distribution class} on top of a (concept or real-valued) class. 
The idea is to start with a logically-defined hypothesis class over an \emph{arbitrary}
structure, and consider the family of real-valued functions obtained by ``randomizing'' its parameters or its objects:

\begin{example}\label{ex:randomrectangle} Suppose our real-valued hypothesis class is parameterized by probability distributions over the points in the plane, with each function in the class mapping axis-aligned rectangles to their probability. The exact fitting problem will now consist of a sequence
of pairs $(r_1, p_1), \ldots, (r_n, p_n)$ where each $r_i$ is a rectangle (specified by endpoints) and $p_i$
is its desired probability. We wish to know if there is a probability distribution on the plane that can give rectangle $r_i$ probability $p_i$.
\end{example}

From the above we see that fitting problems have many variations. We will investigate two related questions for fitting problems over logically-defined hypothesis classes:
\begin{compactitem}
\item What is the computational complexity of exact and approximate fitting problems, in terms of the sample size, that is, the \emph{data complexity} of fitting problems?
\item When can we solve fitting problems in polynomial time~\emph{by evaluating a first-order query over the sample}? 
Informally, we search for a first-order sentence quantifying only over the sample, expressed in the signature of the underlying structure (e.g. the real ordered field), along with predicates for accessing the sample, such that evaluating the sentence on a sample tells whether it is fittable. We formalize this with the notion of \emph{restricted-quantifier formula} (Definition \ref{def:rqformula}) that arises from earlier work in database theory~\cite{anthonydiego,3belgians}.
\end{compactitem}
For instance, we will see that the fitting problem in~Example~\ref{ex:rectangleintro}
can be solved via ``first-order query evaluation over the sample''.
By extending this form of query evaluation with linear algebra, 
we can also solve fitting problems of distribution classes. 
For instance, we can show tractability of the fitting problem for Example~\ref{ex:randomrectangle} using first-order queries plus linear algebra. 

\myparagraph{Summary of Main Contributions}
We overview our contributions, leaving more fine-grained \textsc{Takeaways} to the individual sections where the relevant formal definitions are in place. 
We formalize fitting problems (exact and approximate) for logically-defined
classes over infinite structures.
For many common decidable structures, we show that these fitting problems are tractable in
the size of the sample, and can be reduced to \emph{first-order query evaluation over the sample}.
More precisely, this tractability-via-querying result holds as soon as the structure has
 \emph{restricted quantifier collapse}~\cite{hybridso}, 
a property that is known to hold for the real ordered field and Presburger arithmetic.

In the case of logically-defined concept classes, we compare the phenomenon of
fitting by first-order query evaluation over the sample with \emph{finite
VC dimension}, which characterizes efficiency in the number of
samples needed to learn to some confidence and accuracy~\cite{vctheorem}.
It is well-known that, in general, a concept class can have finite VC dimension 
but still have an intractable fitting problem (see Remark~\ref{remark:VC-only-combinatorial}).
We find that this discrepancy is more difficult to achieve for \emph{logically-defined} concept classes: we show that any such example must come from a structure without restricted quantifier collapse, and it cannot be an \emph{automatic structure}~\cite{gradel20years}.

For real-valued function classes, formed using function classes as in Example \ref{ex:unionquadraticintro} or via the randomization construction of \cite{sigmod22} as in Example \ref{ex:randomrectangle}, we show decidability (and, often, tractability) of fitting in important cases. In the tractable cases 
the algorithms use a combination of first-order querying and linear algebra. We also identify classes of fitting problems where tractability cannot be shown without resolving longstanding open problems in theoretical computer science.

In studying these fitting problems, we require a mix of techniques. In some cases we use ideas from~\emph{classical model theory}, such as results relating combinatorial dimensions of families and definability of types (see UDTFS in Section \ref{sec:fitting-definable-cc}). In other cases fitting problems connect to core computational questions concerning numerical structures like Presburger arithmetic and the real ordered field: see Section \ref{sec:functions} and Section \ref{sec:randomized}.

\myparagraph{Organization} After preliminaries in Section~\ref{sec:prelims},
Section~\ref{sec:fitting-definable-cc} establishes conditions on a structure implying that exact and approximate fitting for definable concept classes can be solved not only in \ptime, but via first-order queries over the sample.
In Section~\ref{sec:betweenptimeandrqfoconcept} we look at automatic structures, which may lack efficient fitting for all logic-based concept classes, but regain tractability when the class has finite VC dimension.
Sections~\ref{sec:functions} and~\ref{sec:randomized} move to real-valued classes. 
Section~\ref{sec:functions} extends our analysis of tractable fitting to hypothesis classes given by formulas over a numerical structure (i.e., a structure whose domain is a subset of~$\R$).
Section~\ref{sec:randomized} covers real-valued functions arising from randomizing a logically-defined class.
We close with discussion in Section~\ref{sec:conc}.

\section{Related work} \label{sec:related}
Our logic-based framework  subsumes some standard constrained optimization problems like linear regression, which are known
to be tractable. While such problems are often attacked by imposing geometric conditions on the constraints and objective function, like convexity, here we consider arbitrary parameterized families, restricting only the underlying structure.
Fitting problems have been studied for standard neural architectures \cite{trainingiscomplete, trainingfully, tighthardness}, and also in the presence of SoftMax and other exponential activation functions \cite{jonniexptraining}. Our results deal with a different set of function classes, and our classes will fix the number of parameters that can be set, unlike \cite{trainingiscomplete, trainingfully, tighthardness, jonniexptraining}. We discuss a few of the connections of our results to prior work in this area in Section~\ref{sec:functions}, see Remark~\ref{remark:neural}.

Our work has close connections with \emph{embedded finite model theory} which emerged from the database theory community \cite{anthonydiego,mbehlics,modeltheoryofstrings, generalizing, libkincdbembfmt}: in fact, it can be seen as a restricted case of the evaluation problem dealt with in embedded finite model theory, as we will explain shortly. Our analysis of fitting using queries is in the same spirit as ``in database learning'' : e.g. \cite{factregression}. Unlike those works,  we consider hypothesis classes that may already involve quantification over an infinite set. And we are interested in training using queries directly on a materialized supervised sample, not how to do composition with queries that form the sample. 

The complexity of fitting  has also been studied for logical formulas on finite structures, with the focus being conjunctive queries \cite{balderfitting}. In addition to the distinction in structures, in this prior line of work the logical formula again varies, while in our case it is fixed.  Similarly \cite{groheritzert} defines a  learning framework allowing formulas referencing either finite or infinite structures; but the focus is not learning parameters in a fixed formula over an infinite structure, and thus both the results and the techniques are incomparable to ours.

Our approach to logic-based fitting problems includes distribution classes (Section \ref{sec:randomized}), which were introduced in \cite{sigmod22}, motivated by machine learning approaches to cardinality estimation in databases. They are closely related to the ``randomization of a structure'' \cite{keislerrandomizing,ibyrandvar}, as explained in \cite{arxivwithaaron}.

\section{Preliminaries} \label{sec:prelims}

\subsection{Hypothesis classes and fitting problems} 

\begin{definition}[Hypothesis classes] 
    A hypothesis class is a set $\hypoclass=\{f_y : y \in Y\}$ of real-valued
    functions ${f_y \colon X \to \mathbb{R}}$, for some set $\rangespace$, the
    \emph{range space}. The functions in the class, also called
    \emph{hypotheses}, are indexed by elements of a \emph{parameter space}
    $\paramspace$.
\end{definition}

\begin{definition}[Concept classes]
    A concept class is a hypothesis class in which the image of all functions is
    restricted to $\{0, 1\}$. One can equivalently identify the functions in the
    class with subsets of the range space, i.e., a concept class $\conceptclass$
    can be thought of as the set ${\{\{x \in X : f(x) = 1\} : f \in
    \conceptclass\}}$.
\end{definition}

\begin{definition}[Sample]\label{def:sample}
    A \emph{sample} is a finite collection $(s_1, t_1),\ldots,(s_n, t_n)$ of
    pairs from~$\rangespace \times \R$, where the \emph{training inputs}
    $s_1,\dots,s_n$ are pairwise distinct, and $t_1,\dots,t_n$ are the
    \emph{training outputs}.
\end{definition}

Our focus is on the following computational problem:

\begin{definition}[Approximate fitting problem]\label{def:epsilonfittingproblem}
    The \emph{approximate fitting problem} for a hypothesis class $\hypoclass$
    is the decision problem of determining, given in input a sample and a
    tolerance~$\epsilon \geq 0$, whether there is a hypothesis $h \in
    \hypoclass$ that \emph{$\epsilon$-fits the sample}, that is, $h$ achieves
    ${\sum_{i=1}^n |h(s_i)-t_i| \leq \epsilon}$.
\end{definition}

We refer to the above problem as the \emph{$\epsilon$-fitting problem} when the
tolerance~$\epsilon$ is fixed rather than being part of the input.
The $0$-fitting problem requires an exact fit of the sample, and it
is thus also called the \emph{exact fitting problem}. In the case of fitting
problems for \emph{concept classes}, $\epsilon$ is a non-negative integer and
samples are of the form $(s_1, b_1) \ldots{}$ $(s_n, b_n)$, where $b_i \in
\{0,1\}$; so, they can be viewed equivalently as a collection $\{s_i : b_i=1 \}$
of \emph{positive examples}, and $\{s_i : b_i=0 \}$ of \emph{negative examples}.

\begin{example}[Rectangles, again]\label{ex:logicbasedclassesrepeat} In the
    concept class of rectangles from Example~\ref{ex:rectangleintro}, the
    \emph{range space} is $\R^2$ (all points in the plane) and the
    \emph{parameter space} is $\R^4$ (each parameter describes a
    rectangle). The approximate fitting problem for this class asks, given
    positive $S^+$ and negative $S^-$ examples and a tolerance $\epsilon$,
    if there is a rectangle $R \subseteq \R^2$ such that 
    $\card{(\R^2 \setminus R) \cap S^+} + \card{R \cap S^-} \leq \epsilon$.
\end{example}

\begin{remark}\label{remark:otherlosses}
    \!While Definition~\ref{def:epsilonfittingproblem} uses the $L^1$ norm
    ${\sum_{i=1}^n |h(s_i)-t_i|}$, all our complexity upper bounds~also hold for  
    the $L^2$ norm $\sqrt{{\sum_{i=1}^n (h(s_i){\,-\,}t_i)^2}}$ and
    the hinge loss ${\sum_{i=1}^n \max(0,1{\,-\,}{t_i {\cdot} h(s_i)})}$. 
    \mbox{See~Appendix~\ref{app:ltwoloss}.}
\end{remark}

\myparagraph{Fitting problems for logic-based concept classes}
We are interested in approximate fitting problems for hypothesis classes that
are defined in terms of first-order predicate logic on some \emph{infinite} structure.
For concept classes, this means considering~\emph{partitioned formulas}, that
is, formulas $\phi(\vec x \varsep \vec y)$ with free variables partitioned into
\emph{object variables}~$\vec x$ and \emph{parameter variables}~$\vec y$.

\begin{definition}[Definable concept class]
    Let $\amodel$ be a structure with domain $M$. Let $\phi(\vec x \varsep \vec y)$ 
    be a \emph{partitioned formula} over (the vocabulary of) $\amodel$,
    with~$j$ object variables and~$k$~parameter variables. Associated to $\phi$
    is the concept class $\conceptclass_\phi$ with range space~$M^j$ and
    parameter space~$M^k$, defined as 
    \[{\conceptclass_\phi \coloneqq \{\,\{ \vec q \in M^j : \amodel \models
    \phi(\vec q \varsep \vec p) \} : \vec p \in M^k \}}~.\]
\end{definition}

Concept classes that are definable in terms of partitioned formulas are
sometimes called \emph{definable families}. We sometimes abuse notation by
abbreviating ``partitioned formula'' by ``formula'', and we sometimes leave the
background structure $\amodel$ implicit. Given a formula~$\phi(\vec x \varsep
\vec y)$, we say that a sample with positive examples $S^+$ and negative
examples $S^-$ is \emph{$\phi$-fittable} if there is a parameter $\vec p$ such
that $\phi(\vec q \varsep  \vec p)$ holds for every $\vec q \in S^+$, and $\neg
\phi(\vec q, \vec p)$ holds for every $\vec q \in S^-$. 

\begin{remark}
    Many classical examples of concept classes arise as $\conceptclass_\phi$ for
    suitable $\phi$. For example, a binary classifier given by a feedforward
    neural network can be described by a formula of the real ordered
    field~$(\R,+,\cdot,<)$, where the inputs are objects and the weights are
    parameters.
\end{remark}

\subsection{Computational tameness assumptions on structures}
To obtain complexity bounds for approximate fitting problems of definable
families in the standard Turing machine model, we need two computational
assumptions about the structure on which the formulas are defined. We write
$\bitsize{w}$ for the length of a finite word~$w$ over some finite alphabet
$\Sigma$.

\begin{assumption}[Effective constants]
    We assume that each structure~$\amodel$ is equipped with a (infinite)
    distinguished subset ${E \subseteq M}$ of \emph{effective constants} and a
    surjection $\rho \colon D \to E$ from a set $D \subseteq \Sigma^*$ of finite
    words over some alphabet~$\Sigma$. We assume membership in $D$ is decidable
    in polynomial time: given $s \in \Sigma^*$, we can check if $s \in D$ in
    time polynomial in $\bitsize{s}$.
\end{assumption}

We tacitly identify effective constants with their \emph{representations}. For
example, in saying that a procedure ``takes as input an effective constant $c$
and runs in time polynomial in the size of $c$'', we mean that the procedure
takes as input an~$s \in D$ such that $\rho(s) = c$, and its running time is
polynomial in~$\bitsize{s}$. Representations extend to collections (tuples,
sets, etc.) of effective constants, by defining their size as the sum of the
sizes of all elements (from $D$) appearing in the collection. In fitting
problems, we assume the \emph{training inputs}~$s_i$ are tuples of (elements in
$D$ representing) effective constants and the \emph{training outputs}~$t_i$ are
rationals encoded as a pair of binary integers. Similarly, we assume the
tolerance $\epsilon$ to be a non-negative rational, again encoded as a pair of
binary integers. With these representations, we can now talk about the
complexity of fitting:

\begin{definition}[$\ptime$ approximate fitting] 
    A partitioned formula $\phi$ over a structure~$\amodel$ has
    \emph{$\ptime$ approximate fitting} whenever there is an algorithm 
    for deciding the approximate fitting problem for~$\conceptclass_\phi$ 
    that runs in polynomial time in the size of the sample.
\end{definition}

We stress that throughout the paper we always look at the complexity of fitting
problems~\emph{when formulas are fixed} (the input consists only of a sample and
a tolerance). In particular, our fitting algorithms can perform any
normalization on this formula (e.g. quantifier elimination) ``for free''.  Note
also that having $\ptime$ approximate fitting for each partitioned formula
$\phi(\vec x \varsep \vec y)$ does not imply even decidability of the
approximate fitting problem when $\phi$ is not fixed.

\begin{assumption}[fixed-formula tractable evaluation]
    We assume that for any first-order formula $\phi(x_1 \ldots x_k)$, there is an
    algorithm that takes as input a $k$-tuple of  effective constants $\vec c$, and
    determines whether $\phi(\vec c)$ holds in the structure, running in polynomial
    time in the total size of $\vec c$. We refer to this property of a structure as
    \emph{fixed-formula tractable evaluation} ($\ffte$).
\end{assumption}

\begin{example} \label{ex:standardstructures}
    Many commonly studied decidable structures have a natural set of effective
    constants where $\ffte$ holds. In particular, this property holds for any
    structure with quantifier elimination where checking of quantifier-free formulas
    is tractable. Examples include the real ordered field $(\R,+,\cdot,<)$, 
    where the effective constants are the rationals, 
    or additive natural number arithmetic $(\N,+,<)$, also known as
    \emph{Presburger arithmetic}, where the effective constants are all naturals.
\end{example}

Unless otherwise stated, 
\ul{we always assume that structures have effective constants and $\ffte$}.

\subsection{Combinatorial and definability properties of structures}

Two model theoretic properties will play a key role: finite VC
dimension and~\rqc.

\begin{definition}[VC dimension] \label{def:vcandfat} 
    We say a concept class $\conceptclass$ on range space $\rangespace$
    \emph{shatters} a finite subset $S$ of $\rangespace$ whenever, for every $E
    \subseteq S$, there is $h \in \conceptclass$ such that $h(s)=1$ for $s \in
    E$ and $h(s)=0$ for $s \in X \setminus E$. The \emph{VC-dimension of
    $\conceptclass$} is the supremum of the sizes of subsets of $\rangespace$
    shattered by $\conceptclass$. A concept class has \emph{finite VC dimension}
    whenever there is a finite bound on the sizes of sets it can shatter.
\end{definition}

\begin{example}[Rectangles have finite VC dimension]\label{ex:rectanglesvc} 
    The class of rectangles from Example~\ref{ex:rectangleintro} has VC dimension $4$:
    there is a set of size $4$ that you can shatter, namely a diamond-shaped set
    in the plane. However, it is rather simple to see that there is no such set
    of size $5$.
\end{example}

\begin{remark}\label{remark:VC-only-combinatorial} 
    Finite VC dimension does not imply a tractable fitting problem. There are
    effectively presented classes with finite VC dimension which do not have
    \ptime fitting (under standard complexity-theoretic assumptions), see for
    instance~\cite[Section~3]{learninghardnesspittvaliant},
    \cite[Section~7]{learninghardnessanthony}.
\end{remark}

\begin{remark}\label{remark:vc-and-pac}
    Finite VC dimension is necessary and sufficient for Probably Approximately Correct (PAC) learnability from random examples~\cite{vctheorem}. The definition of PAC learning will not be important in our work but
    if a class has finite
    VC dimension, any fitting algorithm suffices to PAC learn.
    Conversely, infinite VC dimension makes PAC learning impossible, even
    when not requiring an efficient algorithm.
\end{remark}

\begin{definition}[Finite VC dimension of formulas, $\nip$, and~$\ip$]\label{def:finitevcnip} 
    A partitioned formula~$\phi(\vec x \varsep \vec y)$ over a
    structure~$\amodel$ is said to have \emph{finite VC dimension} if the
    corresponding concept class~$\conceptclass_\phi$ does. We say that a
    structure $\amodel$ is $\nip$ (short for ``Not having the Independence
    Property'') when every partitioned formula over~$\amodel$ has finite VC
    dimension. Otherwise, $\amodel$ is said to be $\ip$.
\end{definition}

\begin{example}[Examples of $\nip$ structures]\label{ex:nip} 
    Both the real ordered field and Presburger arithmetic are $\nip$ \cite[App A]{nipbook}. The
    extension of Presburger arithmetic with the power predicate~$2^{\N} \coloneqq \{ x \in \N : \text{$x$ is a power of two} \}$, denoted~\prespower, is also \nip~\cite{prespowernip}.
\end{example}

We mentioned in the introduction that we are interested in whether we can solve
fitting problems with a~\emph{first-order query that quantifies only over the elements in
the sample}, not over the infinite structure. This is a special case of the
notion of \emph{restricted quantifier formula} from the database
literature~\cite{anthonydiego,3belgians} 
which we review. Below,
$\amodel$ is a structure with domain $M$ and signature~$L$.

\begin{definition}[Restricted quantifier formula] \label{def:rqformula}
    Let $\relvoc$ be a (finite) relational signature disjoint from~$L$. A
    \emph{restricted quantifier formula ($\rqfo$~formula)} in the signature $L
    \cup \relvoc$ is a formula built up inductively from (possibly quantified)
    first-order formulas in the signature $L$, via Boolean operators and
    \emph{restricted quantifiers} of the form $\exists \vec x \left( T(\vec x,
    \vec y) \wedge \phi(\vec x, \vec y) \right)$, where $T$ is a relation symbol
    from $\relvoc$, $\vec x$ and $\vec y$ are tuples of variables, and $\phi$ is
    an~$\rqfo$ formula.
\end{definition}

An \emph{embedded finite model} for $\relvoc$ over $\amodel$ is an
interpretation of each relation in $\relvoc$ by a finite set of tuples
in~$\amodel$. Let $\embmodel$ be such a model. By expanding $\amodel$ with
$\embmodel$ we obtain a structure, denoted $\amodel,\embmodel$, for the
signature $L \cup \relvoc$. We say that a formula~$\phi$ in this signature
\emph{holds in $\embmodel$} whenever the structure $\amodel, \embmodel$
satisfies $\phi$ in the usual sense, i.e.,~$\amodel,\embmodel \models \phi$. Two
formulas $\phi_1$ and $\phi_2$ in the signature $L \cup \relvoc$ are
\emph{equivalent over embedded finite models for $\relvoc$ over $\amodel$}
whenever ${\amodel, \embmodel \models \phi_1 \leftrightarrow \phi_2}$ holds for
every embedded finite model $\embmodel$ for~$\relvoc$ over~$\amodel$. We omit
``over $\amodel$'' when clear from context.

\begin{remark}[Active domain]\label{remark:activeisenough} 
    The \emph{active domain} of an embedded finite model $\embmodel$ for
    $\relvoc$ is the finite subset of~$M$ obtained by taking the
    projections of each interpretation of each relation symbol in
    $\embmodel$. Without loss of generality, we can let $\rqfo$ formulas replace restricted quantifiers with
    quantifiers over the active domain~$\adom$. For example, if $V =
    \{R(x_1),S(x_1,x_2)\}$, the formula ${\exists x \in \adom \ \phi}$ stands
    for $\exists x\,(R(x) \land \phi) \lor \exists x, y\,(S(x,y) \land \phi)
    \lor \exists x, y\,(S(y,x) \land \phi)$. We~call these \emph{active domain
    quantifiers}.
\end{remark}

\begin{definition}[Restricted Quantifier Collapse]\label{def:rqc}
We say that~$\amodel$ has Restricted Quantifier Collapse (`` $\amodel$ is $\rqc$'') whenever for every relational signature $V$ disjoint from $L$, every first-order formula in the signature $L\cup \relvoc$ is equivalent, over embedded finite models for $V$, to an~$\rqfo$ formula.
\end{definition}

\begin{remark}\label{remark:rqc-nip}
   It is known that $\rqc$ implies $\nip$, see %
   \cite[Proof of Lemma~3.9]{modeltheoryofstrings}. 
\end{remark}

\begin{example}[(Non) examples of $\rqc$ structures]\label{ex:rqc}
All structures listed in Example \ref{ex:nip}  are ~{\rqc}. %
For~$\prespower$ we give an argument in Appendix~\ref{app:fragile}. 
The other examples are well-known, and indeed follow by general model-theoretic criteria that
are known to imply $\rqc$, like o-minimality, NFCP~\cite{pseudofinite}, and distality~\cite{isolation} (see Appendix \ref{app:fragile} for a definition).

Some $\nip$ structures are not~\rqc. One example is the structure on the natural numbers with the equivalence
relation $E(n, m)$ holding when the largest power of two below $n$ is the same as the largest below~$m$. Note that
equivalence classes of this relation are finite, but the size of classes is unbounded.  
Let ${\relvoc = \{C(x)\}}$ and consider the sentence ${\exists x ~ \forall y ~ [ E(x,y) \rightarrow C(y)]}$ stating that $C$ contains an equivalence class. 
One can show that this is not equivalent to an \rqfo sentence. 
\end{example}

\section{RQ Fitting for definable classes}\label{sec:fitting-definable-cc}

\begin{takeaway} We show that \rqc implies \ptime  approximate fitting. Via the
    notion of Uniform Definability of Types over Finite Sets (\udtfs), we prove
    that~\rqc is equivalent to ``$\rqfo$ fitting''---fitting via an $\rqfo$
    sentence---even when restricting $\rqfo$ fitting to partitioned formulas
    with a single parameter.
\end{takeaway}

Throughout the section, let $\amodel$ be a structure with domain $M$ and
signature~$L$, and $\relvoc$ be a relational signature disjoint~from~$L$. Our
study of fitting problems for definable concept classes starts from the
following motivating example, which illustrates why \rqc is a very natural
notion in this context.

\begin{example}\label{example:rqc-is-nice} Consider the partitioned formula
$\phi(x \varsep y) \coloneqq x < y$ over the structure $(\R,<)$. Introducing two
unary predicates $P$ and $N$, to be interpreted as the sets of positive and
negative examples of an input sample, the $0$-fitting problem for $\phi$ can be
expressed~as the following first-order sentence:
\[ 
    \exists y \left( \forall x\,(P(x) \rightarrow \phi(x,y)) \land \forall x \,(N(x) \rightarrow \lnot \phi(x,y))\right).
\]
This general pattern works for any $\phi$, with the obvious adjustments to the
number of object variables and parameter variables. But a sample is $(x <
y)$-fittable if and only if the positive examples lie below the negative ones,
and so the problem can alternatively be encoded as the~\rqfo sentence $\forall x
\,(P(x) \rightarrow \forall z \,(N(z) \rightarrow x < z))$. This sentence
describes a \ptime algorithm for the $0$-fitting problem for $\phi$: iterate
through all positive examples $x$ and all negative ones $z$, checking $x <
z$.%
\end{example}

Example~\ref{example:rqc-is-nice} motivates the following notion: 

\begin{definition}[$\rqfo$ fitting]\label{def:rqfo-fitting}
    Consider a partitioned formula $\phi(x_1,\dots,x_j \varsep \vec y)$
    over~$\amodel$, and let $\relvoc = \{P,N\}$ be a relational signature with
    two $j$-ary relations. We say that $\phi$ has \emph{$\rqfo$ fitting} if
    there is an \rqfo sentence $\phi_{\textit{fit}}$ in the signature $L \cup
    \relvoc$, with the following property: consider a finite sample~$S$ of
    positive examples $S^+$ and negative examples $S^-$, and let $\embmodel$ be
    the embedded finite model for $\relvoc$ over $\amodel$ that interprets $P =
    S^+$ and $N = S^-$. Then, $\phi_{\textit{fit}}$ holds in $\embmodel$ if and
    only if $S$ is $\phi$-fittable.

    We say that~\amodel has \rqfo fitting whenever every partitioned formula
    over \amodel has \rqfo fitting.
\end{definition}

We remark that, as with \rqc (Definition~\ref{def:rqc}), the notion of \rqfo
fitting is semantic and not algorithmic: it must hold for all finite sets~$S$,
not only those composed of effective constants.

Recalling that we assume effective constants and $\ffte$, the problem of
evaluating a fixed~\rqfo sentence on a given embedded finite model is in \ptime:

\begin{proposition}\label{prop:rqfoimpliesptime} Fix an~\rqfo sentence $\phi$ in
$L \cup \relvoc$. There is a  $\ptime$ algorithm that given an embedded finite
model $\embmodel$ for~$\relvoc$ over~$\amodel$, decides~$\amodel,\embmodel
\models \phi$. Thus, any formula with $\rqfo$ fitting has $\ptime$ $0$-fitting.
\end{proposition}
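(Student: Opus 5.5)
The plan is to evaluate $\phi$ by structural recursion, expanding every active-domain quantifier into a finite disjunction or conjunction over the sample. First, by Remark~\ref{remark:activeisenough}, I assume without loss of generality that every restricted quantifier in $\phi$ is an active-domain quantifier $\exists x \in \adom$ (or $\forall x \in \adom$). Atoms $T(\vec z)$ with $T \in \relvoc$ are kept as they are. After this rewriting, $\phi$ is a Boolean combination of active-domain quantifiers, $\relvoc$-atoms, and first-order $L$-formulas $\psi(\vec z)$, which may contain quantifiers ranging over all of $M$. Since $\phi$ is fixed, it has a fixed nesting depth $d$ of active-domain quantifiers and a fixed finite set of $L$-subformulas $\psi_1,\dots,\psi_m$ occurring at the leaves.

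The algorithm is as follows. Let $A$ be the active domain of $\embmodel$; it has size at most the total size of the input and consists of effective constants, given by their representations. Unfold $\phi$ recursively. Whenever the recursion meets $\exists x \in \adom\,\chi$, it iterates over all $a \in A$ and evaluates $\chi[x \mapsto a]$ under the current assignment, returning the disjunction; the universal case returns the conjunction. An atom $T(\vec a)$ is decided by a lookup in the finite interpretation of $T$. A leaf $\psi_i(\vec a)$ is decided by the $\ffte$ algorithm for the fixed $L$-formula $\psi_i$, which runs in time polynomial in the total size of $\vec a$, and that size is bounded by the input size.

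For the running time, the total number of assignments visited is at most $|A|^{d}$, a polynomial of fixed degree. Each visited node performs a constant number of lookups or $\ffte$ calls, each polynomial in the input size, so the whole procedure is polynomial. The only point that needs care is that $\ffte$ applies to fixed formulas. This holds because the leaves $\psi_i$ are drawn from a fixed finite set of $L$-formulas determined by $\phi$, and only the constants substituted into them vary. The $L$-subformulas may contain unrestricted quantifiers over $M$, but these are absorbed inside the $\psi_i$ and are handled entirely by $\ffte$.

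For the consequence, suppose the partitioned formula has $\rqfo$ fitting, witnessed by $\phi_{\textit{fit}}$. Given a sample, build in polynomial time the embedded finite model that interprets $P = S^+$ and $N = S^-$. Then run the algorithm above on $\phi_{\textit{fit}}$. By Definition~\ref{def:rqfo-fitting}, this decides $\phi$-fittability, that is, $0$-fitting, in $\ptime$.
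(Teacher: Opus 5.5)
Your proposal is correct and is essentially the argument the paper intends; the paper gives no separate proof and justifies the proposition only by appeal to effective constants and \ffte. The argument is to expand each active-domain quantifier over the polynomially many sample elements and pass each of the fixed, finitely many $L$-leaves to the \ffte algorithm. One minor detail: the representation map $\rho$ need not be injective (e.g.\ $1/2$ versus $2/4$), so the lookup of an atom $T(\vec a)$ should compare entries up to semantic equality, which \ffte decides in polynomial time via the fixed formula $x = y$.
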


By ``takes as input an embedded finite model'' we mean that it takes as input
lists of tuples of (representations of) effective constants from $M$, encoding
the interpretation of the relations in $\relvoc$.

We extend Proposition \ref{prop:rqfoimpliesptime} from exact to approximate
fitting, at the cost of assuming \rqfo fitting for the whole structure instead
of single formulas.

\begin{restatable}{theorem}{thmrqfoimpliesptimeapprox}\label{thm:rqfoimpliesptimeapprox}
    If~$\amodel$ has $\rqfo$ fitting, then any formula over $\amodel$ has
    $\ptime$ approximate fitting.
\end{restatable}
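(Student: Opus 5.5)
The plan is to reduce approximate fitting to enumerating the labelings that $\conceptclass_\phi$ realizes on the sample points, and to keep that enumeration polynomial using the VC bound. Fix $\phi(\vec x \varsep \vec y)$ and an input sample $s_1,\dots,s_n$ with labels $b_1,\dots,b_n$ and integer tolerance $k$. For $i \le n$, let $L_i \subseteq \{0,1\}^i$ be the set of labelings of $s_1,\dots,s_i$ realized by some $h \in \conceptclass_\phi$. The sample is $k$-fittable iff some $\ell \in L_n$ has Hamming distance at most $k$ from $(b_1,\dots,b_n)$. So it suffices to compute $L_n$ in polynomial time. The difficulty is that $k$ is part of the input, so we cannot enumerate all sets of $k$ errors.

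The sets $L_i$ are built incrementally. Set $L_0 = \{\varepsilon\}$. Given $L_i$, for each $\ell \in L_i$ and each bit $c \in \{0,1\}$, we test whether the labeled sample $(s_1,\ell_1),\dots,(s_i,\ell_i),(s_{i+1},c)$ is exactly $\phi$-fittable. Since $\phi$ has $\rqfo$ fitting, each test runs in polynomial time by Proposition~\ref{prop:rqfoimpliesptime}. Every realizable labeling of the first $i+1$ points restricts to one in $L_i$, so this procedure yields exactly $L_{i+1}$. The total running time is polynomial in the sample size provided each $|L_i|$ is polynomially bounded. By the Sauer--Shelah lemma, $|L_i| \le O(i^d)$ whenever $\conceptclass_\phi$ has VC dimension $d < \infty$, and $d$ depends only on the fixed formula $\phi$.

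The remaining and main step is to show that a structure with $\rqfo$ fitting is $\nip$. This is where the hypothesis on the whole structure, rather than on $\phi$ alone, is used. The approach I would try first is the standard argument behind Remark~\ref{remark:rqc-nip}, carried out only with fitting sentences.
\begin{enumerate}
\item Suppose $\phi$ has infinite VC dimension. Use an auxiliary partitioned formula, for instance $\psi(\vec x, \vec z \varsep \vec y) \coloneqq \phi(\vec x \varsep \vec y) \wedge \vec z = \vec z$, or a variant that swaps the roles of object and parameter variables.
\item Observe that $\psi$-fittability of suitable samples drawn from large shattered sets expresses a non-$\rqfo$-definable property, such as a parity or ``arbitrary subset'' property of the active domain.
\item Reach a contradiction using the fact that a fixed $\rqfo$ sentence has fixed quantifier rank over the active domain. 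In the $\ip$ setting, such a sentence cannot distinguish all of the exponentially many subsets that a shattered set supports.
\end{enumerate}

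If this direct route proves awkward, the fallback is to defer to the later equivalence between $\rqfo$ fitting and $\rqc$ via $\udtfs$ announced in the section's takeaway, together with Remark~\ref{remark:rqc-nip}. Given the paper's ordering, the preferred option is to establish $\nip$ directly.
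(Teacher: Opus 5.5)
Your reduction from approximate to exact fitting matches the paper's argument exactly (Lemmas~\ref{lem:computealltypes} and~\ref{lem:FiniteVCandPTimeFittingImpliesPtimeApprox}): you build the realizable labelings $L_i$ incrementally with $\rqfo$ exact-fitting tests and bound $|L_i|$ by Sauer--Shelah. The gap is the step you yourself call the main one, $\rqfo$ fitting $\Rightarrow$ $\nip$, and your sketch for it does not work. The formula $\phi(\vec x \varsep \vec y)\wedge \vec z=\vec z$ is just $\phi$ with dummy variables, so it adds nothing. Moreover, a sample whose points lie in a $\phi$-shattered set is \emph{always} $\phi$-fittable. Fittability is therefore constant on such samples and cannot express parity or any other nontrivial property. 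The missing idea is a gadget formula in which the parameter encodes a global solution of a hard problem through its trace on a shattered set, while each example checks only a local constraint. The paper's gadget is $\gamma(\vec x_0,\vec x_1,\vec x_2,\vec z_0,\vec z_1,\vec z_2 \varsep y)$. It says that $y$ $\phi$-selects exactly one of $\vec x_0,\vec x_1,\vec x_2$ and exactly one of $\vec z_0,\vec z_1,\vec z_2$, with different indices. Each vertex is represented by three shattered tuples and the edges are fed in as positive examples. Then $\gamma$-fittability is equivalent to $3$-colorability (Claim~\ref{clm:colorableandfittable}).

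Your step 3 also fails as stated. An $\rqfo$ sentence is not a bounded-rank sentence over the bare active domain: it contains arbitrary $L$-subformulas with unrestricted quantifiers. These act on sample elements as arbitrary $L$-definable relations, and in an $\ip$ structure they can induce the full membership pattern between a finite set and its subsets. So no quantifier-rank or counting argument applies until the $L$-part is controlled. Also, the sentence outputs a single bit per sample, so the number of subsets a shattered set supports is not itself an obstruction. The paper controls the $L$-part as follows:
\begin{enumerate}
\item It passes to a model of $\theory(\amodel)$ with an order-indiscernible sequence (Ramsey and compactness); equivalence over embedded finite models transfers to such models (Proposition~\ref{prop:emb-preserved-nonstandard}).
\item It draws the shattered tuples from the indiscernibles, so that on such samples every $L$-subformula collapses to an order condition.
\item This yields an $\rqfo$ sentence over $(\ints,+,<)$ deciding $3$-colorability, contradicting the fact that $3$-colorability is not in $AC^0$.
\end{enumerate}
Finally, your fallback is circular. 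The paper proves $\rqfo$ fitting $\Rightarrow$ $\rqc$ (Theorem~\ref{thm:rqcandrqfitting}) by first invoking Lemma~\ref{lem:RQfittingImpliesFiniteVC} to obtain $\nip$, which Chernikov--Simon's $\udtfs$ result requires. So that theorem cannot be used to derive $\nip$.
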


By definition, \rqc implies \rqfo fitting, and so from
Theorem~\ref{thm:rqfoimpliesptimeapprox} we obtain:

\begin{corollary}\label{cor:rqcimpliesrqfoandptimefitting} If~$\amodel$
    is~$\rqc$, then every formula $\phi$ over~$\amodel$ has $\rqfo$ fitting and
    \ptime approximate fitting. In particular, this holds when $\phi$ is a
    formula over any of the structures in~Example~\ref{ex:nip}.
\end{corollary}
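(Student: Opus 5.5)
The corollary has three parts: $\rqfo$ fitting, then \ptime approximate fitting, then the instances in Example~\ref{ex:nip}. I prove them in that order; the \rqc hypothesis is used directly for the first.

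\emph{$\rqfo$ fitting from \rqc.} Fix a partitioned formula $\phi(x_1,\dots,x_j \varsep \vec y)$ over $\amodel$, and let $\relvoc=\{P,N\}$ consist of two $j$-ary relation symbols. Following the pattern of Example~\ref{example:rqc-is-nice}, I write the first-order sentence
\[
\psi \coloneqq \exists \vec y \left( \forall \vec x\,(P(\vec x) \rightarrow \phi(\vec x \varsep \vec y)) \land \forall \vec x\,(N(\vec x) \rightarrow \lnot \phi(\vec x \varsep \vec y))\right)
\]
in the signature $L \cup \relvoc$. Unfolding the definitions, for any embedded finite model $\embmodel$ with $P=S^+$ and $N=S^-$, we have $\amodel,\embmodel\models\psi$ if and only if the sample is $\phi$-fittable. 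This holds for every finite $S$, not only samples made of effective constants.

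By Definition~\ref{def:rqc}, $\psi$ is equivalent over embedded finite models for $\relvoc$ to some $\rqfo$ sentence $\phi_{\textit{fit}}$. That sentence witnesses Definition~\ref{def:rqfo-fitting} for $\phi$. Since $\phi$ was arbitrary, $\amodel$ has $\rqfo$ fitting. In particular $\phi$ has \ptime $0$-fitting by Proposition~\ref{prop:rqfoimpliesptime}.

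\emph{Approximate fitting.} Because the whole structure $\amodel$ has $\rqfo$ fitting, Theorem~\ref{thm:rqfoimpliesptimeapprox} applies directly. It gives \ptime approximate fitting for every formula over $\amodel$.

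\emph{The concrete structures.} The structures in Example~\ref{ex:nip} are the real ordered field, Presburger arithmetic, and $\prespower$. Example~\ref{ex:rqc} records that all three are \rqc; for $\prespower$ this rests on the argument in Appendix~\ref{app:fragile}. They also carry the standing assumptions of effective constants and $\ffte$ by Example~\ref{ex:standardstructures}; for $\prespower$ I take these as given under the standing assumption. The first two parts therefore apply to each of them.

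Nothing here is a real obstacle: the work lies in Theorem~\ref{thm:rqfoimpliesptimeapprox} and in the \rqc proofs for the concrete structures, both of which are taken as given. The one point to check is that the $\rqfo$ sentence supplied by \rqc is a sentence in exactly the signature $L\cup\{P,N\}$ with $P,N$ of arity $j$, as Definition~\ref{def:rqfo-fitting} requires. This holds because \rqc is stated for every relational signature disjoint from $L$.
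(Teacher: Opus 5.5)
Your proposal is correct and follows the paper's argument. Applying \rqc to the first-order fitting sentence gives $\rqfo$ fitting by definition, Theorem~\ref{thm:rqfoimpliesptimeapprox} then yields \ptime approximate fitting, and Example~\ref{ex:rqc} supplies \rqc for the structures of Example~\ref{ex:nip}. (Minor point: you need not assume $\ffte$ for $\prespower$, since Appendix~\ref{app:fragile} derives it from quantifier elimination.)
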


\subsection{Ideas behind the proof of
Theorem~\ref{thm:rqfoimpliesptimeapprox}}\label{sec:ideas-rqfoimpliesptimeapprox}

The proof of Theorem~\ref{thm:rqfoimpliesptimeapprox} has two steps.  
First, we show $\rqfo$ fitting for all formulas implies $\amodel$ is $\nip$. We
 prove something stronger: \rqfo~fitting for a single parameter suffices. This
 will be useful later.

\begin{restatable}{lemma}{lemQfittingImpliesFiniteVC}\label{lem:RQfittingImpliesFiniteVC}
    If every formula~$\phi(\vec x \varsep y)$ over $\amodel$ has \rqfo fitting,
    then~$\amodel$ is $\nip$.
\end{restatable}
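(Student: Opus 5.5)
The plan is to argue by contraposition. Suppose $\amodel$ is $\ip$. We will find a formula with a single parameter variable whose fittability, on samples drawn from a suitable indiscernible sequence, encodes the parity of the length of a finite linear order. No $\rqfo$ sentence can express that.

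\emph{Step 1: a single parameter suffices.} It is standard that $\nip$ can be tested on formulas with one variable on one side of the partition. Finiteness of the VC dimension is preserved under passing to the dual formula: the dual dimension is at most $2^{d+1}$. So if $\amodel$ is $\ip$, there is a formula $\phi(\vec x \varsep y)$ with $y$ a single variable and infinite VC dimension.

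\emph{Step 2: transfer to a saturated model and take an indiscernible sequence.} For each fixed sample size, the statement ``$\phi_{\textit{fit}}$ holds iff the sample is fittable'' is first-order in $L$. The restricted quantifiers unfold into finite disjunctions over the listed tuples. Hence $\rqfo$ fitting of a formula transfers to any elementary extension, and we may work in a sufficiently saturated $\acmodel \succeq \amodel$. By Ramsey and compactness we obtain an indiscernible sequence $(\vec a_i)_{i \in \N}$ such that every subset of it is cut out by $\phi(\cdot \varsep b)$ for some $b$.

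Define the single-parameter formula
\[
\phi'(\vec x_1, \vec x_2, z_1, z_2 \varsep y) \coloneqq \bigl(z_1 = z_2 \wedge \phi(\vec x_1 \varsep y)\bigr) \lor \bigl(z_1 \neq z_2 \wedge (\phi(\vec x_1 \varsep y) \leftrightarrow \neg \phi(\vec x_2 \varsep y))\bigr).
\]
Fix two elements $c \neq d$ for the flags. For $m \geq 1$, let $S_m$ be the sample with positive examples $(\vec a_i,\vec a_{i+1},c,d)$ for $i<m$, together with $(\vec a_1,\vec a_1,c,c)$ and $(\vec a_m,\vec a_m,c,c)$, and no negatives. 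Since every subset of the sequence is realized, $S_m$ is $\phi'$-fittable iff the alternating pattern starting true at $\vec a_1$ is true at $\vec a_m$, i.e.\ iff $m$ is odd.

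\emph{Step 3, the main obstacle: an $\rqfo$ sentence sees only an FO sentence over a finite order.} Let $\psi$ be the $\rqfo$ fitting sentence for $\phi'$. We want to show that the truth of $\psi$ on $S_m$ equals the truth of a fixed first-order sentence $\chi$ on the finite linear order $\{1,\dots,m\}$. The proof is by induction on $\psi$. Active-domain quantifiers range over coordinates of the $\vec a_i$ ($i \le m$) and over $c,d$; replace them by quantifiers over indices, each paired with a coordinate position or a flag choice. The resulting finite disjunctions have size depending only on $\psi$.

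The atomic subformulas of $\psi$ are $L$-formulas with arguments among these elements, possibly with internal unrestricted quantifiers. By indiscernibility (choosing the sequence indiscernible over $c,d$), their truth depends only on the order type of the indices involved, which is expressible in $(\{1,\dots,m\},<)$. Membership in $P$ is a statement about consecutive indices and the endpoints $1$ and $m$, which is FO-definable from $<$. The bookkeeping here must be done carefully, and this is where I expect the work to be.

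\emph{Step 4: conclusion.} We would then have that $m$ is odd iff $(\{1,\dots,m\},<) \models \chi$ for all $m$. This contradicts the classical Ehrenfeucht--Fra\"iss\'e fact that parity of finite linear orders is not first-order definable. Hence $\amodel$ is $\nip$.
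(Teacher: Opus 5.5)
Your proof is correct, and it reaches the contradiction by a genuinely different and more elementary route than the paper.

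The opening moves agree in substance. Both reduce to a one-parameter formula of infinite VC dimension via Shelah's theorem and VC duality. Both then transfer the \rqfo{} fitting sentence to a model of $\theory(\amodel)$ that carries indiscernibles; your unfolding of restricted quantifiers is the argument of Proposition~\ref{prop:emb-preserved-nonstandard}.

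After that the two arguments diverge. The paper builds a gadget $\gamma$ with $6\ell$ object variables encoding $3$-colorability. It normalizes by pigeonhole to a ``round-robin'' shattered family whose coordinates all lie in an order-indiscernible \emph{set} of elements, and replaces each $L$-subformula by order constraints. It then re-embeds graphs into $(\ints,+,<)$ by a quantifier-free map and concludes because $3$-colorability is not in $AC^0$ (via Furst--Saxe--Sipser).

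You instead encode the parity of a path whose two endpoints are forced positive. You work with an indiscernible \emph{sequence of tuples} over the flags $c,d$, coding each active-domain element by an index plus a coordinate position or flag. This lands you in first-order logic over the pure finite order $(\{1,\dots,m\},<)$, where the Ehrenfeucht--Fra\"iss\'e non-definability of parity finishes the argument.

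What your route buys:
\begin{itemize}
\item It needs no circuit lower bound and no arithmetic.
\item It sidesteps the paper's requirement that all coordinates of the shattered tuples come from a single indiscernible set of elements. That requirement is not automatic for a fixed $\phi$. If $\phi(x_1,x_2 \varsep y)$ forces $x_1,x_2$ to be partners in a perfect matching that is part of the structure, then no order-indiscernible set with at least three elements contains such a pair. Indexing by tuples avoids this issue entirely.
\end{itemize}

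What the paper's route buys is reuse. The same $3$-coloring gadget yields $\np$-hardness of $0$-fitting for structures with tractable IP witnesses (Proposition~\ref{prop:ipimplieshard}, hence for B\"uchi arithmetic). Your parity gadget gives only non-definability, since its fitting problem is easy.

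One trivial point: restrict to $m \geq 2$. Then $d$ lies in the active domain of $S_m$, and your translation of active-domain quantifiers is uniform in $m$.
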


We assume for contradiction that~$\amodel$ has a partitioned formula, which we
can take to be of form $\phi(\vec x \varsep y)$, with infinite VC dimension.
From this we construct a formula $\gamma(\vec z \varsep y)$ (still with a single
parameter) capturing $3$-colorability: finite graphs can be encoded as samples
so that a sample is $\gamma$-fittable if and only if the graph is $3$-colorable.
We then show that $\gamma$ having \rqfo fitting implies that there is a $\rqfo$
sentence~$\gamma'$ over the simpler structure $(\mathbb{Z},+,<)$ with additional
binary relation $E$ such that, on embedded finite models, $\gamma'$ holds
exactly when the graph induced by $E$ is $3$-colorable. This gives a
contradiction: since $\gamma'$ uses only order and addition, evaluating
$\gamma'$ on an embedded finite model is in $AC^0$, but $3$-colorability is not
in $AC^0$~\cite{FurstSS84}. 

In the second step of the proof, we use the fact that $\amodel$ is \nip and has
\ptime $0$-fitting to provide a \ptime algorithm for approximate fitting. By the
Shelah-Sauer lemma (Fact~\ref{fact:sslemma}, below), \nip implies that the
number of \emph{realizable partitions} of a set $S$ of elements in the range
space (i.e., the partitions of $S$ into positive and negative examples that can
be fitted exactly) is polynomial in the size of $S$.

\begin{fact}[\cite{shelahsauer1972}]\label{fact:sslemma} Let $\conceptclass$ be
    a concept class on range space $\rangespace$ and with VC dimension $d \in
    \N$. For every set $S \subseteq X$ of size~$m$, the set $\{\{ s \in S : f(s)
    = 1\} : f \in \conceptclass \}$ has size at most $\sum_{i=0}^{d} {m \choose
    i}$.
\end{fact}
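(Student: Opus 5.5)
We prove the bound by a double induction on $m$ and $d$, working only with the finite trace $\conceptclass|_S \coloneqq \{\{ s \in S : f(s) = 1\} : f \in \conceptclass\}$, viewed as a family $F$ of subsets of $S$. Write $\Phi_d(m) \coloneqq \sum_{i=0}^{d} {m \choose i}$. First observe that if $F$ shatters some $T \subseteq S$ (as a family of subsets of $S$), then $\conceptclass$ shatters $T$ in the sense of Definition~\ref{def:vcandfat}. So $F$ has VC dimension at most $d$. It therefore suffices to prove the purely combinatorial statement: every family $F$ of subsets of an $m$-element set with VC dimension at most $d$ satisfies $\card{F} \leq \Phi_d(m)$.

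\textbf{Base cases.} If $m = 0$, then $\card{F} \leq 1 = \Phi_d(0)$. If $d = 0$, then $F$ shatters no singleton. Suppose $F$ contained two distinct sets $A \neq B$, and pick $s$ in their symmetric difference. Then one of $A, B$ contains $s$ and the other does not, so $F$ shatters $\{s\}$, a contradiction. Hence $\card{F} \leq 1 = \Phi_0(m)$.

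\textbf{Inductive step.} Let $m, d \geq 1$, fix $x \in S$, and put $S' \coloneqq S \setminus \{x\}$. Define two families on $S'$:
\[ F_1 \coloneqq \{ A \setminus \{x\} : A \in F\}, \qquad F_2 \coloneqq \{ B \subseteq S' : B \in F \text{ and } B \cup \{x\} \in F\}. \]
The map $A \mapsto A \setminus \{x\}$ from $F$ onto $F_1$ is at most two-to-one. Its fibre over $B$ has size two exactly when both $B \in F$ and $B \cup \{x\} \in F$, that is, when $B \in F_2$. Hence $\card{F} = \card{F_1} + \card{F_2}$.

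Any $T \subseteq S'$ shattered by $F_1$ is also shattered by $F$, because membership of elements of $T$ does not depend on $x$. So $F_1$ has VC dimension at most $d$ on the $(m-1)$-element set $S'$, and by induction $\card{F_1} \leq \Phi_d(m-1)$.

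Now suppose $F_2$ shatters some $T \subseteq S'$. Then $F$ shatters $T \cup \{x\}$: each pattern on $T$ is realised by some $B \in F_2$, and both $B$ and $B \cup \{x\}$ lie in $F$. Hence $F_2$ has VC dimension at most $d-1$, and by induction $\card{F_2} \leq \Phi_{d-1}(m-1)$.

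\textbf{Conclusion.} Pascal's identity ${m-1 \choose i} + {m-1 \choose i-1} = {m \choose i}$ gives $\Phi_d(m-1) + \Phi_{d-1}(m-1) = \Phi_d(m)$, which closes the induction.

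The only delicate point is the inductive step. One must define $F_2$ so that the counting identity $\card{F} = \card{F_1} + \card{F_2}$ holds exactly, and at the same time so that shattering by $F_2$ lifts to shattering of a set one element larger. The two-to-one fibre analysis handles both requirements. Everything else is bookkeeping, including the base case $d = 0$, which keeps the index $d-1$ from going negative.
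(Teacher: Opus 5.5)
Your argument is correct and complete. It is the classical induction: remove a point $x$, split the trace into the projection $F_1$ and the ``doubled'' sets $F_2$, get $\card{F}=\card{F_1}+\card{F_2}$ from the fibre count, lift shattering of $T$ by $F_2$ to shattering of $T\cup\{x\}$ by $F$, and close with Pascal's identity. The paper gives no proof of its own for this statement; it imports it as a known fact from Shelah's 1972 paper, so there is no alternative route to compare against.

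One small point concerns your opening reduction, that shattering by the trace implies shattering by $\conceptclass$. It relies on the intended reading of Definition~\ref{def:vcandfat}, which requires $h(s)=0$ only for $s\in S\setminus E$. The paper's literal wording, $s\in X\setminus E$, is a typo. Under that literal reading the Fact would be false: half-lines on $\R$ together with $\emptyset$ would have VC dimension $0$ yet $m+1$ traces on $m$ points. Your reading is the right one.
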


Fact~\ref{fact:sslemma} does not provide a method for constructing the
realizable partitions, but we can use PTIME 0-fitting to circumvent this
problem: by issuing polynomially-many evaluations to $\rqfo$ sentences
(in~\ptime by Proposition \ref{prop:rqfoimpliesptime}) it is possible to compute
all realizable partitions in polynomial time (see
Lemma~\ref{lem:computealltypes} in Appendix~\ref{app:fromexacttoapprox}). Then,
approximate fitting reduces to determining whether the smallest distance between
any realizable partition and the given sample is at most the
tolerance~$\epsilon$.

\subsection{The relationship between $\rqc$ and $\rqfo$ fitting} 
\rqfo fitting captures the idea of being able to solve a fitting problem by
evaluating an \rqfo sentence over the positive and negative examples. A priori,
this seems like a much weaker property than \rqc, which allows for more general
queries over (arbitrary) embedded structures. Surprisingly, this intuition turns
out to be wrong: \rqfo fitting is \emph{equivalent} to \rqc, with this
equivalence holding even when \rqfo fitting is considered \emph{only for
formulas with a single parameter variable}.

\begin{restatable}{theorem}{TheoremRQCandRQFitting}\label{thm:rqcandrqfitting} \
$\amodel$ has $\rqfo$ fitting $\iff$ Every $\phi(\vec x \varsep y)$ over
$\amodel$ has $\rqfo$ fitting $\iff$ $\amodel$ is $\rqc$.
\end{restatable}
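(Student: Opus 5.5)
The chain of implications is $\rqc \Rightarrow$ ($\amodel$ has $\rqfo$ fitting) $\Rightarrow$ (every $\phi(\vec x \varsep y)$ has $\rqfo$ fitting) $\Rightarrow \rqc$. The first two arrows are immediate. The first holds because the fitting sentence $\exists \vec y\,(\forall \vec x\,(P(\vec x)\to\phi)\wedge\forall\vec x\,(N(\vec x)\to\neg\phi))$ is a first-order sentence in $L\cup\{P,N\}$, so $\rqc$ collapses it to an $\rqfo$ sentence. The second is just specialization to a single parameter. All the work is in the last arrow.

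For the last arrow I would go through Uniform Definability of Types over Finite Sets ($\udtfs$), as the Takeaway suggests. I would use the known characterization from embedded finite model theory: an $\nip$ structure in which every formula $\phi(\vec x \varsep y)$ has $\udtfs$ is $\rqc$ (cf.\ the isolation/UDTFS results of \cite{isolation}). Here $\udtfs$ for $\phi(\vec x \varsep \vec y)$ means the following. There are finitely many formulas $\psi_i(\vec y; \vec z_1,\dots,\vec z_k)$ such that, for every finite set $F$ of $\vec x$-tuples and every parameter $\vec b$, some $i$ and some $\vec c \in F$ satisfy: $\psi_i(\cdot,\vec c)$ defines on $F$ the $\phi$-type of $\vec b$, i.e.\ $\{\vec a\in F : \phi(\vec a,\vec b)\}$. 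Lemma~\ref{lem:RQfittingImpliesFiniteVC} already yields $\nip$ from $\rqfo$ fitting for single-parameter formulas. It would then remain to show $\udtfs$. If the criterion needs $\udtfs$ only for single-parameter formulas, or if multiple parameters can be reduced to one (e.g.\ via a standard argument that $\udtfs$ in one parameter variable propagates, as in \cite{isolation}), this suffices.

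To get $\udtfs$ for $\phi(\vec x\varsep y)$, I would apply $\rqfo$ fitting to an auxiliary single-parameter formula. Let $\phi_{\textit{fit}}$ be the $\rqfo$ sentence for $\phi$. For finite $F$ and a subset $S^+\subseteq F$, the question whether $S^+$ is a realizable $\phi$-type over $F$ is exactly $\phi_{\textit{fit}}$ on $P=S^+$, $N=F\setminus S^+$. The real goal, however, is to define $S^+$ uniformly from a bounded number of elements of $F$. My plan is a compactness-style argument. Suppose $\udtfs$ fails. Then for each candidate finite family of defining formulas there is a counterexample $(F,b)$. Since $\phi_{\textit{fit}}$ quantifies only over $F$, it reduces realizability of a type over $F$ to a quantifier-rank-bounded $L$-property of tuples from $F$. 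I would then use the active-domain structure of $\phi_{\textit{fit}}$ to extract explicit definitions of maximal consistent types. A concrete attempt is to apply fitting to $\phi'(\vec x, x' \varsep y) := \phi(\vec x, y)$, augmented with an extra sentinel coordinate to mark membership. Then use the $\rqfo$ sentence, with one example varied, to define the realizable type containing a given element by a formula in parameters from $F$.

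The main obstacle is this last step: extracting uniform definitions of types, i.e.\ Skolem-like witnesses from inside $F$, from a mere yes/no $\rqfo$ fitting sentence. If direct extraction fails, my fallback is to reuse the proof of Lemma~\ref{lem:RQfittingImpliesFiniteVC}. Assuming $\rqc$ fails, pick a witnessing first-order query over embedded finite models. Encode its evaluation as a fitting problem for a cleverly built single-parameter formula $\gamma$, coding the embedded model into the sample much as graphs were coded for 3-colorability. Then $\rqfo$ fitting for $\gamma$ would express the non-collapsing query in $\rqfo$, which is a contradiction.
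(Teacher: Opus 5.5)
Your two easy arrows are correct, and getting \nip from Lemma~\ref{lem:RQfittingImpliesFiniteVC} is the right first step, but the hard arrow has a genuine gap. The criterion you plan to invoke, ``a \nip structure with \udtfs is \rqc'', is false. By Chernikov and Simon~\cite{simon-chernikov-two}, every \nip structure already has \udtfs for all partitioned formulas. So your criterion would amount to saying that \nip implies \rqc, and that fails for the \nip but non-\rqc equivalence-relation structure of Example~\ref{ex:rqc}. What \cite{isolation} provides is that \emph{isolation} yields \rqc. Isolation is distality: a formula with parameters from $F$ that implies the type throughout the whole structure. This is strictly stronger than defining the trace of a type on $F$. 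So your effort to extract \udtfs from the fitting sentence is unnecessary, since it is free once you have \nip. Even if it succeeded, it would not finish the proof.

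The missing idea is that the one-parameter fitting hypothesis has to be used a \emph{second} time, inside an induction on the number of unrestricted quantifiers. \udtfs serves only to bring each such quantifier into fitting shape. Consider $\psi(\vec z)\coloneqq\exists y\,Q_1u_1\cdots Q_ku_k\,\gamma(y,\vec u,\vec z)$, where the $Q_i$ are active-domain quantifiers and $\gamma$ is an $L$-formula (in general one treats each maximal $L$-subformula this way). Apply \udtfs to $\gamma(y,\vec z\varsep\vec u)$ to get $\delta(\vec u\varsep\vec p)$. Here the \emph{parameters} are the active-domain variables $\vec u$, which Chernikov--Simon covers; the single parameter needed for fitting is $y$. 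Over active domains of size at least two, $\psi$ is then equivalent to $\exists\vec p\in\adom\,\bigl(\eta(\vec z,\vec p)\wedge Q_1u_1\cdots Q_ku_k\,\delta(\vec u,\vec p)\bigr)$, where $\eta\coloneqq\exists y\,\forall\vec u\in\adom\,(\gamma\leftrightarrow\delta)$, and $y$ no longer occurs in the second conjunct. Now $\eta$ is literally a positive-examples-only fitting problem for the single-parameter formula $(\gamma\leftrightarrow\delta)(\vec u,\vec z,\vec p\varsep y)$. So \rqfo fitting supplies an \rqfo sentence $\eta^*(P)$. Substituting $\adom(\vec w_0)\wedge\vec w_1=\vec z\wedge\vec w_2=\vec p$ for each atom $P(\vec w_0,\vec w_1,\vec w_2)$ gives an \rqfo formula equivalent to $\eta$; active domains of size $0$ or $1$ are handled separately. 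Your fallback of encoding a non-collapsing query as a fitting problem points in this direction, but it cannot work as stated. A fitting problem has the shape $\exists y\,\forall(\text{examples})$, whereas a general query has alternating restricted quantifiers after $\exists y$. \udtfs is exactly what collapses that alternation into one universal block plus a $y$-free remainder.
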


\begin{proof}[Proof sketch]
The interesting direction is from one-parameter $\rqfo$ fitting to $\rqc$.
By~Lemma~\ref{lem:RQfittingImpliesFiniteVC}, one-parameter $\rqfo$ fitting
implies that $\amodel$ is $\nip$. We rely on a result in model theory of
Chernikov and Simon~\cite{simon-chernikov-two}, showing that $\nip$ structures
have a property called \emph{``Uniform Definability of Types over Finite Sets''
($\udtfs$)}. Formally, a partitioned formula $\gamma(\vec x \varsep \vec y)$
with $j$ object variables and $k$ parameter variables has $\udtfs$ (with respect
to $\amodel$) whenever:

\smallskip
\textit{There is an $L$ formula $\delta(\vec y \varsep \vec p)$ such that for
any finite set $S$ of $k$-tuples in $\amodel$ of cardinality at least~two, for
any $j$-tuple $\vec x_0$, there is %
$\vec p_0$ from $S$ such that $\amodel \models \forall \vec y \in S ~
\gamma(\vec x_0, \vec y) \leftrightarrow \delta(\vec y, \vec p_0)$.}
\smallskip

Let $V$ be a relational signature disjoint from the signature $L$ of $\amodel$.
By induction on the number of unrestricted quantifiers, we show that every
first-order formula in~$L \cup V$ is equivalent over embedded finite models for
$\amodel$ to an~$\rqfo$ formula. The induction step considers a formula
\[
\psi(\vec z) \coloneqq \exists y ~ Q_1(u_1) \ldots Q_k(u_k) ~ \gamma(y, \vec u, \vec z),
\]
where $\gamma$ is an $L$ formula, $\exists y$ is an unrestricted quantifier,
and each $Q_i(u_i)$ is an active domain quantifier (the induction hypothesis
ensures we can restrict our attention to these quantifiers).

For simplicity, we restrict the proof sketch to embedded finite models with
active domain of at least two elements, and assume that~$\gamma$ is an
$L$-formula. 
We apply $\udtfs$ on $\gamma$ to get a formula~$\delta(\vec u \varsep \vec p)$.
Let $\eta(\vec z, \vec p) \coloneqq \exists y\,\forall \vec u_1 \in \adom \dots
\forall u_k \in \adom \colon \gamma(y, \vec u, \vec z) \leftrightarrow
\delta(\vec u, \vec p)$.

The formula $\psi(\vec z)$ is equivalent, over embedded finite models with
active domain at least two, to 
\[
    \exists p_1 \in \adom \dots \exists p_k \in \adom \big(\eta(\vec z, \vec p) \land 
    \exists y ~ Q_1(u_1) \ldots Q_k(u_k) \colon \delta(\vec u, \vec p)\big).
\] 
The variable $y$ does not appear in $\delta$, and thus $\exists y$ be eliminated
from the rightmost conjunct. It remains to convert $\eta$ into an $\rqfo$
formula. We consider the fitting problem for the one-parameter formula
$\eta_{\textit{qf}}(\vec u, \vec z, \vec p \varsep y) \coloneqq (\gamma(y, \vec u, 
\vec z) \leftrightarrow \delta(\vec u, \vec p))$, on positive examples only.
This problem corresponds to the sentence $\exists y \,\forall \,\vec u, \vec z,
\vec p \colon P(\vec u, \vec z, \vec p) \rightarrow \eta_{\textit{qf}}(\vec u,
\vec z, \vec p \varsep y)$, for which we can find an equivalent $\rqfo$ sentence
$\chi$, by the $\rqfo$ fitting hypothesis. Wlog., we can assume that the
variables $\vec z$ and $\vec p$ do not occur in the sentence $\chi$. We replace
all occurrences of $P(\vec w_0,\vec w_1, \vec w_2)$ in $\chi$ (where $\vec w_0$
and $\vec w_1$ have the same cardinality as $\vec u$ and $\vec z$, respectively)
with the conjunction $\adom(\vec w_0) \land \vec w_1 = \vec z \land \vec w_2 =
\vec p$. The proof concludes by showing that the resulting~\rqfo formula is
equivalent to $\eta$.
\end{proof}

\section{$\ptime$ fitting beyond $\rqc$}\label{sec:betweenptimeandrqfoconcept}

\begin{takeaway} 
    We show that \ip structures do not necessarily have \ptime $0$-fitting; yet
    formulas of finite VC dimension over an automatic structure have \ptime
    approximate fitting.
\end{takeaway}

Theorem~\ref{thm:rqfoimpliesptimeapprox} shows that~\rqc (equivalently,~\rqfo
fitting) suffices for \ptime approximate fitting. Recall that the first step of
the proof of this theorem shows that \ip structures cannot be \rqc by showing an
encoding of $3$-colorability. Intuitively, this tells us that for many $\ip$
structures, there are fitting problems that are $\np$-hard (formalizing a
general statement requires further computational assumptions, see
Appendix~\ref{app:ipwitness}). A concrete example of this fact is given
by~\emph{B\"uchi Arithmetic}, the expansion of Presburger arithmetic with the
binary relation $V_2 \coloneqq \{(x,y) \in \N^2 : \text{$y$}$ is the largest
power of $2$ that divides $x\}$. This structure is $\ip$: the partitioned
formula $\textit{bit}(x \varsep y) \coloneqq V_2(x,x) \land \exists y_1, z, y_2
\,\left( y = y_1 + x + y_2 \land y_1 < x < z \land V(y_2,z) \right)$ holds
exactly when $x$ is a power of two appearing in the binary expansion of $y$, and
thus shatters the powers of two. It also has $\ffte$ and effective constants
(all naturals encoded in binary); this follows from the fact it is an automatic
structure (defined below)~\cite{Buechi1960,BruyareHMV94}. By reduction from
$3$-colorability, we see that there are formulas over B\"uchi arithmetic for
which the $0$-fitting problem is $\np$-hard. Moreover, we can easily show that all formulas
over B\"uchi arithmetic have approximate fitting problems in $\np$ (see
Appendix~\ref{app:buchinpmembership}).

\begin{restatable}{proposition}{buchiexactconcept}\label{prop:buchiexactconcept}
  The approximate and (exact fitting) problem of any formula over B\"uchi
  Arithmetic is in $\np$. There are formulas for which the $0$-fitting problem
  is $\np$-complete.
\end{restatable}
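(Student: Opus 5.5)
For the $\np$ upper bound, I will guess a fitting parameter and check it, but the guessed parameter must have polynomially many bits. Fix $\phi(\vec x \varsep \vec y)$ over B\"uchi arithmetic $(\N,+,V_2)$. Since B\"uchi arithmetic is automatic, $\phi$ is recognized by a fixed synchronous automaton $\mathcal A_\phi$ over $(\{0,1\}^{j+k})$-words (least significant bit first, padded with zeros).

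Given a sample with positive examples $S^+$ and negative examples $S^-$, both of size at most $n$, and a tolerance $\epsilon$, the certificate is:
\begin{compactitem}
\item a set of at most $\epsilon$ examples to be treated as errors;
\item a parameter $\vec p$ such that $\phi(\vec q \varsep \vec p)$ holds for the remaining positive examples and $\neg\phi(\vec q \varsep \vec p)$ for the remaining negative ones.
\end{compactitem}
Once $\vec p$ has polynomial size, verification is polynomial by $\ffte$.

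The key step is a small-witness lemma. Consider the product automaton that simultaneously reads the word of $\vec p$ together with every example $\vec q_i$. The $i$-th component runs $\mathcal A_\phi$ on $(\vec q_i,\vec p)$ for positive $i$, and the complement automaton for negative $i$. Let $B$ be the maximal bit length of the sample. After position $B$ all example tracks are zero, so the product state at position $B$ is a tuple $(s_1,\dots,s_n)$ of states of the fixed automaton. From that point on, each component evolves identically as a function of the remaining $\vec p$-bits alone.

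The suffix of $\vec p$ beyond $B$ therefore only has to drive, in a subset-construction-like way, the automaton $\mathcal A_\phi$ restricted to inputs with $\vec x = \vec 0$, starting from the set $\{s_1,\dots,s_n\}$, to a configuration in which all components accept. Since only the set of distinct states matters, and its size is bounded by the fixed constant $|\mathcal A_\phi|$, the relevant configuration space has constant size. A pumping argument then gives a suffix of constant length. Hence, if any witness exists, one of bit length $B+O(1)$ exists. I expect this lemma to be the main obstacle: acceptance of padded words must be handled uniformly, which can be done by taking $\mathcal A_\phi$ padding-closed. For $L^1$ loss on concept classes the error count is exactly a choice of subset, so this settles both exact and approximate fitting.

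For $\np$-hardness of $0$-fitting, I reduce from 3-colorability using $\textit{bit}(x \varsep y)$, which shatters the powers of two. I would encode a coloring by one parameter $y$ partitioned into three blocks of bits, one per color, with vertex $v$ represented by powers $2^{3v}, 2^{3v+1}, 2^{3v+2}$.

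A formula $\gamma(z_1,z_2,z_3 \varsep y)$ then expresses the constraints via $\textit{bit}$:
\begin{compactitem}
\item for a vertex example $(2^{3v},\cdot,\cdot)$ tagged by a flag, exactly one of the three color bits is set;
\item for an edge example $(2^{3u},2^{3v},\cdot)$, the bits $\textit{bit}(2^{3u+c} \varsep y)$ and $\textit{bit}(2^{3v+c} \varsep y)$ are not both set, for each $c$.
\end{compactitem}
Multiplication by $2$ is expressible additively as $x+x$. All examples are positive, and they have polynomial bit size (numbers of $O(|V|)$ bits). A sample is then $\gamma$-fittable exactly when the graph is 3-colorable, which completes the reduction.
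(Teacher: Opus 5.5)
Your proposal is correct and follows essentially the paper's route. The \np upper bound is the paper's polynomial witnessing property for automatic structures: guess the error set, then pump on the \emph{set} of states reached once the sample's bits are exhausted. The one difference is that you handle negative examples via the complement automaton, so you must track state/polarity pairs (still a set of constant size, just $2|\mathcal A_\phi|$ rather than $|\mathcal A_\phi|$); the paper instead first reduces to positive-only samples with the formula $\phi'$.

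The hardness part is likewise the paper's 3-coloring reduction through $\textit{bit}$, with three powers of two per vertex. You specialize it to B\"uchi arithmetic using doubling ($x+x$) and flagged vertex examples, whereas the paper uses the generic $6\ell$-object-variable formula $\gamma$, which works for any formula with tractable IP witnesses.
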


\prespower is a reduct of B\"uchi arithmetic with ~\rqc (Example~\ref{ex:rqc}),
and there are a number of other reducts of B\"uchi known to be \rqc from prior
work \cite{modeltheoryofstrings}. Thus, we already know that many
definable concept classes over B\"uchi arithmetic admit \ptime approximate fitting. We now show
that for definable concept classes over B\"uchi arithmetic \emph{finite VC
dimension implies tractable fitting}.

\myparagraph{Automatic structures} 
For a sequence $w_1,\dots,w_k$ of finite words over an alphabet $\Sigma$, their
\emph{convolution} $w_1 \otimes \dots \otimes w_k$ is the word over the alphabet
$(\Sigma \cup \{\Box\})^k$ obtained by first padding each $w_i$ with a fresh
symbol $\Box \not \in \Sigma$ to a common length, to then take the
component-wise product. For instance, for $w_1 = a$ and $w_2 = bcd$, we have
$w_1 \otimes w_2 = 
    \left[\begin{smallmatrix} \vphantom{\Box}a \\ b \end{smallmatrix}\right] {\cdot}
    \left[\begin{smallmatrix} \Box \\ \vphantom{b}c \end{smallmatrix}\right] {\cdot}
    \left[\begin{smallmatrix} \Box \\ d \end{smallmatrix}\right]$, where $\cdot$
denotes concatenation.

A relation $R \subseteq (\Sigma^*)^k$ is \emph{regular} if the language~$\{(w_1
\otimes \dots \otimes w_k) \cdot p : (w_1,\dots,w_k) \in R, {p \in
\{\Box\}^*}\}$ is regular. A relational structure $\amodel$ is \emph{automatic}
if there is a regular language $D \subseteq \Sigma^*$ and a surjection $r \colon
D \to M$ such that, for every relation $R \subseteq M^k$ in the structure, the
relation $\{{(w_1,\dots,w_k) \in D^k} : (r(w_1),\dots,r(w_k)) \in R\}$ is
regular. 
See~\cite{gradel20years} for an overview on automatic structures. For an
automatic structure, we take $M$ as the set of effective constants, with $D$
providing their encodings.

\begin{restatable}{theorem}{thmbuchi}\label{thm:buchi} Consider a partitioned
    formula $\phi(\vec x \varsep \vec y)$ over an automatic structure $\amodel$.
    If $\phi$ has finite VC dimension, then $\phi$ has \ptime approximate
    fitting.
\end{restatable}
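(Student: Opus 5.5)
The plan is to reduce approximate fitting to enumerating all \emph{realizable labelings} of the training inputs, as in the second step of Theorem~\ref{thm:rqfoimpliesptimeapprox}. Let $S=\{s_1,\dots,s_n\}$ be the training inputs. Since $\phi$ has finite VC dimension $d$, Fact~\ref{fact:sslemma} says $S$ has at most $O(n^d)$ realizable labelings. If we can list them all in polynomial time, approximate fitting is immediate: compare each realizable labeling with the target labels and check whether the minimal Hamming distance is at most $\epsilon$.

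The obstacle is that we have neither \rqc nor \ptime $0$-fitting to test candidate labelings; for B\"uchi arithmetic $0$-fitting is in general \np-hard (Proposition~\ref{prop:buchiexactconcept}). So the enumeration must use automata.

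\textbf{Setup.} Fix a deterministic automaton $A$, with state set $Q$, recognizing the convolution encodings of $\phi(\vec x \varsep \vec y)$. Take $A$ minimal, so that any two distinct states have distinguishing continuations. A parameter $\vec p$, encoded as a word $u$, determines its labeling of $S$ by running $n$ copies of $A$ in parallel on $s_i \otimes u$, for $i\le n$. I will perform a level-by-level search over the letters of $u$. At level $t$ we keep the set $R_t \subseteq Q^n$ of state vectors reached after reading some length-$t$ prefix of $u$ alongside each $s_i$. Successors are computed letter by letter, and a vector is discarded once it can no longer reach an accepting-or-rejecting configuration in any copy.

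Beyond the maximal length $L$ of the encodings of the $s_i$, all object tracks are padded with $\Box$. From then on, the vector evolves under a fixed finite transition system driven only by the $\vec y$-track. Hence the set of labelings realizable from a vector $\vec q \in R_L$ is determined by the \emph{set} of distinct states in $\vec q$ together with the mapping $i \mapsto q_i$. This set of labelings is computable in time $|Q|^{O(|Q|)}\cdot n$ by a reachability analysis on $Q^{Q}$. We also need to impose that the $\vec y$-track is a valid encoding in $D$, which we handle by taking the product with an automaton for $D$.

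\textbf{Main step: bounding $|R_t|$.} The hard part is to show that $|R_t|$ is polynomially bounded in $n$ for every $t \le L$, which makes the search polynomial. For each state $q$, consider the relation $\theta_q(\vec x \varsep u)$ stating that $A$ is in state $q$ after reading the first $|u|$ letters of $\vec x \otimes u$. A vector in $R_t$ is determined by the traces of the $|Q|$ families $\theta_q$ on $S$. So it suffices to show that each $\theta_q$ has finite VC dimension, with a bound independent of $t$; Fact~\ref{fact:sslemma} then bounds $|R_t|$ by $O(n^{d'|Q|})$.

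To get finite VC dimension of $\theta_q$, I would argue by contradiction. Minimality gives, for each pair $q\neq q'$, a continuation distinguishing them. By pumping we can choose these continuations of bounded length on the $y$-track, with suffixes of the object tracks handled by quantifying over them. Then $\theta_q$ becomes a fixed Boolean combination of finitely many formulas, each obtained from $\phi$ by fixing a bounded continuation of the parameter. Finite VC dimension is preserved under Boolean combinations and under such definable reparametrizations. Therefore a large set shattered by $\theta_q$ would yield a large set shattered by $\phi$.

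The delicate point is that the object tracks are not yet exhausted at time $t<L$, so the continuation must be applied jointly to the remaining parts of the $s_i$. I expect this to be the main technical work. The first thing I would try is to reorganize the encoding so that the relevant remaining parts are absorbed into the parameter: for instance, read $u$ in reverse, or pass to an equivalent automatic presentation in which the parameter is always at least as long as the objects.
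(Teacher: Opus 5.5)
Your architecture is sound and close to the paper's. The paper also runs a level-by-level reachability over tuples of states of a minimal DFA, in product with a DFA for $D\cdot\{\Box\}^*$, and gets polynomial time from a Sauer--Shelah bound on the number of reachable tuples (Claim~\ref{clm:polyreachable}). Two of your choices are legitimate alternatives. Enumerating realizable labelings directly is fine, in place of proving \ptime $0$-fitting and invoking Lemma~\ref{lem:FiniteVCandPTimeFittingImpliesPtimeApprox}. Your finite reachability analysis on $Q_D\times Q^{Q}$, for the part of the parameter beyond the objects, is a correct substitute for the paper's use of \pwp (Lemma~\ref{lem:automatic-pwp}), which bounds parameter length polynomially so everything fits in one padded window. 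The gap is the main step, the polynomial bound on $|R_t|$: you leave it open, and the obstacle you name is a misdiagnosis.

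The missing idea is truncation. The state tuple reached at level $k$ by a length-$k$ parameter $u$ depends only on the length-$k$ prefixes $t_i$ of the padded $s_i$. The rest of the $s_i$ plays no role, so you may discard it and append anything you like. Concretely:
\begin{itemize}
\item Take distinguishing words $w_j=a_j\otimes b_j$ for the states of the minimal DFA, with $j\le h<|Q|$, padded to a common length.
\item Form the \emph{new} object set $T=\{t_i\cdot a_j : i\le n,\ j\le h\}$, of size at most $nh$.
\item Since $|t_i|=|u|=k$, we have $(t_i\cdot a_j)\otimes(u\cdot b_{j'})=(t_i\otimes u)\cdot(a_j\otimes b_{j'})$.
\item Hence the tuple reached via $u$ is determined injectively by the $h$ traces on $T$ of the concepts with parameters $u\cdot b_1,\dots,u\cdot b_h$, giving $|R_k|\le\bigl(\sum_{i\le d}\binom{nh}{i}\bigr)^h$.
\end{itemize}
This is exactly the paper's argument. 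No re-encoding is needed (reversing $u$, changing the presentation). Quantifying over object suffixes would blur the state rather than pin it down.

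The restriction to a fixed level is essential. With $u$ ranging over all lengths, $\theta_q$ can have infinite VC dimension even though $\phi$ does not. Take $\phi(x\varsep y)$ saying $|x|=|y|$ and $x$ ends with $a$, which has VC dimension $1$. Its minimal DFA must remember the last letter of the $x$-track. So for the corresponding state $q$, $\theta_q(x\varsep u)$ says that the $|u|$-th letter of $x$ is $a$, and this shatters arbitrarily large sets.

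One detail the paper also glosses over: the words $t_i\cdot a_j$ need not be valid encodings. This is harmless if the DFA accepts only padded convolutions of words of $D$.
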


\begin{proof}[Proof idea]
    Assuming a single object variable and a single parameter variable, let
    $\mathcal{A}$ be the automaton defining the binary relation given by $\phi(x
    \varsep y)$. Let $S = (s_1,b_1),\dots,(s_n,b_n)$ be a sample with examples
    $s_i$ padded to a common length $\ell$. For a $k \in [0..\ell]$, we say that
    a tuple $(q_1,\dots,q_n)$ of states of $\mathcal{A}$ is \emph{$S$-reachable
    at~$k$} if there is some $y$ and some position such that for each $j \in
    [1..n]$, the automaton $\mathcal{A}$ run on $(s_j,y)$ reaches state $q_j$ at
    position $k$. The algorithm for approximate fitting is a standard
    reachability algorithm that computes the $S$-reachable tuples inductively on
    $j$, and then checks if an $S$-reachable tuple at $\ell$ agrees with $S$ (in
    terms of accepting/rejecting states) on all but at most $\epsilon$ examples.
    As in the proof of Theorem~\ref{thm:rqfoimpliesptimeapprox}, we can use the
    finite VC dimension hypothesis to show a polynomial bound on the number of
    $S$-reachable tuples. See Appendix \ref{app:buchinipptime} for details.
\end{proof}
Note that the converse of Theorem \ref{thm:buchi} fails: there are definable
concept classes in B\"uchi arithmetic with infinite VC dimension but tractable
fitting: $\textit{bit}(x \varsep y)$ can easily be seen to give such an example.

Our results leave open the question of whether \nip and \ffte alone guarantee
\ptime (exact or approximate) fitting, with
Theorems~\ref{thm:rqfoimpliesptimeapprox} and~\ref{thm:buchi} telling us that a
counterexample to this would have to be a structure that is neither automatic
nor with \rqc. One way such a \nip structure might still have \ptime fitting is
if it can be expanded into an~$\rqc$ one (while retaining $\ffte$):

\begin{example}\label{ex:expandablerqc} Consider the last structure from
Example~\ref{ex:rqc}, involving an equivalence relation on the natural numbers
with classes of unbounded finite size. As remarked there, this structure is not
$\rqc$, and hence lacks $\rqfo$ fitting by Theorem \ref{thm:rqcandrqfitting}.
However, it can be expanded to $\prespower$, which is $\rqc$ and thus does have
$\ptime$ fitting.
\end{example}

Whether every \nip structure can be expanded to be \rqc  is open. We note that
the structure from Example~\ref{ex:expandablerqc} can alternatively be expanded
by a total order to get $\rqc$. However, it is known~\cite{addinglinearorders}
that there are \nip structures that cannot be expanded to have a linear order
without losing~$\nip$, thus we cannot hope to be able to add a total order in
general. In Example~\ref{ex:rqc} we recalled that there are some broad
model-theoretic properties that imply $\rqc$, most notably distality
\cite{isolation, simon-chernikov-two}. Unfortunately, it has been shown that
there are \nip structures that cannot be expanded to a distal structure
\cite{nodistalexpansion}; but these counterexamples are still $\rqc$.

\section{Fitting for logically-defined real-valued
functions}\label{sec:functions}
\begin{takeaway} 
    We move our attention to hypothesis classes given by definable real-valued
    functions. If the function is ``piecewise'', we show that the approximate
    fitting problem is tractable over many~$\rqc$ structures. Otherwise, one can
    readily obtain fitting problems that are sum-of-square-root-hard.
\end{takeaway}

We now consider fitting problems for hypothesis classes of real-valued
functions. As mentioned in the introduction, such classes can be built from
logic in several ways. In this section, we take the most direct approach:
definable families of functions over a \emph{numerical structure}, i.e., one
whose domain is a subset of~$\R$. One can represent such a family through a
partitioned formula $\phi(\vec x \varsep z \varsep \vec y)$ with variables split
into object variables~$\vec x$, a single \emph{output variable} $z$, and
parameter variables $\vec y$. The formula $\phi$ is assumed to be functional:
for every parameter~$\vec y^*$ and object~$\vec x^*$, there is a unique~$z^*$
such that $\phi(\vec x^* \varsep z^* \varsep \vec y^*)$. We write $f_\phi$ for
the function defined by $\phi$, i.e., the function satisfying $f_\phi(\vec x,
\vec y) = z \leftrightarrow \phi(\vec x \varsep z \varsep \vec y)$. The
\emph{definable function class}~$\hypoclass_{f_\phi}$ given by $f_\phi$ is the
hypothesis class containing all functions $f_{\vec p} \coloneqq (\vec x \mapsto
f_{\phi}(\vec x, \vec p))$, where $\vec p$ ranges over \mbox{the parameters.}

\begin{remark}\label{rem:exactfunction} It is easy to see that the $0$-fitting
problem for definable function classes trivially reduces to the concept class
case: given an input sample $S \coloneqq (\vec s_1,t_1),\dots,(\vec s_n,t_n)$,
we have
\[
    (\exists \vec y\, \sum\nolimits_{i=1}^n|f_\phi(\vec s_i, \vec y) - t_i| \leq 0) 
    \leftrightarrow 
    (\exists \vec y\, \bigwedge\nolimits_{i=1}^n \phi(\vec s_i \varsep t_i \varsep \vec y)),
\]
and so we can alternatively solve the $0$-fitting problem for $\phi(\vec x , z
\varsep \vec y)$, treating $S$ as the set of positive examples. In particular,
by Theorem~\ref{thm:rqfoimpliesptimeapprox} this implies that for every $\rqc$
numerical structure, like the real ordered field~$(\R,+,\cdot,<)$ or Presburger
Arithmetic, definable function classes have $\ptime$ $0$-fitting. In general, we
will not obtain an analogous result for approximate fitting. 
\end{remark}

\subsection{Approximate fitting for piecewise functions}\label{subsec:approximate-piecewise}
Consider a formula $\phi(\vec x \varsep z \varsep \vec y )$. 
We say that $f_\phi$ (and $\phi$) are \emph{piecewise}
whenever $\phi$ is of the form
\begin{equation}\label{eq:piecewise}
    \bigvee\nolimits_{i=1}^m \psi_i(\vec x, \vec y) \land z = \ell_i(\vec x, \vec y),
\end{equation}
where the \emph{guards} $\psi_1,\dots,\psi_m$ form a partition of the range and parameter spaces,
and each $\ell_i$ is a term. In this case, we say that $\hypoclass_{f_\phi}$ is
a \emph{piecewise  function class}.
We show that, for piecewise function classes, the landscape of approximate
fitting problems reduces to satisfiability in fixed dimension.

\begin{restatable}{lemma}{piecewisefunctionreduction}\label{lemma:piecewise-functions-reduction}
Let $\amodel$ be a numerical structure with domain $M$ containing the naturals,
and with signature $L$ containing $0$, $1$, $+$, and $<$, interpreted
as usual. Assume $\amodel$ is~\nip, and all its partitioned formulas have \ptime
$0$-fitting. Let $f_\phi$ be a piecewise function given by a 
formula~$\phi(\vec x \varsep z \varsep \vec y)$ over~$\amodel$. 
There is a polynomial-time algorithm that given in input a sample $S$ and 
a tolerance~$\epsilon$, computes $\phi'(\vec y)$ over $\amodel$, with the property
that $\hypoclass_{f_\phi}$ \mbox{$\epsilon$-fits} $S$ if and only if $\amodel
\models \exists \vec y\,\phi'(\vec y)$. If $\phi$ is quantifier-free,
so~is~$\phi'$.
\end{restatable}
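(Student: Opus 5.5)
The plan is to separate the combinatorial part of the fit (which guard fires on which example) from the numerical part (the actual error), and handle the former with the machinery from the proof of Theorem~\ref{thm:rqfoimpliesptimeapprox}. Let $S=(\vec s_1,t_1),\dots,(\vec s_n,t_n)$. For a fixed parameter $\vec p$, each $\vec s_j$ satisfies exactly one guard $\psi_{i}(\vec s_j,\vec p)$, since the guards partition the space. This gives an assignment $\sigma\colon[1..n]\to[1..m]$, and then
\[
\sum\nolimits_j |f_\phi(\vec s_j,\vec p)-t_j| \;=\; \sum\nolimits_j |\ell_{\sigma(j)}(\vec s_j,\vec p)-t_j| .
\]
So it is enough to enumerate the \emph{realizable} assignments $\sigma$ and, for each one, write down a formula saying that some $\vec p$ realizes $\sigma$ and has total error at most $\epsilon$.

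\textbf{Bounding and computing the assignments.} Consider the partitioned formula $\theta(\vec x, w\varsep\vec y)\coloneqq\bigvee_{i=1}^m (w=i\wedge\psi_i(\vec x,\vec y))$, where the numeral $i$ is a closed term $1+\dots+1$. Since $\amodel$ is $\nip$, $\theta$ has finite VC dimension, say $d$. Realizable assignments $\sigma$ correspond injectively to realizable traces of $\theta$ on the set $\{(\vec s_j,i)\}_{j\le n,i\le m}$, which has size $nm$. By Fact~\ref{fact:sslemma}, there are at most $O((nm)^d)$ of them.

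To actually list them, I would follow Lemma~\ref{lem:computealltypes}: build them incrementally, one example at a time. Given the realizable assignments on $\vec s_1,\dots,\vec s_{j-1}$, extend each by every choice $i$ for $\vec s_j$. Keep an extension only if the corresponding sample for $\theta$ is exactly fittable: positives $(\vec s_k,\sigma(k))$ and negatives $(\vec s_k,i')$ for $i'\ne\sigma(k)$. This check is done with the assumed $\ptime$ $0$-fitting of $\theta$. Every prefix of a realizable assignment is itself realizable, and the number of candidates at each stage is at most $m$ times a polynomial, so the procedure runs in polynomial time.

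\textbf{Building $\phi'$.} For each realizable $\sigma$, I would introduce fresh variables $e_1,\dots,e_n$ and define
\[
\chi_\sigma(\vec y)\coloneqq \bigwedge\nolimits_j \psi_{\sigma(j)}(\vec s_j,\vec y)\wedge \exists\vec e\Big(\bigwedge\nolimits_j\big(e_j\ge \ell_{\sigma(j)}(\vec s_j,\vec y)-t_j\wedge e_j\ge t_j-\ell_{\sigma(j)}(\vec s_j,\vec y)\big)\wedge \textstyle\sum_j e_j\le\epsilon\Big).
\]
Then $\phi'\coloneqq\bigvee_\sigma\chi_\sigma$.

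Two points need care here.

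\emph{Constants.} The $\vec s_j$ are effective constants, while $t_j$ and $\epsilon$ are rationals. I would clear denominators by multiplying all inequalities by a common positive integer $D$; multiplication by $D$ is expressed as repeated addition, i.e.\ by doubling with $O(\log D)$ auxiliary existentially quantified variables. This keeps the formula polynomial in the bit-size of the input, using only $0$, $1$, $+$ and $<$. The $\vec s_j$ I would treat as plugged-in constants, as the statement implicitly does.

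\emph{Quantifier-freeness.} For the quantifier-free case, the auxiliary variables must disappear. I would make $\vec e$ (and the doubling variables) part of $\vec y$ in the existential, or equivalently expand $|a|$ as a disjunction over sign patterns. The latter is exponential, so I prefer the former, read as $\exists\vec y\,\phi'$. Alternatively, with $-$ available as a definable term, one can write $\max$ via guarded disjunction per example, which is only linear.

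\textbf{Main obstacle.} I expect the main obstacle to be the precise accounting of the enumeration step: showing that fittability of prefix samples via $0$-fitting of the fixed formula $\theta$ really computes all realizable assignments in polynomial time. A second delicate point is the bookkeeping that keeps $\phi'$ polynomial-size while encoding rational constants with only $0,1,+,<$.
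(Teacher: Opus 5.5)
There is a genuine gap in how you remove the absolute values, and that is exactly where the content of the lemma lies. Your $\chi_\sigma$ contains $\exists\vec e$, and your treatment of constants adds further existential doubling variables, so $\phi'$ is not quantifier-free when $\phi$ is. Moving $\vec e$ into the outer existential does not help. Then $\phi'$ has $n$ additional free variables and is no longer a formula $\phi'(\vec y)$ in the fixed parameter variables.

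This matters. Theorem~\ref{thm:approx-fitting-piecewise-reals-and-PA} uses the lemma precisely because $\exists\vec y\,\phi'(\vec y)$ is a quantifier-free existential sentence with a \emph{fixed} number of variables, so the fixed-dimension algorithms of Renegar and Lenstra run in polynomial time. With $n$ extra variables that argument no longer applies; for instance, Renegar's bound is exponential in the number of variables.

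Note also that if auxiliary existential variables were allowed, your guard enumeration would be pointless. You could equally well quantify the outputs and use $\phi$ directly, as in $\exists\vec z\,\exists\vec e\,\bigl(\bigwedge_j\phi(\vec s_j\varsep z_j\varsep\vec y)\wedge\bigwedge_j(e_j\ge z_j-t_j\wedge e_j\ge t_j-z_j)\wedge\sum_j e_j\le\epsilon\bigr)$, with no \nip at all.

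Your ``guarded $\max$'' alternative does not close the gap. Neither $\max$ nor $|\cdot|$ is a term of the signature, and a quantifier-free encoding of $\sum_j|a_j|\le\epsilon$ must split on the signs of all the $a_j$ jointly. As a smaller point, over a domain such as $\N$ your $e_j$ cannot take the value $|\ell-t_j|$ when $t_j$ is a non-integral rational. So, as written, the encoding overestimates the error unless $e_j$ is reinterpreted as $D$ times the error.

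The missing idea is that sign patterns, like guard assignments, are traces of a fixed partitioned formula. So the disjunction over sign patterns, which you dismissed as exponential, is polynomial once restricted to the realizable ones. The paper enlarges your $\theta$ to $\gamma(\vec x,z,u,v\varsep\vec y)\coloneqq\bigvee_{i=1}^m\bigvee_{j\in\{0,1\}}\bigl(u=i\wedge v=j\wedge\psi_i(\vec x,\vec y)\wedge\ell_i(\vec x,\vec y)\sim_j z\bigr)$, where $\sim_0$ is $\ge$ and $\sim_1$ is $<$. It then applies \nip with Lemma~\ref{lem:computealltypes} to the $2mn$ points $(\vec s_k,t_k,i,j)$. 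Each realizable trace $T$ fixes, for every example, both the active guard and the sign of $\ell_i(\vec s_k,\vec y)-t_k$. The total error therefore becomes the linear expression $\sum_{(\vec s,t,i,j)\in T}(-1)^j(\ell_i(\vec s,\vec y)-t)$.

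Then $\phi'(\vec y)$ is the polynomial-size disjunction over $T$ of $\epsilon\ge\sum_{(\vec s,t,i,j)\in T}(-1)^j(\ell_i(\vec s,\vec y)-t)\wedge\bigwedge_{(\vec s,t,i,j)\in T}\bigl(\psi_i(\vec s,\vec y)\wedge\ell_i(\vec s,\vec y)\sim_j t\bigr)$. This formula is quantifier-free whenever $\phi$ is, has only $\vec y$ free, and only rearranges linear inequalities. Sample values and rationals appear as constants with binary-encoded coefficients, as the downstream fixed-dimension bounds expect, so no doubling variables are needed.

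Your enumeration step is correct and carries over verbatim. The real obstacle was not the enumeration but eliminating the absolute values without new variables.
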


\begin{proof}[Proof sketch]
    Let $\phi$ be as in Formula~\eqref{eq:piecewise}, and let $S = (\vec s_1,
    t_1), \dots,(\vec s_n,t_n)$ be the sample. Writing down the definition of
    the approximate fitting problem, we see that it is equivalent to deciding 
    \[  
        \exists \vec y : \bigvee\nolimits_{\mu \colon [1..n] \to [1..m]} 
            \textstyle\left(\bigwedge_{j=1}^n \psi_{\mu(j)}(\vec s_j,\vec y) 
            \, \land \, \sum_{j=1}^n \abs{\ell_{\mu(j)}(\vec s_j, \vec y) - t_j} \leq \epsilon \right).
    \]
    By applying~\nip (more precisely, Fact~\ref{fact:sslemma}), we show that
    $\gamma_\mu \coloneqq \bigwedge_{j=1}^n \psi_{\mu(j)}(\vec s_j,\vec y)$ is
    satisfiable only for polynomially many maps $\mu$. Similarly,~\nip implies
    that only polynomially many combinations of signs for the terms inside
    absolute values are realizable. By the \ptime 0-fitting hypothesis, we can
    compute the maps with satisfiable $\gamma_\mu$ and the realizable sign
    combinations in polynomial time. Therefore, the algorithm from the statement
    of the lemma computes a formula~$\phi'$ equivalent to the one above, but
    with only polynomially many disjuncts and no absolute values. See Appendix \ref{app:lemmapiecewisefunctionreduction}.
\end{proof}

Thanks to~Lemma~\ref{lemma:piecewise-functions-reduction}, we obtain the following:

\begin{restatable}{theorem}{ThmApproxFittingPiecewiseRealAndPa}\label{thm:approx-fitting-piecewise-reals-and-PA} 
    The approximate
    fitting problem is in~\ptime for every definable function class over the
    real ordered group or over Presburger arithmetic. For the real ordered
    field, the problem is in~\ptime for every piecewise function~class.
\end{restatable}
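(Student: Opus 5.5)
The plan is to combine Lemma~\ref{lemma:piecewise-functions-reduction} with known results on satisfiability of existential sentences in fixed dimension.

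\emph{Checking the hypotheses.} The real ordered group $(\R,+,<)$ (with $0,1$), Presburger arithmetic, and the real ordered field are numerical structures whose domain contains the naturals and whose signature contains $0$, $1$, $+$, $<$. All three are $\rqc$, which follows from o-minimality for the real structures and from Example~\ref{ex:rqc} for Presburger arithmetic. By Remark~\ref{remark:rqc-nip} they are therefore $\nip$. By Corollary~\ref{cor:rqcimpliesrqfoandptimefitting} every partitioned formula over them has $\ptime$ $0$-fitting. So Lemma~\ref{lemma:piecewise-functions-reduction} applies to any piecewise function class over these structures.

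\emph{Reducing arbitrary classes to piecewise ones in the two linear structures.} Over $(\R,+,<)$ and Presburger arithmetic, every definable function is piecewise. By quantifier elimination in the appropriate language, $\phi(\vec x \varsep z \varsep \vec y)$ is equivalent to a quantifier-free formula. For Presburger arithmetic this language adds divisibility predicates $k \mid t$. Because $\phi$ is functional, a cell decomposition argument lets us rewrite it in the form~\eqref{eq:piecewise}. Over the reals, each cell of a quantifier-free linear definition on which $z$ is determined gives $z$ as a linear term in $\vec x,\vec y$ with rational coefficients. Over Presburger arithmetic, we refine the guards by residues modulo the lcm of the moduli. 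On each residue class $z$ is then given by a term of the form $(\sum a_i x_i + \sum b_j y_j + c)/N$, which is definable as a term after this refinement, or equivalently one can keep the divisibility guards. The guards partition the space by construction.

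\emph{Deciding the resulting formula.} Given $S$ and $\epsilon$, Lemma~\ref{lemma:piecewise-functions-reduction} computes in polynomial time a quantifier-free formula $\phi'(\vec y)$ whose variables are the fixed number $k$ of parameters. It remains to decide $\exists \vec y\,\phi'(\vec y)$ in polynomial time in $\bitsize{\phi'}$ for fixed~$k$. For the ordered group, $\phi'$ is a Boolean combination of linear constraints. For fixed dimension one can enumerate the polynomially many realizable sign conditions of its atoms via hyperplane arrangement bounds, or equivalently put $\phi'$ into a polynomial-size DNF by the same Sauer--Shelah argument. Each conjunction is then checked by linear programming, which runs in polynomial time. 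For Presburger arithmetic we use Lenstra's algorithm for integer programming in fixed dimension on each disjunct, after eliminating divisibility constraints by introducing boundedly many extra variables. For the real ordered field with a piecewise class, $\phi'$ is a quantifier-free formula in fixed dimension $k$ with polynomial atoms. Renegar's (or Basu--Pollack--Roy's) algorithm decides the existential theory of the reals in time polynomial in the number and bitsize of the atoms when the number of variables and the degrees are fixed. Here the degrees are bounded by those of the fixed terms $\ell_i$ and guards.

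\emph{Main obstacle.} I expect the main obstacle to be making sure $\phi'$ really has a fixed number of variables and bounded degree, with polynomially many atoms of polynomial bitsize. In particular, the absolute values must be eliminated without introducing one fresh variable per sample point. The proof of Lemma~\ref{lemma:piecewise-functions-reduction} handles this by enumerating sign patterns. A secondary obstacle, for Presburger arithmetic, is verifying the piecewise normal form carefully given the divisibility predicates.
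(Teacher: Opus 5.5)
Your proposal is correct and follows essentially the paper's route. You obtain \nip and \ptime $0$-fitting from \rqc, and use quantifier elimination to put definable functions over the real ordered group and Presburger arithmetic into piecewise form. For Presburger, the modulus has to be enlarged to also absorb the coefficients of $z$ in the implied equalities, not just the lcm of the divisibility moduli, which is what the paper does. You then apply Lemma~\ref{lemma:piecewise-functions-reduction} and finish with fixed-dimension procedures: Renegar for the field and Lenstra for Presburger. The one real deviation is the real ordered group, where you enumerate realizable sign conditions and solve linear programs rather than reusing Renegar's fixed-dimension bound as the paper does; both work.
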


\begin{proof}[Proof sketch]
    First, we note that every definable function over $(\R,+,<)$ is a piecewise
    function~$f_\phi$ given by a quantifier-free formula $\phi$ from the
    expanded structure $(\R,+,-,0,1,\{\frac{\cdot}{d}\}_{d \geq 2}, <)$, where
    $\frac{\cdot}{d}$ denoted division by $d$. A similar result holds for
    Presburger arithmetic, with respect to the expanded structure
    $(\N,+,-,0,1,\{\frac{\cdot}{d}\}_{d \geq 2})$ where $\frac{\cdot}{d}$ is
    interpreted as \emph{integer} division by $d$ (i.e., $\frac{n}{d} = r$ if
    and only if $r \cdot d \leq n < (r+1) \cdot d$). Both results ultimately
    follow from quantifier elimination in these linear structures. Moreover,
    both structures inherit~\rqc and~\ffte. Thus, 
    we can restrict our attention to piecewise functions also in the first 
    statement of the theorem.

    From~\rqc we conclude that all considered structures are~\nip and have
    \ptime $0$-fitting (Theorem~\ref{thm:rqfoimpliesptimeapprox}). Moreover, we
    can assume the formula $\phi'$ from
    Lemma~\ref{lemma:piecewise-functions-reduction} to be quantifier-free
    (recall from Example \ref{ex:standardstructures}: the real ordered field has quantifier elimination). The
    sentence $\exists \vec y\, \phi'(\vec y)$ in that lemma has size polynomial in
    the sample and tolerance, and has a fixed number of (existential)
    quantifiers. The result follows because, for both Presburger arithmetic
    and the real ordered field, one can solve existential formulas in $\ptime$
    when the number of variables is fixed~\cite{Renegar92,Lenstra83}. See Appendix \ref{app:approximatefittingpiecewise}.
\end{proof}

\begin{remark}[Connection to training problems for Neural networks]
    \label{remark:neural} Feedforward neural networks with ReLU activation
    functions and a single output neuron can be encoded as piecewise functions
    over the real ordered field. By
    Theorem~\ref{thm:approx-fitting-piecewise-reals-and-PA}, the approximate
    fitting problem for any fixed network of this kind is in~\ptime. This result
    also applies to the $L^2$ norm and the hinge loss function
    (Remark~\ref{remark:otherlosses}). Related work~\cite{AroraBMM18,FroeseH23}
    shows that approximate fitting for networks with two layers and a single
    output is in~\ptime when the network is fixed, and \np-complete otherwise.
    For networks with multiple outputs, the $0$-fitting problem is already
    $\exists\R$-complete when the network is not
    fixed~\cite{trainingiscomplete}. Here, $\exists\R$ represents the complexity
    of the existential theory of the reals, see Definition
    \ref{def:existentialtheoryreals} below.
\end{remark}

\subsection{Approximate fitting outside piecewise functions} 
The real ordered field has definable functions
that are not piecewise. An example is the parameter-free square root function
given by formula~$\textit{sqrt}(x \varsep z) \coloneqq ({x < 0} \land {z = 0})
\lor (z \geq 0 \land z^2 = x)$. This function violates the piecewise property 
because the second disjunct contains a guard involving
the output $(z \geq 0)$, and an equality non-linear in $z$ ($z^2 = x$ instead of
$z = \ell(x)$). Showing \ptime approximate fitting for \textit{sqrt} would
constitute a breakthrough in computational geometry, as it would yield a
$\ptime$ algorithm for the \emph{sum-of-square-root problem (SSR)}. Given
non-negative integers $a_1,\dots,a_n$ encoded in binary, and a rational~$b$, 
this problem asks if $\sum_{i=1}^n \sqrt{a_i} \leq b$. The problem is not
known to be \ptime-hard, but the best known upper bound places it in the
counting hierarchy~\cite{AllenderBKM09}. The reduction to approximate fitting
for \textit{sqrt} is straightforward: simply supply the sample
$(a_1,0),\dots,(a_n,0)$ and the tolerance~$b$ to this problem.
Proposition~\ref{prop:fitting-real-field-functions} (stated below, proved in
Appendix~\ref{app:sqrtsum}) shows that, in fact, the tolerance can be fixed to
any~$\epsilon > 0$.

While we conjecture the approximate fitting problem for the real ordered field
to be in the counting hierarchy, at the moment the best known upper bound is
given by the straightforward reduction to the existential theory of~the~reals.
\begin{definition}[$\exists\R$]
\label{def:existentialtheoryreals} The existential theory of the reals
$\exists\R$ is the closure of the satisfiability problem for existential
sentences over the real ordered field under polynomial-time many-one reductions.
\end{definition}
The class $\exists\R$ is contained in~$\pspace$~\cite{Renegar92}, but strictness
of this containment  is open \cite{existentialtheory}.

\begin{restatable}{proposition}{PropFittingRealFieldFunctions}\label{prop:fitting-real-field-functions}
    The approximate fitting problem of any definable function class over the
    real ordered field is in~$\exists\R$.
    There are classes for which the $\epsilon$-fitting problem is SSR-hard,
    for every fixed~$\epsilon > 0$.
\end{restatable}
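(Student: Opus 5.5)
The plan has two parts: membership in $\exists\R$ and SSR-hardness for a fixed tolerance.

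\emph{Membership.} Fix $\phi(\vec x \varsep z \varsep \vec y)$. Given a sample $(\vec s_1,t_1),\dots,(\vec s_n,t_n)$ and a rational $\epsilon$, we write the existential sentence
\[
\exists \vec y\, \exists z_1 \dots \exists z_n\, \exists e_1\dots\exists e_n \Big(\bigwedge_{i=1}^n \phi(\vec s_i \varsep z_i \varsep \vec y) \land \bigwedge_{i=1}^n (e_i \geq z_i - t_i \land e_i \geq t_i - z_i) \land \sum_{i=1}^n e_i \leq \epsilon\Big).
\]
The formula $\phi$ may contain quantifiers, so we first replace it by an existential equivalent. Quantifier elimination for $(\R,+,\cdot,<)$ gives a quantifier-free equivalent of $\phi$, which is computed once because $\phi$ is fixed. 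Rational constants are substituted directly as coefficients. The resulting sentence has size polynomial in the sample and in $\epsilon$. It is true exactly when some hypothesis $\epsilon$-fits, since $\phi$ is functional, so each $z_i$ must equal $f_\phi(\vec s_i,\vec y)$. This is a polynomial-time many-one reduction to existential satisfiability, which places the problem in $\exists\R$.

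\emph{Hardness.} We reduce the decision $\sum_i \sqrt{a_i} \leq b$ to $\epsilon$-fitting for a fixed $\epsilon>0$. The idea is to absorb the instance-dependent tolerance $b$ into a parameter or into scaling. Use the one-parameter class
\[
f(x \varsep y) = \sqrt{x}\cdot y \quad\text{(zero for } x<0\text{)},
\]
definable as $(x<0 \land z=0) \lor (x\geq 0 \land \exists w\,(w\geq0 \land w^2=x \land z=wy))$. That alone lets $y=0$ fit trivially, so the class needs adjusting. A better option is a parameter-free class with a scaling trick. Encode $b$ by adding a sample point whose contribution forces the right comparison. For instance, take $g(x)=\sqrt{x}$ and the sample $(a_1 c^2,0),\dots,(a_n c^2,0)$ with $c=\epsilon/b$, written as a rational. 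The total error is then $\sum_i c\sqrt{a_i}$, which is at most $\epsilon$ iff $\sum_i\sqrt{a_i}\leq b$. We must check that $a_i c^2$ has polynomial bit size, which holds since $c$ is a rational of polynomial size.

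One more condition concerns the sample: Definition~\ref{def:sample} requires pairwise distinct training inputs, while SSR instances may repeat $a_i$. Two fixes are available. We can merge repeats by multiplying: $k$ copies of $\sqrt{a}$ equal $\sqrt{k^2 a}$. Alternatively, we can first reduce to distinct $a_i$. The distinctness issue and the choice of scaling are the main points to handle. I expect the merge trick $k\sqrt a=\sqrt{k^2a}$ settles the former, since it keeps all numbers polynomially sized. Hence $\epsilon$-fitting for the parameter-free class $\textit{sqrt}$ is SSR-hard for every fixed $\epsilon>0$.
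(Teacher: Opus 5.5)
Your proof is correct. The membership part is the same encoding into an existential sentence that the paper has in mind: quantifier-eliminate the fixed $\phi$, then existentially quantify $\vec y$, the outputs $z_i$ and slack variables for the absolute values.

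For hardness you take a different and simpler route. The paper keeps the inputs $a_i$ unchanged and moves the bound $b$ into the targets. It computes rational under-approximations $q_i$ of $\sqrt{a_i}$ within $\epsilon/n$, decides directly the instances with $\sum_i q_i \notin [b-\epsilon,b]$, and then lowers $q_1$ so that $\sum_i q_i = b-\epsilon$. The error of \textit{sqrt} on $(a_i,q_i)_i$ is then exactly $\sum_i\sqrt{a_i}-b+\epsilon$. You instead use the homogeneity $\sqrt{c^2x}=c\sqrt{x}$ with $c=\epsilon/b$ and zero targets, so the error is exactly $\frac{\epsilon}{b}\sum_i\sqrt{a_i}$. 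This needs no numerical approximation and no case split. The inputs $a_i\epsilon^2/b^2$ are legitimate, since the effective constants of the real field are the rationals and $\epsilon$ is a fixed rational. The paper's argument, in turn, only needs the function's values to be approximable in polynomial time. It would therefore survive for non-homogeneous functions, or if training inputs had to stay the given integers.

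Two small points complete your argument:
\begin{itemize}
\item Dispatch $b\le 0$ separately. It is trivial, but $c$ is undefined there.
\item Merging repeated inputs can create new coincidences (e.g.\ $1,1,4$ becomes $4,4$), so iterate the merge. This stops after fewer than $n$ rounds. Each resulting value $v$ satisfies $\sqrt{v}=\sum_{j\in G}\sqrt{a_j}$ for the group $G$ of original indices merged into it, so $v\le n^2\max_i a_i$ and bit sizes stay polynomial.
\end{itemize}
Your attention to the distinct-inputs requirement of Definition~\ref{def:sample} addresses a point the paper's proof glosses over.
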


\section{Fitting for randomized classes}\label{sec:randomized}

\begin{takeaway} 
    We look at hypothesis classes formed by ``randomizing'' a base concept
    class, considering distributions over the parameter space. If a
    concept class is~\rqc, its randomized class has $\ptime$ 
    approximate fitting; the algorithm combines $\rqfo$ querying with linear
    algebra. 
\end{takeaway}
We now turn to a second natural approach to creating a real-valued hypothesis class from logic, which works on top of an arbitrary structure.
The idea, which was independently investigated in model theory~\cite{keislerrandomizing, itaykeisler} and in data management~\cite{sigmod22}, is to ``randomize a hypothesis class''. 
We use the variant from~\cite{sigmod22,arxivwithaaron}: starting from a ``base'' hypothesis class we form a new real-valued hypothesis class by looking at probability distributions.

\myparagraph{Randomizing a concept class}
Let $\Delta(A)$ denote  all probability distributions on a 
$\sigma$-algebra~$A$.

\begin{definition}[Distribution class]\label{def:distclass} 
    Consider a concept class $\conceptclass \coloneqq \{f_y : y \in \paramspace\}$ with range space~$\rangespace$, and let $Y_\Sigma$ 
    be the $\sigma$-algebra on $Y$ generated by the sets $P_x \coloneqq \{y \in Y : f_y(x) = 1\}$ for all $x \in X$.
    The \emph{distribution class} 
    of $\conceptclass$ is the (real-valued) hypothesis
    class $\measureclass{\conceptclass} \coloneqq \{h_\mu : \mu \in \Delta(Y_\Sigma)\}$ on the same range space $\rangespace$, 
    such that each $h_\mu$ maps $x \in X$ 
    to~$\mu(P_x)$. 
\end{definition}
\begin{example}\label{ex:randomrectangles}
    We return to Example~\ref{ex:randomrectangle}.
    Let $\conceptclass$ be the dual of the concept class of rectangles
    from~Example~\ref{ex:rectangleintro}: the range space is $\R^4$ (all
    rectangles) and the parameter space is $\R^2$ (points in the plane). Its
    distribution class is parameterized by distributions $\mu$ over a $\sigma$-algebra on $\R^2$. 
    Given such a probability distribution~$\mu$,
    the hypothesis $h_\mu$ maps a rectangle to its $\mu$-probability.

    In the approximate fitting problem for $\measureclass{\conceptclass}$, samples are sequences $(\vec s_1, t_1), \dots, (\vec s_n, t_n)$, 
    where each~$\vec s_i$ is a (representation of a) rectangle, and $t_i$ is a target probability. Given a tolerance $\epsilon \geq 0$, 
    we seek a distribution $\mu$ satisfying $\sum_{i=1}^n |h_\mu(\vec s_i) - t_i| \leq \epsilon$.
\end{example}

We remark that~\cite{sigmod22} defines the \emph{dual distribution class} of a
concept class by taking distributions~$\mu$ on the range space~$\rangespace$, where
each hypothesis $h_\mu$ maps a parameter $y \in \paramspace$ to~$\mu({\{x
\in X : f_y(x) = 1\}})$. For a concept class given by a partitioned
formula $\phi(\vec x \varsep \vec y)$, the dual distribution class coincides
with the distribution class of the formula obtained from $\phi$ by
swapping object and parameter variables. 
All our results on distribution classes carry
over to dual distribution~classes.

\subsection{Fitting problems for distribution classes}\label{subsec:fittingdistributionclasses}

We show that when the $0$-fitting problem for the underlying concept class is decidable, so is the fitting problem for its distribution class, via a reduction from the latter to the former.

\begin{restatable}{proposition}{ProprEasyDecidabilityDistribution}\label{prop:easydecidabilitydistribution}
Let $\conceptclass$ be a concept class given by a partitioned formula $\phi(\vec x \varsep \vec y)$. 
The approximate fitting problem for the distribution class $\measureclass{\conceptclass}$ 
reduces in non-deterministic polynomial time to deciding polynomially many instances of the $0$-fitting problem for $\conceptclass$. 
The instance of the fitting problem for~$\measureclass{\conceptclass}$ is a yes-instance if and only if all the $0$-fitting problems are yes-instances.
\end{restatable}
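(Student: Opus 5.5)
Given a sample $(\vec s_1,t_1),\dots,(\vec s_n,t_n)$ and a tolerance $\epsilon$, the plan is to reduce to a linear program over the ``atoms'' of the sample. For each $B\subseteq[1..n]$ let $A_B \coloneqq \{\vec y : \phi(\vec s_i\varsep\vec y)\text{ for } i\in B,\ \neg\phi(\vec s_i\varsep\vec y)\text{ for } i\notin B\}$. These $2^n$ sets partition $Y$, and each lies in $Y_\Sigma$. Moreover $A_B\neq\emptyset$ holds exactly when the sample with positive examples $\{\vec s_i:i\in B\}$ and negative examples $\{\vec s_i : i\notin B\}$ is $\phi$-fittable, i.e.\ when a $0$-fitting instance for $\conceptclass$ is a yes-instance.

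The first step shows that the instance is a yes-instance iff there exist a family $\mathcal B$ of nonempty atoms and weights $w_B\geq 0$ with $\sum_{B\in\mathcal B}w_B=1$ and $\sum_{i}\bigl|\sum_{B\in\mathcal B,\, i\in B}w_B - t_i\bigr|\leq\epsilon$. For ($\Leftarrow$), pick a witness $\vec y_B\in A_B$ for each $B$ and take the finitely supported measure $\mu=\sum_B w_B\delta_{\vec y_B}$ on $Y_\Sigma$; then $h_\mu(\vec s_i)=\sum_{B\ni i}w_B$. For ($\Rightarrow$), given any $\mu\in\Delta(Y_\Sigma)$, set $w_B\coloneqq\mu(A_B)$. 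Empty atoms get weight $0$, the atoms partition $Y$, and $\mu(P_{\vec s_i})=\sum_{B\ni i}\mu(A_B)$ by additivity.

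The second step bounds the support. The constraint set can be written as a linear program by introducing $e_i\geq \pm(\sum_{B\ni i}w_B - t_i)$ together with $\sum_i e_i\leq\epsilon$. This has $O(n)$ constraints besides nonnegativity. If the program is feasible, it has a basic feasible solution, and by Carathéodory/basic-solution counting at most $n+1$ (or $O(n)$) of the $w_B$ are nonzero. I expect this sparsification to be the main point: it lets us avoid enumerating exponentially many atoms.

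The reduction then runs as follows. Nondeterministically guess $\mathcal B$ with $|\mathcal B|\leq n+1$, each $B$ written as an $n$-bit vector. Solve, deterministically in polynomial time, the LP over the variables $w_B$ for $B\in\mathcal B$, with rational data $t_i,\epsilon$. If the LP is infeasible, the branch rejects, and we can make it a formal no-instance by outputting a trivially unfittable $0$-fitting query, e.g.\ the same example given as both positive and negative. Otherwise output the $|\mathcal B|$ instances of $0$-fitting described above. By the first step, some guess leads to all-yes instances iff the original instance is a yes-instance; that is exactly the claimed nondeterministic reduction. Note that no witness parameters need to be computed, since nonemptiness of the atoms is all that is used.
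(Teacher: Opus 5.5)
Your proposal is correct and follows essentially the same route as the paper. Both restrict to the nonempty atoms $\bigwedge_i \pm\phi(\vec s_i,\vec y)$, each of which is a $0$-fitting instance, and write fitting as a linear program over their weights. Both then use Carath\'eodory (basic solutions) to get an $O(n)$-size support, guess that support, solve the small LP, and emit one $0$-fitting query per guessed atom. Your explicit justification of the LP reformulation, via $w_B=\mu(A_B)$ in one direction and finitely supported measures in the other, spells out what the paper leaves implicit.

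The one thing to drop is the optional ``formal no-instance'' trick. A sample listing the same input as both a positive and a negative example is not a legal sample, since training inputs must be pairwise distinct (Definition~\ref{def:sample}). Some formulas admit no unfittable legal sample at all, e.g.\ the edge relation $E(x;y)$ of the random graph. Simply rejecting on that branch already suffices for a nondeterministic reduction.
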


\begin{proof}
Let 
$(\vec s_1, t_1), \dots, (\vec s_n, t_n)$ be the input sample, and $\epsilon$ be the
tolerance. The approximate fitting problem for $\measureclass{\conceptclass}$
asks for a distribution $\mu$ such that $\sum_{i=1}^n |\mu(\{\vec y
\in Y : \phi(\vec s_n,\vec y)\}) - t_i| \leq \epsilon$. 
All that is relevant for specifying $\mu$ is  its output on
the finite Boolean algebra generated by the sets 
${\{\vec p \in Y : \phi(\vec s_i, \vec p)\}}$ 
for $i \in [1..n]$, so it suffices to know the probability of 
formulas $\bigwedge_{i=1}^n \psi_{i}(\vec s_i, \vec y)$ 
where each $\psi_{i}$ is either $\phi$ or its negation. 
There are at most $2^n$ of these formulas. Let $F$ be the set of these formulas that are satisfiable. We write $F_i$ for the formulas in $F$ in which $\phi(\vec s_i, \vec y)$ occurs positively.
Introducing a variable $x_\psi$ for every formula $\psi \in F$,
the approximate fitting problem reduces to solving the following problem over the real ordered group: 
\begin{equation}\label{eq:systemdistribution}
    \sum\nolimits_{i=1}^n |t_i - {\textstyle\sum_{\psi \in F_i}} x_\psi| \leq \epsilon 
    \land \sum\nolimits_{\psi \in F} x_\psi = 1
    \land \bigwedge\nolimits_{\psi \in F} x_\psi \geq 0.
\end{equation}
The main
observation is that Formula~\eqref{eq:systemdistribution} is severely
underconstrained, and so by appealing to Carath\'eodory's theorem (for convex cones), we are
able to conclude that if there is a solution, then there is one where at most
polynomially many variables $x_\psi$ are non-zero (see Appendix~\ref{app:npboundforpwpdistributionclasses} for details).
Note that the constraint featuring absolute 
values is unproblematic when it comes to applying Carath\'eodory's theorem:
it can be replaced with 
\[
    \sum\nolimits_{i=1}^n z_i \leq \epsilon \land \bigwedge\nolimits_{i=1}^n (y_i = t_i - \sum\nolimits_{\psi \in F_i} x_\psi \land y_i \leq z_i \land -y_i \leq z_i),
\]
where the variables $y_i,z_i$ are fresh.

The non-deterministic polynomial-time reduction 
simply guesses polynomially many formulas of the form
$\bigwedge_{i=1}^n \psi_{i}(\vec s_i, \vec y)$ described above,
and solves in polynomial time the system obtained from Formula~\eqref{eq:systemdistribution} by keeping only the variables corresponding to the guessed formulas.
Each formula $\bigwedge_{i=1}^n \psi_{i}(\vec s_i, \vec y)$ corresponds to an instance 
of the $0$-fitting problem for $\conceptclass$,
and the fitting problem for $\measureclass{\conceptclass}$ is a yes-instance if and only if the all the $0$-fitting problems are yes-instances.
\end{proof}

It is easy to see that \emph{the $0$-fitting problem for the distribution class over
a concept class $\conceptclass$ is at least as hard to fit as $\conceptclass$
itself}: for samples where the associated target probabilities  are in $\{0,1\}$,
if they can be fit by a randomized parameter, they can also be fit by a
deterministic one (see Appendix~\ref{app:easy-hardness-distributions}). Then,
the corollary below follows from
Proposition ~\ref{prop:buchiexactconcept}  
and Proposition~\ref{prop:easydecidabilitydistribution}.

\begin{corollary}\label{cor:distclassconceptbuchi} 
    Approximate fitting for distribution classes of formulas in B\"uchi arithmetic 
    is in $\np$, and there is a distribution class over  B\"uchi arithmetic with an \np-complete $0$-fitting problem.
\end{corollary}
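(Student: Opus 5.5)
The corollary has two parts, membership in \np and \np-hardness, and I would prove them separately.

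For the upper bound, fix a partitioned formula $\phi(\vec x \varsep \vec y)$ over B\"uchi arithmetic and let $\conceptclass = \conceptclass_\phi$. By Proposition~\ref{prop:easydecidabilitydistribution}, an instance (sample plus tolerance $\epsilon$) of approximate fitting for $\measureclass{\conceptclass}$ is handled as follows. A nondeterministic polynomial-time procedure guesses polynomially many formulas $\bigwedge_{i=1}^n \psi_i(\vec s_i,\vec y)$. It solves the restricted linear system~\eqref{eq:systemdistribution} in polynomial time. The instance is a yes-instance iff, for some such guess, all the associated $0$-fitting instances for $\conceptclass$ are yes-instances. Each of these $0$-fitting instances has positive examples $\{\vec s_i : \psi_i=\phi\}$ and negative examples $\{\vec s_i : \psi_i = \neg\phi\}$, so its size is polynomial in the input.

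By Proposition~\ref{prop:buchiexactconcept}, each $0$-fitting instance is in \np. The combined procedure therefore guesses the polynomially many conjunctions, then guesses an \np witness for each of them, and verifies everything in polynomial time. Polynomially many \np certificates concatenate into one \np certificate, so the whole procedure is in \np. The one point to check is that the linear-programming step runs in polynomial time on rational data, which holds because the system is a polynomial-size LP.

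For the lower bound, I would take the formula $\phi$ from Proposition~\ref{prop:buchiexactconcept} whose $0$-fitting problem is \np-complete, and reduce that problem to $0$-fitting for $\measureclass{\conceptclass_\phi}$. The reduction sends a sample $S^+, S^-$ to the sample with target $1$ on $S^+$ and target $0$ on $S^-$, with tolerance $0$. One direction is immediate: a parameter $\vec p$ fitting $S$ gives the Dirac distribution at $\vec p$, which fits the new sample exactly.

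The other direction is the main obstacle, and I would handle it with the observation cited from Appendix~\ref{app:easy-hardness-distributions}. Suppose $\mu$ fits exactly. Then $\mu(P_{\vec s})=1$ for $\vec s\in S^+$ and $\mu(P_{\vec s})=0$ for $\vec s \in S^-$. Hence the finite intersection $\bigcap_{S^+} P_{\vec s}\cap \bigcap_{S^-}(Y\setminus P_{\vec s})$ has measure $1$, so it is nonempty, and any point in it is a deterministic parameter fitting $S$.

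Membership of this $0$-fitting problem in \np follows from the first part. Together with the reduction, this gives \np-completeness.
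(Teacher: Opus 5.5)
Your proposal is correct and follows essentially the same route as the paper. The NP upper bound composes the nondeterministic reduction of Proposition~\ref{prop:easydecidabilitydistribution} with the NP membership of $0$-fitting from Proposition~\ref{prop:buchiexactconcept}, and hardness uses the observation of Appendix~\ref{app:easy-hardness-distributions} that $\{0,1\}$-valued targets fittable by a distribution are already fittable by a single parameter; your points about concatenating polynomially many certificates and about the finite intersection of generators having measure $1$ (hence being nonempty) just spell out details the paper leaves implicit.
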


Echoing Lemma~\ref{lem:FiniteVCandPTimeFittingImpliesPtimeApprox}, 
if we assume that the concept class has finite VC dimension and \ptime $0$-fitting, 
then its distribution class has \ptime approximate fitting.

\begin{theorem}\label{thm:finitevcandptimeimpliesdistptime} 
    Let $\phi(\vec x \varsep \vec y)$ be a partitioned formula with finite VC dimension and $\ptime$ $0$-fitting. Then, the distribution class $\measureclass{\conceptclass_\phi}$ has $\ptime$ %
    approximate fitting.
\end{theorem}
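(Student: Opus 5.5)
The plan is to follow the proof of Proposition~\ref{prop:easydecidabilitydistribution} and replace its nondeterministic guessing step with a deterministic polynomial-time enumeration. Let the sample be $(\vec s_1,t_1),\dots,(\vec s_n,t_n)$ with tolerance $\epsilon$. As in that proof, the only relevant information about a distribution $\mu$ is the mass it assigns to each satisfiable ``type'' $\bigwedge_{i=1}^n \psi_i(\vec s_i,\vec y)$, where each $\psi_i \in \{\phi,\neg\phi\}$. Such a type is the same thing as a realizable partition of $\{\vec s_1,\dots,\vec s_n\}$ into positive and negative examples: it is satisfiable exactly when the corresponding sample is $\phi$-fittable. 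Let $F$ denote the set of satisfiable types.

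The first step is to bound $|F|$. Since $\phi$ has finite VC dimension $d$, Fact~\ref{fact:sslemma} gives $|F| \leq \sum_{i=0}^d \binom{n}{i} = O(n^d)$, which is polynomial in $n$. The second step is to compute $F$ explicitly in polynomial time using the \ptime $0$-fitting oracle. We build $F$ incrementally, as in Lemma~\ref{lem:computealltypes}. Maintain the set $F_k$ of realizable partitions of the prefix $\{\vec s_1,\dots,\vec s_k\}$. Each element of $F_k$ has two candidate extensions to $\vec s_{k+1}$, one placing it positive and one placing it negative. For each candidate we run the $0$-fitting algorithm on the corresponding sample of size $k+1$. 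Every $F_k$ has polynomial size by the same Sauer--Shelah bound applied to the prefix, so the procedure makes only polynomially many oracle calls, each on an input of polynomial size. The final set $F_n$ is $F$.

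The third step writes down system~\eqref{eq:systemdistribution}, with the absolute values linearised through the fresh variables $y_i,z_i$ as in the proof of Proposition~\ref{prop:easydecidabilitydistribution}. This is a linear program with $|F|+2n$ variables, polynomially many constraints, and rational coefficients whose bit sizes are bounded by those of the $t_i$ and $\epsilon$. Its feasibility can therefore be decided in polynomial time by the ellipsoid method or an interior-point method. We then need correctness in both directions. A feasible solution $(x_\psi)_{\psi\in F}$ yields a distribution: pick a witness parameter $\vec p_\psi$ for each satisfiable type and take the discrete measure $\sum_\psi x_\psi\,\delta_{\vec p_\psi}$, which is a measure on $Y_\Sigma$. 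Conversely, any $\mu$ that $\epsilon$-fits the sample induces a feasible solution, because the types partition $Y$ into finitely many measurable atoms and the unsatisfiable types have empty extension.

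I expect the main obstacle to be the second step. The \ptime $0$-fitting hypothesis only answers yes/no questions, so we must make sure that the incremental enumeration never blows up, that is, that every intermediate $F_k$ is polynomially bounded. This holds because restricting the concept class to the prefix can only lower its trace, so the VC bound still applies. The remaining ingredients, namely the linear-programming feasibility test and the measure construction, are routine.
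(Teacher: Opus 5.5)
Your proposal is correct and follows essentially the same route as the paper. You use Fact~\ref{fact:sslemma} with the incremental procedure of Lemma~\ref{lem:computealltypes} to compute the polynomially many satisfiable types $F$ in polynomial time, then linearise the absolute values in System~\eqref{eq:systemdistribution} and solve the resulting polynomial-size linear program with the ellipsoid method; your explicit correspondence between feasible solutions and discrete distributions only spells out what the paper takes from the proof of Proposition~\ref{prop:easydecidabilitydistribution}.
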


\begin{proof}
    We use the same reduction outlined in the proof of Proposition~\ref{prop:easydecidabilitydistribution}. 
    Finite VC dimension (more precisely,~Fact~\ref{fact:sslemma}) 
    and $\ptime$ $0$-fitting ensure that the set $F$ in that proof can be constructed in {\ptime}. Then Formula~\ref{eq:systemdistribution} 
    involves only a polynomial number of variables. After eliminating the absolute values, as detailed  
    in the proof of Proposition~\ref{prop:easydecidabilitydistribution}, 
    the system can be solved in polynomial time using the ellipsoid method~\cite{linearprogrammingbook}.
\end{proof}

Theorem~\ref{thm:finitevcandptimeimpliesdistptime} allows us to lift the results in Thm.~\ref{thm:rqfoimpliesptimeapprox}, Cor.~\ref{cor:rqcimpliesrqfoandptimefitting} 
and Thm.~\ref{thm:buchi} to distribution classes:

\begin{corollary}\label{cor:rqcmodelimpliesdistclassconceptptimefitting} 
    If a structure~$\amodel$ is $\rqc$ then for any partitioned formula $\phi$, the  approximate fitting problem for the distribution class~$\measureclass{\conceptclass_\phi}$ is in $\ptime$. The same result holds if ~ $\amodel$ is an automatic structure and $\phi$ has finite VC dimension.
\end{corollary}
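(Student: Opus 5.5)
The plan is to derive the corollary directly from Theorem~\ref{thm:finitevcandptimeimpliesdistptime}. For each of the two cases we only need to check its two hypotheses: the partitioned formula $\phi$ has finite VC dimension, and $\phi$ has $\ptime$ $0$-fitting. Once both hold, Theorem~\ref{thm:finitevcandptimeimpliesdistptime} immediately gives $\ptime$ approximate fitting for $\measureclass{\conceptclass_\phi}$.

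For the $\rqc$ case:
\begin{compactitem}
\item \emph{Finite VC dimension.} By Remark~\ref{remark:rqc-nip}, $\rqc$ implies $\nip$. Hence, by Definition~\ref{def:finitevcnip}, every partitioned formula over $\amodel$ has finite VC dimension, and in particular $\phi$ does.
\item \emph{$\ptime$ $0$-fitting.} By Corollary~\ref{cor:rqcimpliesrqfoandptimefitting}, $\phi$ has $\rqfo$ fitting. Proposition~\ref{prop:rqfoimpliesptime} then gives $\ptime$ $0$-fitting.
\end{compactitem}
These steps rely on the standing assumptions of effective constants and $\ffte$, which the paper assumes throughout.

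For the automatic case:
\begin{compactitem}
\item \emph{Finite VC dimension.} This is part of the hypothesis on $\phi$.
\item \emph{$\ptime$ $0$-fitting.} Theorem~\ref{thm:buchi} gives $\ptime$ approximate fitting for $\phi$, and $0$-fitting is the special case $\epsilon = 0$.
\item \emph{Standing assumptions.} For automatic structures the paper takes all elements as effective constants, encoded by the regular language $D$. $\ffte$ holds by the classical closure of automatic relations under first-order operations, since a fixed formula then defines a fixed automaton that can be run in linear time. So Theorem~\ref{thm:finitevcandptimeimpliesdistptime} applies.
\end{compactitem}

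There is no real obstacle, since the argument only chains earlier results. The one point that needs care is confirming that the $\ptime$ $0$-fitting obtained in each case is uniform in the sample. The proof of Theorem~\ref{thm:finitevcandptimeimpliesdistptime} makes polynomially many calls to the $0$-fitting procedure, one per candidate cell, to build the set $F$. These calls are on instances derived from the sample: a sub-sample with a sign pattern, obtained by incrementally extending realizable partitions as in Lemma~\ref{lem:computealltypes}. Each such instance is itself a sample of size at most that of the input, so a fixed-formula polynomial-time procedure covers all of them.
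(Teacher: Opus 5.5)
Your proposal is correct and matches the paper's argument: the corollary comes from feeding into Theorem~\ref{thm:finitevcandptimeimpliesdistptime} the two hypotheses it needs. In the $\rqc$ case, finite VC dimension comes from Remark~\ref{remark:rqc-nip} and \ptime $0$-fitting from Corollary~\ref{cor:rqcimpliesrqfoandptimefitting} with Proposition~\ref{prop:rqfoimpliesptime}; in the automatic case, finite VC dimension is assumed and \ptime $0$-fitting comes from Theorem~\ref{thm:buchi}.
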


\subsection{Randomizing a definable function class}\label{subsec:expectationclass}
Thus far, we have considered two ways of creating logic-based real-valued hypothesis classes: via formulas over a numerical structure that define functions, and by randomizing an arbitrary logically-defined hypothesis class. It is possible to combine these approaches. Starting from a definable function class over a numerical structure such as the reals, we ``randomize the function class'' to form a new real-valued class parameterized by distributions over the parameter space. Each distribution induces a hypothesis by taking the expected value of the underlying function over the parameters. We formalize this construction below.

A real-valued hypothesis class $\hypoclass \coloneqq \{f_y : y \in \paramspace\}$ with range space $\rangespace$ is said to be \emph{bounded} 
whenever, for every $x \in \rangespace$, the set $\{f_y(x) : y \in Y\}$ is included in an interval $[a,b]$ of $\R$.

\begin{restatable}[Expectation class]{definition}{DefExpectationClass}\label{def:expectationclass}
    Consider a bounded hypothesis class $\hypoclass \coloneqq \{f_y : y \in \paramspace\}$ with range space~$\rangespace$, and the $\sigma$-algebra~$Y_\Sigma$ generated by the sets ${\{y \in Y : f_y(x)  \leq r\}}$ for all $x \in \rangespace$ and $r \in \R$.
    The \emph{expectation class} 
    of $\hypoclass$ is the (real-valued) hypothesis
    class $\expectationclass{\hypoclass} \coloneqq \{h_\mu : \mu \in \Delta(Y_\Sigma)\}$ on the same range space $\rangespace$, 
    such that each $h_\mu$ maps $x \in \rangespace$ 
    to the expectation $\expectation_{y \sim \mu}[f_y(x)] \coloneqq \int_{Y} f_y(x)\,d\mu(y)$. 
\end{restatable}

Note that the hypothesis that $\hypoclass$ is bounded is both a necessary and sufficient condition for the expectation $\expectation_{y \sim \mu}[f_y(x)]$
to be well-defined (i.e., for every $x \in \rangespace$, the map~$y \mapsto f_y(x)$ is $\mu$-integrable) on all probability distributions $\mu \in \Delta(Y_\Sigma)$.

\myeat{
\begin{example}\label{ex:defunctiondist}
Let $\hypoclass = \{f_y(x) : y \in [0,1]\}$ be the bounded hypothesis class with ${f_y(x) \coloneqq |x-y|}$, having range space $\R$.
Hypotheses $h_\mu$ in the expectation class $\expectationclass{\hypoclass}$ 
are parametrized by distributions~$\mu$ on $[0,1]$, 
and map $x \in \R$ 
to the expectation ${\expectation_{y \sim \mu}[|x-y|]}$.
In general, $h_\mu$ is not linear.
For example, when $\mu$ is the uniform distribution, resolving the integral yields:
\[
    h_\mu(x) = \textstyle\textbf{if } x \not \in [0,1] \textbf{ then } |{x-\frac{1}{2}}| \textbf{ else } x^2 - x + \frac{1}{2}.
\]
The fitting problem for this class would take as input a finite set of samples consisting of 
pairs of a rational $s_i$ and a target value for the mean over $y$ of $f_y(s_i)$. 
\end{example}
}

We can show that fitting problems are decidable for expectation classes that are built over definable function classes over classical numerical structures:
\begin{restatable}{theorem}{approxfittingexpect}%
    \label{thm:approx-fitting-piecewise-reals-and-PA-expect}%
    \label{thm:distclassfunctionrcfpspace}%
    \label{thm:expectationPresburger}%
    \label{thm:expectationrealgroup}%
    The approximate fitting problem is in~\np
    for every expectation class over a bounded function class over the real ordered group 
    and also for Presburger arithmetic. 
    For expectation classes based on the real ordered field, the problem is in~$\exists\R$.
\end{restatable}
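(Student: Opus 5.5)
We mirror the proof of Proposition~\ref{prop:easydecidabilitydistribution}, replacing ``cells of the Boolean algebra'' by ``cells on which the vector of function values is determined''. Let $\phi(\vec x \varsep z \varsep \vec y)$ define a bounded $f_\phi$, and let $(\vec s_1,t_1),\dots,(\vec s_n,t_n)$ and $\epsilon$ be the input. For a distribution $\mu$, the only relevant quantity is the pushforward of $\mu$ under $\vec y \mapsto (f_\phi(\vec s_1,\vec y),\dots,f_\phi(\vec s_n,\vec y)) \in \R^n$. The achievable vectors $(h_\mu(\vec s_i))_i$ are therefore exactly the convex hull of the image set $V \coloneqq \{(f_\phi(\vec s_i,\vec p))_{i} : \vec p \in Y\}$. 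This uses boundedness to ensure integrability, and it holds because any barycenter of a probability measure on a bounded subset of $\R^n$ lies in the convex hull, by a standard induction on dimension over relative interiors. Conversely, finitely supported $\mu$ realize every point of the hull. By Carath\'eodory's theorem, the instance is positive if and only if there are $n+1$ parameters $\vec p_0,\dots,\vec p_n$ and weights $\lambda_j \ge 0$ summing to $1$ such that $\sum_i |\sum_j \lambda_j f_\phi(\vec s_i,\vec p_j) - t_i| \le \epsilon$. A subtlety is that the hull need not be closed; but closure is not needed, since ``$\le\epsilon$'' is attained by some point of the hull exactly when an actual finitely supported $\mu$ attains it.

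For the real ordered field, this condition is directly an existential sentence. It has variables $\vec p_j$, $\lambda_j$, output variables $z_{ij}$ constrained by $\phi(\vec s_i \varsep z_{ij} \varsep \vec p_j)$, and auxiliary variables eliminating the absolute values as in the proof of Proposition~\ref{prop:easydecidabilitydistribution}. Its size is polynomial in the input. The formula $\phi$ itself may contain quantifiers, but it is fixed, so we first replace it by a quantifier-free equivalent by quantifier elimination. This gives membership in $\exists\R$.

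For the real ordered group and Presburger arithmetic, the plan is to guess and verify. As in the proof of Theorem~\ref{thm:approx-fitting-piecewise-reals-and-PA}, we may assume $f_\phi$ is piecewise, with quantifier-free guards $\psi_1,\dots,\psi_m$ and linear terms $\ell_1,\dots,\ell_m$ in the expanded linear signature. The NP algorithm guesses, for each $j \le n$, a map $\mu_j\colon[1..n]\to[1..m]$ selecting which piece applies to each example. It also guesses concrete witnesses $\vec p_j$ of polynomial bit size for the satisfiable guard conjunctions $\bigwedge_i \psi_{\mu_j(i)}(\vec s_i,\vec p_j)$. Such small witnesses exist by the small-model property of existential linear arithmetic, and for Presburger the integer-division terms are handled by the standard bounds on solutions of integer programs. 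Once the $\vec p_j$ are fixed, the values $f_\phi(\vec s_i,\vec p_j)$ are rationals computable in polynomial time. What remains is a linear program in the $\lambda_j$ and the auxiliary variables, which can be solved in polynomial time. Since the weights $\lambda_j$ are real in both cases, no integrality constraint appears on them.

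The main obstacle is the guessing step. For the real ordered group this is routine. For Presburger arithmetic, we must make sure the polynomial-size witness bounds apply to the fixed quantifier-free formula instantiated with binary-encoded examples. We also have to argue that we may guess actual parameters, rather than merely sign patterns, without losing solutions. This is justified because the final LP depends only on the values $f_\phi(\vec s_i,\vec p_j)$, and the Carath\'eodory reduction lets us take each $\vec p_j$ to be any element realizing the chosen value vector.
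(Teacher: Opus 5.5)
\textbf{There is a genuine gap in the real ordered group case.}

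Your treatment of the real ordered field matches the paper's. Your Presburger argument can also be completed. Boundedness plus integer-programming bounds make every value $f_\phi(\vec s_i,\vec p)$ a polynomial-size integer, so the value vectors of any solution are realized by polynomial-size parameters. This is the paper's ``guess $k_{ij}$ and check by $0$-fitting'' step.

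The problem is the real ordered group, which you call routine but which is the core of the paper's proof. Your algorithm is complete only if the following holds: whenever some $(\lambda,\vec p_1,\dots,\vec p_{n+1})$ fits, some fitting tuple has every $\vec p_j$ of polynomial bit size. Your justification does not establish this, for three reasons.
\begin{itemize}
\item The small-model property applied to $\bigwedge_i\psi_{\mu_j(i)}(\vec s_i,\vec p_j)$ only yields \emph{some} small point of the cell. Inside a cell the value vector $(\ell_{\mu_j(i)}(\vec s_i,\vec p_j))_i$ varies affinely with $\vec p_j$, so that point generally does not realize the values a solution needs.
\item Your fallback, ``take each $\vec p_j$ to be any element realizing the chosen value vector'', presupposes that the chosen value vectors have polynomial size. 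Unlike in Presburger, the image here is a continuum, and nothing guarantees this.
\item You cannot apply the small-model property of linear arithmetic to the full fitting condition. The constraint $\sum_i|\sum_j\lambda_j\ell_{\mu_j(i)}(\vec s_i,\vec p_j)-t_i|\le\epsilon$ is bilinear in $(\lambda,\vec p)$.
\end{itemize}

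These products $p_j\cdot f_i(y_j)$ are exactly what the paper's ad hoc argument handles. It writes each parameter as a strictly positive convex combination of at most $d+1$ guessed, polynomial-size vertices of the closure $\overline{S_j}$ of the guessed cell. It checks that the resulting relatively open face piece lies inside the semi-open cell $S_j$. It then linearizes $p_j q_{kj}$ into fresh variables $x_{jk}$, adding an ``all zero or all positive'' constraint.

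A shorter repair within your framework is homogenization.
\begin{enumerate}
\item Guess the cells, i.e.\ one conjunction of strict and non-strict linear inequalities per pair $(i,j)$, and guess the support $J$ of $\lambda$.
\item For $j\in J$, substitute $\vec u_j=\lambda_j\vec p_j$ with $\lambda_j>0$.
\item The guards and the terms $\ell$ are affine in $\vec p_j$. Hence $\lambda_j\ell(\vec s_i,\vec p_j)$ becomes linear in $(\vec u_j,\lambda_j)$. Each guard inequality $\vec\beta\cdot\vec p_j\le c$ (resp.\ $<$) becomes $\vec\beta\cdot\vec u_j\le\lambda_j c$ (resp.\ $<$).
\end{enumerate}
What remains is a linear system with strict inequalities, decidable in polynomial time. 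A feasible solution also supplies the polynomial-size parameters $\vec p_j=\vec u_j/\lambda_j$ that your algorithm wanted to guess. Without some such argument, the real-ordered-group part of your proposal is not yet a proof.
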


The proof of this theorem is given in Appendix \ref{app:expectation}. 
Similarly to the proof of~Proposition~\ref{prop:easydecidabilitydistribution}, 
this proof mixes techniques from linear algebra and convex geometry with our algorithms 
for fitting problems for definable function classes.

\section{Discussion}\label{sec:conc}

\begin{table}
    \scalebox{0.95}{%
    \centering
        \begin{tabular}{|l || l | l |}
            \hline 
            \rowcolor{gray!20} {\bf Concept Class} & {\bf Exact fitting}& {\bf Approximate fitting}\\
            \hline
            \rqc structure (Presburger, real field, $\ldots$)  %
                & $\rqfo$ (Corollary~\ref{cor:rqcimpliesrqfoandptimefitting})  & {\bf PTIME} (Corollary~\ref{cor:rqcimpliesrqfoandptimefitting})\\
            B\"uchi Arithmetic, finite VC dimension  &
            {\bf PTIME} (Theorem~\ref{thm:buchi}) & {\bf PTIME} (Theorem~\ref{thm:buchi})\\
            B\"uchi Arithmetic, general &
            {\bf NP}, tight  (Proposition~\ref{prop:buchiexactconcept}) & {\bf NP}, tight (Proposition~\ref{prop:buchiexactconcept})\\[1pt]
            \hline 
            \rowcolor{gray!20} {\bf Definable Function Class}&&\\
            \hline
            over $(\R,+,<)$ or $(\N,+,<)$
                & $\rqfo$ (Remark~\ref{rem:exactfunction})  & {\bf PTIME}   (Theorem~\ref{thm:approx-fitting-piecewise-reals-and-PA})\\
            Real ordered field, piecewise 
                & $\rqfo$ (Remark~\ref{rem:exactfunction}) & {\bf PTIME}   (Theorem~\ref{thm:approx-fitting-piecewise-reals-and-PA})\\
            Real ordered field, general
                &  $\rqfo$  (Remark~\ref{rem:exactfunction}) & {\bf $\exists\R$}, SSR-hard (Prop.~\ref{prop:fitting-real-field-functions})\\[1pt]
            \hline 
            \rowcolor{gray!20} {\bf Distribution Class}&&\\
            \hline
            \rqc structure  &
            \multicolumn{2}{c|}{{\bf PTIME} (Corollary~\ref{cor:rqcmodelimpliesdistclassconceptptimefitting})}\\
            B\"uchi arithmetic, finite VC dimension 
            & \multicolumn{2}{c|}{{\bf PTIME} (Corollary~\ref{cor:rqcmodelimpliesdistclassconceptptimefitting})}\\
            B\"uchi arithmetic, general 
            & \multicolumn{2}{c|}{{\bf NP}, tight (Corollary~\ref{cor:distclassconceptbuchi})}\\[1pt]
            \hline
            \rowcolor{gray!20} {\bf Expectation Class}&&\\
            \hline
            over $(\R,+,<)$ or $(\N,+,<)$
                & \multicolumn{2}{c|}{{\bf NP} (Theorem~\ref{thm:distclassfunctionrcfpspace})}\\
            Real ordered field 
                & \multicolumn{2}{c|}{{\bf $\exists\R$} (Theorem~\ref{thm:distclassfunctionrcfpspace})}\\
            \hline
        \end{tabular}
    }
    \medskip
    \caption{Logic-based fitting problems (assuming effective constants and $\ffte$): a recap.}
    \label{tab:summary}
    \vspace{-0.6cm}
\end{table}

Our work initiates a study of fitting problems for logic-based classes over infinite structures. Table~\ref{tab:summary} overviews some of~our~results. We highlight that,
for several natural examples, exact fitting can be performed with a first-order query over the active domain of the sample---$\rqfo$ entries in the table---while approximate fitting reduces to generating polynomial many such queries. For real-valued classes, we show that fitting is decidable for many natural classes;  the algorithms often involve equation solving in addition to querying, and tractability remains open in some cases. 

For general hypothesis classes, it is known that ``well-behaved sample
complexity'' (characterized by finite VC dimension) and ``well-behaved
computational complexity of fitting'' are unrelated
(Remark~\ref{remark:VC-only-combinatorial}). However, our exploration of fitting
problems reveals that these two are linked when focusing
on~\emph{logically-defined classes}. Indeed, for these classes we have found no
counterexample to the question ``does every $\nip$ structure with $\ffte$ have
$\ptime$ fitting?'', and we proved that any such counterexample would need to
come from a structure that is neither \rqc nor automatic. This is perhaps the
major question we leave open.  For an affirmative answer, it would suffice to
show that every such structure can be expanded to have $\rqc$ and $\ffte$. 

In Table~\ref{tab:summary}, all results involving the existential theory of the
reals are \emph{not} known to be tight. These problems are syntactically
restricted instances of the existential theory of the reals. The reduction from
the sum-of-square-root problem to fitting for definable functions in the real field
indicates that one should not hope to prove a bound better for this problem than a fixed level of
the counting hierarchy \cite{AllenderBKM09}. 
The $\np$ upper bounds for expectation classes over the real ordered
group and Presburger arithmetic from Section~\ref{sec:randomized} are also not
known to be tight.

All of our complexity results use standard bit complexity, assuming effective constants for the samples. It would be possible to restate some results using the abstract Blum--Shub--Smale model of computation \cite{bss1, poizat1995petits}, but in many cases this would weaken the statements (e.g. for NP bounds).

In this work we have only briefly explored the connection with fitting  problems for neural networks, as in \cite{trainingfully, jonniexptraining,trainingiscomplete, FroeseH23, AroraBMM18}. To strengthen the applicability of our techniques to practical settings, an important future direction is to examine approximate fitting for other loss functions. All our upper bounds
extend to the $L^2$ norm and hinge loss function, see Appendix~\ref{app:ltwoloss}.
Also note that we have only dealt with \emph{decision problems} here, leaving open the problem of finding a witness parameter in the case of a positive answer. The search problem would require more restrictions on the computation model,  specifying the representation for parameters.

Even for the simplest case of definable concept classes, we have not provided a detailed study of when fitting can be done in tractable query languages beyond $\rqfo$. \forarxiv{We do know that there are structures without $\rqfo$ fitting, but where fitting for all definable concept classes can be done in another $\ptime$ logic: see Appendix \ref{app:lfp} for details.}

\bibliographystyle{abbrv}
\bibliography{references}

\begin{thebibliography}{10}

\bibitem{trainingiscomplete}
M.~Abrahamsen, L.~Kleist, and T.~Miltzow.
\newblock {Training Neural Networks is ER-complete}.
\newblock In {\em NeurIPS}, 2021.

\bibitem{AllenderBKM09}
E.~Allender, P.~B{\"{u}}rgisser, J.~Kjeldgaard{-}Pedersen, and P.~B. Miltersen.
\newblock On the complexity of numerical analysis.
\newblock {\em {SIAM} J. Comput.}, 2009.

\bibitem{arxivwithaaron}
A.~Anderson and M.~Benedikt.
\newblock From learnable objects to learnable random objects, 2025.
\newblock https://arxiv.org/abs/2504.00847.

\bibitem{learninghardnessanthony}
M.~Anthony.
\newblock Some connections between learning and optimization.
\newblock {\em Discrete Applied Mathematics}, 144(1):17--26, 2004.

\bibitem{AroraBMM18}
R.~Arora, A.~Basu, P.~Mianjy, and A.~Mukherjee.
\newblock {Understanding deep neural networks with rectified linear units}.
\newblock In {\em {ICLR}}, 2018.

\bibitem{existentialtheory}
S.~Basu, R.~Pollack, and M.-F. Roy.
\newblock {\em Existential Theory of the Reals}, pages 465--492.
\newblock {Springer}, 2003.

\bibitem{isolation}
O.~V. Belegradek, A.~P. Stolboushkin, and M.~A. Taitslin.
\newblock Extended order-generic queries.
\newblock {\em Annals of Pure and Applied Logic}, 97(1):85--125, 1999.

\bibitem{ibyrandvar}
I.~{Ben Yaacov}.
\newblock On theories of random variables.
\newblock {\em Israel Journal of Mathematics}, 194:957--1012, 2013.

\bibitem{itaykeisler}
I.~{Ben Yaacov} and H.~J. Keisler.
\newblock Randomizations of models as metric structures.
\newblock {\em Confluentes Mathematici}, 1(2):197--223, 2009.

\bibitem{generalizing}
M.~Benedikt.
\newblock Generalizing finite model theory.
\newblock In V.~Stoltenberg-Hansen and J.~Väänänen, editors, {\em Logic Colloquium ’03}, page 3–24. Cambridge University Press, 2006.

\bibitem{mbehlics}
M.~Benedikt and E.~Hrushovski.
\newblock Embedded finite models beyond restricted quantifier collapse.
\newblock In {\em LICS}, 2023.

\bibitem{hybridso}
M.~Benedikt and L.~Libkin.
\newblock Relational queries over interpreted structures.
\newblock {\em J. ACM}, 47(4):644–680, July 2000.

\bibitem{modeltheoryofstrings}
M.~Benedikt, L.~Libkin, T.~Schwentick, and L.~Segoufin.
\newblock Definable relations and first-order query languages over strings.
\newblock {\em J. {ACM}}, 50(5):694--751, 2003.

\bibitem{trainingfully}
D.~Bertschinger, C.~Hertrich, P.~Jungeblut, T.~Miltzow, and S.~Weber.
\newblock {Training Fully Connected Neural Networks is $\exists \reals$ -Complete}.
\newblock In {\em NeurIPS}, 2023.

\bibitem{bss1}
L.~Blum, M.~Shub, and S.~Smale.
\newblock On a theory of computation and complexity over the real numbers: {NP}-completeness, recursive functions and universal machines.
\newblock {\em Bulletin of the American Mathematical Society}, 21(1):1--46, 1989.

\bibitem{vctheorem}
A.~Blumer, A.~Ehrenfeucht, D.~Haussler, and M.~K. Warmuth.
\newblock Learnability and the {Vapnik-Chervonenkis} dimension.
\newblock {\em J. ACM}, 36(4):929–965, 1989.

\bibitem{BruyareHMV94}
V.~Bruy{\`e}re, G.~Hansel, C.~Michaux, and R.~Villemaire.
\newblock Logic and {$p$}-recognizable sets of integers.
\newblock {\em Bull. Belg. Math. Soc. Simon Stevin}, 1(2):191--238, 1994.

\bibitem{Buechi1960}
R.~J. Büchi.
\newblock Weak second-order arithmetic and finite automata.
\newblock {\em Math. Logic Quart.}, 6(1-6):66--92, 1960.

\bibitem{changkeisler}
{C. C. Chang and H. J. Keisler}.
\newblock {\em Model theory}.
\newblock North-Holland, third edition, 1990.

\bibitem{caratheodory1907}
C.~Carath{\'e}odory.
\newblock {\"U}ber den variabilit{\"a}tsbereich der koeffizienten von potenzreihen, die gegebene werte nicht annehmen.
\newblock {\em Mathematische Annalen}, 64(1), 1907.

\bibitem{simon-chernikov-two}
A.~Chernikov and P.~Simon.
\newblock {Externally definable sets and dependent pairs II}.
\newblock {\em { Transactions of the American Mathematical Society}}, 367(7):5217--5235, 2015.

\bibitem{faginzeroone}
R.~Fagin.
\newblock Probabilities on finite models.
\newblock {\em Journal of Symbolic Logic}, 41(1):50--58, 1976.

\bibitem{anthonydiego}
D.~Figueira, A.~Jez, and A.~W. Lin.
\newblock Data path queries over embedded graph databases.
\newblock In {\em {PODS}}, 2022.

\bibitem{pseudofinite}
J.~Flum and M.~Ziegler.
\newblock Pseudo-finite homogeneïty and saturation.
\newblock {\em The Journal of Symbolic Logic}, 64(4):1689--1699, 1999.

\bibitem{FroeseH23}
V.~Froese and C.~Hertrich.
\newblock Training neural networks is {NP}-hard in fixed dimension.
\newblock In {\em NeurIPS}, 2023.

\bibitem{FurstSS84}
M.~L. Furst, J.~B. Saxe, and M.~Sipser.
\newblock Parity, circuits, and the polynomial-time hierarchy.
\newblock {\em Math. Syst. Theory}, 17(1):13--27, 1984.

\bibitem{tighthardness}
S.~Goel, A.~R. Klivans, P.~Manurangsi, and D.~Reichman.
\newblock {Tight Hardness Results for Training Depth-2 ReLU Networks}.
\newblock In {\em {ITCS}}, 2021.

\bibitem{gradel20years}
E.~Gr{\"{a}}del.
\newblock Automatic structures: Twenty years later.
\newblock In {\em {LICS}}, 2020.

\bibitem{qerandomgraph}
E.~Grandjean.
\newblock Complexity of the first-order theory of almost all finite structures.
\newblock {\em Information and Control}, 57(2):180--204, 1983.

\bibitem{groheritzert}
M.~Grohe and M.~Ritzert.
\newblock Learning first-order definable concepts over structures of small degree.
\newblock In {\em {LICS}}, 2017.

\bibitem{GLS1988}
M.~Gr{\"{o}}tschel, L.~Lov{\'{a}}sz, and A.~Schrijver.
\newblock {\em Geometric Algorithms and Combinatorial Optimization}, volume~2 of {\em Algorithms and Combinatorics}.
\newblock Springer, 1988.

\bibitem{jonniexptraining}
T.~Hankala, M.~Hannula, J.~Kontinen, and J.~Virtema.
\newblock Complexity of neural network training and {ETR:} extensions with effectively continuous functions.
\newblock In {\em {AAAI}}, 2023.

\bibitem{nodistalexpansion}
P.~Hieronymi, T.~Nell, and E.~Walsberg.
\newblock Wild theories with o-minimal open core.
\newblock {\em Annals of Pure and Applied Logic}, 169(2):146--163, 2018.

\bibitem{sigmod22}
X.~Hu, Y.~Liu, H.~Xiu, P.~K. Agarwal, D.~Panigrahi, S.~Roy, and J.~Yang.
\newblock Selectivity functions of range queries are learnable.
\newblock In {\em {SIGMOD}}, 2022.

\bibitem{immermanvardi}
N.~Immerman.
\newblock {Relational queries computable in polynomial time}.
\newblock {\em {Information and Control}}, 68:86–104, 1986.

\bibitem{meanvalue}
S.~Janković and M.~Merkle.
\newblock A mean value theorem for systems of integrals.
\newblock {\em Journal of Mathematical Analysis and Applications}, 342(1):334--339, 2008.

\bibitem{keislerrandomizing}
H.~J. Keisler.
\newblock Randomizing a model.
\newblock {\em Advances in Mathematics}, 143(1):124--158, 1999.

\bibitem{KhachiyanP00}
L.~Khachiyan and L.~Porkolab.
\newblock Integer optimization on convex semialgebraic sets.
\newblock {\em Discret. Comput. Geom.}, 23(2):207--224, 2000.

\bibitem{prespowernip}
Q.~Lambotte and F.~Point.
\newblock {On expansions of (Z, +, 0)}.
\newblock {\em {Annals of Pure and Applied Logic}}, 171(8), 2020.

\bibitem{Lenstra83}
H.~W. Lenstra.
\newblock Integer programming with a fixed number of variables.
\newblock {\em Math. Oper. Res.}, 8(4):538--548, 1983.

\bibitem{libkincdbembfmt}
L.~Libkin.
\newblock Embedded finite models and constraint databases.
\newblock In {\em Finite Model Theory and Its Applications}. 2007.

\bibitem{factregression}
D.~Olteanu and M.~Schleich.
\newblock {F:} regression models over factorized views.
\newblock {\em Proc. {VLDB} Endow.}, 9(13):1573--1576, 2016.

\bibitem{3belgians}
J.~Paredaens, J.~V. den Bussche, and D.~V. Gucht.
\newblock First-order queries on finite structures over the reals.
\newblock {\em {SIAM} J. Comput.}, 27(6):1747--1763, 1998.

\bibitem{dualvc}
{Patrick Assouad}.
\newblock {Densit\'e et dimension}.
\newblock {\em Annales de l’Institut Fourier}, 33(3):233–282, 1983.

\bibitem{learninghardnesspittvaliant}
L.~Pitt and L.~G. Valiant.
\newblock Computational limitations on learning from examples.
\newblock {\em J. ACM}, 35(4):965–984, 1988.

\bibitem{poizat1995petits}
B.~Poizat.
\newblock {\em Les Petits Cailloux: Une approche modèle-théorique de l'algorithmie}.
\newblock Aléas, Lyon, 1995.

\bibitem{Renegar92}
J.~Renegar.
\newblock On the computational complexity and geometry of the first-order theory of the reals, part {I:} introduction. preliminaries. the geometry of semi-algebraic sets. the decision problem for the existential theory of the reals.
\newblock {\em J. Symb. Comput.}, 1992.

\bibitem{linearprogrammingbook}
A.~Schrijver.
\newblock {\em Theory of Linear and Integer Programming}.
\newblock John Wiley \& Sons, 1986.

\bibitem{shelahonevar}
S.~Shelah.
\newblock Stability, the f.c.p., and superstability; model theoretic properties of formulas in first order theory.
\newblock {\em Annals of Mathematical Logic}, 3(3):271--362, 1971.

\bibitem{shelahsauer1972}
S.~Shelah.
\newblock A combinatorial problem; stability and order for models and theories in infinitary languages.
\newblock {\em Pacific Journal of Mathematics}, 41:247--261, 1972.

\bibitem{addinglinearorders}
S.~Shelah and P.~Simon.
\newblock Adding linear orders.
\newblock {\em The Journal of Symbolic Logic}, 77(2):717–725, 2012.

\bibitem{nipbook}
P.~Simon.
\newblock {\em A Guide to NIP Theories}.
\newblock Cambridge University Press, 2015.

\bibitem{balderfitting}
B.~ten Cate, M.~Funk, J.~C. Jung, and C.~Lutz.
\newblock Fitting algorithms for conjunctive queries.
\newblock {\em {SIGMOD} Rec.}, 52(4):6--18, 2023.

\bibitem{tongdistal}
M.~Tong.
\newblock {Distal Expansions of Presburger Arithmetic by a sparse predicate}.
\newblock {\em Journal of Symbolic Logic}, page 1–33, 2024.

\end{thebibliography}

\appendix
\begingroup
\fontsize{10}{12}\selectfont 
\onecolumn

\section{$\rqfo$ fitting is preserved when moving to non-standard models}\label{app:nonstandard}

Let $\amodel$ be a structure with domain $M$ and signature $L$. We recall that
the first-order theory of $\amodel$, denoted $\theory(\amodel)$, is the set of
all sentences $\phi$ in the signature $L$ such that $\amodel \models \phi$. And,
a model of $\theory(\amodel)$ is any structure~$\amodel'$ with signature $L$
such that $\theory(\amodel') = \theory(\amodel)$. 
The next lemma shows that if a formula has $\rqfo$ fitting in a structure~$\amodel$, 
then it has $\rqfo$ fitting in all models of $\theory(\amodel)$, 
with respect to the same \rqfo formula $\phi_{\textit{fit}}$ from~Definition~\ref{def:rqfo-fitting}.

\begin{proposition}\label{prop:emb-preserved-nonstandard}
    Let $\phi(x_1,\dots,x_n)$ and $\psi(x_1,\dots,x_n)$ be two formulas over $V \cup L$, where $V$ is a relational vocabulary disjoint from $L$. 
    Suppose $\phi$ and $\psi$ equivalent over embedded finite models for $V$ over $\amodel$.
    For every model $\amodel'$ of $\theory(\amodel)$, 
    $\phi$ and $\psi$ are equivalent over embedded finite models for $V$ over $\amodel'$.
\end{proposition}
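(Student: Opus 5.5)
The plan is to encode, for each fixed size of embedded finite model, the statement ``$\phi$ and $\psi$ agree'' as a single first-order sentence in the signature $L$ alone. Such a sentence belongs to $\theory(\amodel)$, hence holds in every model $\amodel'$ of $\theory(\amodel)$. Every embedded finite model over $\amodel'$ has some finite size, so this gives the equivalence over $\amodel'$.

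\emph{Translation.} Fix, for each relation symbol $R \in V$ of arity $r_R$, a number $n_R \in \N$ (the number $n_R = 0$ is allowed, to cover empty relations). For each $R$, introduce fresh tuples of variables $\vec w_{R,1},\dots,\vec w_{R,n_R}$, each of length $r_R$, and write $\vec w$ for all of them together. For a formula $\chi$ over $L \cup V$, let $\chi^{*}$ be the $L$-formula obtained by replacing every atom $R(\vec u)$ with $\bigvee_{j=1}^{n_R} \vec u = \vec w_{R,j}$, where the empty disjunction is $\false$. We rename bound variables of $\chi$ beforehand so that they never clash with $\vec w$.

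\emph{Key claim.} Let $\mathfrak N$ be any $L$-structure and $\vec a$ an assignment to $\vec w$ in it. Let $\embmodel_{\vec a}$ be the embedded finite model interpreting $R$ as $\{\vec a_{R,1},\dots,\vec a_{R,n_R}\}$. Then for all $\vec c$,
\[
\mathfrak N,\embmodel_{\vec a} \models \chi(\vec c) \iff \mathfrak N \models \chi^{*}(\vec c,\vec a).
\]
The proof is a routine induction on $\chi$. Atoms of $V$ hold by construction, atoms of $L$ are untouched, and Boolean connectives are immediate. Quantifiers range over the same domain on both sides, and since $\vec w$ is never bound, the assignment to $\vec w$ is unchanged.

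\emph{Conclusion.} Set
\[
\sigma_{\vec n} \coloneqq \forall \vec w\, \forall x_1 \dots \forall x_n\, \big(\phi^{*} \leftrightarrow \psi^{*}\big).
\]
Every embedded finite model over $\amodel$ whose relations have at most $n_R$ tuples arises as some $\embmodel_{\vec a}$, since repeated tuples are allowed. Hence the claim together with the hypothesis gives $\amodel \models \sigma_{\vec n}$, so $\sigma_{\vec n} \in \theory(\amodel) = \theory(\amodel')$. Now let $\embmodel$ be an embedded finite model over $\amodel'$. Choose $n_R = \card{R^{\embmodel}}$ and let $\vec a$ enumerate the tuples of each $R^{\embmodel}$, so that $\embmodel = \embmodel_{\vec a}$. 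Applying the claim in $\amodel'$ with $\amodel' \models \sigma_{\vec n}$ yields $\amodel',\embmodel \models \phi \leftrightarrow \psi$.

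The only step requiring care is the translation claim: avoiding variable capture and handling empty relations. Both are dealt with by the renaming and the $n_R = 0$ convention above. I expect no real obstacle beyond this.
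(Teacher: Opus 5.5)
Your proof is correct and is essentially the paper's argument. The paper argues by contradiction: it replaces each $V$-atom by a disjunction of equalities with fresh variables $y_1,\dots,y_m$ naming the active domain of an offending embedded model over $\amodel'$, then pulls the resulting existential $L$-sentence back to $\amodel$. That is just the contrapositive of your transfer of the universal sentence $\sigma_{\vec n}$ from $\amodel$ to $\amodel'$, and your indexing by tuples rather than by active-domain elements is a harmless, slightly cleaner variant.

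One small slip: to obtain $\amodel \models \sigma_{\vec n}$ you only need the trivial fact that every $\mathcal{E}_{\vec a}$ with $\vec a$ from $\amodel$ is an embedded finite model over $\amodel$. The converse you cite is not what the step uses, and it fails for empty relations when $n_R \geq 1$.
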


\begin{proof}
    Towards a contradiction, suppose $\amodel',\embmodel' \not\models \phi \leftrightarrow \psi$ for some embedded finite model $\embmodel'$ for $V$ over $\amodel'$. 
    Consider the map $\nu \colon \{x_1,\dots,x_n\} \to M'$ witnessing $\amodel',\embmodel',\nu \models {\lnot (\phi \leftrightarrow \psi)}$, 
    and let $\{a_1,\dots,a_m\}$ be the active domain of $\embmodel'$ (see Remark~\ref{remark:activeisenough}). 
    
    We extend the map $\nu$ into ${\nu' \colon \{x_1,\dots,x_n,y_1,\dots,y_m\} \to M'}$ by adding assignments $\nu'(y_i) = a_i$ for all $i \in [1..m]$, with $y_1,\dots,y_m$ fresh. Furthermore, let $\gamma(x_1,\dots,x_n,y_1,\dots,y_m)$ be the formula 
    obtained from $\phi \leftrightarrow \psi$ 
    by replacing each atomic formula $R(z_1,\dots,z_k)$,
    where $R$ belongs to $V$, by $\bigvee_{(a_{i_1},\dots,a_{i_k}) \in R^{\embmodel'}}
    \bigwedge_{j=1}^k z_j = y_{i_j}$. By construction, $\amodel', \nu' \models \lnot \gamma$, 
    hence $\amodel' \models \exists x_1,\dots,x_n,y_1,\dots,y_m \colon \lnot \gamma$. 
    Since $\theory(\amodel') = \theory(\amodel)$,
    this implies that there is a map $\mu' \colon \{x_1,\dots,x_n,y_1,\dots,y_m\} \to M$ such that $\amodel, \mu' \models \lnot \gamma$. 
    Let $\mu$ be the projection of $\mu'$ with domain $\{x_1,\dots,x_n\}$, and let $f$ be the map from $\{a_1,\dots,a_m\}$ to $M$ given by $f(a_i) = \mu'(y_i)$. 
    Let $f(\embmodel')$ be the embedded finite model for $V$ over $\amodel$ 
    given by $R^{f(\embmodel')} \coloneqq \{(f(a_{i_1}),\dots,f(a_{i_k})) \mid (a_{i_1},\dots,a_{i_k}) \in R^{\embmodel'}\}$ for every relation $R$ in $V$ of arity $k$.
    The map $f$ is a homomorphism from $\embmodel'$ to $f(\embmodel')$.
    By construction of $\gamma$, we have $\amodel,f(\embmodel'),\mu \models \lnot (\phi \leftrightarrow \psi)$, contradicting the fact that $\phi$ and $\psi$ are equivalent over embedded finite models for $V$ over $\amodel$.
\end{proof}

\section{Proof of Lemma~\ref{lem:RQfittingImpliesFiniteVC}: $\rqfo$ fitting for
1-parameter formulas implies $\nip$}\label{app:rqfoimpliesfinitevc}

We recall the lemma, which is used in several places in the body.

\lemQfittingImpliesFiniteVC*

Before proving the lemma, let us note that the ability to move from $\rqfo$
fitting to finite VC dimension does not hold at the level of individual
partitioned formulas: a formula can have \rqfo fitting but not VC dimension. 

\begin{example}[\rqfo fitting for $\phi$ $\centernot\implies$  finite VC
    dimension]\label{example:Buchi-infiniteVC-but-RQ-fitting} Consider B\"uchi
    arithmetic (defined in Section~\ref{sec:betweenptimeandrqfoconcept}). Recall
    that in this structure we can define the formula $\textit{bit}(x \varsep y)$
    expressing that $x$ is a power of two appearing in the binary expansion of
    $y$, which has infinite VC dimension as it shatters the powers of two. Yet
    this partitioned formula has \rqfo fitting, given by the \rqfo sentence 
  
  $\forall x \colon P(x) \rightarrow V_2(x,x)$. This sentence states that a
    sample with positive examples~$S^+$ and negative examples~$S^-$ is
    $\gamma$-fittable if and only if every element of $S^+$ is a power of two
    (recall that we always assume samples to satisfy ${S^+ \cap  S^- =
    \emptyset}$, see Definition~\ref{def:sample}).
\end{example}

Although the above counterexample shows that \rqfo fitting does not imply finite
VC dimension for individual formulas, Lemma~\ref{lem:RQfittingImpliesFiniteVC}
states that finite VC dimension can be recovered when the structure $\amodel$
has \rqfo fitting.

\begin{remark} This lemma is closely related to prior results. From the lemma we
immediately derive that $\rqc$ implies $\nip$, which was previously known: see
Remark~\ref{remark:rqc-nip} in the body of the paper. And the proof of this lemma
is a variation of prior proofs that $\rqc$ implies $\nip$, which date back at
least to the complexity-theoretic proof of Lemma~3.9
in~\cite{modeltheoryofstrings}. The idea in each case is that from a partitioned
formula $\phi$ that is a counterexample to $\nip$, we get another partitioned
formula $\gamma$ that encodes $3$-coloring on arbitrarily large embedded finite
graphs. We then argue from the separation of $AC^0$ and $\ptime$ that this is
impossible. The proof we give below will go through nonstandard models that
share a theory with the initial structure, in order to normalize the collection
of counterexamples further. But it should in principle be possible to avoid this
detour, as in the original proof in \cite{modeltheoryofstrings}.
\end{remark}
We now begin the proof of Lemma~\ref{lem:RQfittingImpliesFiniteVC}. We start by
recalling the following result by~Shelah.

\begin{fact}[{\cite[Theorem
    4.6]{shelahonevar}}]\label{fact:oneparamsufficenip-body} If every
    partitioned formula of the form $\phi(\vec x \varsep y)$ has finite VC
    dimension, then the same holds for every partitioned formula: that is, the
    structure is $\nip$. \footnote{\cite[Theorem 4.6]{shelahonevar} considers
    formulas with a single object and multiple parameters. The formulation given
    here relies on VC duality~\cite{dualvc}: finite VC dimension is preserved
    under swapping objects and parameters.}
\end{fact}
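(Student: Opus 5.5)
The plan is to prove the version with a single \emph{object} variable, which is Shelah's original form. By VC duality~\cite{dualvc}, finite VC dimension is preserved under swapping objects and parameters, so this version is equivalent to the stated one. I will argue by induction on the length of the object tuple $\vec x$. I will work in a sufficiently saturated elementary extension $\acmodel$ of $\amodel$. This is harmless because having finite VC dimension for $\phi$ is expressed by the sentences ``$\phi$ does not shatter any set of size $n$'' for some $n$, and these are first-order sentences in the theory, in the spirit of Proposition~\ref{prop:emb-preserved-nonstandard}.

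The first step establishes the standard indiscernible-sequence characterization. A formula $\phi(\vec x \varsep \vec y)$ has infinite VC dimension iff there exist an indiscernible sequence $(\vec b_i)_{i<\omega}$ and a tuple $\vec a$ such that $\phi(\vec a, \vec b_i)$ holds exactly for even $i$.
\begin{itemize}
\item For the ``if'' direction, apply indiscernibility and compactness to show that every finite subsequence is shattered.
\item For the ``only if'' direction, use Ramsey's theorem and compactness to extract an indiscernible sequence from arbitrarily large shattered sets, preserving the property that an alternating pattern is realized.
\end{itemize}
Equivalently, $\phi$ is NIP iff for every indiscernible sequence $(\vec b_i)$ and every $\vec a$, the truth value of $\phi(\vec a, \vec b_i)$ is eventually constant.

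The second step is the induction on the object variables. Write $\vec x = \vec x_1 x_2$ with $|\vec x_1|$ smaller than $|\vec x|$. Suppose that $(\vec b_i)_{i<\omega}$ is indiscernible and that $\phi(\vec a_1 a_2, \vec b_i)$ alternates.
\begin{itemize}
\item Using the single-variable hypothesis applied to all formulas $\theta(x_2 \varsep \vec y, \vec z)$, show that $(\vec b_i)$ can be replaced by an infinite subsequence, extended by compactness, that is indiscernible over $a_2$. Concretely, from NIP of one-variable formulas derive that for any indiscernible $I$ and any single element $c$, some cofinite (or, after stretching, end-segment) part of $I$ is indiscernible over $c$. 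This holds because each formula $\theta(c, \vec b_{i_1},\dots,\vec b_{i_k})$ has eventually constant truth value along increasing tuples, and there are only finitely many formulas to handle at each stage, so compactness applies.
\item Once the sequence $(\vec b_i)$ is indiscernible over $a_2$, regard the sequence $(\vec b_i a_2)$ as an indiscernible sequence. Then $\phi(\vec a_1, \vec b_i a_2)$ alternates, which contradicts the induction hypothesis for the shorter object tuple $\vec x_1$.
\end{itemize}

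The main obstacle is the bullet showing the sequence can be made indiscernible over a single element $c$. One must take care that ``eventually constant'' holds simultaneously for all formulas and for tuples $i_1<\dots<i_k$, not only for single indices. I would first prove the $k$-tuple version by applying the one-variable NIP hypothesis to the sequence of increasing $k$-tuples, grouped into consecutive blocks. That sequence is itself indiscernible, so the previous step applies to it. I would then conclude with a compactness argument yielding an indiscernible sequence over $c$ that realizes the same alternation pattern.
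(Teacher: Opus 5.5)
Your reduction matches the paper's. The paper offers no argument beyond citing Shelah's single-object theorem and transferring it by VC duality, which is exactly your first paragraph. What you add is a proof of the cited theorem itself, by the standard induction on the object tuple using indiscernible sequences (essentially the argument in \cite{nipbook}). This makes the fact self-contained, at the cost of working in a saturated extension with Ramsey/compactness extraction. The outline is sound, but two steps are not right as written.

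\textbf{The ``cofinite part'' claim is false.} You claim that for an $\omega$-indexed indiscernible $I$ and a single element $c$ (your $a_2$), some cofinite part or infinite subsequence of $I$ is indiscernible over $c$. This fails even in stable, hence NIP, structures. Take $(\omega^\omega,(E_n)_{n<\omega})$ with $f\,E_n\,g$ iff $f(n)=g(n)$; it is stable because each reduct to finitely many $E_n$ is interpretable in a pure set. The constant functions $b_i$ with value $i$ are permuted by the automorphisms $f\mapsto\sigma\circ f$, so they form an indiscernible sequence. Yet for $c$ the identity function, $c\,E_N\,y$ holds of $b_N$ and of no other $b_i$, so no infinite subsequence is indiscernible over $c$. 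Each formula has its own threshold, and the thresholds are unbounded. You must commit to one of two genuine repairs:
\begin{itemize}
\item Stretch $(\vec b_i)$, together with $\vec a$ and the alternation, to length $\kappa=(|L|+\aleph_0)^+$. By regularity of $\kappa$, the $|L|+\aleph_0$ many thresholds have a common bound, so an actual end segment is indiscernible over $a_2$.
\item As your last sentence suggests, realize over $\vec a_1a_2$ the type stating that $(\vec y_i)_{i<\omega}$ is indiscernible over $a_2$ and that $\phi(\vec a_1a_2,\vec y_i)$ holds iff $i$ is even. Each finite part is satisfied by $\vec y_i\coloneqq\vec b_{N+i}$ for an \emph{even} $N$ beyond the finitely many thresholds involved. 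The result is a new sequence, not a part of the old one.
\end{itemize}

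\textbf{The block step is underspecified.} Applying the one-variable hypothesis to a single fixed sequence of consecutive blocks $(\vec b_{ki},\dots,\vec b_{ki+k-1})_i$ only gives eventual constancy along that particular sequence. The compactness step needs more: $\theta(c,\vec b_{i_1},\dots,\vec b_{i_k})$ must be constant on \emph{all} increasing tuples above some threshold. The missing argument is as follows. Suppose both truth values occur above every threshold. Then choose $k$-element index sets $J_0<J_1<\cdots$, each lying entirely above the previous one, on which the truth value of $\theta(c,\vec b_{J_m})$ alternates in $m$. This sequence of blocks is indiscernible because $(\vec b_i)$ is. So the alternation contradicts finite VC dimension of $\theta(x\varsep\vec y_1,\dots,\vec y_k)$, via your first step. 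With these two repairs the induction closes as you describe.
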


So, it suffices to show that every formula with a single parameter variable has
finite VC dimension. We reason by contrapositive: assuming that a partitioned
formula $\phi(\vec x \varsep y)$ has infinite VC dimension, we show that another
formula $\gamma(\vec z \varsep y)$ does not have \rqfo fitting. This is achieved
by forcing the $0$-fitting problem for $\gamma$ to describe (in the sense of
descriptive complexity) an~$\np$-hard problem (3-coloring), and then relying on
the fact that these problems are not definable by \rqfo sentences.

Fix a formula $\phi(x_1 \ldots x_{\ell} \varsep y)$ with infinite VC dimension.
We will define a $\gamma$ that has $6 \cdot \ell$ object variables,
i.e., it is of the form $\gamma( \vec x_0, \vec x_1, \vec x_2,\vec z_0, \vec z_1, 
\vec z_2 \varsep y)$, where the $\vec x_i$ and $\vec z_i$ are vectors of
$\ell$ variables. The formula states that:
\begin{itemize}
    \item $\phi(\vec x_c \varsep y)$ is satisfied by a single $c \in [0..2]$,
    \item $\phi(\vec z_{c'} \varsep y)$ is satisfied by a single $c'\in [0..2]$,
    \item  
    the two indices $c$ and $c'$ in $[0..2]$ such that the above holds are
    distinct.
\end{itemize}
There properties are first-order definable, so $\gamma$ is a formula over
$\amodel$. Intuitively (we formalize this below), each triple of $\ell$-tuples
$(\vec x_0, \vec x_1, \vec x_2)$ represents a single vertex of the graph, as
does $(\vec z_0, \vec z_1, \vec z_2)$. In the fitting problem for $\gamma$, we
feed graph edges as positive examples (each edge is a pair of $3\ell$-tuples, or
equivalently a $6\ell$-tuple), with no negative examples. The parameter variable $y$
then selects a coloring for the vertices, via the indices~$0,1,2$.

\medskip

We will assume, towards a contradiction, that there is a $\rqfo$ sentence
$\widetilde{\gamma}(T)$ that, when interpreting $T$ as a finite set $S^+$ of $6\ell$-tuples, 
is true if and only if the sample with positive examples
$S^+$ and no negative example is $\gamma$-fittable. That is, $\widetilde{\gamma}(T)$ is
equivalent to $\exists y \, \forall \vec z (T(\vec z) \rightarrow \gamma(\vec z
\varsep y))$. Note that this equivalence holds over all embedded finite models
in our structure $\amodel$ and, by
Proposition~\ref{prop:emb-preserved-nonstandard} in
Appendix~\ref{app:nonstandard}, continues to hold in any model of
$\theory(\amodel)$. For convenience of the proof, we will move from $\amodel$ to
a richer model of $\theory(\amodel)$. We rely on the following definition:

\begin{definition}[Indiscernible set]
    Consider an infinite structure $\amodel$ with domain $M$ and language $L$,
    and an infinite linear order $(O,<)$. A set $I \coloneqq \{e_i: i \in O\}$
    of distinct elements from $M$ indexed by $O$ is said to be \emph{order
    indiscernible} (with order type ($O,<$) ) if for every first-order
    $L$-formula $\phi(x_1 \ldots x_k)$, the truth value in $\amodel$ of $\phi$
    on a $k$-tuple $(e_{j_1} \ldots e_{j_k})$ of elements from $I$ is determined
    by the equalities and orderings among the indices $j_1,\dots,j_k$.
\end{definition}

Since $\amodel$ is an infinite structure, a basic fact in model theory, proved
via Ramsey's theorem and compactness~\cite{changkeisler}, is that for any
infinite linear order $(O,<)$, there is a model $\amodel'$ of $\theory(\amodel)$
(with domain $M'$)
which contains an indiscernible set~$I$ with order type $(O,<)$. Furthermore,
since $\phi$ has infinite VC dimension, $\amodel'$ can be
constructed so that there is an infinite set $S$ of $\ell$-tuples that is
shattered by $\phi$, such each component of each tuple comes from $I$. Below we
will sometimes abuse notation by consider the ordering $<$ as being on the
domain $I$ itself, not the indices. Observe that we can take any linear
ordering, so let us take $(\ints,<)$. Call an embedded finite model
\emph{$I$-based} if each component in each tuple in the relations comes from
$I$. 

\begin{proposition}\label{prop:ibased} 
    There is an~\rqfo sentence
    $\widehat{\gamma}(T)$ using only $T$ and $<$ such that, for every $I$-based
    embedded finite model~$\embmodel$, $\amodel',\embmodel \models
    \widetilde{\gamma}$ if and only if $(M',<),\embmodel \models
    \widehat{\gamma}$.
\end{proposition}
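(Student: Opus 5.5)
We prove the statement by structural induction on $\widetilde{\gamma}$, exploiting that every quantifier in an $\rqfo$ sentence is (by Remark~\ref{remark:activeisenough}) either an active-domain quantifier or sits inside a pure $L$-formula with no occurrence of $T$. On $I$-based embedded finite models, every variable bound by an active-domain quantifier ranges over elements of $I$. Hence the only places where the structure $\amodel'$ is consulted are the maximal $L$-subformulas $\theta(u_1,\dots,u_k)$ (possibly containing unrestricted quantifiers), whose free variables are evaluated at tuples of elements of $I$.

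The key step is to replace each such maximal $L$-subformula $\theta(u_1,\dots,u_k)$ by a quantifier-free formula $\widehat\theta(u_1,\dots,u_k)$ in the signature $\{<,=\}$.

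First, consider the finitely many order-and-equality types $\tau$ of a $k$-tuple, that is, the complete descriptions of which $u_i$ are equal and how they are ordered. Each $\tau$ is expressed by a quantifier-free $\{<,=\}$-formula $\chi_\tau$.

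Second, by order indiscernibility of $I$ in $\amodel'$, the truth of $\theta(e_{j_1},\dots,e_{j_k})$ depends only on the order type of $(j_1,\dots,j_k)$. Since we identify $<$ on $I$ with the order on indices, it depends only on $\tau$. Since $I$ has order type $(\ints,<)$, every type $\tau$ is realized in $I$. We let $T_\theta$ be the set of types $\tau$ on which $\theta$ holds, and set $\widehat\theta \coloneqq \bigvee_{\tau\in T_\theta}\chi_\tau$.

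Third, we leave the $\rqfo$ skeleton of $\widetilde\gamma$ untouched: the Boolean connectives, the atoms $T(\vec w)$, and the restricted quantifiers $\exists \vec x\,(T(\vec x,\vec y)\wedge\cdots)$. This defines $\widehat\gamma(T)$. It is an $\rqfo$ sentence over $\{<\}\cup\{T\}$, and it uses only $T$ and $<$ (equality being available as usual).

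Correctness follows by induction on subformulas. For every $I$-based $\embmodel$ and every assignment $\nu$ of the free variables into $I$, we show that $\amodel',\embmodel,\nu\models\psi$ if and only if $(M',<),\embmodel,\nu\models\widehat\psi$.

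The atoms $T(\vec w)$ are interpreted identically in both structures. For maximal $L$-subformulas the equivalence is exactly the indiscernibility argument above; this requires the assignment to land in $I$. Boolean cases are immediate. For a restricted quantifier, any witness $\vec x$ with $T(\vec x,\vec y)$ consists of components of tuples of $\embmodel$, hence of elements of $I$, so the induction hypothesis applies to the extended assignment. The same holds for the $\adom$-quantifier form. Since $\widetilde\gamma$ is a sentence, this yields the claim.

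The main obstacle is ensuring that every free variable of a maximal $L$-subformula is indeed bound by a restricted quantifier, and never by an unrestricted quantifier of an enclosing formula. This follows from Definition~\ref{def:rqformula}: unrestricted quantifiers occur only inside pure $L$-formulas, which we treat as atomic blocks. Some care is also needed with equality atoms, which we keep as is, and with the fact that we interpret $<$ on $M'$ while only ever evaluating it on $I$.
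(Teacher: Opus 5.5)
Your proposal is correct and follows the same route as the paper, whose proof is the single instruction to replace every maximal $L$-subformula of $\widetilde{\gamma}$ by the order constraints that indiscernibility of $I$ forces on it. You supply the details the paper leaves implicit: the free variables of these subformulas are bound by restricted quantifiers, so they are evaluated in $I$. The replacements are quantifier-free, so $<$ is only ever evaluated on elements of $I$.
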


\begin{proof} 
    We substitute for every maximal $L$-formula in $\widetilde{\gamma}$ with the
    corresponding inequalities.
\end{proof}

Since shattering is preserved when restricting to a subset of tuples, we can
apply the pigeonhole principle to shrink $S$ while additionally ensuring:
\begin{enumerate}
\item\label{pigeon-i1} All tuples in $S$ have the same $<$ order type. By
    reordering coordinates and ignoring components that are constant,
    without loss of generality we may assume that each tuple $(e_{j_1},\dots,e_{j_\ell})
    \in S$ is strictly increasing with respect to the order on $I$. That is,
    $j_1 < \dots < j_\ell$ (equivalently, $e_{j_1} < \dots < e_{j_\ell}$).
\item\label{pigeon-i2} There is an enumeration $\vec s^0, \vec s^1, \vec s^2, \dots$ of
    the elements of $S$ and an index $k \in [0..\ell]$ such that: for every $j
    \in [1..k]$, the $j$th coordinate of $\vec s^i$ is decreasing in $i$, and
    for every $j \in [k+1..\ell]$, the $j$th coordinate of $\vec s^i$ is
    increasing in $i$.
\end{enumerate}

In the following we will assume for simplicity that $k = 0$: each component is
increasing. The generalization to arbitrary $k$ is routine. This simplification
allows strengthening Items~\ref{pigeon-i1} and~\ref{pigeon-i2} to: 
\begin{enumerate}
\item[($\ast$)] There is an enumeration $\vec s^0, \vec s^1, \vec s^2, \dots$ such that
    for every $i$, given $\vec s^i = (e_{j_1},\dots,e_{j_\ell})$ and 
    $\vec s^{i+1} = (e_{j_1'},\dots,e_{j_\ell'})$, we have 
    $e_{j_1} < \dots < e_{j_\ell} < e_{j_1'} < \dots < e_{j_\ell'}$.
\end{enumerate}

Now, by further shrinking $I$ to remove all elements not contained in $S$, we
can assume that \emph{each tuple enumerates an interval in $I$ of size $\ell$},
That is, we can enumerate the elements $e_0,e_1,\dots$ of $I$ so that the
interval $e_{i \cdot \ell},\,e_{i \cdot \ell + 2},\,\dots,\,e_{i \cdot \ell +
(\ell-1)}$ of length $\ell$ is the $i$th vector $\vec s^i$ in the enumeration
of~$S$. We call this a \emph{round-robin shattered set of tuples}. 

We give further notation, fixing a natural number $n$:  

\begin{itemize}
    \item  Let $I_n \coloneqq (e_{0},\dots,e_{3 \cdot \ell \cdot n-1})$ be the
    initial segment of $I$ of size $3 \cdot \ell \cdot n$.

    \item  By an \emph{aligned $\ell$-tuple} we mean an $\ell$-tuple $(e_{i
    \cdot \ell},\dots,e_{i \cdot \ell + \ell - 1})$ for some $i$, i.e., the
    initial element of the tuple has an index that is a multiple of $\ell$. 

    We let $\alignedtuples_n$ be the set of all aligned $\ell$-tuples from
    $I_n$. We order elements of this set accordingly to the order on elements:
    two aligned $\ell$-tuples $\vec s$ and $\vec s'$ satisfy $\vec s < \vec s'$
    whenever every element in $\vec s'$ is above every element in $\vec s$. 
    We can thus associate each element of $\alignedtuples_n$ with an \emph{index}
    in $[0..3n-1]$. 

    \item We let $\initialtuples_n \subseteq \alignedtuples_n$ be the set of
        tuples with index divisible by $3$. Elements of this set are called
        \emph{initial $\ell$-tuples}.

    \item Let $\tupletriples^n$ be the set (of size $n$) of~$3\ell$-tuples
    formed by taking each $3\ell$ length interval beginning at a tuple in
    $\initialtuples_n$. Equivalently, this is a group of $3$ consecutive aligned
    tuples (according to their indexes) beginning with an initial tuple.

    \item Let $\initial_n$ be the set of $n$ elements from $I_n$ with an index
    divisible by $3 \ell$. Equivalently, these are the first elements of initial
    $\ell$-tuples.

    \item Let $\tupleof_c(v)$ for $c \in [0..2]$ and $v \in \initial_n$ be the
    $(3v+c)^{th}$ $\ell$-tuple in $\alignedtuples_n$. Note that $\tupleof_0(v) =
    v$, and the $3\ell$-tuple $(\tupleof_0(v),\tupleof_1(v),\tupleof_2(v))$
    belongs to $\tupletriples^n$.
\end{itemize}
Figure~\ref{figure:aligned-tuples} depicts the above objects for $\ell = 2$ and
$n = 2$. With this notation in place, we are now ready to reduce $3$-coloring to
fitting.

\begin{figure}
    \begin{center}
        \begin{tikzpicture}[
            element/.style={draw, circle, minimum size=6mm, inner sep=1pt},
            brace/.style={decorate, decoration={brace, amplitude=5pt}},
            label/.style={font=\footnotesize}
        ]

            \foreach \i in {0,...,11} {
                \node[element] (e\i) at (\i*0.8, 0) {\tiny $e_{\i}$};
            }

            \foreach \i in {0,1,2,3,4,5} {
                \pgfmathtruncatemacro{\start}{2*\i}
                \pgfmathtruncatemacro{\end}{2*\i+1}
                \draw[thick, blue!60, rounded corners=2pt] 
                    ([yshift=-10pt]e\end.south east) rectangle ([yshift=-10pt]e\start.south west);
            }

            \foreach \i in {0,1} {
                \pgfmathtruncatemacro{\start}{6*\i}
                \pgfmathtruncatemacro{\end}{6*\i+1}
                \draw[thick, red!70, line width=1.5pt, rounded corners=2pt] 
                    ([yshift=-13pt]e\end.south east) rectangle ([yshift=-13pt]e\start.south west);
            }

            \draw[thick, green!60!black, dashed, line width=1pt, rounded corners=3pt]
                ([yshift=-20pt, xshift=-3pt]e0.south west) rectangle ([yshift=-20pt, xshift=3pt]e5.south east);

            \draw[thick, green!60!black, dashed, line width=1pt, rounded corners=3pt]
                ([yshift=-20pt, xshift=-3pt]e6.south west) rectangle ([yshift=-20pt, xshift=3pt]e11.south east);

            \foreach \i in {0,6} {
                \node[draw=orange, circle, line width=1.5pt, minimum size=8mm] at (e\i) {};
            }

            \node[right= 0.8cm of e11, anchor = west, font=\footnotesize, align=left] (legend) {
                \begin{tabular}{rl}
                \textcolor{blue!60}{\rule{8pt}{2pt}}        & $\alignedtuples_n$ \\[2pt]
                \textcolor{red!70}{\rule[0.5pt]{8pt}{3pt}}  & $\initialtuples_n$ \\[2pt]
                \textcolor{green!60!black}{- - -}           & $\tupletriples_n$  \\[2pt]
                \textcolor{orange}{\Large$\circ$}           & $\initial_n$
                \end{tabular}
            };
        \end{tikzpicture}
    \end{center}
    \caption{Example of the initial segment $I_n$ and its subsets, for $\ell = 2$ and $n = 2$.}\label{figure:aligned-tuples}
\end{figure}

Consider a graph $G = (V,E)$ with vertices~$V \coloneqq [0..n-1]$, for some $n
\geq 2$, and edges $E$ without self-loops. We identify each vertex $i \in
[0..n-1]$ with the $i^{th}$ element of $\initial_n$. Note that, then, tuples in
$\tupletriples^n$ are triples of the form 
\[ 
    \tupleof_0(i), \tupleof_1(i), \tupleof_2(i)
\]
where $i$ is a vertex of $G$. We encode $G$ with the $I$-based embedded finite
model $T_G^I$ that interpret the relation $T$ as follows: $T$ contains a
$6\ell$-tuple $\vec w$ if and only if there is an edge $(i,j) \in E$ such that 
\[ 
    \vec w = (\tupleof_0(i), \tupleof_1(i), \tupleof_2(i), \tupleof_0(j), \tupleof_1(j), \tupleof_2(j)).
\]

The following claim will follow by just definition chasing:

\begin{claim}\label{clm:colorableandfittable} For any $G$ as above, $G$ is
    $3$-colorable exactly when $T_G^I$ is $\gamma$-fittable (w.r.t.~$(M',<)$).
\end{claim}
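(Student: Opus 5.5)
The plan is to prove both directions of Claim~\ref{clm:colorableandfittable} by unfolding the definition of $\gamma$. The shattering property of the round-robin set $S$ is the only ingredient needed beyond that.

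Recall that the vertex $i$ is encoded by the three consecutive aligned $\ell$-tuples $\tupleof_0(i),\tupleof_1(i),\tupleof_2(i)$. All of these lie in $S$ by construction, since every aligned $\ell$-tuple of $I_n$ is an element of the round-robin shattered set. Moreover, distinct pairs $(i,c)$ give distinct tuples. The set $A_n \coloneqq \{\tupleof_c(i) : i \in [0..n-1],\, c\in[0..2]\} = \alignedtuples_n$ is therefore a finite subset of $S$, hence shattered by $\phi$ in $\amodel'$.

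For the direction from $3$-colorable to fittable, fix a proper coloring $\kappa\colon V\to\{0,1,2\}$. Consider the subset $E_\kappa \coloneqq \{\tupleof_{\kappa(i)}(i) : i\in V\}\subseteq A_n$. By shattering, there is a parameter $p\in M'$ such that $\phi(\vec s\varsep p)$ holds for $\vec s\in E_\kappa$ and fails for $\vec s\in A_n\setminus E_\kappa$. Then for each vertex $i$, the index $c = \kappa(i)$ is the unique $c\in[0..2]$ with $\phi(\tupleof_c(i)\varsep p)$. Take any tuple $\vec w\in T_G^I$, coming from an edge $(i,j)$. The first two conjuncts of $\gamma(\vec w\varsep p)$ hold with $c=\kappa(i)$ and $c'=\kappa(j)$, and the third holds because $\kappa$ is proper. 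So $p$ fits the sample, which has only positive examples.

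For the converse, suppose $p$ satisfies $\gamma(\vec w\varsep p)$ for every $\vec w\in T_G^I$. Define $\kappa(i)$ as the unique $c$ with $\phi(\tupleof_c(i)\varsep p)$, for every vertex $i$ incident to some edge. This is well defined: the uniqueness clause of $\gamma$ applied to any edge containing $i$ guarantees such a $c$ exists. It is also independent of the edge chosen, since it depends only on $p$ and the tuples of $i$. Isolated vertices receive color $0$. The distinctness clause of $\gamma$ then gives $\kappa(i)\neq\kappa(j)$ for every edge, so $\kappa$ is a proper $3$-coloring.

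The only step requiring care is the remark that each vertex's color is read off from $p$ alone and not from the particular edge. This makes the coloring globally consistent. It holds because $\gamma$ evaluates $\phi$ on the vertex's own tuples with the single shared parameter $y=p$. I expect no genuine obstacle beyond checking that the $\tupleof_c(i)$ are pairwise distinct members of $S$, which follows from the indexing conventions set up above.
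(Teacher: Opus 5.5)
Your proof is correct and follows essentially the same route as the paper. In one direction you read the coloring off from the unique $c$ with $\phi(\tupleof_c(v)\varsep y^*)$; in the other you apply the shattering of $\alignedtuples_n$ to the set selected by the coloring. Your explicit treatment of isolated vertices is a small gain in rigor over the paper's version, which asserts uniqueness of $c$ for every vertex even though $\gamma$ only forces it for vertices lying on some edge.
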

\begin{proof}Suppose $T_G^I$ is $\gamma$-fittable with witness $y^*$ for the
parameter variable. Then for each vertex $v \in G$ we color $v$ with the unique
$c \in [0..2]$ such that $\phi(\tupleof_c(v),y^*)$ holds. By the definition of
$\gamma$, there is exactly one such $c$ for each $v$.

Consider an edge~$(v, v')$ in $G$, and suppose $v$ and $v'$ are given the same
color $c$ according to the recipe above. Then consider the following $6$-tuple
of $\ell$-tuples, which can also be considered as a single $6\ell$-tuple,
concatenating:
\begin{align*}
    \tupleof_0(v) , \tupleof_1(v) , \tupleof_2(v),
    \tupleof_0(v'), \tupleof_1(v'), \tupleof_2(v')
\end{align*}
This tuple is in the training set. But we cannot have $\gamma(\vec w'; y^*)$ for
this tuple as required, since the tuple violates the last condition in the
definition of $\gamma$. Therefore, we have produced a $3$-coloring of $G$, as
required.

Conversely, suppose $G$ is $3$-colorable with partition of the vertices~$V$ into
$V_0, V_1, V_2$. Partition the tuples $\{\tupleof_0(v): v \in V\}$ into tuples
with $v \in V_0$ and those with $v \in V-V_0$, calling the former \emph{positive
$\tupleof_0$ values}. Similarly, partition $\{\tupleof_1(v): v \in V\}$ into
tuples with $v \in V_1$ and those with $v \in V-V_1$, and partition
$\{\tupleof_2(v): v \in V\}$ into tuples with $v \in V_2$ and those with $v \in
V-V_2$. Since every vertex obtains exactly one of the colors, we know that for
every $v$, there is exactly one $c$ such that the value of $\tupleof_c(v)$ is
positive. Consider the set of all $\ell$-tuples of the form $\tupleof_c(v)$ that
are in some positive partition, and its complement. This partitions the set of
$\alignedtuples_n$ into subsets $T^+$ and $T^-$. Since $\alignedtuples_n$ is
$\phi$-shattered, there is $y^*$ that is $\phi$-related to exactly  the tuples
in $T^+$. Note that:
\begin{itemize}
    \item 
since each $v$ is in exactly one of $V_0, V_1, V_2$, exactly
one of $\tupleof_c(v)$ for $c \in [0..2]$ is in $T^+$
\item if $(v,v')$ is an edge of $G$, and $c$ and $c'$ are as above for $v$ and $v'$,
then $c$ and $c'$ must be distinct.
\end{itemize}

We claim that this $y^*$ is a witness to $\gamma$-fitting of $T^I_G$.
We first need to check that for each $v$ $\phi(\tupleof_c(v),y^*)$ holds for exactly one $c \in [0..2]$: this holds by the first item above.
We then need to check that for every edge $(v,v')$ in $G$, the colors $c$ and $c'$ for $v$ and $v'$, respectively, are distinct.
This follows from the second item above.
This completes the proof of the claim.
\end{proof}

We now continue to the proof of the lemma.

By definition of $\widetilde{\gamma}$, and from Claim~\ref{clm:colorableandfittable} and Proposition~\ref{prop:ibased}, 
a finite graph $G$ as above is $3$-colorable exactly when $(M',<),T_G \models \widehat{\gamma}$, 
where $\widehat{\gamma}$ is an~\rqfo sentence constructed from $\widetilde{\gamma}$.

Now given graph $G$ with domain an initial segment of the natural numbers, let $T^\ints_G$ be built the same way as $T^I_G$, but using the natural number $i$ instead of the
element of $I$ having index~$i$. 
Therefore, $T^\ints_G$ with the natural order on $\ints$ is isomorphic to $T^I_G$ with the indiscernible ordering.
Since $\widehat{\gamma}$ uses only $T$ and the ordering relation, we conclude that

\smallskip

\textit{For any finite graph $G$ whose domain is an initial segment of the natural numbers, having at least two vertices and no self-loops, 
$G$ is $3$-colorable exactly when $(\ints,<),T^\ints_G \models \widehat{\gamma}$.}

\smallskip

Now we can observe:
\begin{proposition} 
    The mapping from $G$ to $T^\ints_G$ is definable by a quantifier-free formula over $(\ints,+,<)$.
\end{proposition}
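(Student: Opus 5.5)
The idea is to make the encoding completely explicit. I will write down a quantifier-free formula over $(\ints,+,<)$, with one free variable per coordinate of a $6\ell$-tuple, that uses the edge relation $E$ of $G$ (a binary relation on an initial segment $[0..n-1]$ of $\N$). It will define exactly the set $T^\ints_G$. In $T^\ints_G$ the vertex $i$ corresponds to the natural number $3\ell i$, the first element of the $i$th initial tuple. We have $\tupleof_c(i) = (3\ell i + c\ell, 3\ell i + c\ell + 1, \dots, 3\ell i + c\ell + \ell - 1)$. So a $6\ell$-tuple $\vec w = (w_1,\dots,w_{6\ell})$ lies in $T^\ints_G$ if and only if there are vertices $u,v$ with $E(u,v)$ such that $w_{k+1} = 3\ell u + k$ and $w_{3\ell + k + 1} = 3\ell v + k$ for all $k \in [0..3\ell-1]$.

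The obstacle is that the vertex indices $u,v$ do not appear among the coordinates of $\vec w$, and recovering $u$ from $w_1 = 3\ell u$ requires division by $3\ell$, which is not quantifier-free in $(\ints,+,<)$ without extra symbols. I will sidestep this in one of two ways, and commit to the first.

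\textbf{Relational encoding.} Encode the input graph directly by $E$ over the multiples of $3\ell$: treat the input as $E' = \{(3\ell u, 3\ell v) : (u,v) \in E\}$, i.e.\ identify vertex $i$ with the number $3\ell i$ from the start. This is harmless for the subsequent $AC^0$ argument, since a uniform $AC^0$ reduction can multiply by a constant. The defining formula is then
\[
E(w_1, w_{3\ell+1}) \land \bigwedge_{k=1}^{3\ell-1} \bigl(w_{k+1} = w_1 + k\bigr) \land \bigwedge_{k=1}^{3\ell-1} \bigl(w_{3\ell+k+1} = w_{3\ell+1} + k\bigr).
\]
Here $w_1 + k$ abbreviates the term $w_1 + 1 + \dots + 1$. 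The constant $1$ is itself quantifier-free definable from $0$ and $<$ only via quantifiers, so if needed I will add the constants $0,1$ or replace $w_1 + k$ by sums of variables.

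\textbf{Multiplication by a constant.} Alternatively, keep $E$ on $[0..n-1]$ and observe that $3\ell u$ is the term $u + \dots + u$ ($3\ell$ times). Then $\vec w \in T^\ints_G$ iff $E(u,v)$ holds for the unique $u,v$ with $w_1 = 3\ell u$ and $w_{3\ell+1} = 3\ell v$. This is a definition of the mapping as an interpretation whose domain variables include $u,v$, i.e.\ a quantifier-free transduction producing tuples $(u,v) \mapsto$ the $6\ell$-tuple of terms $3\ell u + k$ and $3\ell v + k$. That formulation is quantifier-free because each output coordinate is a term in $u,v$.

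Checking correctness is definition chasing against the description of $\tupleof_c$ and $\tupletriples^n$. The only step needing care is matching the paper's indexing convention: aligned tuple $3i + c$ starts at element $(3i+c)\ell$. This will then feed into the $AC^0$ contradiction, since composing $\widehat{\gamma}$ with this quantifier-free map yields an $\rqfo$ sentence over $(\ints,+,<)$ with $E$ deciding $3$-colourability.
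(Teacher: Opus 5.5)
Your second option is exactly the paper's proof. The paper just observes that the map is $i \mapsto (3\ell i, 3\ell i + 1, \dots, 3\ell i + 3\ell - 1)$, so each edge $(u,v)$ of $G$ yields the $6\ell$-tuple of terms $3\ell u + k$, $3\ell v + k$ for $k \in [0..3\ell-1]$, and your indexing check against $\tupleof_c$ is right.

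The problem is that you commit to the first option, and that option does not prove the statement. The proposition is about $G$ itself, on the initial segment $[0..n-1]$. Passing to $E' = \{(3\ell u, 3\ell v) : (u,v) \in E\}$ presupposes exactly the multiplication by $3\ell$ that the claim is about. Appealing to closure of uniform $AC^0$ under multiplication by constants might rescue the final contradiction, but it replaces this step rather than proving it. The next proposition needs an \rqfo sentence $\chi(G)$ over $(\ints,+,<)$ evaluated on $G$ itself, and it gets this by composing $\widehat{\gamma}$ with the definable map, not by working on a rescaled copy.

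The obstacle that pushed you away from the second option is not real. The mapping is meant as an interpretation by terms, and composing it with $\widehat{\gamma}$ never requires division by $3\ell$. In an \rqfo sentence, $T$ occurs only as a guard. A restricted quantifier $\exists \vec w\,(T(\vec w) \land \alpha)$ therefore becomes $\exists u\,\exists v\,\bigl(E(u,v) \land \alpha[\vec w \mapsto \vec t(u,v)]\bigr)$. Here $\vec t(u,v)$ lists the terms $3\ell u + k$ and $3\ell v + k$, and $3\ell u$ abbreviates $u + \dots + u$. Guard arguments not bound by that quantifier are instead equated with the corresponding terms. The vertex variables are thus bound by a quantifier restricted to $E$, and the result is again \rqfo. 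So drop the relational re-encoding and keep the term map.

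Your remark about the constant $1$ is fair. The paper tacitly assumes it, and inside \rqfo formulas it is harmless anyway, since quantified $L$-formulas such as $w < w' \land \neg\exists z\,(w < z < w')$ are allowed.
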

\begin{proof}
    This is just the mapping $i \mapsto (3 \cdot \ell \cdot i,\,3 \cdot \ell \cdot i + 1,\, \dots,\, 3 \cdot \ell \cdot i + 3 \cdot \ell - 1)$.
\end{proof}

Composing the formula that builds $T^\ints_G$ with the formula $\widehat{\gamma}$, we conclude that:

\begin{proposition}\label{cor:rqforcolorabilty} 
    There is an \rqfo sentence $\chi(G)$ over $(\ints, +,<)$ that, given a graph $G$ whose vertices are an initial segment of the natural numbers, of size at least $2$, holds exactly when $G$ is $3$-colorable.
\end{proposition}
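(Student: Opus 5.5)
The plan is to compose the two facts just established. The italicised statement gives an \rqfo sentence $\widehat{\gamma}(T)$ over $(\ints,<)$, with $T$ of arity $6\ell$, such that for every graph $G$ on an initial segment $[0..n-1]$ of the naturals ($n\geq 2$, no self-loops), $G$ is $3$-colorable iff $(\ints,<),T^\ints_G \models \widehat{\gamma}$. The preceding proposition says $T^\ints_G$ is obtained from the edge relation $E$ of $G$ by the quantifier-free map $i \mapsto (3\ell i, 3\ell i+1,\dots,3\ell i+3\ell-1)$ over $(\ints,+,<)$. The sentence $\chi(G)$ will be built in the signature $\{+,<\}\cup\{E\}$ by substituting this interpretation for $T$ inside $\widehat{\gamma}$.

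First, I would define the quantifier-free $(\ints,+,<)$-formula
\[
\theta(\vec w) \coloneqq \exists i, j\, \big(E(i,j) \land \textstyle\bigwedge_{r<3\ell} w_r = 3\ell\cdot i + r \land \bigwedge_{r<3\ell} w_{3\ell+r} = 3\ell\cdot j + r\big),
\]
where multiplication by the constant $3\ell$ is the iterated sum $i+\dots+i$ and the constant $r$ is obtained by adding a fixed element (e.g.\ $w_1 - w_0 = 1$, or by writing $w_r = w_{r-1}+ (w_1-w_0)$ relations). To stay in \rqfo form, the $\exists i,j$ must be restricted: it is the restricted quantifier $\exists i,j\,(E(i,j)\land\dots)$, which is allowed by Definition~\ref{def:rqformula}. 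Then $\theta$ holds exactly on the tuples of $T^\ints_G$.

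Second, I would replace every atom $T(\vec u)$ in $\widehat{\gamma}$ by $\theta(\vec u)$, and every restricted quantifier $\exists \vec u\,(T(\vec u)\land \psi)$ by $\exists i,j\,(E(i,j)\land \psi[\vec u \mapsto (3\ell i,\dots,3\ell j+3\ell-1)])$. This keeps the formula in \rqfo form (quantifiers now range over $E$), and by construction $(\ints,+,<),E \models \chi$ iff $(\ints,<),T^\ints_G \models \widehat{\gamma}$, since the substitution is just an unfolding of the definition of $T^\ints_G$. The only subtlety, which I expect to be the main (though minor) obstacle, is that the vertex set of $G$ is not directly visible from $E$ alone: isolated vertices do not appear. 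But isolated vertices are irrelevant both for $3$-colorability and for $T^\ints_G$ (which only records edges), so interpreting $G$ by its edge relation suffices; and handling the constant offsets $r$ requires either constants $0,1$ or expressing them relative to the sample, which I would do by writing $3\ell i + r$ as a term using successor definable via $<$ on $\ints$ (quantifier-free if $1$ is in the signature, otherwise via the definable successor inside the unrestricted $L$-part, which \rqfo permits). Combining gives the desired $\chi(G)$.
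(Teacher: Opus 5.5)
Your proposal is correct and follows the paper's proof, which composes the quantifier-free map $i \mapsto (3\ell i,\dots,3\ell i+3\ell-1)$ building $T^\ints_G$ with the sentence $\widehat{\gamma}$. Your substitution of $T$ by the $E$-guarded formula, with the constants taken from $L$-definable $0$ and $1$, spells out details the paper leaves implicit. The one point you gloss over is a guard $\exists \vec x\,(T(\vec x,\vec y)\land\psi)$ in which some positions of $T$ are free: it must become $\exists i,j\,(E(i,j)\land \vec y = (\ldots)\land \psi[\ldots])$, which is still an \rqfo formula.
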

We have thus found a first-order formula $\chi(G)$ quantifying only over the domain of $G$, using ordering
and arithmetic, that determines if $G$ is $3$-colorable.
This contradicts the fact that \mbox{$3$-colorability} is not in $AC^0$ \cite{FurstSS84}. 
Therefore, $\widetilde{\gamma}$ does not exist, i.e., $\gamma$ does not have \rqfo fitting.  
This completes the proof of Lemma~\ref{lem:RQfittingImpliesFiniteVC}.

\section{Proof of Theorem \ref{thm:rqfoimpliesptimeapprox}: $\rqfo$ approximate fitting in \ptime}
\label{app:fromexacttoapprox}

We recall the statement of Theorem~\ref{thm:rqfoimpliesptimeapprox}:

\thmrqfoimpliesptimeapprox*

Recall that the approximate fitting problem for $\phi(\vec x
\varsep \vec y)$ asks, for an input tolerance $\epsilon \in \N$ and sample $(\vec s_1, b_1),\dots,(\vec s_n,
b_n)$, whether there is a parameter $\vec p$
such that $\phi(\vec s_i,\vec p)$ disagrees with $b_i$ in at most $\epsilon$ of
the~$n$ examples. Let us assume, for the moment, that $\phi$ has VC dimension~$d$. 
Then, the Shelah-Sauer lemma
(Fact~\ref{fact:sslemma}) guarantees that there are 
$O(n^d)$ ways to break the examples up into positive and negative. The lemma does not provide a method for constructing this family, but we can use \ptime $0$-fitting to circumvent this problem:

\begin{restatable}[Computing realizable partitions]{lemma}{LemComputeAllTypes}\label{lem:computealltypes} 
    Consider a concept class $\conceptclass$  with finite VC dimension and $\ptime$
    $0$-fitting. There is a polynomial-time algorithm that, given as input a
    finite set $S$ of elements of the range space, determines all the
    \emph{realizable partitions} of $S$, that is, the partitions of $S$ into
    positive and negative examples that are $0$-fitted by some
    hypothesis~in~$\conceptclass$.
\end{restatable}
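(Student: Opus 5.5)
The plan is to build the realizable partitions incrementally, adding one element of $S$ at a time, and to keep only partitions certified realizable by the $0$-fitting oracle. Enumerate $S = \{s_1,\dots,s_m\}$ and let $S_k = \{s_1,\dots,s_k\}$. Represent a partition of $S_k$ as a pair $(A,B)$ with $A \cup B = S_k$ disjoint, where $A$ is the set of positive examples. We maintain the family $\mathcal{R}_k$ of all realizable partitions of $S_k$, starting from $\mathcal{R}_0 = \{(\emptyset,\emptyset)\}$. We may assume this base case is realizable, since a nonempty class $\conceptclass$ trivially $0$-fits the empty sample.

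For the inductive step, observe that realizability is hereditary. If $(A,B)$ is a realizable partition of $S_{k+1}$ witnessed by some hypothesis $h$, then the same $h$ witnesses $(A\setminus\{s_{k+1}\}, B\setminus\{s_{k+1}\}) \in \mathcal{R}_k$. Hence every member of $\mathcal{R}_{k+1}$ is one of the two extensions $(A\cup\{s_{k+1}\},B)$ or $(A,B\cup\{s_{k+1}\})$ of some $(A,B)\in\mathcal{R}_k$. The algorithm therefore forms these at most $2\card{\mathcal{R}_k}$ candidates. For each candidate it runs the assumed \ptime $0$-fitting algorithm on the sample with positive examples $A'$ and negative examples $B'$, and it keeps exactly the candidates that are fittable. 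By induction this yields $\mathcal{R}_{k+1}$ exactly: the candidates include every realizable partition, and the oracle filters out the non-realizable ones. The output is $\mathcal{R}_m$.

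The remaining point, which is the only real content, is the polynomial bound on the running time. Let $d$ be the VC dimension. By Fact~\ref{fact:sslemma} applied to $S_k$, we have $\card{\mathcal{R}_k} \le \sum_{i=0}^d \binom{k}{i} = O(k^d)$, since realizable partitions of $S_k$ correspond bijectively to the traces $\{s\in S_k : f(s)=1\}$. The algorithm thus makes $\sum_{k<m} 2\card{\mathcal{R}_k} = O(m^{d+1})$ oracle calls. Each call is on a sample whose size is at most the size of the representation of $S$, so each call takes time polynomial in $\bitsize{S}$. The bookkeeping of the sets is also polynomial, and the total is polynomial because $d$ is a constant depending only on the fixed class. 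The main obstacle, avoiding the $2^m$ blowup of naive enumeration, is exactly what the hereditary pruning resolves: without it we could not confine the candidates to $O(m^d)$ many.
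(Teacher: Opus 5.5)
Your proposal is correct and follows essentially the same route as the paper: a working set of realizable partitions of successive prefixes $s_1,\dots,s_j$, each extended both ways and filtered by the \ptime $0$-fitting algorithm, with Fact~\ref{fact:sslemma} keeping the working set polynomial in size. Your explicit base case $\{(\emptyset,\emptyset)\}$ is slightly more careful than the paper's literal ``Initially, $W$ is empty''.
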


\begin{proof}
    Let  $S= \{s_1 \ldots s_n\}$. The algorithm proceeds in $n$ rounds, maintaining
    a working set~$W$. Initially, $W$ is empty, and after the $j^{th}$ iteration, it
    contains all realizable partitions of $s_1 \ldots s_j$. In the $(j+1)^{th}$
    iteration, the algorithm takes each realizable partition in $W$, say $(S^+,
    S^-)$, and queries the $0$-fitting algorithm twice: first, to check if
    there is a parameter that fits~${(S^+ \cup \{ s_{j+1} \},S^-)}$
    exactly, and second, to check if there is a parameter that fits $(S^+,S^- \cup
    \{s_{j+1}\})$ exactly. Successful queries add the corresponding partition to $W$.

    By~Fact~\ref{fact:sslemma}, the total number of realizable partitions is $O(n^d)$, where $d$ is
    the VC dimension of $\conceptclass$. Since the concept class is fixed, 
    this ensures that the working set remains of polynomial size throughout the procedure.
    Then, polynomial runtime follows from \ptime $0$-fitting.
\end{proof}

Thanks to Lemma~\ref{lem:computealltypes}, approximate fitting reduces to
determining whether the smallest distance between any realizable
partition and the given sample is at most $\epsilon$.

\begin{lemma}\label{lem:FiniteVCandPTimeFittingImpliesPtimeApprox}
    Let $\conceptclass$ be a concept class with finite VC dimension and $\ptime$
    $0$-fitting. Then, $\conceptclass$ has \ptime approximate fitting.
\end{lemma}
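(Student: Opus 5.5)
The plan is to combine Lemma~\ref{lem:computealltypes} with a brute-force minimization over the polynomially many realizable partitions. Fix $\conceptclass$ with VC dimension $d$. Given a sample $(s_1,b_1),\dots,(s_n,b_n)$ and a tolerance $\epsilon$, I first run the algorithm of Lemma~\ref{lem:computealltypes} on the set $S=\{s_1,\dots,s_n\}$. The training inputs are pairwise distinct by Definition~\ref{def:sample}, so $S$ has exactly $n$ elements. This step returns the list $W$ of all realizable partitions $(S^+,S^-)$ of $S$. By Fact~\ref{fact:sslemma}, $|W|\le\sum_{i=0}^d\binom{n}{i}=O(n^d)$, and the computation runs in time polynomial in the size of the sample, because each of the $O(n\cdot n^d)$ calls to the $0$-fitting algorithm takes a sub-sample whose size is bounded by the size of the input.

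Next, for each $(S^+,S^-)\in W$, I compute the number of disagreements with the sample:
\[
\mathrm{err}(S^+,S^-)=\card{\{i : s_i\in S^+,\ b_i=0\}}+\card{\{i : s_i\in S^-,\ b_i=1\}}.
\]
Each such count takes $O(n)$ time. The algorithm accepts if and only if $\min_{(S^+,S^-)\in W}\mathrm{err}(S^+,S^-)\le\epsilon$. Since $\epsilon$ is a rational in binary, the comparison is polynomial in the input size.

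Correctness follows directly from the definitions. Any hypothesis $h\in\conceptclass$ induces the partition $(\{s: h(s)=1\},\{s:h(s)=0\})$ of $S$. That partition is $0$-fitted by $h$, so it appears in $W$, and its error equals $\sum_i|h(s_i)-b_i|$. Conversely, every partition in $W$ is induced by some hypothesis, and that hypothesis has $L^1$ loss equal to the partition's error. Hence some hypothesis $\epsilon$-fits the sample exactly when the minimum error over $W$ is at most $\epsilon$.

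The only delicate point is the running-time accounting in Lemma~\ref{lem:computealltypes}: the working set has to stay polynomial at every round, not just at the end. This holds because the partitions kept after round $j$ are exactly the realizable partitions of $\{s_1,\dots,s_j\}$, which again number $O(j^d)$ by Fact~\ref{fact:sslemma}. Since that lemma is already established, I expect no real obstacle here, and the proof is essentially this assembly.
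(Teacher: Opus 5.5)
Your proposal is correct and matches the paper's proof: compute all realizable partitions of $\{s_1,\dots,s_n\}$ via Lemma~\ref{lem:computealltypes}, count each partition's disagreements with the sample, and accept if and only if the minimum count is at most $\epsilon$. Your correctness argument (every hypothesis induces a realizable partition whose error equals its $L^1$ loss, and every realizable partition comes from such a hypothesis) spells out the justification that the paper leaves implicit.
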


\begin{proof}
    The polynomial-time algorithm for approximate fitting 
    proceeds as follows. Given an input sample 
    $S = (s_1, b_1),\dots,(s_n,b_n)$, it computes (in polynomial time, by
    Lemma~\ref{lem:computealltypes}) all realizable partitions of
    $\{s_1,\dots,s_n\}$. For each such partition of positive examples~$S^+$ and
    negative examples~$S^-$, the algorithm computes its distance from the
    sample $S$, defined as the number of indices $i \in [1..n]$ such that $s_i \in S^+$
    but $b_i \neq 1$, or $s_i \in S^-$ but $b_i \neq 0$. If the minimum distance
    across all partitions is at most $\epsilon$, the algorithm 
    accepts;
    otherwise, the algorithm rejects.
\end{proof}

For the proof of Theorem~\ref{thm:rqfoimpliesptimeapprox}, consider now a
structure $\amodel$ with $\rqfo$ fitting. By
Lemma~\ref{lem:RQfittingImpliesFiniteVC}, the structure is~\nip. By
Proposition~\ref{prop:rqfoimpliesptime}, every formula over $\amodel$ has \ptime
$0$-fitting. Then, by Lemma~\ref{lem:FiniteVCandPTimeFittingImpliesPtimeApprox},
$\amodel$ has \ptime approximate fitting.

\section{Proof of Theorem \ref{thm:rqcandrqfitting}: $1$-parameter $\rqfo$ fitting implies \rqc} 
\label{app:oneparameterfittingimpliesrqc}

Recall the statement of Theorem \ref{thm:rqcandrqfitting}:

\TheoremRQCandRQFitting*

\begin{proof}
It suffices to show that if every formula $\phi(\vec x \varsep y)$ over $\amodel$ has $\rqfo$ fitting, then $\amodel$ is $\rqc$. 
Our proof uses \emph{uniform definability of types over finite sets} (UDTFS). A
partitioned formula $\gamma(\vec x \varsep \vec y)$ with $j$ object variables
and $k$ parameter variables is 
$\udtfs$ (with respect to $\amodel$) whenever:

\smallskip

\textit{There is an $L$ formula $\delta(\vec y \varsep \vec p)$ such that:
for any finite set $S$ of $k$-tuples  in $\amodel$ of cardinality at least~2, for any $j$-tuple $\vec x_0$, there is %
$\vec p_0$ from $S$ such that
$\amodel \models \forall \vec y \in S ~ \gamma(\vec x_0, \vec y) \leftrightarrow \delta(\vec y, \vec p_0)$.}

\smallskip

The structure~$\amodel$ is said to be $\udtfs$ when each partitioned formula over it is.
We will make use of the following  result of Chernikov and Simon:

\begin{fact}[\cite{simon-chernikov-two}] \label{fact:nipimpliesudtfs} Every $\nip$ structure has $\udtfs$.
\end{fact}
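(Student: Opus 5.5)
This is the deep result of Chernikov and Simon, and in the paper it is only cited. Still, here is how I would organise a proof. Fix $\gamma(\vec x \varsep \vec y)$ with $j$ object and $k$ parameter variables; by $\nip$ it has finite VC dimension $d$, and so does its dual. The aim is a formula $\delta(\vec y \varsep \vec p_1,\dots,\vec p_N)$ with $N$ depending only on $\gamma$. It must work for every finite $S$ with $|S|\ge 2$ and every $\vec x_0$: for each such pair there should be $\vec p_1,\dots,\vec p_N \in S$ with $\delta(\cdot,\vec p)$ cutting out exactly $\{\vec y\in S : \gamma(\vec x_0,\vec y)\}$.

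\textbf{Step 1: a non-uniform definition.} For one finite $S$ and one $\vec x_0$ there is trivially some definition with parameters from $S$, namely a disjunction of equalities $\vec y=\vec p$. The real task is to bound the number of parameters and fix the shape of $\delta$ independently of $S$. By compactness, it suffices to show the following in a saturated model $\amodel^*$ of $\theory(\amodel)$. For every pseudo-finite or arbitrary small set $A$ and every $\vec x_0$, the trace $\gamma(\vec x_0,A)$ equals $\theta(A,\vec b)$ for some $\vec b$ from $A$, where $\theta$ ranges over finitely many formulas. Indeed, if no finite family of candidate $\delta$'s works, then compactness yields a counterexample in the theory of pairs $(\amodel^*,A)$ with a new predicate for $A$. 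I would use the \emph{honest definitions} available in $\nip$ theories: a formula $\theta(\vec y,\vec b)$, with $\vec b$ from an elementary extension of the pair, such that $\theta(A,\vec b)=\gamma(\vec x_0,A)$ and $\theta(\vec y,\vec b)\rightarrow\gamma(\vec x_0,\vec y)$ holds on all of $A$.

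\textbf{Step 2: uniformity with parameters inside $S$.} This is the main obstacle. Honest definitions give parameters outside $A$, but UDTFS requires them to come from $S$ itself, with a bound independent of $|S|$. Here I would follow the combinatorial route. Consider the family of sets $\{\vec y\in S: \theta(\vec y,\vec b)\}$ as $\vec b$ varies. By finite VC dimension, Matou\v{s}ek's $(p,q)$-theorem for families of bounded VC dimension (equivalently, $\epsilon$-nets with size depending only on $d$) gives a transversal of bounded size, and I would pick that transversal from $S$. This shows that every trace $\gamma(\vec x_0,S)$ is a fixed Boolean combination of boundedly many instances $\gamma'(\vec y,\vec p_i)$ with $\vec p_i\in S$, where $\gamma'$ is derived from $\gamma$ and its dual.

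\textbf{Step 3: assembly.} Finally I would take $\delta$ to be the finite disjunction over the boundedly many Boolean shapes obtained in Step 2. Distinguishing which shape applies would be coded by equalities among the parameters, which is where the assumption $|S|\ge 2$ is used. A last compactness argument, transferring from $\amodel^*$ back to $\amodel$ as in Proposition~\ref{prop:emb-preserved-nonstandard}, yields a single $L$-formula $\delta$ that works in $\amodel$.
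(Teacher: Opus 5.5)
\textbf{There is a genuine gap, in Step~2.} The paper gives no proof of this fact; it only cites Chernikov--Simon. The ingredients you list are the right ones: honest definitions, compactness for pairs, the $(p,q)$-theorem, and coding finitely many cases using $|S|\ge 2$. But Step~2, which carries all the weight, is not an argument, and as written it would not work.

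The $(p,q)$-theorem has a hypothesis: among any $p$ members of the family, some $q$ share a point. You never say which family satisfies it or why. Moreover, the family you name, $\{\theta(S,\vec b)\}_{\vec b}$, is the wrong one. A transversal of it is a set of points of $S$ meeting every $\theta$-set, and that says nothing about defining $\gamma(\vec x_0,S)$. What is needed is the dual family, indexed by the positive points. For $\vec a\in\gamma(\vec x_0,S)$, let $D_{\vec a}$ be the set of $\vec d\in S$ with $\theta(\vec a,\vec d)$ and $\theta(S,\vec d)\subseteq\gamma(\vec x_0,S)$. A transversal $\vec d_1,\dots,\vec d_N\in S$ of $\{D_{\vec a}\}$ gives $\gamma(\vec x_0,S)=\bigcup_{i\le N}\theta(S,\vec d_i)$, and honesty is what keeps the union from overshooting.

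The $(p,p)$-property of $\{D_{\vec a}\}$ is what honest definitions supply, in the form ``every finite subset of the trace lies in some $\theta(A,\vec d)\subseteq\gamma(\vec x_0,A)$ with $\vec d\in A$''. To get this you need honesty on the predicate $A'$ of the elementary extension of the pair, pulled back to $A$ by elementarity. As you state honesty (the implication ``on all of $A$''), it already follows from $\theta(A,\vec b)=\gamma(\vec x_0,A)$ and carries no information.

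\textbf{A quantifier-order problem.} The honest definition $\theta$ depends on the pair and on $\vec x_0$, while the $p$ you need depends on the VC dimension of $\theta$. So you cannot first fix $p$ and then get a uniform $\theta$ by compactness. Your Step~1 framing can be made to work as follows.
\begin{enumerate}
\item If UDTFS fails, compactness gives a pseudo-finite pair $(\mathfrak{M}^*,A)$, i.e.\ a model of every sentence true in all pairs with finite predicate of size at least two. It also gives $\vec x_0$ such that $\gamma(\vec x_0,A)$ is not definable by any formula with parameters from $A$.
\item Take an honest definition $\theta$ for this pair; then the $(p,p)$-property holds for every $p$.
\item Choose $p$ and $N$ from Matou\v{s}ek's theorem for the VC dimension of $\theta$.
\item The statement ``$(p,p)$-property implies a transversal of size $N$'' is a first-order sentence about pairs. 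It is true whenever the predicate is finite, hence true in $(\mathfrak{M}^*,A)$.
\item It produces $\vec d_1,\dots,\vec d_N\in A$ with $\gamma(\vec x_0,A)=\bigcup_i\theta(A,\vec d_i)$, a contradiction.
\end{enumerate}

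\textbf{Smaller points.} The Step~1 target is false for arbitrary small $A$. In $(\R,<)$ with $\gamma(x\varsep y)\coloneqq y<x$, $A=\mathbb{Q}$ and $x_0=\sqrt2$, the trace is not definable with parameters from $A$. Only the pseudo-finite case holds, and it is all you need. Also, the $(p,q)$-theorem is not equivalent to the existence of $\epsilon$-nets; the latter is one ingredient of its proof.
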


By Lemma~\ref{lem:RQfittingImpliesFiniteVC}, $\amodel$ is \nip. To show the
theorem, we 
consider a relational signature $V$ disjoint from the
signature $L$ of $\amodel$, and show that every first-order formula in~$L \cup V$ is equivalent over embedded finite models for $\amodel$ to
an~$\rqfo$ formula. By induction on the number of unrestricted quantifiers in
the formula, it suffices to eliminate (bottom-up) a single
unrestricted quantifier.
So, consider a formula $\psi(\vec z)$, in the signature $L \cup V$, of the form
\[
\psi(\vec z) \coloneqq \exists y ~ Q_1(u_1) \ldots Q_k(u_k) ~ \gamma(y, \vec u, \vec z),
\]
where $\gamma$ is an $L$ formula, and each $Q_i(u_i)$ is an active domain quantifier (i.e., it is either~$\exists u_i \in \adom$ or $\forall u_i \in \adom$). The fact that we can restrict to these quantifiers is part of the induction hypothesis. The assumption that $\gamma$ is an $L$ formula follows because in $\rqfo$ formulas, one can assume that the $V$ predicates occur only in restricted quantifiers.

We restrict our attention to embedded finite models with active domain $\adom$
of at least two elements. The corner cases of zero and one elements are covered
in Appendix~\ref{subsec:rqfitting-to-rqc:corners}. Additionally, we assume that
$y$ does not occur in a relation from $V$. If it does, replace every occurrence
$R(\vec w_0,y,\vec w_1)$ of such a relation with $\exists u \in \adom \,(u = y
\land R(\vec w_0,u,\vec w_1))$, where $u$ is a fresh variable. Then restore the
formula to prenex form.

Let $\{\gamma_1(y, \vec u, \vec z), \ldots, \gamma_m(y, \vec u, \vec z)\}$ 
be the set of $L$-formulas appearing in~$\gamma$ that are maximal
among $L$-formulas with respect to the subformula relation.
Seeing each $\gamma_i$ as a partitioned formula $\gamma_i(y,\vec z \varsep \vec u)$, 
we apply UDTFS to get a formula~$\delta_i(\vec u \varsep \vec p)$.
Let $\eta(\vec z, \vec p,A)$, where $A$ is a new $k$-ary relation symbol (to be removed later), be the formula:
\begin{equation}
    \label{equation:etaudtfs}
    \eta(\vec z, \vec p,A) \coloneqq
    \exists y\,\forall \vec u \in A\,
    \bigwedge\nolimits_{i=1}^m [\gamma_i(y, \vec u, \vec z) \leftrightarrow \delta_i(\vec u, \vec p)].
\end{equation}

Directly from the definition of UDTFS, 
it follows that $\psi(\vec z)$ is equivalent, over embedded finite models with active domain 
at least two, to the formula 
\[
    \exists \vec p \in \adom^k \,\big( \eta(\vec z, \vec p, \adom^k) \land \exists y ~ Q_1(u_1) \ldots Q_n(u_n) \, \chi\big),
\] 
where $\chi(y,\vec u, \vec z, \vec p)$ is obtained from $\gamma$ by replacing every maximal subformula~$\gamma_i$ by $\delta_i$, 
for every $i \in [1..m]$, 
and $\adom^k$ stands for the $k$-ary Cartesian product of 
the active domain. Since we were assuming that $y$ appears in $\gamma$ only in $L$-formulas, it does not appear in $\chi$. 
We can therefore reduce the formula $\exists y ~ Q_1(u_1) \ldots Q_n(u_n) \, \chi$ above to $Q_1(u_1) \ldots Q_n(u_n) \, \chi$.

To conclude the proof, it now remains to convert the formula~$\eta$
from~Equation~\eqref{equation:etaudtfs} into an~$\rqfo$ formula.
Like $\psi$, the formula~$\eta$ has a single unrestricted quantifier; 
unlike $\psi$, however, it has only one block of universal restricted quantifiers and its quantifier-free part is over the signature~$L$. 
For the conversion we rely on the~$\rqfo$ fitting hypothesis, 
and we will obtain equivalence over each finite interpretation of $A$.

Define $\eta_{\textit{qf}}(\vec u, \vec z, \vec p \varsep y) \coloneqq \bigwedge_i
[\gamma_i \leftrightarrow \delta_i]$, that is, the quantifier-free part of $\eta$
with $y$ as the only parameter variable. 
\rqfo~fitting can be easily seen to
give an $\rqfo$ \emph{sentence} $\eta^*(P)$, where $P$ is a relation of arity
the number~$\ell$ of variables in $\vec w \coloneqq (\vec u,\vec z, \vec p)$,
with the following property: 

\medskip

for every finite interpretation of~$P$, $\eta^*(P)$ is
true if and only if $\exists y \,\forall \,\vec w (P(\vec w) \rightarrow
\eta_{\textit{qf}}(\vec w \varsep y))$ holds.

\medskip

Indeed, this follows from considering $P$ as the set of positive examples for fitting $\eta_{\textit{qf}}(\vec w \varsep y)$.

Informally, we want now to restrict $P$ to relations of the form $A \times
\{(\vec z_0,\vec p_0)\}$, where $(\vec z_0,\vec p_0)$ is a possible
instantiation of the variables $\vec z,\vec p$ that occur free in the starting
formula~$\eta$. Without loss of generality, we can assume that the variables $\vec z$ and $\vec p$ 
do not occur in the sentence $\eta^*$. We replace all occurrences of
$P(\vec w_0,\vec w_1, \vec w_2)$ in $\eta^*$ (where $\vec w_0$ and $\vec w_1$
have the same cardinality as $\vec u$ and $\vec z$, respectively) with the
conjunction $A(\vec w_0) \land \vec w_1 = \vec z \land \vec w_2 = \vec p$.
This results in an~\rqfo formula $\eta'(\vec z, \vec p, A)$
equivalent, for every finite interpretation of $A$ and every 
sequence $(\vec z_0,\vec p_0)$ of elements in the domain of~$\amodel$, 
to
\begin{align*}
    \exists y \, \forall \vec w_0, \vec w_1, \vec w_2 \,\big(A(\vec w_0) \land \vec w_1 = \vec z_0 \land \vec w_2 = \vec z_0 \rightarrow \eta_{\textit{qf}}(\vec w_0,\vec w_1, \vec w_2 \varsep y)\big).
\end{align*}
Thus, $\eta'$ is equivalent to $\eta$, completing the proof.
\end{proof}

\forarxiv{We note that the variation of Theorem \ref{thm:rqcandrqfitting} for \emph{object variables} 
that is, to get \rqfo~fitting in a structure, it suffices to verify it for formulas $\phi(x \varsep \vec y)$) does not hold: see Appendix~\ref{app:oneobjvarinsufficient}. This is not particularly surprising, since the role of object variables and parameter variables in fitting problems is very different. }

\subsection{Returning to some corner cases: what if the active domain is small}\label{subsec:rqfitting-to-rqc:corners}

Let us consider  the proof of the more difficult direction in the proof of Theorem \ref{thm:rqcandrqfitting}: establishing $\rqc$ while assuming only $\rqfo$-fitting for one parameter formulas. In that argument, we sometimes assumed that the
domain of embedded finite models had size at least two.
We now show how to lift the assumption on the cardinality of the active domain used during the proof of Theorem ~\ref{thm:rqcandrqfitting}. 

Recall that we are considering a formula $\psi(\vec z)$, in the signature $L \cup \relvoc$, of the form
\[
\psi(\vec z) \coloneqq \exists y ~ Q_1(u_1) \ldots Q_k(u_k) ~ \gamma(y, \vec u, \vec z),
\]
where $Q_i$ are restricted quantifiers and $\gamma$ is quantifier-free.
We want to translate this formula into a $\rqfo$ formula. We can ``explore'' whether the active domain $\adom$ has $0$, $1$ or at least two elements by rewriting $\psi$ into: 
\[ 
    (\psi \land \forall x \in \adom \colon \false) 
    \lor (\psi \land \forall x,y \in \adom \colon x = y)
    \lor (\psi \land \exists x,y \in \adom \colon x \neq y).
\]
The three disjuncts handle empty active domains, singletons, and domains of size
at least two, respectively. In the latter case, the occurrence of $\psi$ can be translated into an $\rqfo$ formula following the proof given in the body of the paper. Below, we show how to handle the remaining two cases:
\begin{description}
\item[Empty active domain:]
We discuss how to translate $\psi$ into an \rqfo formula under the assumption $\forall x \in \adom \colon \false$. If the number $k$ of restricted quantifiers is at least one, then observe that $Q_1(u_1) \ldots Q_k(u_k) ~ \gamma(y, \vec u, \vec z)$ (and therefore also $\psi$) reduces to $\false$ whenever $Q_1$ is an existential quantifier, and to $\true$ whenever $Q_1$ is a universal quantifier. If $k = 0$ instead, we can replace in $\psi$ all occurrences of relations from the signature $\relvoc$ 
with $\false$, obtaining an $L$-formula. Since any $L$-formula is an $\rqfo$ formula, we are done. 

\item[Singleton active domain:]
To translate $\psi$ under the assumption $\forall x,y \in \adom \colon x = y$, given a subset $S \subseteq \relvoc$ of the relations in $\relvoc$, 
let us write $\gamma_S$ for the sentence stating that exactly the relations in $S$ are non-empty (and are thus singletons): 
\[
    \gamma_S \coloneqq (\bigwedge_{R \in S} \exists x \colon R(x,\dots,x)) \land \bigwedge_{R \in V \setminus S} \forall x (R(x,\dots,x) \rightarrow \false).
\]
We rewrite $\psi \land \forall x,y \in \adom \colon x = y$ as $\bigvee_{S \subseteq V}(\psi \land \gamma_S \land \forall x,y \in \adom \colon x = y)$, and translate %
each disjunct. 
Fix $S \subseteq \relvoc$, and let $x^*$ be a fresh variable that will be bound to the only element in the active domain.
Define $\psi_{S}(x^*,\vec z)$ by replacing in $\psi$ 
all occurrences $R(\vec w)$ of a relation in $\relvoc$ 
(including in restricted quantifiers)
with $\false$ if $R \not\in S$, 
and with $\vec w = (x^*,\dots,x^*)$ otherwise. 
Observe that $\psi_{S}(x^*,\vec z)$ is an $L$-formula, 
and it is thus an $\rqfo$ formula.
Lastly, replacing each $\psi \land \gamma_S \land \forall x,y \in \adom \colon x = y$ with $(\exists x^* \in \adom :\psi_S) \land \gamma_S \land \forall x,y \in \adom \colon x = y$ yields the desired $\rqfo$ formula.
\end{description}

\forarxiv{

\section{\rqfo fitting: one object variable does not suffice}\label{app:oneobjvarinsufficient}

Theorem~\ref{thm:rqcandrqfitting} shows that, in order for a structure to have
$\rqfo$ fitting, it suffices that all formulas with one
parameter variable have $\rqfo$ fitting. 
It is then natural to ask whether the analogous result holds when assuming $\rqfo$ fitting on  all formulas with one \emph{object variable}. We will give a negative answer to this question.
Here is our counterexample:

\begin{example}\label{ex:onevardoesnotsuffice} There are structures %
for which some fitting problems are $\np$-hard
but where every $\phi(x \varsep \vec y)$ has $\rqfo$ fitting.
This happens in the random graph. Recall \cite{faginzeroone} that this structure can be defined, uniquely up to isomorphism, as the countable binary edge relation $E(x,y)$ that is anti-reflexive, symmetric, and satisfies
the following ``extension axiom'':

\medskip 

\textit{For every
finite graph $G_1$ and every induced subgraph $G_0$ of $G_1$,
every isomorphic copy of $G_0$ in $E$ extends to a copy of $G_1$ in $E$.}

\medskip

The structure is well-known to be IP: indeed, it is easily seen that the edge relation $E(x,y)$ shatters arbitrarily large sets. From this, we can easily encode coloring as a fitting problem, and thus some fitting problems are $\np$-hard. It is also easy (and well-known) that the random graph admits
quantifier-elimination~\cite{qerandomgraph}. Therefore, over the random graph a formula $\phi(x, \vec y)$ asserts a Boolean combination of equalities and edge relations among $x$ and $\vec y$.  

By an \emph{atomic type} in variables $\vec w$, we mean a conjunction of literals giving a complete syntactically consistent description of the equalities and edge descriptions on the variables~$\vec w$.
Let us write $AT(\vec w)$ for the set of all atomic types in variables $\vec w$.

We can rewrite $\phi(x, \vec y)$ as
\begin{equation}\label{eq:random-types}
    \bigvee_{\substack{\gamma \in AT(x,\vec y)\\\gamma \models \phi}} \gamma.
\end{equation}
Let $k$ be the number of parameter variables in $\phi$.

For the sake of the discussion, let us fix a sample of positive examples $S^+$ and negative examples~$S^-$.
To define the \rqfo sentence showing \rqfo fitting for $\phi$, we reason by cases:
\begin{description}
    \item[Case 1: every disjunct in Formula~\ref{eq:random-types} asserts~$x = y_i$ for some $i \in {[1..k]}$.] In this case, whenever $S^+$ is non-empty, the parameters must be picked from elements in this set; and so it suffices to quantify the parameter variables over the active domain. When $S^+$ is empty however, it suffices to pick as parameters arbitrary elements that do not occur in $S^-$. Therefore, the \rqfo sentence we seek is: 
        \[
            (\exists \vec y \in \adom^k \,(\forall x (P(x) \rightarrow \phi) \land \forall x (N(x) \rightarrow \lnot \phi))) \lor (\forall x (P(x) \rightarrow \bot)).
        \]
    \item[Case 2:]\textbf{The set $D$ of disjuncts from Formula~\ref{eq:random-types} that feature $x \neq y_i$ for every $i \in [1..k]$ is non-empty. There is a combination of literals on $E(x,y_1),\dots,E(x,y_k)$ that does not appear in any of the disjuncts in the set $D$.}
        In this case, we claim that the \rqfo formula we seek is simply
        $\top$. Let $\eta^-(x,\vec y)$ be a combination of literals on $E(x,y_1),\dots,E(x,y_k)$ that does not appear in the set $D$. Let $\eta^+(x,\vec y)$ be a combination of literals on $E(x,y_1),\dots,E(x,y_k)$ that does appear in $D$ (the set is non-empty).
        Consider the formula 
        \[
            {\bigwedge_{p \in S^+} \left(\eta^+(p,\vec y) \land \bigwedge\nolimits_{i=1}^k p \neq y_i\right) \land \bigwedge_{q \in S^-} \left(\eta^-(q,\vec y) \land \bigwedge\nolimits_{i=1}^k q \neq y_i\right)}.
        \]
        The sample is clearly $\phi$-fittable when this formula is true. But, in the random graph, this formula is always true: by the extension axiom, we can always find a set of vertices of the graph that are distinct from the vertices in the sample and realize any consistent relation on edges.
    \item[Case 3:]\textbf{for every combination of literals on $E(x,y_1),\dots,E(x,y_k)$, there is a disjunct in Formula~\ref{eq:random-types} featuring that combination and asserting~$x \neq y_i$ for every $i \in {[1..k]}$.}  
    In this case, by the extension axiom, the sample is $\phi$-fittable if and only if it is fittable with the formula $(\bigwedge_{i=1}^k x \neq y_i) \lor \psi$, where $\psi$ is the formula featuring all the disjuncts from Formula~\ref{eq:random-types} that have some equality. 
    The fitting problem thus asks: 
    \[
        \exists \vec y \forall x ((P(x) \rightarrow (\bigwedge_{i=1}^k x \neq y_i) \lor \psi) \land (N(x) \rightarrow (\bigvee_{i=1}^k x = y_i) \land \lnot \psi)).
    \]
    If the set of negative examples~$S^-$ is empty, then we can trivially make this formula true by picking parameters that do not occur in $S^+$. 
    Otherwise, the parameters must be picked from elements in $S^-$. 
    Therefore, the \rqfo sentence we seek is: 
    \[
        (\forall x (N(x) \rightarrow \bot)) \lor \exists \vec y \in \adom^k \,(\forall x (P(x) \rightarrow \phi) \land \forall x (N(x) \rightarrow \lnot \phi)).
    \]
\end{description}
Therefore, by case analysis on the structure of Formula~\ref{eq:random-types}, 
we are able to construct the \rqfo sentence for \rqfo fitting.
\end{example}
}

\section{$\np$-hard fitting problems for \ip structures} \label{app:ipwitness}

In Section \ref{sec:betweenptimeandrqfoconcept} we mentioned that B\"uchi
arithmetic, which is \ip, has $\np$-hard fitting problems. This if formally
stated in~Proposition~\ref{prop:buchiexactconcept}, which will be proven in
Appendix~\ref{app:buchinpmembership}. In fact  other decidable \ip structures,
like atomless Boolean algebras and the random graph, also have hard fitting
problems. In this appendix we give a general property that implies that such
problems exist, which will apply to B\"uchi arithmetic.

\begin{definition}{Tractable IP witnesses}
    An \ip structure $\amodel$ is said to have \emph{tractable IP witnesses}
    whenever there is a formula~$\gamma(\vec x \varsep  \vec y)$ from $\amodel$
    and a polynomial-time algorithm that given $i \in \N$ (in unary) produces a
    set $S_i$ of $i$ (representations of) effective constants
    that~$\conceptclass_\gamma$ shatters.
\end{definition}

Through tractable IP witnesses, we can make
Lemma~\ref{lem:RQfittingImpliesFiniteVC} ``algorithmic'', and show that these
structures will have \np-hard $0$-fitting problems. Notably, every \ip structure
with \ffte we are aware of has tractable IP witnesses.

\begin{proposition}\label{prop:ipimplieshard} If~$\amodel$ has tractable IP
witnesses, then some formula $\gamma$ over $\amodel$ has an~\np-hard $0$-fitting
problem.
\end{proposition}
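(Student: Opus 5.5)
We reduce $3$-colorability to the $0$-fitting problem, reusing the gadget from the proof of Lemma~\ref{lem:RQfittingImpliesFiniteVC} but now working directly in $\amodel$ rather than in a nonstandard model. Let $\phi(\vec x \varsep \vec y)$ be the formula witnessing tractable IP witnesses, with $\ell$ object variables. Define $\gamma(\vec x_0,\vec x_1,\vec x_2,\vec z_0,\vec z_1,\vec z_2 \varsep \vec y)$ exactly as in that proof. It states that exactly one $c\in[0..2]$ has $\phi(\vec x_c \varsep \vec y)$ and exactly one $c'$ has $\phi(\vec z_{c'} \varsep \vec y)$, and that $c \neq c'$. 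This is a fixed first-order formula over $\amodel$.

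Given a graph $G=([0..n-1],E)$ without self-loops, the reduction runs the tractable IP witness algorithm on input $3n$ (in unary). This yields in polynomial time a set $S_{3n}$ of $3n$ tuples of effective constants that is shattered by $\conceptclass_\phi$. We enumerate these tuples and call the $(3v+c)$th one $\tupleof_c(v)$. The sample then consists of one positive example $(\tupleof_0(i),\tupleof_1(i),\tupleof_2(i),\tupleof_0(j),\tupleof_1(j),\tupleof_2(j))$ for each edge $(i,j)\in E$, and no negative examples. Its size is polynomial in the size of $G$. One should check that the training inputs are pairwise distinct, which holds because distinct edges give distinct tuples, and that the tuples of $S_{3n}$ are pairwise distinct, which is implicit in shattering a set of size $3n$.

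Correctness follows Claim~\ref{clm:colorableandfittable} verbatim, with $\amodel$ in place of $\amodel'$. A fitting parameter $\vec y^*$ colors each vertex $v$ by the unique $c$ with $\phi(\tupleof_c(v)\varsep \vec y^*)$; the third conjunct of $\gamma$ makes this a proper coloring. Isolated vertices impose no constraint and can be colored arbitrarily, so the coloring extends to all of $G$. Conversely, from a $3$-coloring $V_0,V_1,V_2$ of $G$, shattering of $S_{3n}$ gives a $\vec y^*$ selecting exactly the tuples $\tupleof_c(v)$ with $v\in V_c$, and this $\vec y^*$ satisfies $\gamma$ on every edge tuple. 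Since $3$-colorability is $\np$-hard, so is $0$-fitting for $\gamma$.

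The main point requiring care is that the reduction uses only the effective-constant representations returned by the witness algorithm, so it is genuinely polynomial-time in the Turing-machine sense. Unlike in Lemma~\ref{lem:RQfittingImpliesFiniteVC}, no indiscernibles or round-robin normalization are needed, because shattering of the whole set $S_{3n}$ is provided directly by the hypothesis.
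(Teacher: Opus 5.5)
Your proposal is correct and is essentially the paper's own proof: it uses the same gadget $\gamma$ from the proof of Lemma~\ref{lem:RQfittingImpliesFiniteVC}, the witness algorithm to obtain $3n$ shattered $\ell$-tuples grouped into consecutive triples per vertex, and a positive-only sample with one $6\ell$-tuple per edge. The extra checks you add are sound details that the paper leaves implicit, namely distinctness of the training inputs and of the shattered tuples, and the observation that isolated vertices impose no constraint.
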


\begin{proof}
    Let~$\phi(\vec x \varsep \vec y)$ be a witness to tractable IP witnesses,
    with $\ell$ object variables. We rely on the $3$-coloring encoding from the
    proof of Lemma~\ref{lem:RQfittingImpliesFiniteVC}
    (Appendix~\ref{app:rqfoimpliesfinitevc}), defining $\gamma$ from $\phi$ as
    described in the proof of the lemma. That is, $\gamma( \vec x_1, \vec x_2,
    \vec x_3,\vec z_1, \vec z_2, \vec z_3 \varsep y)$ has $6 \cdot \ell$ object
    variables.
    
    Let $G = (V,E)$ be an undirected graph with vertices $V = [1..n]$. Thanks to
    tractable IP witnesses, we can compute in polynomial time a set of $3 \cdot
    n$ many $\ell$-tuples $\vec s_1, \dots \vec s_{3 \cdot n}$ that is shattered
    by $\phi$. Let $\mu$ be the map sending $j \in [1..n]$ to the $3 \cdot
    \ell$-tuple $(\vec s_{3(j-1)+1},\vec s_{3(j-1)+2}, \vec s_{3(j-1)+3})$.
    Consider the sample $S$ with no negative examples and positive examples
    consisting of the set of $6 \cdot \ell$-tuples $\{(\mu(j),\mu(k)) : (j,k)
    \in E\}$. Following the semantics of $\gamma$ given in the proof of
    Lemma~\ref{lem:RQfittingImpliesFiniteVC}, we see that $S$ is
    $\gamma$-fittable if and only if $G$ is $3$-colorable. 
\end{proof}

\section{Proof of Proposition~\ref{prop:buchiexactconcept}: $\np$ fitting for
B\"uchi Arithmetic}\label{app:buchinpmembership}

We start out by proving Proposition~\ref{prop:buchiexactconcept}, which we now
recall:

\buchiexactconcept*

The hardness follows from Proposition~\ref{prop:ipimplieshard}
(Appendix~\ref{app:ipwitness}), as the formula~$\textit{bit}(x \varsep y)$ from
Section~\ref{sec:betweenptimeandrqfoconcept} has tractable IP witnesses (the
powers of two, encoded in binary). Therefore, we focus on the $\np$ upper bound.
For this, we rely on a further property of structures: 

\begin{definition}\label{def:pwp} A structure~\amodel has the \emph{polynomial
witnessing property} ($\pwp$) if for every first-order formula $\phi(\vec x,
\vec y)$ there is a polynomial~$p$ with the following property: for every
sequence of tuples $\vec c_1 \ldots \vec c_n$ of effective constants in
$\amodel$, if $\exists \vec y\, \bigwedge_{i=1}^n \phi(\vec c_i, \vec y)$ holds,
then there is a witness tuple $\vec y$ of effective constants such that
$\bigwedge_{i=1}^n \phi(\vec c_i, \vec y)$ holds and $\bitsize{\vec y} \leq
p(\sum_{i=1}^n \bitsize{\vec c_i})$.
\end{definition} 
Recall from Section \ref{sec:prelims} that $\bitsize{w}$ denotes the length of a word $w$.

First, we note that $\pwp$ (along with $\ffte$, which we always assume) implies
the~$\np$ bound:
\begin{restatable}{proposition}{PropPWPImpliesNPFitting}\label{prop:pwpimpliesnpfitting}
For a definable concept class over a structure $\amodel$ with $\pwp$,
approximate (and exact) fitting is in $\np$. 
\end{restatable}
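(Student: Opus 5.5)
The plan is a guess-and-check NP algorithm, with $\pwp$ bounding the size of the certificate and $\ffte$ making the check polynomial. Fix the partitioned formula $\phi(\vec x \varsep \vec y)$. The input is a sample $(\vec s_1,b_1),\dots,(\vec s_n,b_n)$ together with a tolerance $\epsilon \in \N$. The certificate consists of two parts:
\begin{itemize}
\item a set $J \subseteq [1..n]$ of ``error'' indices with $\card{J} \leq \epsilon$;
\item a tuple $\vec p$ of effective constants whose size is bounded by a fixed polynomial in the size of the sample.
\end{itemize}
The verifier first checks that $\vec p$ is a valid representation, which is possible in polynomial time by the assumption on $D$. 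It then checks, for each $i \in [1..n]$, that $\phi(\vec s_i,\vec p)$ holds if and only if ($b_i = 1$ and $i \notin J$) or ($b_i = 0$ and $i \in J$). More simply, it can just count the indices where the truth value of $\phi(\vec s_i,\vec p)$ disagrees with $b_i$ and check that this count is at most $\epsilon$. Each of these checks evaluates the fixed formula $\phi$ on the tuple $(\vec s_i,\vec p)$. By $\ffte$, this takes time polynomial in $\bitsize{\vec s_i}+\bitsize{\vec p}$, so the verifier runs in polynomial time overall. Exact fitting is the special case $\epsilon = 0$.

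The main step is completeness: whenever some parameter $\epsilon$-fits the sample, a short witness among the effective constants must exist. $\pwp$ is stated for conjunctions $\bigwedge_i \phi(\vec c_i,\vec y)$ of a single formula, whereas a fit involves a mix of positive and negated constraints. I will therefore apply $\pwp$ to the auxiliary fixed formula
\[
\theta(\vec x, w_0, w_1 \varsep \vec y) \coloneqq (w_0 = w_1 \wedge \phi(\vec x,\vec y)) \vee (w_0 \neq w_1 \wedge \neg\phi(\vec x,\vec y)),
\]
using two distinct effective constants $a \neq b$ to encode the labels. These constants can be fixed once and for all, so they contribute only constant size. Suppose an arbitrary parameter $\vec p^*$, possibly not an effective constant, $\epsilon$-fits the sample. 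Let $J$ be the set of indices where $\vec p^*$ errs, and set $c_i \coloneqq (\vec s_i,a,a)$ if ($b_i=1$ and $i\notin J$) or ($b_i=0$ and $i\in J$), and $c_i \coloneqq (\vec s_i,a,b)$ otherwise. Then $\vec p^*$ witnesses $\exists \vec y \bigwedge_i \theta(c_i,\vec y)$. $\pwp$ then yields an effective-constant witness $\vec p$ with $\bitsize{\vec p} \leq p(\sum_i \bitsize{c_i})$, which is polynomial in the sample size. This $\vec p$ realizes exactly the same labelling as $\vec p^*$, so it also $\epsilon$-fits the sample. Hence a polynomial-size certificate exists, and the guessed $J$ is in fact only needed to define the $c_i$ in this argument.

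I expect the only delicate point to be this reduction to a single conjunction. It needs two subtleties to be handled: that $\pwp$ quantifies $\vec y$ over the whole domain while producing effective constants, which is exactly its statement, and that $\amodel$ has at least two distinct effective constants to serve as $a$ and $b$. The latter holds because $E$ is infinite.
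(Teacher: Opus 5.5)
Your proof is correct and follows the paper's argument: guess the error set (or just count disagreements), encode the mixed positive/negative constraints as a single positive conjunction via $(w_0 = w_1 \wedge \phi) \vee (w_0 \neq w_1 \wedge \neg\phi)$, and combine \pwp (short witness) with \ffte (polynomial-time check). One small difference works in your favour: you fix the two label constants $a \neq b$ once and for all, whereas the paper takes them from the sample. Your version is slightly more robust, since a small sample need not contain two distinct effective constants.
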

\begin{proof} 
    Let $S$ be an input sample, and $\epsilon \in \N$ be the input tolerance.
    Observe that the approximate fitting problem for concept classes reduces in
    non-deterministic polynomial time to the exact fitting problem: it suffices
    to guess a subset $J \subseteq [1..n]$ with at most $\epsilon$ indices, and
    then solve the $0$-fitting problem for the sample obtained from the one in
    input by replacing, for every $j \in J$, the Boolean $b_{j}$ by its
    negation. Below we focus on exact fitting.

    $\pwp$ immediately gives an $\np$ bound for samples with positive (or
    negative) examples only. We can reduce to this case by adding two object
    variables. Given $\phi(\vec x \varsep \vec y)$ define 
    \[ 
        \phi'(\vec x, x',x'' \varsep \vec y) \coloneqq 
        (x'= x'' \wedge \phi(\vec x, \vec y)) \vee (x' \neq x'' \wedge \neg \phi(\vec x, \vec y)).
    \]
    Consider the positive examples $S^+$ and negative examples $S^-$ of the
    input sample $S$, and $c_0$ and $c_1$ be two distinct effective constants
    from $S$. Let $S'$ be the sample with positive examples $\{(\vec q,c_0,c_0)
    : \vec q \in S^+\} \cup \{(\vec q,c_0,c_1) : \vec q \in S^-\}$ and no
    negative example. Then, $S$ is $\phi$-fittable if and only if $S'$ is
    $\phi'$-fittable, and by $\pwp$ applied to the latter we get our $\np$
    bound.
\end{proof}

To conclude the proof of Proposition~\ref{prop:buchiexactconcept}, it then
suffices to show that any automatic structure (like B\"uchi arithmetic) has
$\pwp$:

\begin{restatable}{lemma}{LemmaAutomaticPWP}\label{lem:automatic-pwp} If
    $\amodel$ is an automatic structure, then it has~\pwp.
\end{restatable}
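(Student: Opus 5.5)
Since $\amodel$ is automatic, the relation $R_\phi \coloneqq \{(\vec x,\vec y) : \amodel \models \phi(\vec x,\vec y)\}$ is regular; this is the standard closure of regular relations under Boolean operations and projection, which handles quantifiers. We fix a deterministic automaton $\mathcal{A}$ with state set $Q$ reading convolutions $\vec x \otimes \vec y$, and set $q \coloneqq |Q|$. This $q$ depends only on $\phi$. Given $\vec c_1,\dots,\vec c_n$, let $\ell$ be the maximum length of the words encoding the components of the $\vec c_i$, so $\ell \le \sum_i \bitsize{\vec c_i}$. We aim to show that if some witness $\vec y$ exists, then some witness exists whose encoding has length at most $\ell + q^{n}$.

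This bound is exponential in $n$, so it is not yet enough. The plan is a pumping argument organised per position, counting only the tuples of states that are actually reachable. Run $\mathcal{A}$ in parallel on $\vec c_i \otimes \vec y$ for all $i$. After position $\ell$ every $\vec c_i$ is padded with $\Box$, so the suffix of $\vec y$ drives all $n$ copies of $\mathcal{A}$ synchronously on the same letters. The joint behaviour is a run of the product automaton $\mathcal{A}^n$ restricted to a single input letter shared by all copies. We then shortcut repeated product states in the suffix. That gives length at most $\ell$ plus the number of distinct product states reachable from the state reached at position $\ell$, staying within that single trajectory family.

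The main obstacle is bounding this number polynomially. The key observation is that all $n$ copies read \emph{identical} padded input $\Box\cdots\Box\otimes$ (suffix of $\vec y$) after position $\ell$. Their joint state is therefore determined by the multiset of starting states $\{p_1,\dots,p_n\} \subseteq Q$ together with the suffix read. Concretely, the product state is the image of the set $\{p_i\}$ under a function $Q \to Q$ induced by the word read so far, and there are at most $q^q$ such functions. This is a constant. So after position $\ell$ we can shortcut whenever the induced transformation monoid element repeats, giving a suffix of length at most $q^q$ (or $|\vec y|$-independent bounds on the $\vec y$-tracks). Similarly, before position $\ell$ the $\vec y$-prefix has length at most $\ell$ anyway. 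One further caveat concerns the parameter encodings: the witness word must lie in the regular domain $D$. We handle this by intersecting with $D$, which is absorbed into $\mathcal{A}$.

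Putting this together, any satisfiable instance has a witness $\vec y$ whose encoding has length at most $\ell + q^q \le \sum_i \bitsize{\vec c_i} + O(1)$, which yields the polynomial $p$. The subtle point I expect to check carefully is the first phase. There, $\vec y$ may be shorter than some $\vec c_i$ and padded itself, and the tracks of different components of $\vec y$ have different lengths. We treat this by letting the witness word range over convolutions that are allowed to contain $\Box$. Length is measured as that of the full convolution, which is bounded by $\max(\ell,|\vec y|)$.
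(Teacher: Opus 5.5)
Your proposal is correct and follows essentially the paper's route. You keep the first $\ell$ columns, observe that afterwards all $n$ copies of the automaton read identical letters, and shorten a minimal synchronized suffix by pigeonhole on a constant-size abstraction of the joint run. The only difference is the abstraction: you pump on repeated transformations $Q \to Q$, which gives the bound $q^q$ and makes the cut immediate because the final product state is unchanged. The paper instead pumps on repeated sets of current states, which gives the bound $2^{|Q|}$ and uses that every state in the repeated set leads to acceptance along the remaining suffix.
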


This result follows from~\cite[Proposition 4.8]{modeltheoryofstrings}. We include a proof here for completeness.
\begin{proof}
    For simplicity, we consider a formula $\phi(x \varsep y)$ with a single
    object variable and a single parameter variable; the generalization to $\vec x$ 
    and $\vec y$ will be clear. Suppose $\phi$ is a formula over the automatic
    structure $\amodel$. We can thus convert $\phi(x \varsep y)$ into a
    deterministic finite automaton $A$ over the alphabet $\Sigma_\Box \times
    \Sigma_\Box$ where $\Sigma_\Box \coloneqq \Sigma \cup \{\Box\}$. The
    automaton reads convolutions of two words, one for $x$ and one for $y$. We
    assume $A$ to be minimal, and let $Q$ denote its states. Given~$q \in Q$, we
    write $A_q(x, y)$ for the state reached by running $A$ from $q$ on the
    convolution $x \otimes y$. When $q$ is the initial state~$q_0$, we simply
    write $A(x,y)$. We encode acceptance through a function $F \colon Q \to
    \mathbb{B}$: the automaton accepts $x \otimes y$ whenever $F(A(x,y)) =
    \textbf{true}$.

    To establish \pwp, we must show that there is a polynomial $f$ such that the
    following holds: for every sequence $x_1,\dots,x_k$, if some $y$ witnesses
    that $F((A(x_i,y))) = \textbf{true}$ for all $i \in [1..k]$, then at least
    one such witness $y$ has binary length at most~$f(\sum_i\bitsize{x_i})$. We
    proceed by fixing a sequence $x_1,\dots,x_k$ for which a witness exists. We
    can assume that the witness has length at least $\max_i\bitsize{x_i}$; if
    such a witness does not exist (no matter the sequence), then taking $f$ to
    be the identity function suffices. Without loss of generality, we also assume the $x_1,\dots,x_k$
    to be padded with $\Box$ to a common length~$\ell$. Consider a word $y \in
    \Sigma_{\Box}^\ell$ that can be extended into a witness. Let
    $(q_1,\dots,q_k) \coloneqq (A(x_1,y),\dots,A(x_k,y))$. Since $x_1,\dots,x_k$
    have length $\ell$, in order to complete $y$ into a witness it suffices to
    find a short word $z \in (\{\Box\} \times \Sigma_\Box)^*$ such that $F(A_{q_1}(z)) = \dots = F(A_{q_k}(z)) = \textbf{true}$. 
    
    Suppose $z = z_1 \dots z_m$ to be a word with the above property, of minimal length.
    Consider the (only) run of $z$ on the elements of the tuple $(q_1,\dots,q_k)$: we get a sequence of tuples of states, starting from $(q_1,\dots,q_k)$ and ending in a tuple of accepting states:
    \[ 
    (q_1,\dots,q_k) \xrightarrow{z_1} (q_{11},\dots,q_{k1}) \xrightarrow{z_2} \dots \xrightarrow{z_m} (q_{1m},\dots,q_{km}),
    \]
    We show that $m < 2^h$, where $h$ is the number of states in the automaton~$A$.
    For a contradiction, suppose $m \geq 2^n$. 
    Abstract each tuple in the run to the set of its states: 
    \[ 
    \{q_1,\dots,q_k\} \xrightarrow{z_1} \{q_{11},\dots,q_{k1}\} \xrightarrow{z_2} \dots \xrightarrow{z_m} \{q_{1m},\dots,q_{km}\}.
    \]
    Since we are assuming $m \geq 2^h$, here are two positions $i < j$ 
    such that the sets $\{q_{1i},\dots,q_{ki}\}$ and $\{q_{1j},\dots,q_{kj}\}$ are the same. Consider then the word $z' = z_1 \dots z_i z_{j+1} \dots z_m$: it can be verified that $z'$ is also a word that extends $y$ into a witness, contradicting the minimality of $z$.
    
    Since $\phi$ is fixed so is $h$, and taking $f(\ell) \coloneqq \ell + 2^h$
    concludes the proof. 
\end{proof}

\section{Proof of Theorem \ref{thm:buchi}: for automatic structures, finite VC dimension implies tractable approximate fitting} 
\label{app:buchinipptime}

Recall that in the body we outlined a proof of Theorem~\ref{thm:buchi}, saying
that for partitioned formulas with finite VC dimension and over an automatic
structure, there is a $\ptime$ algorithm for both exact and approximate fitting.
In this appendix we fill in the details of the proof. In the proof of
Theorem~\ref{thm:buchi} we will also use a basic result about separating
examples for a family of finite sets.

\begin{lemma}\label{lem:small-representative-set}
    Consider $\ell+1$ pairwise different sets $S_1,\dots,S_{\ell+1}$. At most
    $\ell$ representatives $r_1,\dots,r_\ell$ are needed to distinguish these
    sets: for every $i \neq j \in [1..\ell+1]$, there is $k \in [1..\ell]$ such
    that $r_k$ belongs to the symmetric difference $S_i \triangle S_j$ of $S_i$
    and $S_j$.
\end{lemma}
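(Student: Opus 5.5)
We argue by induction on $\ell$, choosing the representatives greedily: one representative splits the family into two subfamilies, and each subfamily is then handled recursively. This is the classical fact that $m$ distinct sets can be pairwise separated by $m-1$ points, as in the construction of a decision tree with $m$ leaves.

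The base case $\ell = 0$ is vacuous, since a single set has no pairs to distinguish and needs no representative. For $\ell \geq 1$, the sets $S_1,\dots,S_{\ell+1}$ are pairwise different, so $S_1 \neq S_2$ and there is an element $r$ in $S_1 \triangle S_2$. Let $\mathcal{F}_1 \coloneqq \{S_i : r \in S_i\}$ and $\mathcal{F}_0 \coloneqq \{S_i : r \notin S_i\}$. Both subfamilies are non-empty, because $S_1$ and $S_2$ fall on opposite sides. Their sizes $m_1, m_0 \geq 1$ satisfy $m_0 + m_1 = \ell+1$.

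Any two sets lying in different subfamilies are distinguished by $r$, since $r$ lies in their symmetric difference. Each subfamily consists of pairwise different sets, so the induction hypothesis applies to it. It gives $m_1 - 1$ representatives distinguishing all pairs within $\mathcal{F}_1$, and $m_0 - 1$ representatives distinguishing all pairs within $\mathcal{F}_0$. Taking $r$ together with these two collections distinguishes every pair in the original family. The total number of representatives is $1 + (m_1 - 1) + (m_0 - 1) = \ell$.

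If some of the chosen representatives coincide, we can pad the list with repeats (or discard duplicates) to obtain exactly $\ell$ entries $r_1,\dots,r_\ell$, or at most $\ell$ distinct ones. Either form matches the statement, which only asks that at most $\ell$ representatives be needed. The only point requiring care is the counting in the inductive step, namely checking that the split adds exactly one representative beyond the recursive counts. No real obstacle is expected.
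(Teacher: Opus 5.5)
Your proposal is correct and follows essentially the same route as the paper: induction on $\ell$, pick $r \in S_1 \triangle S_2$, split the family by membership of $r$, apply the induction hypothesis to each part, and count $1 + (m_1-1) + (m_0-1) = \ell$. Your extra remarks on handling coinciding representatives and on both subfamilies being non-empty are fine and only make explicit what the paper leaves implicit.
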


\begin{proof}
    By induction on $\ell$. 
    \begin{description}
        \item[Base case: $\ell = 0$.] For a single set, no representative is needed.
        \item[Inductive step: $\ell \geq 1$.] Since $S_1 \neq S_2$, the
            symmetric difference $S_1 \triangle S_2$ is non-empty. Pick $r \in
            S_1 \triangle S_2$. Define $P \coloneqq \{S_i : r \in S_i\}$ and $N
            \coloneqq \{S_i : r \not\in S_i\}$. Let $p \coloneqq |P|$ and $n
            \coloneqq |N|$. Observe that $P$ and $N$ partition
            $\{S_1,\dots,S_{\ell+1}\}$, and since $r$ distinguishes $S_1$ from
            $S_2$, we have $p \geq 1$ and $n \geq 1$. By the induction
            hypothesis, $p-1$ representatives suffice to distinguish all sets in
            $P$, and $n-1$ representatives suffice to distinguish all sets in
            $N$. Together with $r$, we have $1 + (p-1) + (n-1) = p + n - 1 =
            \ell$ representatives. These suffice to distinguish all sets
            $S_1,\dots,S_{\ell+1}$.
            \qedhere
    \end{description}
\end{proof}
We are now ready to prove Theorem \ref{thm:buchi}, which we recall:

\thmbuchi*

\begin{proof} 
Following Lemma~\ref{lem:FiniteVCandPTimeFittingImpliesPtimeApprox}, it suffices to prove \ptime \mbox{$0$-fitting}.
For simplicity, we consider a partitioned formula $\phi(x \varsep y)$ with a single object variable
and a single parameter variable; the generalization to $\vec x$ and $\vec y$ will be
clear.

Since $\amodel$ is automatic, we can convert $\phi(x \varsep y)$ into a DFA
$A_\phi$ over the alphabet $\Sigma_\Box \times \Sigma_\Box$ where $\Sigma_\Box
\coloneqq \Sigma \cup \{\Box\}$. The automaton reads convolutions of two words,
one for $x$ and one for $y$. We assume $A_\phi$ to be minimal, and let $Q_\phi$
be its states, $i_\phi$ its initial state, and $\delta_\phi \colon Q_\phi
\times (\Sigma_\Box \times \Sigma_\Box) \to Q_\phi$ its transition relation. We
write ${\delta_{\phi}}\!^* \colon Q_\phi \times (\Sigma_\Box \times
\Sigma_\Box)^* \to Q_\phi$ for its extended transition relation. We encode
acceptance through a function $F_\phi \colon Q_\phi \to \mathbb{B}$: the
automaton accepts $x \otimes y$ whenever $F_\phi({\delta_{\phi}}\!^*(i_\phi,x
\otimes y)) = \textbf{true}$.

Parameters range over the regular language~$D \cdot \{\Box\}^*$. To enforce this, the algorithm relies on the minimal DFA $A_D$ for this language. Let
$Q_D$ denote its states, $i_D$ its initial state, $\delta_D$ its transition
relation, and~$F_D$ its acceptance function.

Let $S \coloneqq (s_1,b_1),\dots,(s_n,b_n)$ be an input sample, 
where each $(s_i,b_i)$ belongs to $D \times \mathbb{B}$.
Denote by $m$ the maximum length of the words $s_1,\dots,s_n$.
Any automatic structure has \pwp (see~Lemma~\ref{lem:automatic-pwp} in Appendix~\ref{app:buchinpmembership}), so it suffices to
consider parameters $y$ of length at most $f(n \cdot m)$, for some polynomial $f$ independent of
 $S$. %
Moreover, since $\amodel$ is automatic, 
we may pad all words $s_1,\dots,s_n$ with~$\Box$, 
to a common length $\ell \geq \max(m,f(n \cdot m))$, 
without affecting the set of parameters we need to consider.

A tuple $(q_1,\dots,q_n)$ of states is \emph{$S$-reachable~at~$k \in [0..\ell]$}
whenever there is $y \in (\Sigma_\Box)^k$ such that $q_j =
{\delta_\phi}\!^*(i_\phi,(s_j^k,y))$ for all $j \in [1..n]$, where $s_j^k$ denotes the
prefix of $s_j$ of length~$k$. 

The algorithm for the $0$-fitting problem of $\phi$ is simple (below, $s_{ij}$
stands for the $j^{th}$ letter of $s_i$):
\begin{center}
\begin{minipage}{\linewidth}
\begin{algorithmic}[1]
    \State $V \gets \{(i_D,i_\phi,\dots,i_\phi)\}$
    \Comment{$(i_\phi,\dots,i_\phi)$ only tuple $S$-reachable at $0$}
    \For{$j = 1$ to $\ell$}\label{aut:main-loop}
        \State $V \gets \begin{aligned}[t]
                            \{ \big(\delta_D(p,y),\delta_\phi(q_1,(s_{1j},y)),\dots,\delta_\phi(q_n,(s_{nj},y))\big)\\ :
                             y \in \Sigma_\Box \text{ and } (p,q_1,\dots,q_n) \in V \}
                        \end{aligned}$
    \EndFor
    \Statex \Comment{postcondition: when projecting away its first component, $V$ becomes the set of all tuples that are $S$-reachable at $\ell$}
    \For{$(p,q_1,\dots,q_n) \in V$ with $F_D(p) = \textbf{true}$}
        \If{$F_{\phi}(q_i) = b_i$ for all $i \in [1..n]$}
            \textbf{return true}
        \EndIf
    \EndFor
    \State \textbf{return false}
\end{algorithmic}
\end{minipage}
\end{center}
Briefly, the algorithm first builds a set $V$ that, after the $j^{th}$ iteration of the
loop in line~\ref{aut:main-loop}, contains all the tuples $(p,q_1,\dots,q_n)$
with $(q_1,\dots,q_n)$ $S$-reachable at $j$ via parameter ${y \in
(\Sigma_\Box)^j}$ with $\delta_D^*(i_D,y) = p$. Then for each such tuple
with $p$ a final state, it checks whether the sample has been fit exactly; 
this happens if and only if $F_\phi(q_i) = b_i$ for every $i \in [1..n]$.

The correctness of the algorithm is straightforward. 
For polynomial runtime, it suffices to show that the set $V$ remains of polynomial size at each iteration of the
for loop in line~\ref{aut:main-loop}. 
Since the number of states of the automaton $A_D$ is fixed, 
this follows directly from the following claim.

\begin{restatable}{claim}{ClmPolyReachable}\label{clm:polyreachable} There is a polynomial $g$ independent of $S$ such that the number of $S$-reachable sequences at any $k$ at most~$g(n)$.
\end{restatable}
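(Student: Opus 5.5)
Fix $k$ and consider the set $R_k$ of $S$-reachable tuples at $k$. The idea is to turn distinct reachable tuples into distinct realizable partitions of a polynomially larger set of objects, and then invoke Fact~\ref{fact:sslemma}. The bridge is the minimality of $A_\phi$. Since $A_\phi$ is minimal, any two distinct states $q\neq q'$ are separated by some suffix word $w$: exactly one of ${\delta_\phi}^*(q,w)$, ${\delta_\phi}^*(q',w)$ is accepting. There are finitely many (at most $h^2$, with $h=|Q_\phi|$) pairs of states, so fix for each pair a separating suffix. Let $W$ be this finite set of suffixes, of size at most $h^2$, which is independent of $S$. Each $w \in W$ is a word over $\Sigma_\Box\times\Sigma_\Box$; write $w = w^x\otimes w^y$.

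Step 1 defines the test objects. For each $j\in[1..n]$ and $w\in W$, form the word $s_j^k\cdot w^x$. After stripping padding appropriately, this should be (the representation of) an element $o_{j,w}$ of the range space. We also need a parameter continuation: for a prefix $y\in\Sigma_\Box^k$ witnessing a tuple, the word $y\cdot w^y$ should represent a parameter.

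Here is the obstacle. The suffix $w$ must be applied with one \emph{common} parameter continuation for all objects, whereas different pairs need different $w$. To repair this, I would instead index objects by pairs $(j,w)$ and use the parameter continuation trick only for one $w$ at a time. Concretely, I would enlarge the formula: define $\phi'(x, x' \varsep y)$ over the automatic structure, a regular relation, which holds iff $A_\phi$ run on the convolution of $x$ with the first $|x'|$-prefix part of $y$ and then on the suffix encoded by $x'$ accepts. This is messy, so my first attempt is the following simpler route. Define a new automatic partitioned formula $\psi(x,u\varsep y)$, where $u$ ranges over a fixed finite set of codes for $W$ (effective constants $c_w$). Then $\psi(x,c_w\varsep y)$ holds iff $A_\phi$, started at $i_\phi$, reads $x\otimes y'$ and then $w$, where $y'$ is the first part of $y$ (with $y = y'\cdot \mathrm{tail}$ handled through padding). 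Since $\psi$ is obtained from $\phi$ by a regular transformation, finite VC dimension of $\phi$ should transfer to $\psi$. I expect this transfer to be the main step to verify carefully; it is the delicate point of the proof. As a fallback, I would bound the VC dimension of the class of "state-vectors" directly: the map $y\mapsto \delta^*(i_\phi, s^k\otimes y)$ partitions parameters into at most $h$ classes per object, each class being a Boolean combination of $h$ fixed-suffix acceptance conditions, each itself $\phi$-definable.

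Step 2 is the counting. Two distinct tuples in $R_k$ differ at some coordinate $j$, say $q_j\neq q'_j$, so the separating $w$ gives a different truth value of $\psi$ on $(s_j^k,c_w)$. Hence $|R_k|$ is at most the number of realizable partitions of the set $\{(s_j^k,c_w)\}$, which has size at most $nh^2$. By Fact~\ref{fact:sslemma} with VC dimension $d'$ of $\psi$, this gives $|R_k| \le \sum_{i\le d'}\binom{nh^2}{i} =: g(n)$, a polynomial independent of $S$ and $k$. Lemma~\ref{lem:small-representative-set} can be used here to pick at most $|R_k|-1$ distinguishing representatives if one prefers to argue with a bounded separating family. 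In that argument, $|R_k|$ distinct tuples yield $|R_k|$ distinct realizable traces on the chosen representatives.
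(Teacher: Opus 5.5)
Your outline is sound as far as it goes. Reachable tuples at $k$ are determined by the acceptance pattern of the suffixes $w\in W$ after $s_j^k\otimes y$, and you correctly identified that a suffix must be read with a common parameter continuation. There is, however, a genuine gap exactly at the step you flag as delicate, and the whole bound depends on it. Nothing in the proposal shows that $\psi$ has finite VC dimension, and the reason you offer is false. Being obtained from $\phi$ by a regular transformation does not preserve finite VC dimension: Presburger arithmetic is \nip, but its expansion by $V_2$, which is regular for the same binary encoding, is B\"uchi arithmetic, which is \ip.

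Your $\psi$, in which $y'$ is the length-$|x|$ prefix of an arbitrary, possibly longer, $y$, really can be \ip. Take $\phi(x\varsep y)\coloneqq x<y$ over Presburger arithmetic with least-significant-bit-first encodings, and use the empty word as the suffix separating an accepting from a rejecting state. Identifying $2^{k-1}$ with its encoding $0^{k-1}1$, for every $k\ge 2$ and every odd $y$, $\psi(2^{k-1},c_\epsilon\varsep y)$ holds exactly when bit $k-1$ of $y$ is $1$. Hence $\{2,4,\dots,2^m\}$ is shattered. A second problem is that $s_j^k\cdot w^x$ need not encode any element, so $\psi$ is not a formula over $\amodel$ and the VC hypothesis on $\phi$ does not apply to it directly.

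The missing idea is to stay at the fixed length $k$ and count per suffix, instead of building one combined class. Fix $w=w^x\otimes w^y$ (so $|w^x|=|w^y|$). For $|x|=|y|=k$, $\psi(x,c_w\varsep y)$ holds iff $A_\phi$ accepts $(x\cdot w^x)\otimes(y\cdot w^y)$. So for each $w$, the trace of $y$ on $\{(s_j^k,c_w)\}_j$ is the trace, at parameter $y\cdot w^y$, of the raw class $\mathcal R=\{\{t\in\Sigma_\Box^*: A_\phi \text{ accepts } t\otimes z\}: z\in\Sigma_\Box^*\}$ on the at most $n$ words $s_j^k\cdot w^x$. The class $\mathcal R$ has VC dimension at most the VC dimension $d$ of $\phi$. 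This holds because $A_\phi$ accepts only padded convolutions of encodings, and two encodings of the same element are never separated. Since the vector of per-suffix traces determines the reachable tuple, you get $|R_k|\le\bigl(\sum_{i\le d}\binom{n}{i}\bigr)^{|W|}$, a polynomial independent of $S$ and $k$.

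This is essentially the paper's argument. It injects reachable tuples, via $\tau$, into $h$-fold products of traces of $\mathcal R$ on $T=\{t_i\cdot u_j\}$, one factor per parameter half $v_u$. Your diagonal version, using only the pairs $(w^x,w^y)$, also suffices. Your fallback points in this direction, but it has to become this product count rather than a VC bound on a Boolean combination with different parameter continuations. Finally, your closing remark about Lemma~\ref{lem:small-representative-set} is beside the point: the paper uses that lemma only to bound $|W|$.
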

For a proof of this claim, fix $k \in [0..\ell]$ and let~$(t_1,\dots,t_n) \coloneqq
(s_1^k,\dots,s_n^k)$. 
Consider the function $\alpha \colon (\Sigma_\Box)^k \to Q^n_\phi$ given by
\[
    \alpha(y) \coloneqq \big({\delta_\phi}\!^*(i_\phi,(t_1,y)),\dots,{\delta_\phi}\!^*(i_\phi,(t_n,y))\big). 
\]    
We show that the image $\Img(\alpha)$ of $\alpha$ has at most polynomially
many elements with respect to $n$. 

Fix a set $W \coloneqq \{w_1,\dots,w_h\}$ of convolutions
$w_i \coloneqq u_i \otimes v_i$ that are sufficient to discriminate all states
in the automaton $A$: for every two states $p,q \in Q_\phi$, there is $j \in [1..h]$
such that $F({\delta_\phi}\!^*(p,w_j)) \neq F({\delta_\phi}\!^*(q,w_j))$. By~Lemma~\ref{lem:small-representative-set}, 
such a set can be taken so that~${h < \card{Q_\phi}}$. Without loss of generality,~we
assume $w_1,\dots,w_h$ are padded with~$\Box$ to a common length.

We define $\tau \colon Q_\phi^n \to \mathbb{B}^{n \cdot h \cdot h}$:
\[ 
    \tau(q_1,\dots,q_n) \coloneqq \langle F({\delta_\phi}\!^*(q_i,(u_j, v_u))) \colon i \in [1..n], j,u \in [1..h] \rangle.
\]
From the definition of $W$, it follows that $\tau$ is an injection.
Hence, $\Img(\alpha)$ has the same cardinality as $\Img(\tau \circ \alpha)$. 
For $y \in (\Sigma_\Box)^k$, observe that $(\tau \circ \alpha)(y)$ 
can be alternatively written as the product (i.e., the concatenation) of tuples
\begin{align}
    \!\!\prod\nolimits_{u = 1}^h \!\langle F({\delta_\phi}\!^*(i_\phi,(t_i \cdot u_j, y \cdot v_u))) \colon i \in [1..n], j \in [1..h] \rangle. 
    \label{ali:tau-tuples}
\end{align}
Let $T \coloneqq \{t_i \cdot u_j : i \in [1..n], j \in [1..h]\}$. The $u^{th}$ tuple in Equation~\ref{ali:tau-tuples} asserts, for every $t \in T$, 
whether the state ${{\delta_\phi}\!^*(i_\phi,(t_i \cdot u_j, y \cdot v_u))}$ 
is final. So, it can equivalently be represented as 
the set $\{t \in T : F({\delta_\phi}\!^*(i_\phi,(t,  y \cdot v_u))) = \textbf{true}\}$.
Then, $\Img(\tau \circ \alpha)$ can be alternatively seen as a subset of  
\[
    \left(\{\{t \in T : F({\delta_\phi}\!^*(i_\phi,(t, z))) =
    \textbf{true}\} : z \in (\Sigma_\Box)^* \}\right)^h.
\]
The fact that the formula
 $\phi(x \varsep y)$ has finite VC dimension implies we can fix $d$ as the VC-dimension of the concept class $\{\{t \in
(\Sigma_\Box)^* : F({\delta_\phi}\!^*(i_\phi,(t, z))) = \textbf{true}\} : z \in (\Sigma_\Box)^* \}$. From~Fact~\ref{fact:sslemma}, the cardinality of
$\Img(\tau \circ \alpha)$ is thus bounded by $\big(\sum_{i=0}^{d} {{n \cdot h}
\choose i}\big)^h \leq (3 \cdot n \cdot h)^{d \cdot h}$. Since the formula~$\phi$ is fixed,
so are the integers $h$ and $d$. 
Then, the polynomial~$g(n) \coloneqq (3 \cdot n \cdot
h)^{d \cdot h}$ shows the claim.
\end{proof}

\section{More details on Example \ref{ex:expandablerqc}: finite VC dimension but not $\rqfo$ fitting}\label{app:fragile}

In Example~\ref{ex:expandablerqc}, we mentioned that there are structures that
lack $\rqfo$ fitting, but where this could be repaired by expanding the
structure. In particular, we mentioned that this can be done for the equivalence
relation structure from Example \ref{ex:rqc}. Similar statements are made
in~\cite{mbehlics}, but not about fitting, and without full proofs. 
Below, we expand on Example~\ref{ex:expandablerqc}.

Let us consider~$\prespower$ from~Example~\ref{ex:nip}, i.e., the structure
$(\nats, +, <, 2^{\N}(x))$, where $(\nats,+,<)$ is Presburger arithmetic and
$2^{\N}(x)$ is the predicate that holds on powers of two. This structure is
known to have quantifier elimination in the expanded signature that adds the
function $\lambda$ mapping natural number to the largest power of two below
it~\cite{prespowernip}: $\lambda(n) \coloneqq 2^k$ such that $2^k \leq n <
2^{k+1}$, and $\lambda(0) = 0$. Together with the fact that it is possible to
evaluate in polynomial time the truth of any quantifier-formula (for a given
assignment of the variables), quantifier elimination implies $\ffte$. 

We claim that this structure is $\rqc$ hence has $\rqfo$ fitting by
Corollary~\ref{cor:rqcimpliesrqfoandptimefitting}. We proceed using the notion
of \emph{distality}, which dates to \cite{isolation}, where it is termed
``isolation''. 

Given a finite set $F$ of elements from the domain of a
structure~$\amodel$, and a variable-partitioned formula $\phi(x_1 \ldots x_j
\varsep  y_1 \ldots y_k)$ over~$\amodel$, a \emph{$\phi$-type} is a function
assigning to each $k$-tuple $\vec f$ from $F$ exactly one of the assertions
$\phi(\vec x, \vec f)$ and $\neg \phi(\vec x, \vec f)$. We will abuse notation
by identifying a type with the conjunction of formulas in its range. A
$\phi$-type over $F$ is \emph{consistent} if it is the type of some $\vec x_0$.
Given a $j$-tuple $\vec x_0$ in $\amodel$ its $\phi$-type over $F$ is the set of
formulas of the above form that hold of $\vec x_0$ in $\amodel$. An $L$ formula
$\eta(\vec x, \vec f)$ with parameters $\vec f$ from $F$ is said to
\emph{isolate} $\phi$-type $t( \vec x)$ over $F$ whenever $\amodel \models
\eta(\vec x, \vec f) \rightarrow t(\vec x)$. Such a formula $\eta$ is sometimes
called a \emph{strong honest definition} \cite{simon-chernikov-two,tongdistal}.

A structure $\amodel$ is \emph{distal} if for every $\phi(\vec x \varsep \vec y)$ 
over $\amodel$ there is $\eta(\vec x \varsep \vec u)$ over $\amodel$ such
that, for every finite set $F$, for every $\phi$-type $t(\vec x)$ over $F$ there
is $\vec f \in F$ such that $\eta(\vec x, \vec f)$ isolates $t(\vec x)$. 

The following facts are known:

\begin{fact}[\cite{isolation}] If a structure is distal, then it is $\rqc$.
\end{fact}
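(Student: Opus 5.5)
The proof goes by induction on formulas: every first-order formula in $L \cup \relvoc$ is shown equivalent, over embedded finite models, to an $\rqfo$ formula, by eliminating unrestricted quantifiers one at a time. We work bottom-up, as in the proof of Theorem~\ref{thm:rqcandrqfitting}. Since $\rqfo$ formulas are closed under Boolean combinations and restricted quantification, the only case to handle is
\[
\psi(\vec z) \coloneqq \exists y\, Q_1 u_1 \in \adom \ldots Q_k u_k \in \adom\ \gamma(y,\vec u,\vec z),
\]
where $\gamma$ is an $L$-formula and $y$ does not occur inside $\relvoc$-atoms. That normalisation is done exactly as in the proof of Theorem~\ref{thm:rqcandrqfitting}. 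Active domains of size $0$ or $1$ are treated as in Appendix~\ref{subsec:rqfitting-to-rqc:corners}, so we may assume $F \coloneqq \adom$ is finite and nonempty.

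The key step applies distality to the partitioned formula $\gamma(y \varsep \vec u, \vec z)$, or rather to each maximal $L$-subformula $\gamma_i(y \varsep \vec u,\vec z)$, combined into a single formula by the usual coding trick. Distality gives $\eta(y \varsep \vec v)$ such that for every finite set $F'$, every $\gamma$-type over $F'$ is isolated by some $\eta(y,\vec f)$ with $\vec f$ from $F'$.

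The complication is that the parameters $\vec u,\vec z$ mix active-domain elements with the free variables $\vec z$. I take $F' = F \cup \vec z$, which is still finite. Then every $y$ has its $\gamma$-type over $F'$ isolated by some $\eta(y,\vec f)$ with $\vec f \in (F\cup\vec z)^m$. Since the full type determines the truth value of $Q_1u_1\ldots Q_ku_k\,\gamma$, we obtain that $\psi(\vec z)$ is equivalent to
\[
\bigvee_{\text{patterns}} \exists \vec f \in (\adom\cup \vec z)^m\ \Big[\exists y\, \eta(y,\vec f)\ \land\ \forall y\,\big(\eta(y,\vec f) \rightarrow Q_1u_1\in\adom\ldots Q_ku_k\in\adom\ \gamma(y,\vec u,\vec z)\big)\Big].
\]
Here ``patterns'' means a finite disjunction over which coordinates of $\vec f$ are taken from $\vec z$ versus $\adom$, so the quantifier over $\vec f$ becomes a restricted one. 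The forward direction uses the isolating formula for a witness $y$. The backward direction uses consistency, i.e.\ $\exists y\,\eta$.

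The remaining unrestricted $\forall y$ still scopes over restricted quantifiers, and removing it is the main obstacle. My first attempt is to swap the quantifiers, using that $\eta(y,\vec f)$ isolates a complete type over $F'$. Every $y$ satisfying $\eta(y,\vec f)$ has the same truth value for each instance $\gamma(y,\vec u,\vec z)$ with $\vec u \in F^k$. Hence
\[
\forall y\,\big(\eta \rightarrow Q\vec u\,\gamma\big)\ \equiv\ Q\vec u \in \adom\ \forall y\,\big(\eta(y,\vec f)\rightarrow\gamma(y,\vec u,\vec z)\big),
\]
valid given $\exists y\,\eta(y,\vec f)$. Now $\forall y(\eta\rightarrow\gamma)$ is a pure $L$-formula in the free variables $\vec u,\vec z,\vec f$, so the whole expression is $\rqfo$. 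The step to check carefully is exactly this commutation: it requires isolation of the whole $\gamma$-type over $F'$, not just part of it, and this is why $\vec z$ must be included in $F'$. Iterating over all unrestricted quantifiers then yields $\rqc$.
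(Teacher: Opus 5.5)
The paper does not prove this fact; it only cites \cite{isolation}. Your argument is the standard ``isolation implies collapse'' argument and is sound in outline. It differs instructively from the paper's own collapse proof for Theorem~\ref{thm:rqcandrqfitting}. There, \udtfs only gives a uniform definition $\delta(\vec u,\vec p)$ of the trace of a witness $y$ on the active domain, and the paper needs one-parameter \rqfo fitting to decide whether some $y$ realizes that trace. An isolating formula does both jobs at once: $\exists y\,\eta(y,\vec f)$ certifies realizability as a pure $L$-formula, and $\forall y\,(\eta(y,\vec f)\rightarrow\gamma_i(y,\vec u,\vec z))$ defines the trace, so no extra hypothesis is needed. (You can also drop the ``patterns'' by taking $(y,\vec z)$ as the object tuple, as the paper does with \udtfs; the isolating parameters then come from $\adom$ alone.)

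Two steps need repair.

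\emph{The commutation step.} Your equivalence is justified only for those $\vec f$ for which $\eta(\cdot,\vec f)$ isolates a complete type over $F'$. In your formula, however, $\vec f$ ranges over all tuples from $\adom\cup\vec z$, and for a non-isolating $\vec f$ the two sides can differ. With $Q\vec u=\exists u$, suppose every realization of $\eta(\cdot,\vec f)$ satisfies $\gamma$ at some $u\in\adom$, but no single $u$ works for all of them. Then $\forall y\,(\eta\rightarrow\exists u\,\gamma)$ holds while $\exists u\,\forall y\,(\eta\rightarrow\gamma)$ fails. So you cannot substitute equivalents under $\exists\vec f$. Instead, verify the final formula $\bigvee_{\text{patterns}}\exists\vec f\,[\exists y\,\eta(y,\vec f)\land Q\vec u\in\adom\ \forall y\,(\eta(y,\vec f)\rightarrow\gamma)]$ directly.
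For ``$\Rightarrow$'', use the isolating $\vec f$ of a witness $y_0$. This needs $\amodel\models\eta(y_0,\vec f)$, which is part of the standard notion of strong honest definition, although the paper's wording of ``isolates'' omits it.
For ``$\Leftarrow$'', take any $y_1$ with $\amodel\models\eta(y_1,\vec f)$. Then $\forall y\,(\eta\rightarrow\gamma(y,\vec u,\vec z))$ implies $\gamma(y_1,\vec u,\vec z)$ for each $\vec u$, and monotonicity of $Q\vec u$ gives $Q\vec u\,\gamma(y_1,\vec u,\vec z)$. No isolation is needed in this direction.

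\emph{The $\relvoc$-atoms.} The formula $\forall y\,(\eta\rightarrow\gamma)$ is not an $L$-formula when $\gamma$ contains $\relvoc$-atoms, which it generally does after prenexing. Your own step of applying distality to the maximal $L$-subformulas $\gamma_i$ concedes this, so ``$\gamma$ is an $L$-formula'' in your setup is inaccurate. Since those atoms do not mention $y$, write $\gamma\equiv\bigvee_\sigma(\sigma(\vec u,\vec z)\land\gamma^\sigma(y,\vec u,\vec z))$, where $\sigma$ ranges over complete truth assignments to the $\relvoc$-atoms and each $\gamma^\sigma$ is a Boolean combination of the $\gamma_i$. Then $\forall y\,(\eta\rightarrow\gamma)\equiv\bigvee_\sigma(\sigma\land\forall y\,(\eta\rightarrow\gamma^\sigma))$, which is \rqfo.

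With these two repairs the induction goes through.
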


\begin{fact}[\cite{tongdistal}] $\prespower$ is distal.
\end{fact}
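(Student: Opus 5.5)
The statement to prove is the last Fact: $\prespower$ is distal. The argument follows the strong honest definition strategy. Fix a partitioned formula $\phi(\vec x \varsep \vec y)$. I will produce one formula $\eta(\vec x \varsep \vec u)$ such that every $\phi$-type over a finite $F$ is isolated by some instance $\eta(\vec x,\vec f)$ with $\vec f$ from $F$.

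First, reduce to simple formulas. By the quantifier elimination for $\prespower$ in the signature with $\lambda$ recalled above, we may assume $\phi$ is a Boolean combination of atoms of the following kinds:
\begin{itemize}
\item linear inequalities and congruences among terms;
\item power-of-two tests on terms;
\item terms built with iterated $\lambda$.
\end{itemize}
Distality is preserved under Boolean combinations: take conjunctions of the isolating formulas, with tuples of parameters. It is also preserved under one-variable reduction, since distality may be checked with $\vec x$ a single variable. So it suffices to treat a single atom $\alpha(x \varsep \vec y)$ with one object variable.

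Second, the core combinatorial step is to describe a $\phi$-type over $F$ by the ``cell'' containing $x$. Terms in $\vec y$ evaluated on $F$ give finitely many cut points $c = t(\vec f)$. For Presburger atoms, the type of $x$ is determined by three data:
\begin{itemize}
\item the position of $kx$ relative to these cuts, i.e.\ the nearest cut below and above;
\item the residue of $x$ modulo a fixed $N$;
\item finitely many equalities.
\end{itemize}
This cell is defined by a formula whose parameters are at most two elements of $F$ per term, namely the bounding cuts. This is the standard o-minimal-style isolation for Presburger arithmetic.

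For the atoms involving $\lambda$ and $2^{\N}$, we additionally record the following:
\begin{itemize}
\item the cell of $\lambda(x)$ relative to the values $\lambda(t(\vec f))$, which is again a linear order of cuts;
\item the residue data;
\item the position of $x-\lambda(x)$, which lies in $[0,\lambda(x))$, relative to cuts.
\end{itemize}
Because $2^{\N}$ is sparse, the key observation is that any term $a x + t(\vec f)$ with $\lambda$ applied stabilizes once $\lambda(x)$ is far from the cut values. For example, $\lambda(x + c) = \lambda(x)$ or $2\lambda(x)$ when $c < \lambda(x)$. So the finitely many ``near'' cases are pinned by $\lambda$-cells, and the far cases are uniform. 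The result is that $\eta$ is a conjunction of a bounded number of interval, congruence and $\lambda$-interval conditions, with a bounded number of parameters from $F$.

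Third, verify isolation. Take any $x'$ satisfying $\eta(x,\vec f)$. Using the normal form of atoms, show that $x'$ satisfies the same atoms $\alpha(x,\vec f')$ for all $\vec f'$ in $F$. This needs the uniform bound on how much $\lambda$-nesting can occur in the fixed formula.

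The main obstacle is the second step: controlling nested $\lambda$-terms such as $\lambda(x - \lambda(x) + c)$ uniformly in $F$. My first attempt would be to use Tong's normal form for $\prespower$, following~\cite{tongdistal}, which expresses each $\lambda$-term in $x$ as a piecewise-linear function of $x$ together with $\lambda^{(i)}$-iterates of $x$ on definable cells. Induction on $\lambda$-depth then yields the cell decomposition.
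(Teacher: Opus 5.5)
You have not proved the Fact: the one step that carries its content is missing, and the mechanism you sketch for it fails. The paper itself gives no proof here, only the citation to \cite{tongdistal}.

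Your preliminary reductions are fine: quantifier elimination in the language with $\lambda$, closure of strong honest definitions under Boolean combinations, and reduction to a single object variable. The last is a genuine theorem, obtained via the indiscernible-sequence characterization of distality, and should be cited rather than called a preservation property. The gap is your second step. The data you record for $\lambda$-atoms are: the cuts of $x$, of $\lambda(x)$ and of $x-\lambda(x)$ among values of terms at $F$, plus residues modulo a fixed $N$. These involve only terms in $x$ alone, and they do not determine the type.

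Here is a counterexample. Take $\phi(x \varsep y_1,y_2) \coloneqq \forall p\,\big(2^{\N}(p)\rightarrow(p+y_1\le x\leftrightarrow p+y_2\le x)\big)$. For $y_1<y_2<x$ this says $\lambda(x-y_1)=\lambda(x-y_2)$. Let $F=\{c,c+1\}$ with $c=2^K$. For $x=c+u$ with $u\ge 2$, $\phi(x,c,c+1)$ holds iff $u\notin 2^{\N}$. Now let $x_1=2^K+2^m$ and $x_2=2^K+2^m+N2^j$, with $j\ll m\ll K$ chosen according to the finitely many terms in your $\eta$.
Terms of bounded complexity evaluated at $(c,c+1)$ only give values within bounded distance of bounded combinations of powers $2^e$ with $|e-K|$ bounded. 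So no such value separates $x_1$ from $x_2$. None separates $x_1-\lambda(x_1)=2^m$ from $x_2-\lambda(x_2)=2^m+N2^j$ either. Moreover $\lambda(x_1)=\lambda(x_2)$ and $x_1\equiv x_2 \pmod N$. Hence $x_1$ and $x_2$ satisfy exactly the same conditions of your kind, for every choice of parameters from $F$. Yet $\neg\phi(x_1,c,c+1)$ holds while $\phi(x_2,c,c+1)$ holds. If $c$ instead has more widely spaced binary digits than the size of the terms used, the same failure should occur for any fixed finite family of $x$-only terms. That includes the iterates $\lambda^{(i)}(x)$ you invoke at the end.

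The error is your ``key observation''. When $\lambda(x)=\lambda(c)$ for some $c\in F$, the difference $x-c$ can sit at any scale below $\lambda(x)$. So $\lambda(x-c)$ takes unboundedly many values, and the near cases are neither finitely many nor pinned down by $\lambda$-cells of $x$. An isolating formula must therefore use mixed terms such as $\lambda(x-f)$ and $(x-f)-\lambda(x-f)$ for a well-chosen $f\in F$ (for instance, the nearest relevant cut below $x$), compared against terms in elements of $F$. It must then iterate this through nested $\lambda$-terms. Showing that boundedly many such parameters and terms always suffice, uniformly in $F$, is precisely the substance of Tong's theorem. Your last paragraph delegates exactly this to ``Tong's normal form''. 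So the proposal reduces to the same citation the paper uses, and the replacement mechanism you outline does not work as stated.
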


Therefore,~$\prespower$ is~\rqc, and has~\rqfo fitting (by
Corollary~\ref{cor:rqcimpliesrqfoandptimefitting}).

Consider now the structure on the natural numbers with only a binary relation
$E(x,y)$, interpreted as the equivalence relation that holds of $n, n'$ exactly
when $\lambda(n)=\lambda(n')$. This is an equivalence relation with classes of
unbounded size. Using this we can easily see that the formula $\phi(x \varsep
y)$ given by $E(x,y) \wedge y \neq x$ does not have an $\rqfo$ fitting problem.
The reason is that, given a sample of positive examples~$S^+$ only, the exact
fitting problem for $\phi$ asks to decide whether $S^+$ is a complete
equivalence class. However, $\rqfo$ sentences can quantify only over the sample
itself, making it impossible to assert the existence of elements outside the
sample ---at least,  not when the only relation of the structure is $E$. This
can be formally shown using a standard Ehrenfeucht-Fra\"iss\'e game argument,
proving that no \rqfo sentence can distinguish positive examples containing an
entire $E$ class from the ones that do not. But as explained in
Example~\ref{ex:expandablerqc}, this structure can be expanded to~\prespower,
and thus it has $\ptime$ approximate fitting.

\forarxiv{\section{Variant of Example \ref{ex:expandablerqc}: Fitting for logics between
$\ptime$ and $\rqfo$} \label{app:lfp}

Recall that we studied $\rqfo$ fitting, which implies $\ptime$ fitting, but is
strictly stronger: in Example~\ref{ex:expandablerqc} we presented a structure
that had $\ptime$ fitting but not $\rqfo$ fitting. 
One could try to address this gap by  capturing fitting methods in logics over
the sample that are richer than $\rqfo$,  logics that are contained in $\ptime$,
such as transitive closure logic or least fixpoint logic.  This will not help
with the last example from Example \ref{ex:rqc}, since it is not expressible in
any traditional logic over the active domain. 

There are other examples where one can use larger
$\ptime$ logics to solve the fitting problem, but where we still do not have
$\rqfo$ fitting. 

\begin{proposition} There is a structure~$\amodel$ with $\nip$ and a variable
partitioned  formula $\varphi(\vec x \varsep \vec y)$ that does not have $\rqfo$
fitting, but where the fitting problem for $\varphi$ is not only in $\ptime$,
but is expressible as a fixpoint logic formula over the sample (using
$L$-formulas and relations $S^+$, $S^-$ as atoms).
\end{proposition}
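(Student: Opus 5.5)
We will exhibit an $\nip$ structure in which fitting for one formula amounts to a reachability question over the sample. First-order logic over the active domain cannot express reachability, but least fixpoint logic can. The plan is to take a structure that encodes a successor-like relation whose fitting requires checking connectivity.

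\textbf{The structure.} Let $\amodel = (\N, E)$, where $E$ is an equivalence relation whose classes are infinite, and each class additionally carries a successor function $s$ making it a copy of $(\N,s)$ (or of $(\ints,s)$), so $\amodel$ is a disjoint union of infinitely many copies of $(\ints,s)$. This structure is $\nip$: it is a unary-function structure with quantifier elimination, in fact superstable, so every partitioned formula has finite VC dimension. It also has $\ffte$ with natural effective constants.

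\textbf{The formula.} Take $\varphi(x_1,x_2 \varsep y) \coloneqq E(x_1,y)$ with a two-variable object. The positive examples are pairs $(a,b)$, and the negative examples are pairs $(a,b)$ as well. A cleaner choice is to let positive examples be edges $(a,b)$ and to use a formula $\varphi(x_1,x_2\varsep y)$ saying ``$E(x_1,y)\leftrightarrow E(x_2,y)$''. Fittability then asks whether there is a class that is closed under the positive edges: for the positive edges, $x_1$ and $x_2$ must lie on the same side of the class of $y$, and for the negative edges on opposite sides. Since the structure has infinitely many classes and $E$ itself is not accessible on the sample, the condition requires that in the graph on sample elements, with positive edges meaning ``same side'' and negative edges meaning ``different side'', the resulting 2-coloring constraint be consistent with the true $E$-classes. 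In fact the intended design is that $E$ is not in the signature, only $s$, so that elements of the sample are connected through $s$ only across arbitrarily long chains. Consistency of this 2-coloring constraint system, which is bipartiteness of a signed graph and hence a parity-of-cycles condition, is expressible in LFP (even in symmetric transitive closure with a parity trick on the doubled graph). It is therefore in $\ptime$ by Immerman--Vardi.

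\textbf{Failure of $\rqfo$ fitting.} We show that $\rqfo$ fitting fails by a reduction argument in the style of the proof of Lemma~\ref{lem:RQfittingImpliesFiniteVC}. Choose samples lying in a single class where $\amodel$ looks like $(\ints,s)$, so that $L$-formulas on sample elements reduce to order/distance constraints. Using Proposition~\ref{prop:emb-preserved-nonstandard}, pass to a model with widely spaced indiscernible elements, so that the $L$-atoms among the sample become trivial. An $\rqfo$ sentence then collapses to a pure first-order sentence over the finite signed graph $(P,N)$, and first-order logic cannot define bipartiteness or connectivity of such graphs (by Hanf locality or an Ehrenfeucht--Fra\"iss\'e game comparing one long cycle of odd versus even length with two cycles). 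This is a contradiction.

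The main obstacle is the collapse step: we must ensure that in the chosen (nonstandard) model, the quantified $L$-subformulas of the $\rqfo$ sentence, evaluated on sample elements, carry no information beyond equality. This is where we would pick a structure with quantifier elimination in $s$ alone and place the sample elements at pairwise infinite distance; consistency of the coloring must then depend only on $P$ and $N$. We also need to double-check that fittability is exactly signed-graph bipartiteness for such samples, which may require adjusting $\varphi$.
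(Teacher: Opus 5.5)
Your construction has a genuine gap. The structure you propose has $\rqfo$ fitting for \emph{every} formula, so no choice of $\varphi$ over it can witness the statement.

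Because all $E$-classes are infinite and there are infinitely many of them, the structure eliminates quantifiers in $\{E,s,s^{-1}\}$. So in a one-parameter formula $\phi(\vec x\varsep y)$ the parameter enters only through atoms equivalent to $y=s^k(x_i)$ with $|k|\le K$ (for some $K$ depending on $\phi$), or to $E(y,x_i)$. Given a sample, any witness $y$ can therefore be replaced by one of three kinds of element:
\begin{itemize}
\item $s^k(a)$ with $a\in\adom$ and $|k|\le K$;
\item an element in the class of some $a\in\adom$ at distance greater than $K$ from the whole sample;
\item an element of a class disjoint from the sample.
\end{itemize}
The last two always exist. Each case is an $\rqfo$ condition: quantify $a$ over $\adom$ and substitute for the $y$-atoms. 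Hence all one-parameter formulas have $\rqfo$ fitting, and the structure is $\rqc$ by Theorem~\ref{thm:rqcandrqfitting}.

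For your $\varphi(x_1,x_2\varsep y)\coloneqq E(x_1,y)\leftrightarrow E(x_2,y)$ this is completely explicit. A sample is fittable iff $S^-=\emptyset$ (put $y$ in a class avoiding the sample), or some $u\in\adom$ works as $y$ (a negative pair forces the class of $y$ to meet the sample, and $\varphi$ depends only on that class). Note that $\rqfo$ sentences may apply $E$ to sample elements, so $E$ is not ``inaccessible''.

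Removing $E$ from the signature does not help. ``Same copy of $\ints$'' is then not definable from $s$, since definable binary relations are Boolean combinations of $x=s^k(y)$. So $\varphi$ is not a formula of the structure, and the argument above still applies.

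Your lower-bound sketch fails for the same reason. On samples of pairwise far-apart indiscernible elements, fittability of $\varphi$ is itself a trivial first-order property of $(P,N)$, not bipartiteness. More fundamentally, letting the parameter choose a 2-coloring of the sample is exactly the independence property. On a fixed vertex set $V$, fitting all complete signed graphs would require $2^{|V|-1}$ distinct traces of $\varphi(\cdot,\cdot\varsep y)$ on $V\times V$, which contradicts Fact~\ref{fact:sslemma} in an $\nip$ structure.

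The missing idea is that, in an $\nip$ structure, failure of $\rqfo$ fitting should come from fittability depending on whether a witness exists \emph{outside} the sample. That requires definable finite sets of unbounded size, not infinite classes, where a fresh witness always exists.

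The paper uses an equivalence relation $E$ with finite classes $E_n$ of growing size, and $\varphi(x\varsep y)\coloneqq E(y,x)\wedge y\neq x$. After some easy $\rqfo$ checks, a sample is fittable iff $S^+$ does not exhaust its $E$-class, and an Ehrenfeucht--Fra\"iss\'e argument shows no $\rqfo$ sentence can decide this.

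To make the check fixpoint-definable, each $E_n$ is split into classes of a second equivalence relation $F$ of sizes $1,\dots,n$. Each $F$-class is linearly ordered by $\preceq$, and predicates $Min_F$ and $Max_F$ mark the smallest and largest $F$-class. Partially covered $F$-classes and missing extreme $F$-classes are detected by $\rqfo$ tests. A missing intermediate $F$-class is detected by comparing cardinalities of the $\preceq$-ordered $F$-classes present in $S^+$, which least fixpoint logic can do.
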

 
\begin{proof} We use a variation of an example from \cite{mbehlics}. The only
thing we will use about least fixpoint logic is that it can compare the
cardinality of two unary predicates in an ordered structure. This follows from
the Immerman-Vardi theorem \cite{immermanvardi}, but it can also be shown
through simple programming.

We now describe the structure $\amodel$. It includes two equivalence relations
$E$ and $F$, a partial order $\preceq$, and unary predicates $Min_F$ and
$Max_F$. The equivalence relation~$E$ has classes $E_n$ of unbounded finite size
as $n \geq 1$ increases. Each equivalence class for $E$ is subdivided into
classes of the second equivalence relation $F$. The class $E_n$ is subdivided
into $F$ classes of size $1 \ldots n$. Therefore, the total size of $E_n$ is
$\Sigma_{i \leq n} i$. Every equivalence class in $F$ is linear ordered by
$\preceq$. Given $n \in \N$ positive, and $j \in [1..n]$, we let $F_{nj}$ denote
the unique $F$-class contained in the $E$-class $E_n$ with cardinality $j$. The
predicate $Min_F$ holds of elements that are in the smallest cardinality
$F$-class within their $E$-class: that is, the elements in $F_{n1}$ for some $n
\geq 1$. And similarly $Max_F$ holds of elements that are in $F_{nn}$ (the
largest $F$-class in $E_n$), for some $n \geq 1$.

Now consider $\phi(x \varsep y)= E(y, x) \wedge y \neq x$. Exactly as for the
structure described at the end of Appendix~\ref{app:fragile}, a standard
Ehrenfeucht-Fra\"iss\'e game argument shows that $\phi$ does not have $\rqfo$
fitting. Intuitively, it is indeed again not possible to write an \rqfo sentence
that distinguishes samples with positive examples containing an entire $E$-class
from the ones that do not.

We next claim that the fitting problem for $\phi$ can be expressed in least
fixedpoint logic. Given a sample consisting of positive examples~$S^+$ and
negative examples~$S^-$, let us characterize when it has a fitting. For there to
be a fitting, it must be that $S^+$ contains only examples from one $E$-class.
The set $S^-$ must be disjoint from $S^+$, and can contain at most one element
of the $E$-class that includes $S^+$. All these properties can be checked with
an $\rqfo$ sentence. Assuming they hold, then there is a way to fit the sample
if and only if the positive examples do not cover an entire $E$-class. If this
holds because there is a (single) element in $S^-$ from that $E$-class, then it
can be detected with an \rqfo formula. The same is true if the positive examples
contain only part of an $F$-class: the \rqfo formula asks if there is an $x \in
S^+$ such that the $\preceq$ successor of $x$ exists and is not in $S^+$, or
such that the $\preceq$ predecessor of $x$ exists and is not in $S^+$.
Therefore, we can assume that $S^-$ does not contain any element from the
$E$-class of $S^+$, and that $S^+$ is \emph{$F$-saturated}: it contains either
all of an $F$-class or none of it.

The remaining possibility is more problematic: there is an entire $F$-class
missing. If it is the min or the max class, we can check this using $Min_F$ or
$Max_F$. Otherwise, for some $n \geq 1$, we have the entire class $F_{nj}$, the
entire class $F_{nk}$ for $k>j+1$, but we are missing some intermediate class
$\ell$ with $j<\ell<k$. Note that $F_{nj}$ has size $j$ and, by construction,
$F_{nk}$ has size $k>j+1$. Thus it suffices to write a formula
$\phi_{\textbf{gap}} (x,y, P)$ (where $P$ is to be interpreted as $S^+$) that
holds on $x,y \in S^+$ exactly when the intermediate cardinalities for the
$F$-classes that are represented in $S^+$ never lie strictly between the
cardinality of the $F$-classes of $x$ and the $F$-classes of $y$. Since the
$F$-classes of $S^+$ are $F$-saturated, this holds of $x, y \in S^+$ exactly
when:
\begin{align*}
\forall z \in S^+  ~ \neg (  ~  | ~ \{  ~ z' \in S^+ ~ | ~ F(x,z') ~ \}  ~ | <  |\{ ~ z' \in S^+ ~| ~ F(z,z') ~ \} ~ | < | ~ \{  ~ z' \in S^+ ~ | ~ F(y,z') ~ \} ~ | ~)
\end{align*}

But since the $F$-classes are ordered by $\preceq$ we can perform this
computation in fixed point logic.
\end{proof}

Note that, as with the example from Appendix \ref{app:fragile}, this structure
is expandable to one with $\rqfo$ fitting, just by extending $\preceq$ to a
global order in which every $E$ class is an interval. We also acknowledge that
the structure in the example is constructed expressly for this purpose. We are
not aware of any natural structure where the exact fitting problem for definable
concept classes can be solved by reducing it to a sentence in fixpoint logic,
except in cases where the reduction follows trivially from $\rqfo$ fitting.
}

\section{Proof of Lemma~\ref{lemma:piecewise-functions-reduction}: Reducing
approximate fitting for piecewise functions to deciding an existential
formula}\label{app:lemmapiecewisefunctionreduction}

We recall the statement of Lemma~\ref{lemma:piecewise-functions-reduction} from
Section~\ref{sec:functions}:

\piecewisefunctionreduction*

\begin{proof}
    Let $\phi(\vec x \varsep z \varsep \vec y) \coloneqq {\bigvee_{i=1}^m
    (\psi_i(\vec x, \vec y) \land z = \ell_i(\vec x, \vec y))}$, 
    and $S = (\vec s_1,t_1),\dots,(\vec s_n,t_n)$.

    Let $\gamma(\vec x,z,u,v \varsep \vec y)$ be the partitioned formula 
    \begin{equation}\label{formula:for-nip-step}
        \bigvee\nolimits_{i=1}^m \bigvee\nolimits_{j \in \{0,1\}} (u = i \land v = j \land \psi_i(\vec x, \vec y) \land (\ell_i(\vec x, \vec y) \sim_j z)).
    \end{equation}
    over $\amodel$, 
    where $\sim_0$ denotes $\geq$ and $\sim_1$ denotes $<$. As $\amodel$ is~\nip
    and $\gamma$ has \ptime $0$-fitting, by Lemma~\ref{lem:computealltypes}, we
    can compute in polynomial time the set of all the $\gamma$ realizable
    partitions for~$\{((\vec s_k,t_k,i,j) : k \in [1..n], i \in [1..m], j \in
    \{0,1\})\}$. Each realizable partition splits the set into  positive
    examples and  negative examples. Let $\mathcal{T}$ be the set of all sets of
    positive examples of these partitions.

    Consider a set $T$ of positive examples from $\mathcal{T}$. Note that, for every
    $k \in [1..n]$, there is exactly one $i \in [1..m]$ and $j \in \{0,1\}$ such
    that $(\vec s_k,t_k,i,j)$ belongs to $T$. Define $\gamma^*$ as
    \[ 
        \gamma^*(\vec y) \coloneqq \bigvee\nolimits_{T \in \mathcal{T}} \bigwedge\nolimits_{(\vec s,t,i,j) \in T} (\psi_i(\vec s, \vec y) \land (\ell_i(\vec s, \vec y) \sim_j t)).
    \]
    By definition of $\mathcal{T}$, we have $\amodel \models \forall \vec y
    \,\gamma^*(\vec y)$. Therefore, the hypothesis class $\hypoclass_{f_\phi}$ \mbox{$\epsilon$-fits} $S$
    if and only if $\exists \vec y\, (\gamma^*(\vec y) \land \sum_{i=1}^n
    |f_\phi(\vec s_i,\vec y) - t_i| \leq \epsilon)$. By distributing the
    summation over the disjunctions of $\gamma^*$, and using the signs given by
    the constraints~${\ell_i(\vec s, \vec y) \sim_j t}$, we can remove all
    absolute values, obtaining the equivalent formula $\exists \vec y \phi'(\vec y)$, where
    \begin{align*}
        \phi'(\vec y) \coloneqq \bigvee\nolimits_{T \in \mathcal{T}} \big(&\,\epsilon \geq {\textstyle\sum\nolimits_{(\vec s,t,i,j) \in T} (-1)^j \cdot (\ell_i(\vec s, \vec y) - t)} \land \bigwedge\nolimits_{(\vec s, t, i,j) \in T}(\psi_i(\vec s, \vec y) \land \,\ell_i(\vec s, \vec y) \sim_j t\,)\big).
    \end{align*}
    The algorithm returns~$\phi'$.
\end{proof}

\section{Proof of Theorem~\ref{thm:approx-fitting-piecewise-reals-and-PA}:
\ptime approximate fitting for piecewise function classes}
\label{app:approximatefittingpiecewise} %
Recall that a formula $\phi(\vec x \varsep z \varsep \vec y)$ is called \emph{piecewise} 
whenever it defines a piecewise function~$f_\phi$, as defined in Section~\ref{subsec:approximate-piecewise}. 
We recall the statement of
Theorem~\ref{thm:approx-fitting-piecewise-reals-and-PA}:

\ThmApproxFittingPiecewiseRealAndPa*

\myparagraph{Case: Real ordered field}
We start by proving the last statement of the theorem. Let $\hypoclass_{f_\phi}$
be a piecewise hypothesis class, where $\phi(\vec x \varsep z \varsep \vec y )$
is a formula over the real ordered field of the form
\begin{equation}
    \bigvee\nolimits_{i=1}^m \psi_i(\vec x, \vec y) \land z = \ell_i(\vec x, \vec y).
\end{equation}
Since the real ordered field has quantifier elimination (in the
structure~$(\R,0,1,+,\cdot,<)$), we can assume without loss of generality that
$\phi$ is quantifier-free. 

For a given sample $S$ and a tolerance $\epsilon$, by
Lemma~\ref{lemma:piecewise-functions-reduction}, we can compute in polynomial
time a formula $\phi'(\vec y)$ such that $\hypoclass_{f_\phi}$
\mbox{$\epsilon$-fits} $S$ if and only if $\exists \vec y\,\phi'(\vec y)$ is
true over the real ordered field. Since $\phi$ is quantifier-free, so is
$\phi'$. Moreover, $\phi'$ has a fixed number of variables. The statement then
follows from the fact that satisfiability of existential formulas with a fixed
number of variables can be decided in polynomial time over the real ordered
field. More precisely, Theorem 1.1 of~\cite{Renegar92} shows that deciding
whether a sentence from the real ordered field is valid can be done in time
$L^{O(1)} \cdot (md)^{2^{O(\omega)}\cdot \Pi_k n_k}$ where:
\begin{itemize} 
\item $L$ is the maximum bit size of coefficients, $m$ is the number of atomic
formulas, and $d$ is the maximum degree of the polynomials (in our case, these
quantities are polynomial in the size of the sample and on the tolerance),
\item $\omega$ corresponds to the number of quantifier alternations (in our
case, $\omega = 1$), and
\item $n_k$ is the number of variables in the $k^{th}$ quantifier block (in our
case, $n_1$ is the number of variables in $\vec y$, which is fixed).
\end{itemize}

\myparagraph{Case: Real ordered group}
Consider now a function class $\hypoclass_{f_\phi}$, where $\phi$ is a
formula over the real ordered group. The real ordered group has quantifier
elimination, in the signature for $0,1,+,<$, %
so we can assume without loss of
generality that $\phi$ is quantifier-free, and of the form 
\[ 
    \bigvee\nolimits_{i=1}^m \gamma_i(\vec x \varsep z \varsep \vec y), 
    \quad \text{where}  \quad \gamma_i(\vec x \varsep z \varsep \vec y) \coloneqq 
    \Big(
        {\textstyle 
        A_i \cdot 
        \begin{bmatrix}
            \vec x \\
            z \\
            \vec y
        \end{bmatrix} < \vec b_i 
        \land 
        C_i \cdot 
        \begin{bmatrix}
            \vec x \\
            z \\
            \vec y
        \end{bmatrix} \leq \vec d_i}
    \Big),
\]
and for every $i \neq j \in [1..m]$ the formula $\gamma_i \land \gamma_j$ is
unsatisfiable. Here, $A_i$ and $C_i$ are matrices of integer coefficients, and
$\vec b_i$ and $\vec d_i$ are vectors of integer coefficients.

By hypothesis, $\phi$ defines a function from the variables $\vec x$ and $\vec y$ 
to the variable $z$. This implies that, for every $i \in [1..m]$, the formula
$\gamma_i$ implies an equality of the form $\vec g_i \cdot (\vec x, \vec y)
+ h_i \cdot z = k_i$, where $\vec g_i$ is a vector of integer coefficients,
and $h_i > 0$ and $k_i$ are integer coefficients. Adding divisibility by
positive integers to the language (which does not change the class of definable
sets), we can rewrite $\gamma_i$ as 
\[
    {\textstyle 
    A_i' \cdot 
    \begin{bmatrix}
        \vec x \\
        \vec y
    \end{bmatrix} < \vec b_i' 
    \land 
    C_i' \cdot 
    \begin{bmatrix}
        \vec x \\
        \vec y
    \end{bmatrix} \leq \vec d_i' 
    \land z = \frac{\vec g_i \cdot (\vec x, \vec y) + k_i}{h_i}
    },
\]
where $A_i'$, $C_i'$, $\vec b_i'$ and $\vec d_i'$ are obtained from $A_i$,
$C_i$, $\vec b_i$ and $\vec d_i$ by substituting $z$ with $\frac{\vec g_i \cdot (\vec x, 
\vec y) + k_i}{h_i}$ and suitably multiplying the resulting entries to remove
denominators. We thus have translated $\phi$ into an equivalent piecewise
formula. The proof then proceeds as in the case of the real ordered field,
observing that the divisibility operations we added can be removed from the formula
$\phi'$ obtained through Lemma~\ref{lemma:piecewise-functions-reduction} by
simply multiplying terms featuring them by a suitable positive integer.

\myparagraph{Case: Presburger arithmetic}
Consider a function class $\hypoclass_{f_\phi}$, where $\phi$
is a formula over Presburger arithmetic. Presburger arithmetic has quantifier
elimination in the structure ${(\mathbb{Z}, 0, 1, +, \{d \mid \cdot\}_{d \geq 1},
<)}$, where $d \mid x$ is a predicate that holds if and only if $x$ is divisible
by $d$. Without loss of generality, we can assume that $\phi$ is
quantifier-free, and of the form 
\[ 
    \bigvee\nolimits_{i=1}^m \gamma_i(\vec x \varsep z \varsep \vec y), 
    \quad \text{where}  \quad \gamma_i(\vec x \varsep z \varsep \vec y) \coloneqq 
    \Big(
        {
        A_i \cdot 
        \begin{bmatrix}
            \vec x \\
            z \\
            \vec y
        \end{bmatrix} \leq \vec b_i 
        \land 
        \bigwedge_{j=1}^{n} d_{i,j} \mid \vec c_{i,j} \cdot (\vec x, z, \vec y)
        } + r_{i,j}
    \Big),
\]
where all variable coefficients and constants are integers, and each $d_{i,j}$
is positive. Furthermore, by taking the least common multiple $M$ across all
constraints involving divisibility operations, and by guessing a remainder modulo $M$ for each
variable, we can assume without loss of generality that each divisibility
constraint $d_{i,j} \mid \vec c_{i,j} \cdot (\vec x, z, \vec y) + r_{i,j}$ is of
the form $M \mid w + r$, where $w$ is a variable and $r \in [0..M-1]$, and that
so that each formula $\gamma_i$ has exactly one divisibility constraint for each
variable in $\phi$.

By hypothesis, $\phi$ defines a function from the variables $\vec x$ and $\vec y$ 
to the variable $z$. This implies that, for every $i \in [1..m]$, $\gamma_i$
implies a (finite) union of equalities of the form~$\bigvee_{j=1}^{t_i} (\vec g_{i,j}
\cdot (\vec x, \vec y) + h_{i,j} \cdot z = k_{i,j})$, where $\vec g_{i,j}$ is a
vector of integer coefficients, and $h_{i,j} > 0$ and $k_{i,j}$ are integer
coefficients.
Observe that each equality $\vec g_{i,j} \cdot (\vec x, \vec y) +
h_{i,j} \cdot z = k_{i,j}$ implies the divisibility constraint $h_{i,j} \mid
\vec g_{i,j} \cdot (\vec x, \vec y) - k_{i,j}$. Again without loss of generality
(as we can further increase $M$ and guess new remainders for each variable), \emph{we
can assume that $h_{i,j}$ is a divisor of $M$}, and so that this divisibility
constraint is implied by the divisibility constraints $\gamma_i$ features on
$\vec x$ and $\vec y$. %

Putting this all together, we can assume $\phi$ to be of the form 
\[ 
    \bigvee\nolimits_{i=1}^{m'} \gamma_i'(\vec x \varsep z \varsep \vec y), 
    \ \ \text{where} \ \ \gamma_i'(\vec x \varsep z \varsep \vec y) \coloneqq 
    \Big(
        A_i' \cdot 
        \begin{bmatrix}
            \vec x \\
            z \\
            \vec y
        \end{bmatrix} \leq \vec b_i' 
        \land 
        \bigwedge_{j=1}^{\ell} M \mid w_j + r_{i,j}'
        \land 
        h_i \cdot z = \vec g_{i} \cdot (\vec x, \vec y) + k_i
    \Big),
\]
where $(w_1,\dots,w_\ell) = (\vec x, \vec y, z)$, and each $h_i$ is a positive
integer that divides $M$.

We now add to the signature an integer divisibility function $\frac{x}{d}$, with
$d$ positive integer, defined as $\frac{x}{d} = q$ if and only if $q \cdot d \leq x <
(q+1) \cdot d$, and rewrite each $\gamma_i'$ as 
\[
    A_i' \cdot 
    \begin{bmatrix}
        \vec x \\
        \frac{\vec g_{i} \cdot (\vec x, \vec y) + k_i}{h_i} \\
        \vec y
    \end{bmatrix} \leq \vec b_i'
    \land 
    \bigwedge_{j=1}^{\ell-1} M \mid w_j + r_{i,j}'
    \land 
    z = \frac{\vec g_{i} \cdot (\vec x, \vec y) + k_i}{h_i}
\]
(or as $\bot$ if $M \mid \vec g_{i} \cdot (\vec x, \vec y) + k_i$ is
inconsistent with the divisibility constraints on $\vec x$ and $\vec y$; in this
case we simply remove $\gamma_i'$ from $\phi$). We thus have translated $\phi$
into an equivalent piecewise formula. Let us stress that we are using
the divisibility function only in cases where we know that the numerator is
divisible by the denominator. 

Let us now move to the approximate fitting problem. Consider a sample $S$ and a
tolerance $\epsilon$. By Lemma~\ref{lemma:piecewise-functions-reduction}, we can
compute in polynomial time a formula $\phi'(\vec y)$ such that
$\hypoclass_{f_\phi}$ \mbox{$\epsilon$-fits} $S$ if and only if $\exists \vec y
\,\phi'(\vec y)$ is true over $(\mathbb{Z}, 0, 1, +, \{d \mid x\}_{d \geq 1}, \{\frac{x}{d}\}_{d \geq 1},
<)$. Following the proof of Lemma~\ref{lemma:piecewise-functions-reduction}, we
see that $\phi'$ a formula of the form
\[
 \phi'(\vec y) \coloneqq \bigvee\nolimits_{T \in \mathcal{T}} 
    \big(\,\epsilon \geq {\textstyle\sum\nolimits_{(\vec s,t,i,j) \in T} (-1)^j \cdot (\ell_i(\vec s, \vec y) - t)} 
    \land \bigwedge\nolimits_{(\vec s, t, i,j) \in T}(\psi_i(\vec s, \vec y) \land \,\ell_i(\vec s, \vec y) \sim_j t\,)\big).
\]
where each $\ell_i(\vec x, \vec y)$ is of the form $\frac{\vec g \cdot (\vec x,
\vec y) + k}{h}$ and each $\psi_i(\vec x, \vec y)$ is of the form 
\[
    A' \cdot 
    \begin{bmatrix}
        \vec x \\
        \frac{\vec g \cdot (\vec x, \vec y) + k}{h} \\
        \vec y
    \end{bmatrix} \leq \vec b'
    \land 
    \bigwedge_{j=1}^{\ell-1} M \mid w_j + r_{j}'.
\]
Since divisibility functions are only used under the assumption that the
numerator is divisible by the denominator, we can eliminate them (in polynomial
time) by multiplying both sides of inequalities by denominators. We next
eliminate the divisibility predicates: for every variable $y$ in $\vec y$, we
consider a fresh variable $q_y$ and replace every divisibility constraint $M
\mid y + r$ with the equality~$q_y \cdot M = y + r$. Let $\vec q$ denote the
sequence all newly introduced variables, and $\phi''$ be the resulting formula.
Note that $\phi''$ is a disjunction where each disjunct is a conjunction of
atomic formulas from~$(\mathbb{Z}, 0, 1, +, \leq)$, i.e., each disjunct is an
integer linear program. 

We have that $\hypoclass_{f_\phi}$ \mbox{$\epsilon$-fits} $S$ if and only if
$\exists \vec y \exists \vec q \,\phi''(\vec y, \vec q)$ is true over
$(\mathbb{Z}, 0, 1, +, <)$. The existential formula $\exists \vec y \exists \vec q 
\,\phi''(\vec y, \vec q)$ has only a fixed number of variables. To solve the
approximate fitting problem, it suffices to find a satisfiable disjunct of
$\phi''$. Since each disjunct is an integer linear program with a fixed number
of variables, we can apply~Lenstra's algorithm~\cite{Lenstra83} to solve this
problem in~\ptime.
 
\section{Proof of Proposition \ref{prop:fitting-real-field-functions}: Sum-of-square-root-hardness for fitting functions over the real ordered field}\label{app:sqrtsum}
\PropFittingRealFieldFunctions*
    
\begin{proof}
    Recall from the body that $\textit{sqrt}(x \varsep z)$ is the formula defining the function that returns the square root of $x$. 
    We show that the reduction from the body of the paper 
    can be modified to use any fixed tolerance $\epsilon > 0$. 
    Let $a_1,\dots,a_n$ and $b$ be the input to the problem. 
    In polynomial time, we can compute a rational approximation~$q_i$ of $\sqrt{a_i}$ 
    from below, with an error within $\frac{\epsilon}{n}$.
    Hence, $\sum_{i=1}^n|\sqrt{a_i}-q_i| = \sum_{i=1}^n \sqrt{a_i} - \sum_{i=1}^n q_i \leq \epsilon$. We can assume that $q \coloneqq \sum_{i=1}^n q_i$ is between $b-\epsilon$ and $b$, as otherwise this instance sum-of-square-root is solved in polynomial time: it is a positive instance if $q < b-\epsilon$, and a negative one if $b < q$.
    We redefine $q_1$ to be equal to $b - \epsilon - \sum_{i=2}^n q_i$. 
    Observe that this decreases $q_1$. 
    Then, $\sum_{i=1}^n|\sqrt{a_i}-q_i| \leq \epsilon$ still holds
    if and only if $\sum_{i=1}^n \sqrt{a_i} \leq b$;
    and to solve the sum-of-square-root problem it
    suffices to check if the square root function given by the formula \textit{sqrt} $\epsilon$-fits the sample~$(a_1,q_1),\dots,(a_n,q_n)$.
\end{proof}

\section{Proof of Proposition~\ref{prop:easydecidabilitydistribution}: decidability for distribution classes over decidable 
structures, and reduction to exact fitting }\label{app:npboundforpwpdistributionclasses} 

We recall that in Section \ref{sec:randomized} we considered the distribution class over a concept class: that is, the real-valued hypothesis class
formed by randomizing a concept class. We stated a proposition showing that the approximate fitting problem for such a distribution class reduces to the $0$-fitting problem for the underlying concept class.

The formal statement was:

\ProprEasyDecidabilityDistribution*

In Section~\ref{sec:randomized}, we gave a proof of this statement, appealing to the fact that a certain system of inequalities with exponentially many variables has a solution with only polynomially many non-zero variables. We now prove this fact in detail, thus completing the proof of Proposition~\ref{prop:easydecidabilitydistribution}.

\begin{proof}
    In the body of the paper, we have shown that the problem reduces to 
    solving a system of inequalities over the reals, of the form:
    \[
        \begin{aligned}
            &\textstyle\sum_{i=1}^n |t_i - \sum_{\psi \in F_i} x_\psi| \leq \epsilon\\
            &\textstyle\sum_{\psi \in F} x_\psi = 1\\
            &x_\psi \geq 0 & \psi \in F
        \end{aligned}
    \]
    This system involves at most $2^n$ variables, indexed by formulas in $F$. To
    complete the proof from the body of the paper, it suffices to show that if
    this system has a solution, then it has one where at most poly($n$) many
    variables are non-zero. Then we can guess these variables, equivalently guessing the corresponding
    non-empty Boolean combinations. We can use a call to exact fitting to verify that each of these guesses
    is correct.

    We start by performing a series of transformations on the system. First, we
    introduce fresh variables $y_i,z_i$ for each $i \in [1..n]$ to eliminate the
    absolute values. Specifically, we replace the constraint $\sum_{i=1}^n |t_i
    - \sum_{\psi \in F_i} x_\psi| \leq \epsilon$ by ${\sum_{i=1}^n z_i \leq
    \epsilon}$ and add, for all~$i \in [1..n]$, the constraints ${(y_i = t_i -
    \sum_{\psi \in F_i} x_\psi)} \land {(y_i \leq z_i)} \land (-y_i \leq z_i)$.
    Next, we convert all inequalities that are not of the form $x \geq 0$ ($x$
    variable) into equalities by introducing slack variables. For example, we
    rewrite $\sum_{i=1}^n z_i \leq \epsilon$ as $\sum_{i=1}^n z_i + \delta =
    \epsilon \land \delta \geq 0$. Finally, we replace all variables $y_i$ (the
    only variables that can take negative values) by differences $y_i^+ -
    y_i^+$, where $y_i^+,y_i^-$ are fresh variables constrained to be
    non-negative. These transformations yield a system of the form 
    $A \cdot \vec x = \vec b \land \vec x \geq 0$, where $\vec b$ is a vector of 
    $3 \cdot n + 1$ entries. This system asks whether 
    $\vec b \in \{A \cdot \vec x : \vec x \geq 0\}$. 
    The result then follows from the Carath\'eodory's theorem for cones: 

    \begin{fact}\label{fact:caratheodory}
        Let $A \in \R^{d \times m}$.
        Every point in ${\{A \cdot \vec x : \vec x \geq 0\}}$ 
        is generated by a vector $\vec x$ with at most $d+1$ non-zero entries.
        \qedhere
    \end{fact}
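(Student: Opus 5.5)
The plan is the standard support-reduction argument for conic combinations. It actually gives the stronger bound of at most $d$ non-zero entries, which implies the stated $d+1$. Fix a point $\vec b = A\cdot\vec x$ with $\vec x \geq 0$. Among all non-negative vectors $\vec x$ with $A\cdot \vec x = \vec b$, choose one whose support $J \coloneqq \{j : x_j > 0\}$ has minimum cardinality. Such a minimizer exists because support sizes are natural numbers. I will show that the columns $\{A_j : j \in J\}$ of $A$ are linearly independent. Since these columns live in $\R^d$, this gives $\card{J} \leq d \leq d+1$.

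Suppose, towards a contradiction, that the columns indexed by $J$ are linearly dependent. Then there is a non-zero vector $\vec w \in \R^m$ with support contained in $J$ and $A\cdot \vec w = 0$. Replacing $\vec w$ by $-\vec w$ if necessary, we may assume some entry $w_j$ is strictly positive. Consider the vectors $\vec x - t\cdot\vec w$ for $t \geq 0$. Each of them satisfies $A\cdot(\vec x - t\vec w) = \vec b$, and each vanishes outside $J$. Set
\[
t^* \coloneqq \min\{ x_j / w_j : j \in J,\ w_j > 0\}.
\]
This is a minimum over a finite non-empty set of positive reals, so $t^* > 0$. For every $j \in J$ we have $x_j - t^* w_j \geq 0$: when $w_j \leq 0$ this is immediate, and when $w_j>0$ it follows from the choice of $t^*$. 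Moreover, at least one index attaining the minimum has $x_j - t^* w_j = 0$. Hence $\vec x - t^*\vec w$ is a non-negative solution whose support is strictly smaller than $J$, contradicting minimality.

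There is no real obstacle here. The only points needing care are that the sign flip of $\vec w$ guarantees a positive entry, so that $t^*$ is well defined, and that the perturbation never leaves $J$, so that no new non-zero entries are created. Applied to the system $A\cdot\vec x = \vec b$, $\vec x \geq 0$ obtained above, with $d = 3n+1$ rows, this yields a solution with at most $3n+2$ non-zero entries. At most that many of these can be variables $x_\psi$, which is the polynomial bound required in the proof of Proposition~\ref{prop:easydecidabilitydistribution}.
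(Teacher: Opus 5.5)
Your argument is correct. The paper gives no proof of this fact: it simply invokes Carath\'eodory's theorem for cones (\cite{caratheodory1907}) inside the proof of Proposition~\ref{prop:easydecidabilitydistribution}. What you supply is the standard self-contained proof of that theorem. You take a minimal-support non-negative solution, then move along a kernel direction supported on that support until some coordinate vanishes.

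Your route gives a slightly sharper bound. The support size is at most the rank of $A$, hence at most $d$. The $d+1$ in the statement is the familiar bound from the convex-hull version of Carath\'eodory's theorem, and here it is just slack. Either bound is polynomial in $n$, which is all the reduction needs.

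A minor remark on your last paragraph: you inherited the paper's count of $3n+1$ rows. The transformation as described actually produces $3n+2$ equations: one from $\sum_i z_i \leq \epsilon$, one from $\sum_{\psi} x_\psi = 1$, and $3n$ from the constraints involving $y_i$ and $z_i$. This changes nothing of substance.
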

\end{proof}

\section{Concept fitting reduces to distribution fitting}\label{app:easy-hardness-distributions}

In Section \ref{sec:randomized} in the body of the paper, we mentioned
that for any concept class, the fitting problem for its distribution class
is at least as hard as the fitting problem for the class itself: all lower
bounds are inherited. We prove this here:

\begin{lemma}
    Let $\conceptclass$ be a concept class on range $X$ and parameters $Y$. 
    Let $S \coloneqq (s_1,b_1),\dots,(s_n,b_n)$ be a sample, with $b_1,\dots,b_n \in \{0,1\}$.
    There is a hypothesis $f \in \conceptclass$ that $0$-fits $S$ if and only if 
    there is a hypothesis $h \in \measureclass{\conceptclass}$ that $0$-fits $S$.
\end{lemma}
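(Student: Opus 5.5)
The two directions are of different difficulty, and I would handle them separately, working directly from Definition~\ref{def:distclass}.

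For the forward direction, suppose some $f_{y_0} \in \conceptclass$ satisfies $f_{y_0}(s_i) = b_i$ for all $i \in [1..n]$. Take $\mu$ to be the Dirac measure $\delta_{y_0}$ on $Y_\Sigma$, defined by $\mu(A) = 1$ if $y_0 \in A$ and $0$ otherwise. This is a probability measure on any $\sigma$-algebra, so no measurability issues arise. Then $h_\mu(s_i) = \mu(P_{s_i})$, which equals $1$ exactly when $f_{y_0}(s_i) = 1$, and so $h_\mu(s_i) = b_i$ for every $i$. Hence $h_\mu$ $0$-fits $S$.

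For the converse, suppose $h_\mu$ $0$-fits $S$, meaning $\mu(P_{s_i}) = b_i$ for all $i$. Let $I_1 = \{i : b_i = 1\}$ and $I_0 = \{i : b_i = 0\}$, and consider the set
\[
A \coloneqq \bigcap_{i \in I_1} P_{s_i} \cap \bigcap_{i \in I_0} (Y \setminus P_{s_i}).
\]
This is a finite Boolean combination of generators, so $A \in Y_\Sigma$. Its complement is $\bigcup_{i \in I_1} (Y\setminus P_{s_i}) \cup \bigcup_{i \in I_0} P_{s_i}$. Each set in this union has $\mu$-measure $0$: for $i \in I_1$ we have $\mu(Y \setminus P_{s_i}) = 1 - 1 = 0$, and for $i \in I_0$ we have $\mu(P_{s_i}) = 0$. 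By finite subadditivity, $\mu(Y \setminus A) = 0$, so $\mu(A) = 1$. In particular $A \neq \emptyset$. Pick any $y_0 \in A$. Then $f_{y_0}(s_i) = 1$ for $i \in I_1$ and $f_{y_0}(s_i) = 0$ for $i \in I_0$, so $f_{y_0}$ $0$-fits $S$.

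Neither step poses a real obstacle. The only point needing care is that $A$ and the Dirac measure must live in $Y_\Sigma$, and both do because $A$ is a finite Boolean combination of the generators $P_{s_i}$. The argument also never uses that the $s_i$ are distinct; it relies only on finite subadditivity and on $\mu(Y) = 1$.
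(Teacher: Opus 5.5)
Your proof is correct and follows the paper's argument: a Dirac measure for the forward direction, and for the converse the observation that each set $\{y \in Y : f_y(s_i) = b_i\}$ has $\mu$-measure $1$. You use finite subadditivity to show that the intersection $A$ of these sets also has measure $1$, and is therefore non-empty. This is more careful than the paper, which only notes that each individual set is non-empty before concluding that a single fitting hypothesis exists.
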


\begin{proof}
    Let $\conceptclass = \{f_y : y \in Y\}$ and $\measureclass{\conceptclass} = \{h_\mu : \mu \in \Delta(Y_\Sigma)\}$. We remark that $b_1,\dots,b_n$ are probabilities when solving the fitting problem for $\measureclass{\conceptclass}$.
    For the left to right direction, if there is a $y \in Y$ such that $f_y$ 
    $0$-fits $S$, then the distribution assigning probability $1$ to $y$ also $0$-fits $S$.
    For the right to left direction, suppose there is a distribution $\mu \in \Delta(Y_\Sigma)$ 
    such that $h_\mu$ $0$-fits $S$. By definition, this means that 
    for every $i \in [1..n]$, $\mu(\{y \in Y : f_y(s_i) = 1\}) = b_i$.
    Observe that $\mu(\{y \in Y : f_y(s_i) = 1\}) = b_i$ holds if and only if so does 
    $\mu(\{y \in Y : f_y(s_i) = b_i\}) = 1$: 
    for $b_i = 1$ this is trivial, and for $b_i = 0$ we have $\mu(\{y \in Y : f_y(s_i) = 1\}) = 0 \leftrightarrow \mu(\{y \in Y : f_y(s_i) = 0\}) = 1$, since $\conceptclass$ is a concept class. 
    Thus, $\mu$ is such that for every $i \in [1..n]$, $\mu(\{y \in Y : f_y(s_i) = b_i\}) = 1$.
    But then, the set $\{y \in Y : f_y(s_i) = b_i\}$ must be non-empty, 
    and so there is a function in $\conceptclass$ that $0$-fits~$S$.
\end{proof}

\section{Proof of Theorem~\ref{thm:distclassfunctionrcfpspace}: Fitting problems for expectation classes}\label{app:expectation}

In the body of the paper we focused on two ways of defining real-valued function classes using logic: looking at a numerical structure and
using standard definable formulas, and looking at an arbitrary structure and randomizing parameters or objects. In Section~\ref{subsec:expectationclass}, we briefly explain that it is possible to combine these two methods, thus ``randomizing real-valued function classes''. We also mention that many of our results  extend to this setting.  We explain this in detail now.

We recall the definitions of \emph{bounded} real-valued hypothesis class and of \emph{expectation class} from the body of the paper.
A real-valued hypothesis class $\hypoclass \coloneqq \{f_y : y \in \paramspace\}$ with range space $\rangespace$ is said to be \emph{bounded} 
when, for every $x \in \rangespace$, the set $\{f_y(x) : y \in Y\}$ is included in an interval $[a,b] \subseteq \R$.

\DefExpectationClass*

Note that the hypothesis that $\hypoclass$ is bounded is both a necessary and sufficient condition for the expectation $\expectation_{y \sim \mu}[f_y(x)]$
to be well-defined (i.e., for every $x \in \rangespace$, the map~$y \mapsto f_y(x)$ is $\mu$-integrable) on all probability distributions $\mu \in \Delta(Y_\Sigma)$.

We give a reformulation of the approximate fitting problem for
expectation classes, which we  will use to analyze the complexity of the 
problem over the real ordered group and field, and over Presburger arithmetic.
\begin{restatable}{lemma}{LemmaReformulationExpectation}\label{lemma:reformulation-expectation}
    Let $\hypoclass \coloneqq \{f_y : y \in Y\}$ be a hypothesis class with range space $\rangespace$, $S \coloneqq (s_1,t_1),\dots,(s_n,t_n)$ be a sample of pairs from $X \times \R$, 
    and $\epsilon$ be a tolerance. For $i \in [1..n]$, let $f_i$ denote the map $y \mapsto f_y(s_i)$. Then, $\expectationclass{\hypoclass}$
    \mbox{$\epsilon$-fits} $S$ if and only if there are ${y_1,\dots,y_{n+1} \in Y}$ and non-negative $p_1,\dots,p_{n+1} \in \R$ such that 
    \begin{equation}\label{eq:systemExpectReals}
        \sum\nolimits_{i=1}^n |t_i - \sum\nolimits_{j=1}^{n+1} p_j \cdot f_i(y_j)| \leq \epsilon
        \ \land \ \sum\nolimits_{i=1}^{n+1} p_i = 1.
    \end{equation}
\end{restatable}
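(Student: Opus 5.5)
The plan is to prove the two directions separately. The backward direction is immediate, and the forward direction rests on a Carathéodory-type argument in $\R^n$. Implicitly we assume $\hypoclass$ is bounded, so that $\expectationclass{\hypoclass}$ is defined.

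\emph{Backward direction.} Given $y_1,\dots,y_{n+1}$ and weights $p_j$, take the discrete measure $\mu \coloneqq \sum_j p_j \delta_{y_j}$. Every set in $Y_\Sigma$ is measurable for it. Then $h_\mu(s_i) = \sum_j p_j f_i(y_j)$, so Formula~\eqref{eq:systemExpectReals} says exactly that $h_\mu$ $\epsilon$-fits $S$.

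\emph{Forward direction.} Let $\mu$ be a distribution such that $h_\mu$ $\epsilon$-fits $S$. Consider the map $F \colon Y \to \R^n$ given by $F(y) \coloneqq (f_1(y),\dots,f_n(y))$. Its image is bounded, because $\hypoclass$ is bounded. The vector $v \coloneqq (h_\mu(s_1),\dots,h_\mu(s_n)) = \expectation_{y\sim\mu}[F(y)]$ is then the barycenter of the pushforward measure $F_*\mu$, which is a probability measure on the bounded set $F(Y)$.

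The key step is to show that $v$ lies in the convex hull $\mathrm{conv}(F(Y))$, and not merely in its closure. Once this holds, Carathéodory's theorem in $\R^n$ writes $v$ as a convex combination of at most $n+1$ points $F(y_1),\dots,F(y_{n+1})$, with repetitions or zero weights allowed. Those points and weights satisfy Formula~\eqref{eq:systemExpectReals}, because the left-hand side depends only on $v$.

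This key step is the main obstacle. A soft argument only gives $v \in \overline{\mathrm{conv}}(F(Y))$, which is insufficient since $F(Y)$ need not be closed. I would instead prove the standard fact that the mean of a probability measure $\nu$ on $\R^n$ with bounded support lies in $\mathrm{conv}(\mathrm{supp}\text{-set})$ for any set $C$ with $\nu(C)=1$. The proof goes by induction on the dimension $n$.

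\emph{Base case.} If $n = 0$ the claim is trivial.

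\emph{Inductive step.} Suppose $v \notin \mathrm{conv}(C)$. Separate $v$ from the convex set $\mathrm{conv}(C)$ with a nonzero linear functional $\lambda$ satisfying $\lambda(c) \ge \lambda(v)$ for all $c \in C$. Integrating gives $\int (\lambda - \lambda(v))\,d\nu = 0$, and the integrand is nonnegative on $C$. Hence $\nu$ is concentrated on $C \cap \{\lambda = \lambda(v)\}$, which lies in a hyperplane of dimension $n-1$. The induction hypothesis applied there yields $v \in \mathrm{conv}(C \cap H) \subseteq \mathrm{conv}(C)$, a contradiction.

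Here $C = F(Y)$, and measurability is not an issue: one works with $\nu = F_*\mu$ on Borel sets, and each $f_i$ is $Y_\Sigma$-measurable by the definition of $Y_\Sigma$. A similar mean-value result for systems of integrals is \cite{meanvalue}, which could be cited instead of the induction.
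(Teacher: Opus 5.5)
Your proof is correct and follows the paper's argument. The backward direction uses the discrete measure $\sum_j p_j\delta_{y_j}$. The forward direction uses the fact that the vector of integrals lies in the convex hull of $\{(f_1(y),\dots,f_n(y)) : y \in Y\}$, followed by Carath\'eodory. The only difference is that you prove this mean-value fact by induction on dimension via supporting hyperplanes, where the paper cites \cite{meanvalue}. Your induction works, with one caveat. Since $F(Y)$ need not be Borel, $\nu(C)=1$ may be undefined for $C = F(Y)$. It is cleaner to run the induction on $Y$ itself, restricting $\mu$ to the $Y_\Sigma$-measurable set $\{y : \lambda(F(y)) = \lambda(v)\}$.
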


\begin{proof}
By definition, \mbox{$\epsilon$-fits} $S$ iff $\sum_{i=1}^n\big|t_i-\int_{Y} f_i(y)\,d\mu(y)\big| \leq \epsilon$ for some
distribution $\mu \in \Delta(Y_\Sigma)$.

Consider a distribution $\mu \in \Delta(Y_\Sigma)$.
Since $\hypoclass$ is bounded, every $f_i$ is $\mu$-integrable.
We now rely on the following mean value theorem for integrals: 

\begin{fact}[\cite{meanvalue}]\label{fact:expect-in-convex-hull}
  Let $f_1,\dots,f_n \colon Y \to \R$ be $\mu$-integrable functions. 
  Then, the vector
  \[
    \big(\int_{Y} f_1(y)\,d\mu(y),\dots,\int_{Y} f_n(y)\,d\mu(y)\big)
   \] 
   is in the convex hull of the set $\{(f_1(y),\dots,f_n(y)) : y \in Y\} \subseteq \R^n$.
\end{fact}

From Carath\'eodory's theorem~\cite{caratheodory1907}, any vector $(v_1,\dots,v_n) \in \R^n$ in the convex hull of the set $C \coloneqq \{(f_1(y),\dots,f_n(y)) : y \in Y\}$ can be expressed as a convex combination of at most $n+1$ elements from $C$, that is, $(v_1,\dots,v_n) = \sum_{i=1}^{n+1}p_i \cdot (f_1(y_i),\dots,f_n(y_i))$,
for some $y_1,\dots,y_n \in Y$ and non-negative $p_1,\dots,p_{n+1} \in \R$ with~$\sum_{i=1}^{n+1} p_i = 1$.
Observe that this means that the converse of Fact~\ref{fact:expect-in-convex-hull} also holds: 
there is a probability distribution $\mu^*$ such that 
\[
(v_1,\dots,v_n) = \Big(\int_{Y} f_1(y)\,d\mu^*(y),\dots,\int_{Y} f_n(y)\,d\mu^*(y)\Big).
\]
It is the distribution defined as $\mu^*(y_i) = p_i$ for ${i \in [1..n+1]}$, and $\mu^*(y) = 0$ elsewhere.
By Fact~\ref{fact:expect-in-convex-hull}, we conclude 
that the fitting problem reduces to solving System~\ref{eq:systemExpectReals} 
in the statement of the lemma.
\end{proof}

We are now ready to prove Theorem \ref{thm:approx-fitting-piecewise-reals-and-PA-expect}, which we recall

\approxfittingexpect*

\begin{proof}
The theorem refers to three structures, and we handle each separately.

    \myparagraph{Case: real ordered field}
    For the \textit{real ordered field}, it is easy to encode System~\ref{eq:systemExpectReals} an existential formula over this structure.

    \myparagraph{Case: Presburger arithmetic}
    The $\np$ bound for \textit{Presburger arithmetic} relies on the fact that bounded function families on the integers have finite range, so they are not far from concept classes.

    Let $\hypoclass$ be a bounded function class definable 
    in Presburger arithmetic. Recall that we are looking at the approximate 
    fitting problem for $\expectationclass{\hypoclass}$.
    Since $\hypoclass$ is bounded, the image~$\Img(f_i) \subseteq \N$ of every function in System~\ref{eq:systemExpectReals} is finite. Note 
    that $f_i$ only depends on $s_i$, 
    and by applying well-known bounds on the solutions to integer linear programs (see~\cite[Chapter~17]{linearprogrammingbook}) one can show that the bit length 
    of every natural in $\Img(f_i)$ is polynomial in 
    the bit length of $s_i$.

    The~\np algorithm guesses 
    a polynomial-size natural $k_{ij}$ for each term $f_i(y_j)$ occurring in System~\ref{eq:systemExpectReals}. For each $j \in [1..n+1]$, 
    the algorithm solves the $0$-fitting problem for $\hypoclass$ (in~\ptime, by~Theorem~\ref{thm:approx-fitting-piecewise-reals-and-PA})
    with input the sample $(s_1,k_{1j}),\dots,(s_n,k_{nj})$. 
    This ensures the existence of a $y_j \in Y$ 
    such that $f_i(y_j) = k_{ij}$ for all $i \in [1..n]$. Afterwards, 
    the algorithm updates System~\ref{eq:systemExpectReals}, by replacing each $f_i(y_j)$ by~$k_{ij}$, 
    and removing the absolute value as explained in the proof of Proposition~\ref{prop:easydecidabilitydistribution}. 
    The result is a linear program that 
    can be solved in polynomial time.
    
    \myparagraph{Case: Real ordered group}
    For obtaining an {\np} bound for the \textit{Real ordered group}, the main challenge lies in handling the non-linear terms
    $p_j \cdot f_i(y_j)$ from System~\ref{eq:systemExpectReals}, 
    given that $\Img(f_i)$ can be infinite.
    Fix a partitioned formula $\phi(\vec x \varsep z \varsep \vec y)$ from the real ordered group. 
    We consider the approximate fitting problem for $\expectationclass{\hypoclass_{f_{\phi}}}$.
    In Appendix~\ref{app:approximatefittingpiecewise}, we have noted that every function definable 
    in the real ordered group is piecewise. 
    In fact, wlog., we can assume 
    \[\phi(\vec x \varsep z \varsep \vec y) \coloneqq {\bigvee_{i=1}^m (\psi_i(\vec x,
    \vec y) \land z = \ell_i(\vec x, \vec y))}
    \]
    to be such that, for~$i \in [1..m]$,
    $\psi_i(\vec x \varsep \vec y) \coloneqq A_i \cdot 
        \begin{bmatrix}\vec x\\ \vec y\end{bmatrix} \leq \vec b_i
        \land 
        C_i \cdot 
        \begin{bmatrix}\vec x\\ \vec y\end{bmatrix} < \vec d_i,$
    where $A_i,C_i$ are rational matrices, and $\vec b_i,\vec d_i$ rational vectors, 
    and each $\ell_i$ is a linear polynomial with rational coefficients 
    (note: in Appendix~\ref{app:approximatefittingpiecewise} we used integer coefficients but added a function $\frac{\cdot}{d}$ with $d \geq 1$ integer to the signature; the two approaches are equivalent). 

    Let $(\vec s_1,t_1),\dots,(\vec s_n,t_n)$ be a sample and $\epsilon$ a
    tolerance. By guessing an index $k_{ij} \in [1..m]$ for all $(i,j) \in [1..n] \times [1..n+1]$, the algorithm non-deterministically rewrites
    System~\ref{eq:systemExpectReals}~as
    \begin{equation}\label{eq:systemExpectLinearReals}
        \begin{aligned}
            &\sum\nolimits_{i=1}^n |t_i - \sum\nolimits_{j=1}^{n+1} p_j \cdot \ell_{k_{ij}}(\vec s_i, \vec y_j)| \leq \epsilon\\
            &\sum\nolimits_{j=1}^{n+1} p_j = 1\\
            &p_j \geq 0 & j \in [1..n+1]\\ 
            &\psi_{k_{ij}}(\vec s_i, \vec y_j)&\hspace{-2cm}i \in [1..n], j \in [1..n+1]
        \end{aligned}
    \end{equation}
    I.e., the algorithm guesses which formula among $\psi_1,\dots,\psi_m$ the parameter $\vec y_j$ satisfies, for every object $\vec s_i$ in the sample. 
    Observe that this is not 
    a convex system of constraints, 
    in particular because of the non-linear terms $p_j \cdot \ell_{k_{ij}}(\vec s_i, \vec y_j)$.
    So, we need an ad-hoc argument to show that it can be solved in non-deterministic polynomial time
    instead of in $\exists\R$.

    The next step eliminates all occurrences of~$\psi_{k_{ij}}(\vec s_i, \vec y_j)$ 
    from System~\ref{eq:systemExpectLinearReals}, and simplifies the terms~$\ell_{k_{ij}}$.
    The resulting system is the forthcoming System~\ref{eq:systemExpectLinearRealsTwo}.
    For every~$j \in [1..n+1]$, define $S_j \coloneqq \{\vec y \in \R^d : \bigwedge_{i=1}^n \psi_{k_{ij}}(\vec s_i,\vec y)\}$, 
    and write $\overline{S_j}$ for its \emph{topological closure}, that is,  
    the set obtained by converting all strict inequalities $<$ of the system
    defining $S_j$ to non-strict inequalities~$\leq$. Since $S_j$ is given by a system 
    of strict and non-strict inequalities, $S_j$ is a \emph{semi-open convex polyhedron}, that is, a convex polyhedron that might ``lack'' some of its faces. The set $\overline{S_j}$ is a convex polyhedron.

    Consider the affine transformation $h_j \colon \R^d \to \R^{n+1}$, where $d$ is the number of parameters in $\phi$:
    \[
        h_j(\vec y) \coloneqq \big(\ell_{k_{1j}}(\vec s_1,\vec y), \dots, \ell_{k_{nj}}(\vec s_n, \vec y)\big).
    \]
    Since $\hypoclass$ is bounded, 
    the images $I_j \coloneqq h_j(S_j)$ and $\overline{I_j} \coloneqq h_j(\overline{S_j})$ 
    are bounded sets. Moreover, because $h_j$ is an affine transformation, $I_j$ is a \emph{semi-open convex polytope}, while $\overline{I_j}$ is a \emph{convex polytope}. (A polytope is a polyhedron with no extreme ray.) 
    
    Three observations are in order: 
    \begin{compactenum}
        \item\label{observation1} As $S_j \subseteq \overline{S_j}$, every point in $S_j$ can be obtained by convex combination of some of the vertices of $\overline{S_j}$
        (with some convex combination yielding however points in the difference ${\overline{S_j} \setminus S_j}$). A similar observation holds for $I_j \subseteq \overline{I_j}$ (in fact, $\overline{I_j}$ is the topological closure~of~$I_j$).
        \item\label{observation2} 
        Since $h_j$ is an affine transformation, for every $\vec v \in \overline{I_j}$
        there is an $\vec r \in \overline{S_j}$ such that $h_j(\vec r) = \vec v$.
        \item\label{observation3} 
        Every vertex in both $\overline{S_j}$ and $\overline{I_j}$ is a rational vector (which we encode as a pair of binary integers).
        Since the system defining $\overline{S_j}$ is of size polynomial in the size of the sample, each of its vertices has length polynomial in the sample.
        See~\cite[Theorem~10.2]{linearprogrammingbook} for the precise bounds.
        From the previous Observation~\ref{observation2} and by definition of $h_j$, 
        we see that the same is true for the vertices of $\overline{I_j}$.
    \end{compactenum}
    Using Observation~\ref{observation1} and Carath\'eodory's theorem,
    every point ${\vec w \in S_j}$ is a convex combination of
    \emph{at most} $d+1$ vertices of $\overline{S_j}$. Wlog., this combination
    can be taken  \emph{positive}: there are vertices 
    $\vec v_1,\dots,\vec v_r \in \overline{S_j}$ (with $r \leq d+1$) 
    such that ${\vec w = q_1 \cdot \vec v_1 + \dots + q_r \cdot \vec v_r}$ 
    for  $q_1,\dots,q_r  > 0$ with $\sum_{i=1}^r q_i = 1$.

    For every $j \in [1..n+1]$, the algorithm guesses a linear term
    $\sum_{i=1}^{r_j}q_{ij} \cdot \vec v_{ij}$, where $r_j \leq d+1$, each 
    $\vec v_{ij}$ is a vertex of~$\overline{S_j}$, and each $q_{ij}$ is a fresh
    variable. By Observation~\ref{observation3}, these guesses can be performed
    in non-deterministic polynomial time.

    Let~$T_j\,{\coloneqq}\,\{ \sum_{i=1}^{r_j}q_{ij} {\cdot} \vec v_{ij} :
    \sum_{i=1}^{r_j} q_{ij} = 1, \text{ and } {q_{ij} \,{>}\, 0}
    \text{ for }i
    \in [1..r_j]\}$.
     This set either contains a single point (this happens
    exactly when $r_j = 1$), or it is the relative interior of the convex hull
    generated by $\vec v_{1j},\dots,\vec v_{r_j j}$. Since each $\vec v_{ij}$ is
    a vertex of $\overline{S_j}$, $T_j$ is contained in a face
    of~$\overline{S_j}$. Hence, as $S_j$ is a semi-open convex polyhedron,
    either $T_j \subseteq S_j$ or $T_j \cap S_j = \emptyset$. The algorithm
    verifies $T_j \subseteq S_j$ by testing (in \ptime) feasibility of
    the linear system of inequalities defining $T_j \cap S_j$. If the system is
    infeasible, the algorithm~rejects.
    
    Since $h_j$ is an affine transformation, $h_j(T_j)$ is equal to the set $\{ \sum_{i=1}^{r_j}q_{ij} \cdot h_j(\vec v_{ij}) : \sum_{i=1}^r q_{ij} = 1, \text{and } {q_{ij} > 0}$ $\text{for all $i \in
    [1..r]$}\}$. This allows us
    to rewrite System~\ref{eq:systemExpectLinearReals} as: 
    \begin{equation}\label{eq:systemExpectLinearRealsTwo}
        \begin{aligned}
            &\sum\nolimits_{i=1}^n |t_i - \sum\nolimits_{j=1}^{n+1} p_j \cdot (\sum\nolimits_{k=1}^{r_j} q_{kj} \cdot [h_j(\vec v_{kj})]_i)| \leq \epsilon\\
            &\sum\nolimits_{j=1}^{n+1} p_j = 1\\
            &p_j \geq 0 &\hspace{-2.9cm} j \in [1..n+1]\\            
            &\sum\nolimits_{k=1}^{r_j} q_{kj} = 1 &\hspace{-2.9cm} j \in [1..n+1]\\            
            &q_{kj} > 0 &\hspace{-2.9cm} j \in [1..n+1],\ k \in [1..r_j]\\ 
        \end{aligned}
    \end{equation}
    where $[h_j(\vec v_{kj})]_i$ is the $i^{th}$ entry of the vector $h_j(\vec v_{kj}) \in \R^n$.
    Note that System~\ref{eq:systemExpectLinearRealsTwo} excludes 
    the formulas $\psi_{k_{ij}}(\vec s_i, \vec y_j)$:
    their satisfaction is ensured by $T_j \subseteq S_j$.

    The key observation is that, in~System~\ref{eq:systemExpectLinearRealsTwo}, 
    multiplications among variables are limited to~$p_j \cdot q_{kj}$, where $p_j$ and $q_{kj}$ 
    belong to different convex combinations. These non-linear terms can be linearized 
    by variable substitution.
    For every $j \in [1..n+1]$ and $k \in [1..r_j]$, 
    let $x_{jk}$ be a fresh variable used to proxy the multiplication $p_j \cdot q_{kj}$. One can show that System~\ref{eq:systemExpectLinearRealsTwo} is feasible if and only if the following formula over the real ordered group is satisfiable:
    \begin{equation}\label{eq:systemExpectLinearRealsThree}
        \begin{aligned}
            &\sum\nolimits_{i=1}^n |t_i - \sum\nolimits_{j=1}^{n+1} (\sum\nolimits_{k=1}^{r_j} x_{jk} \cdot [h_j(\vec v_{kj})]_i)| \leq \epsilon\\
            \land&(\sum\nolimits_{j=1}^{n+1}\sum\nolimits_{k=1}^{r_j} x_{jk} = 1)
            \land \bigwedge\nolimits_{j = 1}^{n+1} \bigwedge\nolimits_{k = 1}^{r_j} x_{jk} \geq 0
            \land \bigwedge\nolimits_{j=1}^{n+1}\left((\textstyle\bigvee_{k = 1}^{r_j} x_{jk} = 0) \rightarrow \sum\nolimits_{k=1}^{r_j} x_{jk} = 0\right).
        \end{aligned}
    \end{equation}
    The final constraint in Formula~\ref{eq:systemExpectLinearRealsThree} captures the condition that the variables $q_{1j},\dots,q_{r_j j}$ in System~\ref{eq:systemExpectLinearRealsTwo} form a \emph{positive} convex combination. 
    A solution to System~\ref{eq:systemExpectLinearRealsTwo} where a variable $x_{jk}$ set to zero corresponds to a solution to System~\ref{eq:systemExpectLinearRealsTwo} with $p_j = 0$.

    The algorithm terminates by finding (in non-deterministic polynomial time) whether 
    Formula~\ref{eq:systemExpectLinearRealsThree} has a solution.
\end{proof}

\section{Other loss functions for approximate fitting}\label{app:ltwoloss}

In the body of the paper we have focused on fitting problems defined by using the $L^1$ norm (a.k.a.~the absolute value norm) as the loss function for approximate fitting. Following up on Remark~\ref{remark:otherlosses}, in this appendix we briefly discuss how our results can be adapted to the $L^2$ norm 
$\sqrt{{\sum_{i=1}^n (h(s_i)-t_i)^2}}$ and the hinge loss 
${\sum_{i=1}^n \max(0,1-{t_i {\,\cdot\,} h(s_i)})}$.

\subsection{Approximate fitting problems for concept classes}\label{app:ltwoloss:concepts}

It is rather easy to see that all our results on approximate fitting for concept classes 
(Sections~\ref{sec:fitting-definable-cc} and~\ref{sec:betweenptimeandrqfoconcept}) 
hold also for $L^2$ and hinge loss. Indeed, one can reduce the approximate fitting problem for a concept class under these loss functions to the approximate fitting problem under the $L^1$ norm, and vice versa. Consider a sample $(s_1, b_1),\ldots,(s_n, b_n)$ with $s_i$ from a range space $\rangespace$ and $b_i \in \{0,1\}$, a tolerance $\epsilon \geq 0$, and a hypothesis $h \colon X \to \{0,1\}$. For the $L^2$ norm, we have
\begin{align*}
    \textstyle\sqrt{{\sum_{i=1}^n (h(s_i)-b_i)^2}} \leq \epsilon
        \iff & \textstyle\sum_{i=1}^n |h(s_i)-b_i| \leq \epsilon^2.
\end{align*}
For the hinge loss, without loss of generality assume $b_1,\dots,b_\ell = 0$ and $b_{\ell+1},\dots,b_n = 1$, for some $\ell \in [0..n]$. Then,
\begin{align*}
    \textstyle\sum_{i=1}^n \max(0,1-{b_i {\,\cdot\,} h(s_i)}) \leq \epsilon
    & \iff \textstyle\sum_{i=1}^\ell \max(0,1-{0 {\,\cdot\,} h(s_i)}) + \sum_{i=\ell+1}^n \max(0,1-{1 {\,\cdot\,} h(s_i)}) \leq \epsilon\\
    & \iff \textstyle\sum_{i=\ell+1}^n \max(0,b_i - h(s_i)) \leq \epsilon - \ell\\
    & \iff \textstyle\sum_{i=\ell+1}^n \abs{h(s_i) - b_i} \leq \epsilon - \ell.
\end{align*}

Because of these identities, the notion of \rqfo fitting does not change when considering the $L^2$ norm or the hinge loss. 
Moreover, we can immediately conclude the following: 

\begin{proposition}
    Theorem~\ref{thm:rqfoimpliesptimeapprox}, Theorem~\ref{thm:rqcandrqfitting}, Proposition~\ref{prop:buchiexactconcept} 
    and Theorem~\ref{thm:buchi} hold 
    also for the $L^2$ norm and the hinge loss function.
\end{proposition}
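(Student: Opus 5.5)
The plan is to reduce each cited result to its $L^1$ version via the two identities just displayed. Each identity transforms an $L^2$ or hinge instance into an $L^1$ instance, computably in polynomial time, while leaving the sample's inputs unchanged and altering only the tolerance and possibly which examples are kept. For the $L^2$ norm, an instance $(S,\epsilon)$ becomes the $L^1$ instance $(S,\epsilon^2)$. Since concept-class errors are integers, $\epsilon^2$ can be replaced by $\lfloor \epsilon^2\rfloor$, a non-negative integer of polynomial bit size. For the hinge loss, the same sample $S$ has examples with $b_i=0$, each contributing exactly $1$ regardless of $h$. So the instance is a yes-instance iff $\ell\le\epsilon$ and the $L^1$ instance restricted to the positive examples, with tolerance $\lfloor\epsilon-\ell\rfloor$, is a yes-instance. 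In both cases the hypothesis witnessing one problem witnesses the other, which is exactly what the displayed equivalences state.

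Next I transfer each statement. For Theorem~\ref{thm:rqfoimpliesptimeapprox}, the hypothesis (\rqfo fitting of $\amodel$) concerns exact fitting only. Exact fitting is the same problem under all three losses: $\epsilon=0$ forces $h(s_i)=b_i$ for $L^1$ and $L^2$. For the hinge loss, I read the corresponding notion as the one obtained through the reduction. The polynomial-time $L^1$ algorithm, composed with the polynomial-time translation above, then yields polynomial-time approximate fitting. Theorem~\ref{thm:buchi} transfers in the same way, since finite VC dimension is a property of $\phi$ alone. For Theorem~\ref{thm:rqcandrqfitting}, all three conditions concern exact fitting or \rqc, so the statement is literally unchanged, as noted right before the proposition. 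For Proposition~\ref{prop:buchiexactconcept}, the \np upper bound follows by composing the reduction with the $L^1$ \np algorithm. \np-hardness already holds for $0$-fitting, which coincides across the losses (for the hinge loss, take samples with only positive examples, as in the 3-colouring encoding of Proposition~\ref{prop:ipimplieshard}).

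The main obstacle is the hinge loss. The identity there treats negatively labelled examples as contributing a constant, so hinge exact fitting is not literally the same decision problem as $L^1$ exact fitting. I would handle this by stating explicitly that for the hinge loss the transferred results refer to the problem as reformulated via the identity, and by checking that the hardness instances use positive examples only, so nothing is lost. The remaining work is bookkeeping: the tolerance rounding for rational $\epsilon$ and the polynomial size of $\epsilon^2$.
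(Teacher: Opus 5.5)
Your proposal is correct and takes the same route as the paper, which likewise derives everything from the two displayed identities, notes that the notion of \rqfo fitting is unaffected, and concludes that the $L^1$ results transfer immediately. Your extra bookkeeping fills in details the paper leaves implicit: rounding $\epsilon^2$, the condition $\ell\le\epsilon$, and checking that the 3-colouring hardness instances use only positive examples, so the hinge-loss discrepancy in exact fitting does no harm.
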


\subsection{Approximate fitting problems for logically-defined real-valued functions}\label{app:ltwoloss:definable-functions}

We now discuss Theorem~\ref{thm:approx-fitting-piecewise-reals-and-PA} in the context of the $L^2$ norm and hinge loss. 
Let us recall the statement.

\ThmApproxFittingPiecewiseRealAndPa*

\begin{remark}\label{remark:what-happened-in-theorem-four}
    Let $\hypoclass_{f_\phi}$ be a hypothesis class, where $\phi(\vec x \varsep z \varsep \vec y )$
    is a formula over a structure from Theorem~\ref{thm:approx-fitting-piecewise-reals-and-PA}. 
    In the proof of the theorem, we have established the following:
    \begin{itemize}
        \item When the structure is the real ordered group, $\phi$ is equivalent to a quantifier-free piecewise formula over~$(\R,0,1,+,\{\frac{x}{d}\}_{d \geq 1},<)$ where $\frac{x}{d}$ stands for division by a positive integer $d$.
        \item When the structure is Presburger arithmetic, $\phi$ can be taken to be a piecewise function over~$(\mathbb{Z}, 0, 1, +, \{d \mid x\}_{d \geq 1}, \{\frac{x}{d}\}_{d \geq 1}, <)$, where a divisibility constraint $d \mid x$ states that $x$ is a multiple of $d$, while $\frac{x}{d}$ denotes \emph{integer} division by a positive integer $d$. 
        By construction of $\phi$, the divisibility operations have two properties. 
        First, every application of~$\frac{t}{d}$ occurs only for terms~$t$ that are guaranteed (by divisibility constraints) to be divisible by $d$. 
        Second, all divisibility constraints are restricted to the form ${M \mid x + r}$, where $x$ is a variable, $M \geq 1$ is a single integer (the same across all divisibility constraints) and is a common multiple of all divisors $d$ appearing in expressions~$\frac{\cdot}{d}$ within the formula, and $r \in [0..M-1]$.
        \item When the structure is the real ordered field \emph{and $\phi$ is piecewise}, $\phi$ is equivalent to a piecewise quantifier-free formula over the language of $(\R,0,1,+,\cdot,<)$. 
    \end{itemize}
\end{remark}

\begin{theorem}\label{thm:approx-fitting-piecewise-reals-and-PA-l2-hinge}
    Theorem~\ref{thm:approx-fitting-piecewise-reals-and-PA} holds also for the $L^2$ norm and the hinge loss function.
\end{theorem}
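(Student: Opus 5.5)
The plan is to rerun the proof of Theorem~\ref{thm:approx-fitting-piecewise-reals-and-PA}, with only the final step of Lemma~\ref{lemma:piecewise-functions-reduction} changed: the step where absolute values are removed and the loss constraint is written down. By Remark~\ref{remark:what-happened-in-theorem-four}, in each of the three cases we may assume $\phi$ is a quantifier-free piecewise formula $\bigvee_i \psi_i \land z=\ell_i$ in the appropriate expanded signature. As before, I will apply \nip and \ptime $0$-fitting (via Lemma~\ref{lem:computealltypes}) to the partitioned formula $\gamma$ of Formula~\eqref{formula:for-nip-step}. For the hinge loss I will enlarge $\gamma$ so that it also records, for each example, the sign of $1-t\cdot\ell_i(\vec x,\vec y)$. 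This is one more linear (respectively polynomial) atom in $\vec y$, so $\gamma$ is still a fixed formula, and Fact~\ref{fact:sslemma} still bounds the number of realizable partitions polynomially. The outcome is a polynomial-size family $\mathcal T$. Each $T\in\mathcal T$ fixes, for every sample point, a piece $i$, the sign of $\ell_i-t$, and (for hinge) the sign of $1-t\ell_i$.

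Inside each disjunct indexed by $T$ the loss then becomes an explicit term in $\vec y$ with a fixed number of variables. For hinge it is $\sum_{k\in T^{>}}(1-t_k\ell_{i_k}(\vec s_k,\vec y))$, where $T^{>}$ is the set of examples recorded as having positive margin term; this is linear in $\vec y$ whenever $\ell_i$ is. For $L^2$ I will square the constraint and use $\sum_k(\ell_{i_k}(\vec s_k,\vec y)-t_k)^2\le\epsilon^2$, which needs no sign information. The resulting formula $\phi'(\vec y)$ is a disjunction of polynomially many conjunctions, each with a fixed number of variables and with polynomial bit size.

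The cases then finish differently. For the real ordered field, every disjunct is a quantifier-free formula in a fixed number of variables, so Renegar's bound~\cite{Renegar92} applies exactly as before; the degree at most doubles under squaring, so it stays polynomial. The hinge case over the real ordered group is again a linear program. The obstacle is $L^2$ over the real ordered group and over Presburger arithmetic, where the constraint is now \emph{convex quadratic}. Over the reals I will still invoke Renegar's algorithm, reading each disjunct as a formula of the real ordered field with a fixed number of variables; it is polynomial in the number of atoms, their degree and their bit size. Over the integers, each disjunct becomes an integer program with linear constraints plus one convex quadratic constraint, in a fixed number of variables. There I plan to invoke Khachiyan--Porkolab~\cite{KhachiyanP00}: integer feasibility on convex semialgebraic sets in fixed dimension is in \ptime. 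Before doing so I will eliminate the division and divisibility symbols as in Appendix~\ref{app:approximatefittingpiecewise}, by multiplying through by denominators and introducing quotient variables $q_y$; this keeps the dimension fixed and the quadratic constraint convex.

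The main thing to check is that convexity survives this substitution. It does, because the substitution is affine: each $\ell_i$ becomes an affine function of the new integer variables, so the squared $L^2$ loss remains a convex quadratic.
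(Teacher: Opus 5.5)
Your proposal is correct and follows the paper's proof essentially step for step. It uses the reduction of Lemma~\ref{lemma:piecewise-functions-reduction} with $\gamma$ modified to record the sign of $1-t\,\ell_i$ for the hinge loss, and the squared constraint $\sum_k(\ell_{i_k}(\vec s_k,\vec y)-t_k)^2\le\epsilon^2$ for $L^2$ with no sign bookkeeping. It then applies Renegar's fixed-dimension algorithm over the real ordered group and field, and Khachiyan--Porkolab to the convex quadratic integer program over Presburger arithmetic after the $q_y$ substitution.

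One step that you share with the paper needs a fix. With $z$ an object variable, the hinge atom $1-z\cdot\ell_i(\vec x,\vec y)$ multiplies variables, so it is not a formula of the ordered group or of Presburger arithmetic. Being linear in $\vec y$ for each fixed $t$ does not make it one. This is easily repaired. Over $\R$, run Lemma~\ref{lem:computealltypes} inside the real ordered field. Over the integers, pass the sign of $t$ and an integer threshold derived from $1/t$ as the object, so the atom becomes a linear comparison with $\ell_i$.
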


\begin{proof}[Proof for the $L^2$ norm.]
    Let $\phi(\vec x \varsep z \varsep y) \coloneqq \bigvee\nolimits_{i=1}^m \psi_i(\vec x, \vec y) \land z = \ell_i(\vec x, \vec y)$ be a piecewise formula satisfying the properties described in~Remark~\ref{remark:what-happened-in-theorem-four} (depending on the structure considered). 
    Following the proof of Lemma~\ref{lemma:piecewise-functions-reduction} and relying on~\nip (all structures in the statement of Theorem~\ref{thm:approx-fitting-piecewise-reals-and-PA} are~\nip)
    one can show that for a given sample $S$ and tolerance $\epsilon$, 
    we can compute in polynomial time a formula $\phi'(\vec y)$ of the form 
    \begin{align*}
        \bigvee\nolimits_{T \in \mathcal{T}} \big(&\,\epsilon^2 \geq {\textstyle\sum\nolimits_{(\vec s,t,i,j) \in T} (\ell_i(\vec s, \vec y) - t)^2} \land \bigwedge\nolimits_{(\vec s, t, i,j) \in T} \psi_i(\vec s, \vec y)\big),
    \end{align*}
    with the property that
    $\hypoclass_{f_\phi}$ $\epsilon$-fits $S$ if and only if $\exists \vec y\, \phi'(\vec y)$ is satisfiable. 
    In the cases of the real ordered group and Presburger arithmetic, by multiplying both sides of inequalities by the square of the least common multiple of the divisors appearing in functions
    $\frac{\cdot}{d}$, we can eliminate all such functions from $\phi'$. 
    
    \myparagraph{Case: Real ordered group and field} In these cases, $\ptime$ approximate fitting then follows as in the proof of Theorem~\ref{thm:approx-fitting-piecewise-reals-and-PA}, 
    using the fact that the existential theory of the reals is decidable in $\ptime$ when the number of variables is fixed~\cite{Renegar92}, as explained in Appendix \ref{app:approximatefittingpiecewise}.

    \myparagraph{Case: Presburger arithmetic} 
    Suppose that the structure is Presburger arithmetic.  
    Exactly as in the proof of Theorem~\ref{thm:approx-fitting-piecewise-reals-and-PA}, we further 
    rewrite $\phi'(\vec y)$, adding a variable $q_y$ for every variable $y$, 
    in a way that transforms $\psi_i(\vec s, \vec y)$ into an integer linear program in variables $\vec y$ and $\vec q$ (where $\vec q$ is the vector of all variables $q_y$).
    Checking whether $\phi'(\vec y)$ is satisfiable then boils down to the problem 
    of finding a solution to an integer program of the form 
    \begin{align}\label{eq:convex-quadratic-integer-program}
        \beta \geq \sum_{i=1}^r (q_i(\vec y))^2 \land A \cdot 
            \begin{bmatrix}
                \vec y\\
                \vec q\\
            \end{bmatrix}
            \leq \vec b
    \end{align}
    where $\beta$ is a positive integer, each $q_i$ is a linear polynomial with integer coefficients, $A$ is an integer matrix and $\vec b$ is an integer vector. 
    Observe that the constraint $\sum_{i=1}^r (q_i(\vec y))^2 \leq \beta$ is convex, 
    that is, for every two solutions $\vec y_1, \vec y_2$ to this constraint, 
    and for every $\lambda \in [0,1]$, we have that $\sum_{i=1}^r (q_i(\lambda \vec y_1 + (1-\lambda) \vec y_2))^2 \leq \beta$. Indeed:
    \begin{align*}
        & \sum_{i=1}^r (q_i(\lambda \vec y_1 + (1-\lambda) \vec y_2))^2\\
        {}={}& \sum_{i=1}^r (\lambda q_i(\vec y_1) + (1-\lambda) q_i(\vec y_2))^2 
        & \text{since each $q_i$ is linear}\\
        {}\leq{}& \sum_{i=1}^r (\lambda (q_i(\vec y_1))^2 + (1-\lambda) (q_i(\vec y_2))^2) 
        & \text{by convexity of $x \mapsto x^2$}\\
        {}={}& \lambda \sum_{i=1}^r (q_i(\vec y_1))^2 + (1-\lambda) \sum_{i=1}^r (q_i(\vec y_2))^2\\
        {}\leq{}& \lambda \beta + (1-\lambda) \beta  
        & \text{as $\vec y_1$ and $\vec y_2$ are solutions to the constraint}\\
        {}={}& \beta.
    \end{align*}
    Therefore, the integer program in~\eqref{eq:convex-quadratic-integer-program} is a convex quadratic integer program, which can be solved in $\ptime$ when the number of variables is fixed~\cite[Theorem~1.2]{KhachiyanP00}.
\end{proof}

\begin{proof}[Proof for the hinge loss function.]
    This proof is a straightforward adaptation of the proof for the $L^1$ norm. 
    Let $\phi(\vec x \varsep z \varsep y) \coloneqq \bigvee\nolimits_{i=1}^m \psi_i(\vec x, \vec y) \land z = \ell_i(\vec x, \vec y)$ be a piecewise formula satisfying the properties described in~Remark~\ref{remark:what-happened-in-theorem-four} (depending on the structure considered). 
    We follow the proof of Lemma~\ref{lemma:piecewise-functions-reduction}. The formula $\gamma(\vec x,z,u,v \varsep \vec y)$ in that proof is replaced by the formula: 
    \[
        \bigvee\nolimits_{i=1}^m \bigvee\nolimits_{j \in \{0,1\}} (u = i \land v = j \land \psi_i(\vec x, \vec y) \land (0 \sim_j 1 - z \cdot \ell_i(\vec x, \vec y))).
    \]
    where $\sim_0$ denotes $\geq$ and $\sim_1$ denotes $<$. 
    We then proceed as in the proof of Lemma~\ref{lemma:piecewise-functions-reduction}, and relying
    on~\nip (all structures in the statement of Theorem~\ref{thm:approx-fitting-piecewise-reals-and-PA} are~\nip),
    show that for a given sample $S$ and tolerance $\epsilon$, 
    we can compute in polynomial time a formula $\phi'(\vec y)$ of the form 
    \begin{align*}
        \bigvee\nolimits_{T \in \mathcal{T}} \big(&\,\epsilon^2 \geq {\textstyle\sum\nolimits_{(\vec s,t,i,j) \in T} j \cdot (1 - t \cdot \ell_i(\vec s, \vec y))} \land \bigwedge\nolimits_{(\vec s, t, i,j) \in T} (\psi_i(\vec s, \vec y) \land 0 \sim_j 1-t \cdot \ell_i(\vec s, \vec y))\big).
    \end{align*}
    We then proceed as in the proof of Theorem~\ref{thm:approx-fitting-piecewise-reals-and-PA} for the $L^1$ norm.
\end{proof}

\subsection{Approximate fitting problems for distribution classes}\label{app:ltwoloss:distribution-classes}

We show that Proposition~\ref{prop:easydecidabilitydistribution} 
continues to hold for the $L^2$ norm and the hinge loss function. 
This directly implies that the NP upper bound in Corollary~\ref{cor:distclassconceptbuchi}, as well as Theorem~\ref{thm:finitevcandptimeimpliesdistptime} and Corollary~\ref{cor:rqcmodelimpliesdistclassconceptptimefitting} hold for these loss functions as well. We recall the proposition.

\ProprEasyDecidabilityDistribution*

\begin{proof}
    In the proof of Proposition~\ref{prop:easydecidabilitydistribution} for the $L^1$ norm, we 
    discussed how the approximate fitting problem reduces to solving a problem over the real ordered group, of the form 
    \[
        \sum\nolimits_{i=1}^n |t_i - {\textstyle\sum_{\psi \in F_i}} x_\psi| \leq \epsilon 
        \land \sum\nolimits_{\psi \in F} x_\psi = 1
        \land \bigwedge\nolimits_{\psi \in F} x_\psi \geq 0.
    \]
    This problem has exponentially many variables, but by appealing to Carath\'eodory's theorem we can show that it is sufficient to consider only polynomially many variables. 
    These variables can be guessed, which led us to the non-deterministic polynomial-time reduction.

    The argument is analogous for the $L^2$ norm and the hinge loss function. 
    To be more precise, consider a system of constraints of the form 
    \begin{equation}\label{eq:system-with-C}
        C(\ell_1(\vec x),\ \dots,\ \ell_n(\vec x)) 
        \land \sum\nolimits_{\psi \in F} x_\psi = 1
        \land \bigwedge\nolimits_{\psi \in F} x_\psi \geq 0.
    \end{equation}
    where $C(y_1,\dots,y_n)$ is some constraint, and each $\ell_i(\vec x)$ is a linear polynomial. 
    Suppose that this system has a solution $\vec x^*$.
    Define $m_i \coloneqq \ell_i(\vec x^*)$ for all $i \in[1..n]$, and consider then the linear program 
    \[
        \left(\bigwedge_{i=1}^n \ell_i(\vec x) = m_i \right) 
        \land \sum\nolimits_{\psi \in F} x_\psi = 1
        \land \bigwedge\nolimits_{\psi \in F} x_\psi \geq 0.
    \]
    Since $\vec x^*$ is a solution to this program, by Carath\'eodory's theorem there is a solution to this program with only $O(n)$ non-zero variables. This is also a solution to System~\eqref{eq:system-with-C}. Therefore, in non-deterministic polynomial time, we can guess which variables will have a non-zero value in a solution, and simplify System~\eqref{eq:system-with-C} to a system of polynomial size.

    We now show that, for both the $L^2$ norm and the hinge loss function, System~\eqref{eq:system-with-C} can be solved in polynomial time (note that non-deterministic polynomial time would suffice). 
    Afterwards, the proof of Proposition~\ref{prop:easydecidabilitydistribution} proceeds as for
    the case of the $L^1$ norm.

    \myparagraph{Case: $L^2$ norm} In this case, $C(y_1,\dots,y_n) \coloneqq \sum_{i=1}^n y_i^2 \leq \epsilon^2$ and $\ell_i(\vec x) \coloneqq (t_1 - {\textstyle\sum_{\psi \in F_1}} x_\psi)$. As explained in the proof of Theorem~\ref{thm:approx-fitting-piecewise-reals-and-PA-l2-hinge}, this is a convex constraint. 
    Therefore, (after reducing it to polynomially-many variables) System~\eqref{eq:system-with-C} is a convex program over the reals, which can be solved in polynomial 
    time with the ellipsoid method, see~\cite[Chapter~4]{GLS1988}.

    \myparagraph{Case: hinge loss function} Here, $C(y_1,\dots,y_n) \coloneqq \sum_{i=1}^n y_i \leq \epsilon$, and $\ell_i(\vec x) \coloneqq \max(0, 1 - t_i \cdot {\textstyle\sum_{\psi \in F_i}} x_\psi)$. 
    We observe that $\ell_i(\vec x) \coloneqq \max(0, 1 - t_i \cdot {\textstyle\sum_{\psi \in F_i}} x_\psi)$ is a convex constraint, 
    hence solvability in polynomial 
    time follows again by the ellipsoid method. 
    Indeed, $1 - t_i \cdot {\textstyle\sum_{\psi \in F_i}} x_\psi$ is linear, hence convex, 
    and so is the function $x \mapsto \max(0,x)$: 
    given $\lambda \in [0,1]$ and $x_1, x_2$, we have that
    \begin{align*}
        \max(0, \lambda \cdot x_1  + (1-\lambda) \cdot x_2)
        \leq {} & \max(0, \lambda \cdot  x_1) + \max(0, (1-\lambda) \cdot x_2)\\
        = {} & \lambda \cdot \max(0,x_1) + (1-\lambda) \cdot \max(0,x_2).
        \qedhere
    \end{align*}
\end{proof}

\subsection{Approximate fitting problems for expectation classes}\label{app:ltwoloss:expectation-classes}

We conclude by showing that our results on expectation classes also hold for the $L^2$ norm and the hinge loss function, by simple adaptations to the proofs in Appendix~\ref{app:expectation}. 
In that appendix, the first step was establishing 
the following lemma. 

\LemmaReformulationExpectation*

A similar lemma holds for the $L^2$ norm and the hinge loss function, but where System~\eqref{eq:systemExpectReals} is replaced by the following systems, respectively: 
\begin{align}
    \label{eq:systemExpectRealsL2}\tag{8-$L^2$}
    \sum\nolimits_{i=1}^n (t_i - \sum\nolimits_{j=1}^{n+1} p_j \cdot f_i(y_j))^2 \leq \epsilon^2
    \ \land \ \sum\nolimits_{i=1}^{n+1} p_i = 1.\\ 
    \label{eq:systemExpectRealsHinge}\tag{8-hinge}
    \sum\nolimits_{i=1}^n \max(0, 1- t_i \cdot \sum\nolimits_{j=1}^{n+1} p_j \cdot f_i(y_j)) \leq \epsilon
    \ \land \ \sum\nolimits_{i=1}^{n+1} p_i = 1.
\end{align}
The proof is essentially the same as the one for the $L^1$ norm.

We now discuss the adaptation of the proof of Theorem~\ref{thm:approx-fitting-piecewise-reals-and-PA-expect} to the $L^2$ norm and the hinge loss function. 
We recall the statement.

\approxfittingexpect*

\begin{proof}
As with the proof for $L^1$ norm, we split into separate arguments for each structure.

    \myparagraph{Case: real ordered field} 
    It is easy to encode Systems~\eqref{eq:systemExpectRealsL2} and~\eqref{eq:systemExpectRealsHinge} as existential formulas over the real ordered field. Membership in $\exists\R$ follows.

    \myparagraph{Case: Presburger arithmetic}
    The proof follows the one for the $L^1$ norm, 
    using the fact that the functions are bounded, 
    and guessing integer values~$k_{ij}$ for the terms $f_i(y_j)$ in the systems of constraints. 
    After replacing each $f_i(y_j)$ by the guessed value $k_{ij}$, both Systems~\eqref{eq:systemExpectRealsL2} and~\eqref{eq:systemExpectRealsHinge} become 
    convex, and can be solved in polynomial time 
    with the ellipsoid method (as also explained in Appendix~\ref{app:ltwoloss:distribution-classes}).

    \myparagraph{Case: real ordered group}
    Following the same steps as in the proof for the $L^1$ norm, we can reduce the problem of solving Systems~\eqref{eq:systemExpectRealsL2} and~\eqref{eq:systemExpectRealsHinge}
    to the problem of solving, respectively, 
    the following formulas (which are variations of System~\ref{eq:systemExpectLinearRealsThree}):

    \begin{align}
        \label{eq:systemExpectLinearRealsThreeL2}\tag{11-$L^2$}
        \begin{aligned}
            &\sum\nolimits_{i=1}^n (t_i - \sum\nolimits_{j=1}^{n+1} (\sum\nolimits_{k=1}^{r_j} x_{jk} \cdot [h_j(\vec v_{kj})]_i))^2 \leq \epsilon^2\\
            \land&(\sum\nolimits_{j=1}^{n+1}\sum\nolimits_{k=1}^{r_j} x_{jk} = 1)
            \land \bigwedge\nolimits_{j = 1}^{n+1} \bigwedge\nolimits_{k = 1}^{r_j} x_{jk} \geq 0
            \land \bigwedge\nolimits_{j=1}^{n+1}\left((\textstyle\bigvee_{k = 1}^{r_j} x_{jk} = 0) \rightarrow \sum\nolimits_{k=1}^{r_j} x_{jk} = 0\right).
        \end{aligned}
    \end{align}
    \begin{align}
        \label{eq:systemExpectLinearRealsThreeHinge}\tag{11-hinge}
        \begin{aligned}
            &\sum\nolimits_{i=1}^n \max(0, 1- t_i \cdot \sum\nolimits_{j=1}^{n+1} (\sum\nolimits_{k=1}^{r_j} x_{jk} \cdot [h_j(\vec v_{kj})]_i)) \leq \epsilon\\
            \land&(\sum\nolimits_{j=1}^{n+1}\sum\nolimits_{k=1}^{r_j} x_{jk} = 1)
            \land \bigwedge\nolimits_{j = 1}^{n+1} \bigwedge\nolimits_{k = 1}^{r_j} x_{jk} \geq 0
            \land \bigwedge\nolimits_{j=1}^{n+1}\left((\textstyle\bigvee_{k = 1}^{r_j} x_{jk} = 0) \rightarrow \sum\nolimits_{k=1}^{r_j} x_{jk} = 0\right).
        \end{aligned}
    \end{align}
    After guessing which variables $x_{jk}$ are non-zero, 
    these formulas reduce to convex programs over the reals, which (once more)
    can be solved in polynomial time with the ellipsoid method.
\end{proof}

\endgroup

\end{document}